\documentclass[a4paper,reqno]{amsart}
\usepackage[english]{babel}
\newcommand\version{April 21, 2022}


\usepackage{amsmath,amsfonts,amsthm,amssymb,amsxtra}
\usepackage{hyperref}
\usepackage{bbm} 
\usepackage{bm} 
\usepackage{color}
\usepackage{enumerate}





\newtheorem{theorem}{Theorem}[section]
\newtheorem{proposition}[theorem]{Proposition}

\theoremstyle{definition}

\theoremstyle{remark}

\newtheorem{remark}[theorem]{Remark}
\newtheorem{remarks}[theorem]{Remarks}


\numberwithin{equation}{section}


\newcommand{\C}{\mathbb{C}}
\newcommand{\const}{\mathrm{const}\ }
\newcommand{\D}{\mathcal{D}}

\renewcommand{\epsilon}{\varepsilon}

\newcommand{\N}{\mathbb{N}}

\renewcommand{\phi}{\varphi}
\newcommand{\R}{\mathbb{R}}

\newcommand{\Z}{\mathbb{Z}}

\DeclareMathOperator{\arsinh}{arsinh}

\DeclareMathOperator{\diag}{diag}

\DeclareMathOperator{\dom}{dom}

\DeclareMathOperator{\spec}{spec}

\DeclareMathOperator{\sgn}{sgn}
\DeclareMathOperator{\Tr}{Tr}
\DeclareMathOperator{\tr}{Tr}

\def\bc{\mathbb{C}}
\def\br{\mathbb{R}}

\def\bs{\mathbb{S}}


\def\gz{\mathbb{Z}}
\def\zp{\dot{\gz}}
\def\rz{\mathbb{R}}

\def\gA{\mathfrak{A}}
 
\def\gh{\mathfrak{h}}

\def\ca{\mathcal{A}}
\def\ce{\mathcal{E}}
\def\cg{\mathcal{G}}
\def\ch{\mathcal{H}}
\def\ci{\mathcal{I}}
\def\ck{\mathcal{K}}

\def\cm{\mathcal{M}}
\def\co{\mathcal{O}}
\def\cs{\mathcal{S}}
\def\ct{\mathcal{T}}
\def\cx{\mathcal{X}}

\def\rd{\mathrm{d}}

\def\dk{\mathrm{d}k}
\def\dr{\mathrm{d}r}
\def\ds{\mathrm{d}s}

\def\dx{\mathrm{d}x}
\def\dy{\mathrm{d}y}

\def\gtf{\gamma_\mathrm{TF}}

\def\uA{\underline{A}}
\def\ualpha{\underline{\alpha}}
\def\uB{\underline{B}}

\def\ur{\underline{r}}
\def\uR{\underline{R}}

\def\uvarrho{\underline{\varrho}}
\def\usigma{\underline{\sigma}}
\def\uz{\underline{z}}
\def\uZ{\underline{Z}}

\def\bPhi{\bm{\Phi}}

\newcommand{\me}[1]{\mathrm{e}^{#1}}

\newcommand{\one}{\mathbf{1}}

\makeatletter
\newcommand*{\rom}[1]{\expandafter\@slowromancap\romannumeral #1@}
\makeatother


\begin{document}

\title[Scott conjecture for large Coulomb systems  --- \version]{The Scott conjecture for large Coulomb systems: a review}

\author[R. L. Frank]{Rupert L. Frank}
\address[Rupert L. Frank]{Mathematisches Institut, Ludwig-Maximilans Universit\"at M\"unchen, Theresienstr. 39, 80333 M\"unchen, Germany, and Munich Center for Quantum Science and Technology (MCQST), Schellingstr. 4, 80799 M\"unchen, Germany, and Mathematics 253-37, Caltech, Pasa\-de\-na, CA 91125, USA}
\email{r.frank@lmu.de}

\author[K. Merz]{Konstantin Merz}
\address[Konstantin Merz]{Institut f\"ur Analysis und Algebra, Technische Universit\"at Braunschweig, Universit\"atsplatz 2, 38106 Braunschweig, Germany}
\email{k.merz@tu-bs.de}

\author[H. Siedentop]{Heinz Siedentop}
\address[Heinz Siedentop]{Mathematisches Institut, Ludwig-Maximilans Universit\"at M\"unchen, Theresienstr. 39, 80333 M\"unchen, Germany, and Munich Center for Quantum Science and Technology (MCQST), Schellingstr. 4, 80799 M\"unchen, Germany}
\email{h.s@lmu.de}

\subjclass[2010]{81V45,35Q40,46N50}
\keywords{Large Coulomb systems, ground state energy, ground state density, relativistic Coulomb system, Thomas-Fermi theory, Scott conjecture}

\date{\version}

\begin{abstract}
  We review some older and more recent results concerning the energy and particle distribution in ground states of heavy Coulomb systems. The reviewed results are asymptotic in nature: they describe properties of many-particle systems in the limit of a large number of particles. Particular emphasis is put on models that take relativistic kinematics into account. While non-relativistic models are typically rather well understood, this is generally not the case for relativistic ones and leads to a variety of open questions.
\end{abstract}



\maketitle
\tableofcontents

\section{Introduction and historical background}
\label{s:introduction}

\subsection{Many-particle quantum mechanics}

Properties of ground states of large Coulomb systems involving $N$
electrons, such as atoms or molecules, are of fundamental interest
in quantum physics and chemistry.
Notable examples are the \emph{ground state energy} and the electron
distribution in the ground state. The latter may be expressed in terms
of the \emph{one-particle ground state density}, i.e., the probability
density of finding one of the $N$ electrons at a specific location in
$\R^3$.
It is well known that systems on atomic length scales are accurately
described by quantum mechanics \cite{Heisenberg1925,Heisenberg1926}.
This understanding relies on precise investigations of the underlying
Hamilton operator.

We consider a molecule that consists of $K$ point-like nuclei of charges
$\uZ=(Z_1,...,Z_K)\in(0,\infty)^K$, fixed at pairwise different positions
$\uR=(R_1,...,R_K)\in\R^{3K}$, as well as $N$ electrons, all interacting
via Coulomb potentials in the Born--Oppenheimer approximation. The total
nuclear charge is $|\uZ|:=\sum_{\kappa=1}^K Z_\kappa$. The number of spin
degrees of freedom is denoted by $q\in\N$. Although in reality $q=2$,
one may, for notational convenience, choose $q=1$ when the
spin-dependence is trivial.
 
A non-relativistic quantum mechanical description of this system is
provided by the operator
\begin{align}
  \label{eq:manybodySchrodinger}
  H_{N,V} := \sum_{\nu=1}^N\left(-\frac12\Delta_\nu-V(x_\nu)\right)
  + \sum_{1\leq\nu<\mu\leq N}\frac{1}{|x_\nu-x_\mu|}
  + U
  \quad \text{in}\ \bigwedge_{\nu=1}^N L^2(\br^3:\bc^q)
\end{align}
with
\begin{align}
  \label{eq:defv}
  V(x) = \sum_{\kappa=1}^K \frac{Z_\kappa}{|x-R_\kappa|}
\end{align}
and
\begin{align}
  \label{eq:defu}
  U = \sum_{1\leq \kappa<\kappa'\leq K}\frac{Z_\kappa Z_{\kappa'}}{|R_\kappa-R_{\kappa'}|}.
\end{align}
We choose Hartree units, so that
$\hbar=e=m=1$, where $\hbar$, $e$, and $m$, denote the rationalized
Planck constant, the elementary charge, and the electron mass,
respectively. 
In the atomic case ($K=1$, $\uR=0$), we have $U=0$ and write
$H_{N,Z}\equiv H_{N,Z/|x|}$.

Since electrons are fermions, they obey the Pauli exclusion principle,
i.e., the Hilbert space in which the operator \eqref{eq:manybodySchrodinger}
acts is given by $\bigwedge_{\nu=1}^N L^2(\br^3:\bc^q)$, i.e., the subspace of
$L^2(\br^{3N}:\bc^{q^N})$ consisting of all square-integrable,
$\C^{q^N}$-valued functions whose sign changes under the exchange of any
two particle coordinates. 

We write
\begin{align}
  \label{eq:gsenergynonrel}
  E_S(N,\uZ,\uR) := \inf\spec(H_{N,V})
\end{align}
for the lowest spectral point of the Hamiltonian $H_{N,V}$.
%
This number $E_S(N,\uZ,\uR)$ may or may not be an eigenvalue, and, if it
is, it may be degenerate.  While the results in this review concern
\eqref{eq:gsenergynonrel}, there is an important, related quantity,
which has not received the mathematical attention it deserves; see
\eqref{eq:gs} below.

In the atomic case ($K=1$, $\uR=0$, $\uZ=Z$), we write
$E_S(N,Z):=E_S(N,Z,\underline{0})$ and, for neutral atoms,
$E_S(Z):=E_S(Z,Z,0)$. It is well known that $E_S(N,Z)$ is an
eigenvalue when $N<Z+1$, see Zhislin \cite{Zislin1960} or Simon
\cite{Simon1970}.

In addition to ground state energies, we will be interested in
one-particle ground state densities. We recall that the one-particle
density of a general (pure) state
$\psi\in \bigwedge_{\nu=1}^N L^2(\br^3:\bc^q)$ is defined by
\begin{align}
  \label{eq:defrhohnonrel}
  \rho(x) := N \sum_{\sigma=1}^q\int_{\Gamma^{N-1}} |\psi(x,\sigma;y_2,...y_N)|^2\,\dy_2...\dy_N
\end{align}
for $x\in\R^3$.
Here $\Gamma:=\R^3\times\{1,2,...,q\}$. Elements $y\in\Gamma$ are
space-spin variables and the corresponding measure $\dy$ is the
product measure consisting of Lebesgue measure on $\R^3$ and counting
measure on $\{1,...,q\}$.

If $\psi$ in \eqref{eq:defrhohnonrel} is an eigenfunction of $H_{N,V}$
with eigenvalue $E_S(N,Z,R)$, we write
\begin{align}
  \label{eq:defrhogroundstateintro}
  \rho_S
\end{align}
for its density and analogously for other Hamiltonians that we discuss
later. Although this is might be an abuse of notation since the
eigenvalue $E_S(N,Z,R)$ could be degenerate, our statements about $\rho_S$
will be true for any choice of an eigenfunction. The notion of a
one-particle density and, in particular, of a one-particle ground
state density can be generalized to the case of mixed states, but we
do not do this here.  Also, if the lowest point in the spectrum
$E_S(N,Z,R)$ is not an eigenvalue, one can still obtain meaningful
statements for so-called approximate ground states, but we will not
discuss this in this introduction.

The goal of this review is to summarize known results and open
questions concerning the ground state energy and the one-particle
ground state density in the limit of large electron numbers and
nuclear charges for non-relativistic and, especially, for certain
relativistic descriptions of Coulomb systems.  In the rest of this
introduction we will focus on results for non-relativistic atoms.
This and other settings will be treated in more detail in later sections,
see the table of contents and Subsection \ref{ss:organization} for
relevant pointers.

\medskip
\begin{remarks}
  Some remarks on our goals are in order.
  \begin{enumerate}
  \item It is well known that the spectral analysis of $N$-particle systems
    \emph{for fixed $N$} is prohibitively difficult already when $N\geq2$,
    since the $\co(N^2)$ many inter-particle interactions prohibit
    a reduction to a three-dimensional (possibly) soluble
    one-particle problem. (For instance, if the electron-electron repulsion
    was absent and $K=1$ in $H_{N,V}$, then one could separate variables
    to end up with the direct sum of Schr\"odinger operators describing
    hydrogen.)
    Instead, one often considers the properties of a system for a large
    number of particles.
    This leads to the study of \emph{asymptotic properties}.
    In this review we entirely focus on results in the limit
    $Z_1,...,Z_K,N\to\infty$. The precise way of carrying out this
    limit when $K>1$ is explained later.
  
  \item Studying asymptotics clearly leads to less quantitative
    mathematical statements and is also questionable from a physical
    point of view since experimentally observed values of $Z$
    are bounded, e.g., by $92$ for stable atoms.
    However, the mathematical analysis is drastically simpler and,
    interestingly, leads to theorems that coincide astonishingly well
    with experimentally measured data.
    (This observation has been made repeatedly in different contexts
    in mathematical physics. Stell \cite[p.~48]{Stell1977} calls it the
    \emph{principle of unreasonable utility of asymptotic expansions}
    and makes some interesting philosophical remarks.)

  \item There are some notable exceptions, however.
    \begin{enumerate}
    \item For instance, recently much progress has been made in the
      investigation of smoothness properties of single eigenfunctions
      and sums of squares of eigenfunctions of many-particle Coulomb
      Hamiltonians, such as $H_{N,Z}$, for fixed $N$. In this regard
      see, e.g., the works
      \cite{Fournaisetal2005,Fournaisetal2008,Fournaisetal2009,Fournaisetal2009A,FournaisSorensen2021}
      for non-relativistic and \cite{FournaisSorensen2010} for
      (pseudorelativistic) Chandrasekhar atoms.
      Such a priori estimates for many-particle eigenfunctions are
      important, e.g., for the derivation of eigenvalue asymptotics
      for the associated one-particle density matrix \cite{Sobolev2021}.

    \item Another example concerns the maximal ionization of atoms
      (and mol\-e\-cules).  Experiments indicate that doubly or higher
      charged anions do not exist (Massey
      \cite{Massey1976,Massey1979}), i.e., one expects at most $Z+1$
      many electrons to be bound to the nucleus, while any further
      electrons are located infinitely far away with vanishingly small
      kinetic energy.  Proving this claim is a notoriously difficult
      problem in mathematical physics, see, e.g., Nam
      \cite{Nam2020,Nam2022} for recent reviews.  A slightly weaker
      formulation, the so-called \emph{ionization conjecture}, states
      that there is a number $Q<\infty$ such that, if $E_S(N,Z)$ is an
      eigenvalue, then $N\leq Z+Q$.  Two well known results in this
      direction are due to Lieb \cite{Lieb1984} and Fefferman and Seco
      \cite{FeffermanSeco1989,FeffermanSeco1990}, who proved that
      $N<2Z+1$ and $N\leq Z+ C Z^{\frac{47}{56}}$, respectively, are
      necessary conditions for $E_S(N,Z)$ to be an eigenvalue.
      Recently, Nam \cite{Nam2012N} improved Lieb's result and showed
      $N<1.22Z+3Z^{\frac 13}$, which leads to a sharper result when
      $Z\geq6$.  Lieb's result implies the fact that doubly negatively
      charged hydrogen atoms do not exist.
    \end{enumerate}
  \end{enumerate}
\end{remarks}

\subsection{Glimpse at Thomas--Fermi density functional theories}
\label{ss:glimpsetf}

The $N$ particle quantum Coulomb problem of computing $E_S(N,\uZ,\uR)$ and
the associated eigenspace is -- like its classical analogue, the Kepler
problem -- 
prohibitively difficult to solve (even numerically) already for $N\geq2$
because of the $\co(N^2)$ many interactions between the $N$ electrons.
This necessitates the derivation of so-called ``effective theories'', i.e.,
energy functionals or equations, which depend only on a fixed, but small
number of variables, like three or six, and describe at least the macroscopically
observed properties of the given system ``sufficiently accurately''.
Although these theories are usually more accessible to numerical analysis,
they also pose some interesting mathematical challenges in view of the
presence of non-linearities, which simulate the interparticle
interactions. Here we focus on so-called \emph{density functional theories},
i.e., energy functionals, that only depend on the one-particle density.

\begin{remark}
  We chose to bypass density \emph{matrix} functionals
  (e.g., due to Hartree \cite{Hartree1928PartI,Hartree1928PartII},
  Fock \cite{Fock1930}, Slater \cite{Slater1930}, M\"uller \cite{Muller1984},
  and Sharma et al.~\cite{Sharmaetal2008}),
  as it would go far beyond the scope of this review. In addition to
  referring to \cite{Siedentop2020M}, we highlight pioneering works by
  Lieb and Simon \cite{LiebSimon1974,LiebSimon1977T}, Bach
  \cite{Bach1992,Bach1993}, Graf and Solovej \cite{GrafSolovej1994},
  as well as the works \cite{Franketal2007},
  \cite{Siedentop2009,Siedentop2014}, and Kehle
  \cite{Kehle2017}.
\end{remark}

For simplicity, assume from now on the neutral, atomic case ($K=1$,
$N=Z$).  The breakthrough in the description of ground state
properties of $H_{N,V}$ came with the help of a particularly simple
density functional theory, the so-called Thomas--Fermi theory
\cite{Thomas1927,Fermi1927,Fermi1928}, which will be reviewed in
Subsection \ref{ss:tf}.  In their seminal work \cite{LiebSimon1977},
Lieb and Simon connected Thomas--Fermi theory to the quantum problem
of finding $E_S(Z)$ and showed that the Thomas--Fermi energy
$E^{\rm TF}(Z)$, i.e., the infimum of the Thomas--Fermi functional
$\ce_Z^{\rm TF}$ (Lenz \cite{Lenz1932}), is the leading term of the
asymptotic expansion of $E_S(Z)$ when $Z\to\infty$.  The Thomas--Fermi
energy scales like $E^{\rm TF}(Z)=E^{\rm TF}(1) \cdot Z^{7/3}$, which
is a consequence of
$$
\ce_Z^{\rm TF}[Z^2\rho(Z^{1/3}\cdot)] = Z^{7/3} \ce_Z^{\rm TF}[\rho].
$$
Thus, the result of Lieb and Simon for the ground state energy reads
\begin{align}
  \label{eq:lsintro}
  E_S(Z) = E^{\rm TF}(1) \cdot Z^{7/3} + o(Z^{7/3})
  \quad \text{as}\ Z\to\infty,
\end{align}
see also Theorem \ref{quantumtfconvnonrelatom}.
A numerical computation shows that
$E^{\rm TF}(1)\approx -0.484\,29\cdot q^{2/3}$,
cf.~Gomb{\'a}s~\cite[p.~60]{Gombas1949}.

Figuratively speaking, the leading order in \eqref{eq:lsintro} is generated
by the bulk of the electrons, which are located on distances $\co(Z^{-1/3})$
from the nucleus, and are described semiclassically.
It should not come as a surprise that this energetic result is
accompanied by a result connecting the quantum ground state density $\rho_S$
with the minimizer of $\ce_Z^{\rm TF}$, the Thomas--Fermi density
$\rho_Z^{\rm TF}$. Indeed, Lieb and Simon \cite{LiebSimon1977}, and
Baumgartner \cite{Baumgartner1976} showed that the suitably rescaled
ground state density $\rho_S$ converges to the minimizer of the
Thomas--Fermi theory for hydrogen. More precisely, one has, due to the
scaling properties of Thomas--Fermi theory, the convergence
\begin{align}
  \lim_{Z\to\infty}Z^{-2}\rho_S(Z^{-1/3}\,\cdot \,) = \rho_{Z=1}^{\rm TF}
\end{align}
when both sides are integrated against characteristic functions of
bounded, measurable subsets of $\R^3$. In the context of the ionization
conjecture, Fefferman and Seco \cite{FeffermanSeco1989} obtained (as a
corollary) the convergence in a stronger topology, namely in the so-called
Coulomb norm; see \eqref{eq:defcoulombnorm}. The precise result is
contained in Theorem \ref{quantumtfconvnonrelatom}.

\subsection{Quantum effects close to the nucleus}

Although Thomas--Fermi theory correctly predicts the leading order of
$E_S(Z)$, it turned out that, as Scott \cite[p.~859]{Scott1952} wrote
in 1952, the Thomas--Fermi energy gives values for the binding energy
$(-1)\cdot E_S(Z)$ ``\emph{which are too high by roughly $20 \%$. The
  actual binding energies increase quite smoothly with increasing $Z$,
  which suggests the existence of a more appropriate formula.}''
Naturally, this defect of simple Thomas--Fermi theory triggered some
discussions. One year before Scott's publication, Foldy
\cite{Foldy1951} had proposed the formula
$E_S(Z)=c_1 \cdot Z^{12/5}+c_2(Z)$.  Here $c_2(Z)$ depends on $E_S(2)$
and the sum of the ionization potentials of all atoms with atomic
number greater or equal three and less or equal $Z$. (Foldy does not
give a bound on $c_2(Z)$ but seems to assume that
$c_2(Z)=o(Z^{12/5})$.)  More importantly for us, $c_1$ is a constant
that only depends on the chosen units and obeys $c_1>E^{\rm TF}(1)$.
The exponent $12/5$ was derived from numerical values of the
electrostatic potential close to the nucleus as a function of $Z$
(Dickinson \cite{Dickinson1950}). Since these values were only
available for $Z\leq 80$, Foldy's formula was not expected to hold
asymptotically as $Z\to\infty$.  In the discussion of his formula
Foldy \cite[p.~398]{Foldy1951} points out that Thomas--Fermi theory
does not correctly take into account the following two effects close
to the nucleus.  On the one hand, such electrons are bound stronger to
the nucleus, but, on the other hand, they screen the bulk of the
electrons at larger distances to the nucleus.  Foldy suspected the
screening to dominate, which explains the inequality
$c_1>E^{\rm TF}(1)$ despite the fact that $Z^{12/5}\gg Z^{7/3}$ for
$Z\gg1$.

Scott \cite[p.~867]{Scott1952} made Foldy's observations more precise
and suggested a different formula for $E_S(Z)$.
He believed that Thomas--Fermi theory does correctly describe the leading
order of the ground state energy expansion, but that the failure of
Thomas--Fermi theory
``\emph{is due partly to the shortcomings of the statistical model in
  the region nearest the nucleus, and partly to the effect of exchange}''.
Like Foldy, Scott suggested that the few, but high-energy electrons that
are located close to the nucleus should generate this correction. Due to
their proximity to the nucleus, these electrons should be described
quantum mechanically. Since the correction would be generated only by
``finitely many'' electrons, the electron-electron repulsion should
be irrelevant and the order of the correction should be $\co(Z^2)$,
i.e., in agreement with the magnitude of the eigenvalues of the
hydrogenic Hamiltonian
\begin{align}
  \label{eq:defhydrogenoperator}
  S_Z^{H}:=-\frac12\Delta - \frac{Z}{|x|}
  \quad \text{in}\ L^2(\R^3:\C)
\end{align}
with nuclear charge $Z$.
By a simple calculation
(see Subsection \ref{sss:scottliebarguments} for March's derivation),
Scott was led to
\begin{align}
  \label{eq:scottnonrelinitial}
  E_S(Z) = E^{\rm TF}(1) \cdot Z^{7/3} + \frac{q}{4} \cdot Z^2 + o(Z^2)
  \quad \text{as}\ Z\to\infty.
\end{align}

If one drops the electron-electron repulsion in the Hamiltonian
\eqref{eq:manybodySchrodinger}, the corresponding ground state energy
will also behave to leading order like a constant times $Z^{7/3}$ as
$Z\to\infty$, but with a constant different from $E^{\rm TF}(1)$.
There will also be a subleading correction, given by a constant times
$Z^2$, and, remarkably, the constant here is the same $q/4$ as in
\eqref{eq:scottnonrelinitial}. This is not a coincidence and will
become clear in the discussion below.

\subsubsection{Scott correction}

About thirty years later, Lieb \cite[Problem 6]{Lieb1980} and
\cite[pp.~623--624]{Lieb1981} and Simon \cite[Problem 10b]{Simon1984}
revisited the problem of finding the
second term in the asymptotic expansion of $E_S(Z)$.
Because of Scott's compelling arguments,
Formula \eqref{eq:scottnonrelinitial} was coined
\emph{Scott correction/conjecture}.

In the same decade, Hughes \cite{Hughes1986,Hughes1990} (lower bound)
and the authors of
\cite{SiedentopWeikard1987O,SiedentopWeikard1989,SiedentopWeikard1991}
(upper and lower bound) proved this conjecture. That is, they rigorously
derived the expansion \eqref{eq:scottnonrelinitial};
see Theorem~\ref{swscottnonrel}.
The proof in
\cite{SiedentopWeikard1987O,SiedentopWeikard1989,SiedentopWeikard1991}
relied in part on the mathematical and physical intuition gained in the
precursor \cite{SiedentopWeikard1986}, where the Scott correction is
proved in the absence of electron-electron repulsion. We will present
this motivating result and its short proof in
Subsection~\ref{sss:sweasyproof}.

\begin{remark}
  As has been observed, e.g., by Conlon \cite{Conlon1983A},
  Huxtable \cite{Huxtable1988}, or Sobolev \cite{Sobolev1996},
  the $Z^2$-correction is a consequence of the singularity of the
  Coulomb potential and cannot be explained semiclassically.
  For instance, Huxtable's result \cite[Theorem~7]{Huxtable1988} states
  \begin{align}
    \inf_{\substack{\psi\in\bigwedge_{\nu=1}^Z L^2(\R^3:\C^q)\\ \|\psi\|_2=1}}\left\langle\psi,\left[\sum_{\nu=1}^Z \left(-\frac12 \Delta_\nu - Z^{\frac43} W(Z^{\frac13}x_\nu)\right)\right]\psi\right\rangle
    = c_{\rm TF} Z^{\frac73} + \co(Z^{\frac53})
  \end{align}
  as $Z\to\infty$ for any potential $W\in C^\infty(\R^3)$ satisfying
  $c|x|^2\leq -W(x)\leq C|x|^2$ and $|\nabla W(x)|\leq c'|x|$ for some
  $c,C,c'>0$. Here $c_{\rm TF}$ is related to the non-relativistic
  Thomas--Fermi theory with potential $W$.
\end{remark}

\subsubsection{Strong Scott conjecture}
The Scott conjecture has a close relative that concerns the
ground state density, the so-called ``strong form of the Scott
correction'' (in short ``strong Scott conjecture'' from now on).
It was formulated by Lieb \cite[pp.~623--624]{Lieb1981}; see also
Heilmann and Lieb \cite[p.~3629]{HeilmannLieb1995}.
The strong Scott conjecture states that the suitably rescaled
ground state density $\rho_S$ on distances of order $Z^{-1}$
from the nucleus converges to $q$ times the three-dimensional
\emph{hydrogenic density}, i.e.,
\begin{align}
  \label{eq:defrhohs}
  \rho_S^H(x) := q \cdot \sum_{\ell=0}^\infty \sum_{m=-\ell}^\ell \sum_{n=0}^\infty|\psi_{n,\ell,m}^S(x)|^2,
  \quad x\in\R^3.
\end{align}
The latter is the sum of squares of the $L^2(\R^3:\C)$-normalized
eigenfunctions $\psi_{n,\ell,m}^S$ of the hydrogen Hamiltonian
\begin{align}
  \label{eq:defhydrogenoperator2}
  S^H = -\frac12\Delta-\frac{1}{|x|}
  \quad \text{in}\ L^2(\R^3:\C).
\end{align}
The hydrogenic density $\rho_S^H$ is rather well understood;
see Theorem~\ref{heilmannlieb}. In particular, the right side of
\eqref{eq:defrhohs} converges and is spherically symmetric. (The
labeling of the eigenfunctions $\psi_{n,\ell,m}^S$ uses the
decomposition into angular momentum channels and will be further
explained in Subsection \ref{sec:scottnonrel}.)

Note that $S^H$ is unitarily equivalent to $Z^{-2}S_Z^H$ by scaling
$x\mapsto x/Z$, where $S_Z^H$ is defined in \eqref{eq:defhydrogenoperator}.
Any eigenfunction $\phi_Z$ of $S_Z^H$ scales like $\phi_Z(x)=Z^{3/2}\phi_1(Zx)$,
where $\phi_1$ denotes the corresponding eigenfunction of $S_1^H=S^H$.

The 1996 work of Iantchenko et al.~\cite{Iantchenkoetal1996} showed,
among other things,
\begin{align}
  \label{eq:strongscottinitial}
  \lim_{Z\to\infty} \frac1{4\pi} \int_{\bs^2} Z^{-3}\rho_S(Z^{-1}r\omega)\,\rd\omega
  = \rho_S^H(r) \quad \text{for each}\ r>0,
\end{align}
see also Theorem \ref{ilsdensitynonrel}.
It follows from the convergence results there that the one-particle ground
state density $\rho_S$ is approximately spherically symmetric in the limit
$Z\to\infty$ on distances $Z^{-1}$ from the nucleus.

\subsection{Dirac--Schwinger correction}
\label{ss:ds}

As mentioned in the previous subsection, Scott anticipated that
the subleading terms in the expansion of the ground state energy $E_S(Z)$
should take into account the extreme quantum effects close to the nucleus,
but also the \emph{exchange energy of the electrons},
as proposed by Dirac \cite{Dirac1930}.

On a formal level, Schwinger \cite{Schwinger1981}, as well as
Englert and Schwinger
\cite{EnglertSchwinger1984StatisticalAtom:H,EnglertSchwinger1984StatisticalAtom:S,EnglertSchwinger1985A}
(see also Englert \cite{Englert1988} for a textbook treatment)
derived the third term in the asymptotic expansion of the ground
state energy, which grows like $Z^{5/3}$. In fact, this term is not
only generated by the exchange energy of the electrons,
but is also due to the semiclassical asymptotics of the eigenvalue sum of
the operator
$-\frac12\Delta-\Phi^{\rm TF}$ with the semiclassical parameter $Z^{-1/3}$ and
the $Z$-dependent Thomas--Fermi potential $\Phi^{\rm TF}$ (see \eqref{eq:tfeq3}).
In view of the results in Subsection \ref{ss:glimpsetf}, the
occurrence of $-\frac12\Delta-\Phi^{\rm TF}$ in the analysis of $E_S(Z)$
is not unexpected. 

A decade later, Schwinger's and Englert's derivation was made
mathematically rigorous in the monumental work of Fefferman and Seco
\cite{FeffermanSeco1990O,FeffermanSeco1992,FeffermanSeco1993,FeffermanSeco1994,FeffermanSeco1994T,FeffermanSeco1994Th,FeffermanSeco1995};
see also Bach \cite{Bach1992,Bach1993} and Graf and Solovej
\cite{GrafSolovej1994} for simplifications and improvements
of parts of Fefferman's and Seco's arguments and see \cite{Fefferman1992}
for a review of Fefferman's and Seco's proof.
They proved the existence of a constant $C_{\rm DS}>0$, which can be
computed in terms of the Thomas--Fermi density
$\rho_{Z=1}^{\rm TF}$, see \cite[p.~528]{FeffermanSeco1990O},
such that
\begin{align}
  E_S(Z) = E_1^{\rm TF} Z^{7/3} + \frac{q}{4}Z^2 - C_{\rm DS}Z^{5/3} + o(Z^{5/3}).  
\end{align}

In \cite[pp.~6, 9--10]{Fefferman1992} and
\cite[pp.~13--14]{FeffermanSeco1994} Fefferman and Seco make a
conjecture concerning a fourth, possibly oscillating term in the expansion
of $E_S(Z)$.
C\'ordoba et al.~\cite{Cordobaetal1994,Cordobaetal1995,Cordobaetal1996}
analyzed this term in detail and showed, in particular,
that it is bounded from below and above by constants times $Z^{3/2}$.

\subsection{The necessity of a relativistic description}

From a physical point of view it is questionable whether one can describe atoms with
large nuclear charges non-re\-la\-ti\-vis\-ti\-cal\-ly since already the bulk
of the electrons is localized in orbitals whose distance to the nucleus
is roughly $Z^{-1/3}$ or less. As $Z$ increases, the electrons become localized
closer to the nucleus and, by Heisenberg's uncertainty principle, one expects
that at least the velocities of the innermost electrons are a substantial fraction
of the speed of light $c$. In fact, the non-relativistic energy for electrons
on the length scale $Z^{-1}$ in the field of a nucleus of charge $Z$ is already
$-Z^2/2$. Thus, the virial theorem implies that its kinetic energy is $Z^2/2$.
In classical mechanics, this would show that the velocity of the electron is $Z$.
Since the velocity of light is 137 in our units, a single electron in the field
of a uranium nucleus ($Z=92$) would therefore move with a speed of
$\approx\frac{92}{137}\cdot c$, which indeed is a substantial fraction of the
speed of light.
For this reason, a relativistic description is mandatory, in particular on
the short length scale $Z^{-1}$.
Meanwhile, at distances $Z^{-1/3}$, electrons are expected to move
with velocities $\lesssim 10\%$ of the speed of light and, indeed, as we
will see momentarily, relativistic effects are negligible to the leading,
i.e., Thomas--Fermi order of $E_S(Z)$. 

Before turning to details, we point out that already Scott \cite[p.~866]{Scott1952}
anticipated possible shortcomings of his non-relativistic formula for large $Z$:
\begin{quote}
  ``\emph{Relativity effects of all kinds have been disregarded so far.
    Though this simplification has no serious consequences for $Z<30$,
    these effects are quite important for heavy elements. It would be a
    difficult task to calculate them accurately. A straightforward
    extension of Thomas' statistical method (Vallarta and Rosen
    \cite{VallartaRosen1932}) is inapplicable to our present problem, because
    most of the correction originates in the region close to the nucleus where
    the statistical method is vitiated by the boundary effect, and, in fact,
    such methods would give an infinite binding energy. Moreover, the
    interaction between the electrons is not wholly electrostatic.}''
\end{quote}
Concerning an extension of ``\emph{Thomas' statistical method}'' we refer to
Subsection \ref{ss:tfwrel} for recent developments in this direction.

\medskip
From a fundamental physical point of view, heavy atoms and molecules should be
treated by relativistic quantum electrodynamics (QED) and the corresponding field
theory. Unfortunately, many fundamental mathematical elements, e.g., the state
space and the Hamiltonian, lack mathematical understanding. As a consequence, one
is thrown back to approximate models. Here we will review three such approximate
Hamiltonian models that have been derived by physical arguments from QED, and
proven useful in applications. Moreover, we consider a mathematical
simplification thereof and a density functional obtained in this vein.

The first model can be traced back at least to
Chandrasekhar~\cite{Chandrasekhar1931} in the context of stability of
neutron stars. In it the single-particle kinetic energy $-\Delta/2$ is replaced
by $\sqrt{-c^2\Delta+c^4}-c^2$ with $c$ being the velocity of light.
Despite its mathematical simplicity, the resulting operator features
many physical defects, such as the violation of the principle of locality.
More crucially for us, it leads to ground state energies, that are much
too low compared to experimental data and can only be applied to atoms
with nuclear charge $Z<88$.

Physically and chemically more accurate models
are based on projected Cou\-lomb--Dirac \cite{Dirac1928,Dirac1928II} operators,
such as the Brown--Ravenhall \cite{BrownRavenhall1951} or the Furry
\cite{FurryOppenheimer1934} operator, which are applicable to atoms with
nuclear charge $Z<125$ and $Z<138$, respectively.
The latter is used in quantum chemistry to compute the ground state
energy of large atoms or molecules to chemical accuracy, see, e.g.,
Reiher and Wolf \cite{ReiherWolf2009}. 

A common property of relativistic operators is the fact that, at least
for large momenta, the kinetic energy scales like the Coulomb potential.
On a heuristic level, it is clear that the sole limit $Z\to\infty$
is meaningless since the potential energy cannot be controlled by the kinetic
energy anymore. Consequently, the total energy will not be bounded from
below, and the atom becomes ``unstable'' for fractions $Z/c$ beyond a critical
model-dependent coupling constant.
To make mathematically meaningful statements about asymptotics,
one considers the limit when both $Z$ and $c$ tend to infinity
simultaneously with a fixed ratio $Z/c=:\gamma$. (Of course, like the limit
$Z\to\infty$, the limit $c\to\infty$ is questionable since $c$ has a fixed
value). The idea to introduce $\gamma$ as a separate parameter
goes back at least to Schwinger \cite{Schwinger1980}.

%
%
%
%
For $\gamma\leq2/\pi$, S\o rensen \cite{Sorensen2005} proved that in the
above-described limit, the leading order of the ground state energy in the
Chandrasekhar model is given by the Thomas--Fermi energy.
Moreover, the ground state density on the
Thomas--Fermi length scale converges weakly and in the so-called Coulomb norm
(see \eqref{eq:defcoulombnorm}) to the hydrogenic Thomas--Fermi density
\cite{Merz2019D,MerzSiedentop2019}.
This indicates that the bulk of the electrons on the length scale $Z^{-1/3}$
does not behave relativistically.

%
%
%
On the other hand, electrons on the hydrogenic length scale $Z^{-1}$ are
located much closer to the nucleus and, as described before, are
expected to  lead to relativistic corrections of the Scott correction.
In fact, Schwinger \cite{Schwinger1980} derived a relativistic
$Z^2$-correction, which is lower than Scott's. This lowering was proved,
by two 
different approaches, in
Solovej et al.~\cite{Solovejetal2008} and in
\cite{Franketal2008}.
Later, a relativistic correction of the Scott correction was also proved
for the Brown--Ravenhall \cite{Franketal2009} and the Furry operator
\cite{HandrekSiedentop2015}.

The relativistic generalization of the strong Scott conjecture was
proved recently in \cite{Franketal2020P} (see
also \cite{Franketal2020R}), i.e., the convergence of the suitably rescaled
one-particle ground state density of Chandrasekhar atoms on the
hydrogenic length scale $1/Z$ to the sum of the squares of the
eigenfunctions of the one-particle Chandrasekhar operator.  Shortly
thereafter, the corresponding statement for the physically and chemically
accurate Furry operator was proved \cite{MerzSiedentop2020}.
These results underscores the fact that electrons close to the nucleus behave
relativistically and that self-interactions of the innermost electrons are
negligible.

\subsection{Organization}
\label{ss:organization}

We briefly summarize the contents of the present review.

%
%
In Section \ref{s:dft} we review three examples of density
functional theories.
For the first two, we refer to March's and Lieb's reviews \cite{March1957T,Lieb1981};
see also \cite{Siedentop2020M} for a recent review.
First, and most important for us, we discuss Thomas--Fermi theory.
%
%
Secondly, we review Weizs\"acker's extension of Thomas--Fermi theory, which
is physically and mathematically richer than Thomas--Fermi theory.
Qualitatively, this extension correctly accounts for quantum effects of
electrons close to the nucleus.
%
%
Thirdly, we investigate the Hellmann--Weizs\"acker functional, which
served as the basis for Siedentop's and Weikard's proof of the
Scott correction.
%
%
Finally, we consider a density functional that reduces to the
Thomas--Fermi--Weizs\"acker functional in the non-relativistic limit.
It was derived by Engel and Dreizler and was recently investigated from a
mathematical point of view.

%
%
\medskip
In Section \ref{s:nonrelativistic} we consider
non-relativistic atoms, ions, and molecules, both in the presence and
absence of a self-generated field, and summarize theorems concerning the
energy asymptotics and the convergence of the quantum density on both
the Thomas--Fermi and the Scott length scales.
Emphasis will be put on Scott's original derivation of the energy
correction, as well as
the initial motivating results in \cite{SiedentopWeikard1986}.

%
%
\medskip
Section \ref{s:relativistic} is concerned with
relativistic descriptions. We summarize results concerning
the energetic asymptotics as well as the convergence of the density for
all the three different relativistic models discussed in the introduction
-- the Chandrasekhar, the Brown--Ravenhall, and the Furry model.

\medskip
In Section \ref{s:questions} we discuss some open questions.

\subsection*{Notation}
We write $A\lesssim B$ for two non-negative quantities $A,B\geq 0$ to
indicate that there is a constant $C>0$ such that $A\leq C B$.
If $C=C_\tau$ depends on a parameter $\tau$, we sometimes write $A\lesssim_\tau B$.
The notation $A\sim B$ means $A\lesssim B\lesssim A$.
The indicator function of a set $\Omega$ is denoted by
$\one_\Omega$.
The negative part of a real number or a self-adjoint operator $A$
is defined by $A_-:=\max\{0,-A\}\geq0$.

\section{Density functional theories}
\label{s:dft}


In this section, we briefly review three examples of effective theories
that are known to describe correctly at least the leading order of the
ground state energy of large atoms and molecules.
They are known as \emph{density functionals}, i.e., energy functionals
that only depend on the one-particle density of a given many-particle
system.  We refer to Lieb's detailed review \cite{Lieb1981} on
Thomas--Fermi-type theories and to \cite{Siedentop2020M} for a recent
survey of density (matrix) functional theories.  In particular,
\cite{Lieb1981} also treats extensions of Thomas--Fermi theory like
Weizs\"acker's inhomogeneity \cite{Weizsacker1935} and Dirac's exchange
\cite{Dirac1930} correction.  The first one will be of some interest
for us since it generates a Scott correction, whereas the second one
will be discussed in passing only.

\subsection{Thomas--Fermi theory}
\label{ss:tf}

We begin with the simplest non-relativistic ``statistical model of the atom''
(Fermi \cite{Fermi1927,Fermi1928}, Gomb{\'a}s \cite{Gombas1949}), which was
formulated in the late 1920's independently by Thomas \cite{Thomas1927} and Fermi
\cite{Fermi1927,Fermi1928}.
In the molecular case with $K$ nuclei of charges
$\uZ=(Z_1,...,Z_K)\in(0,\infty)^{K}$ situated at
positions $\uR=(R_1,...,R_K)\in\R^{3K}$, the so-called
Thomas--Fermi (TF from now on) functional (Lenz \cite{Lenz1932}) is given by
\begin{align}
  \label{eq:deftf}
  \ce_V^{\rm TF}(\rho)
  := \int_{\br^3}\left(\frac{3}{5}\gamma_{\mathrm{TF}}\rho^{5/3}(x)-V(x)\rho(x)\right)\dx
  + D(\rho,\rho) + U
\end{align}
with $V$ and $U$ as in \eqref{eq:defv} and \eqref{eq:defu}, respectively.
In the atomic case $K=1$, we write
$\ce_Z^{\rm TF}\equiv \ce_{Z/|x|}^{\rm TF}$.

The first term of $\ce_V^{\rm TF}(\rho)$ represents the kinetic
energy and is
derived via the following argument based on a semiclassical phase space integration.
The TF model views the $N$ non-relativistic quantum particles in a potential
$W$ as a classical gas in phase space.
Since Planck's constant is $h=2\pi\hbar=2\pi$ in our units, the density of the
semiclassical gas is
\begin{align}
  \label{eq:semiclassicalderivation1}
  \rho(x)
  = q\int \one_{\{p^2/2\leq W(x)\}}\,\frac{\rd p}{(2\pi)^3}
  = \frac{q\cdot4\pi}{3\cdot(2\pi)^3}(2W(x))^{3/2}.
\end{align}
Thus, the semiclassical kinetic energy is
\begin{align}
  \label{eq:semiclassicalderivation2}
  q\int \frac{p^2}{2}\one_{\{p^2/2\leq W(x)\}}\frac{\rd p}{(2\pi)^3}
  = \frac{q}{(2\pi)^3}\cdot\frac12\cdot\frac{4\pi}{5}\cdot(2W(x))^{5/2}
  = \frac35 \gtf \cdot \rho(x)^{5/3}
\end{align}
with the \emph{Thomas--Fermi constant}
\begin{align}
  \label{eq:semiclassicalderivation3}
  \gtf := (6\pi^2)^{2/3}\hbar^2(2mq^{2/3})^{-1} = (6\pi^2/q)^{2/3}/2.  
\end{align}

The second term in \eqref{eq:deftf} represents the interaction energy
between the electrons and the nuclei.

The third term in \eqref{eq:deftf} is the electrostatic self-energy of
the charge density $\rho$. It is defined, more generally, for $\rho$
and $\sigma$ by
\begin{align}
  \label{eq:defD}
  D(\rho,\sigma) = \frac12\int_{\br^3}\int_{\br^3}\frac{\overline{\rho(x)}\sigma(y)}{|x-y|}\,\dx\,\dy.
\end{align}
Note that (by Plancherel and the convolution theorem), $D(\rho,\sigma)$
is sesquilinear and positive, and (by Cauchy--Schwarz),
$D(\rho,\sigma)\leq \sqrt{D(\rho,\rho)}\cdot \sqrt{D(\sigma,\sigma)}$.
In fact, $D(\cdot,\cdot)$ defines a scalar product on the set $\ci$
defined in \eqref{eq:defseti} below.
Thus, the right side of
\begin{align}
  \label{eq:defcoulombnorm}
  \|\rho\|_C := D(\rho,\rho)^{1/2}  
\end{align}
defines a norm on that space. This norm is sometimes called \emph{Coulomb norm}.

The TF functional \eqref{eq:deftf} is defined on its natural domain
(Simon \cite{Simon1979})
\begin{align}
  \label{eq:defseti}
  \ci = \{\rho\in L^{5/3}(\br^3):\ D(\rho,\rho)<\infty,\ \rho\geq0\},
\end{align}
i.e., for nonnegative densities with finite kinetic energy and finite
electron-electron repulsion. These conditions automatically guarantee the
finiteness of the electron-nucleus-interaction. (The local singularities at
the nuclei are controlled by the kinetic energy, whereas the long-range
part is controlled by the electron-electron repulsion.)

To describe a system of $N$ electrons, we restrict the TF functional to the set
\begin{align}
  \ci_N & = \{ \rho\in\ci:\ \int\rho = N \} \notag \\
        & = \{\rho\in L^{5/3}(\br^3):\ D(\rho,\rho)<\infty,\ \rho\geq0,\ \int\rho=N\}.
\end{align}
Here, mathematically speaking, $N$ need not be an integer.

In their seminal work \cite{LiebSimon1977}, Lieb and Simon were the first to
analyze this functional with mathematical rigor.
(See also \cite[Section~II]{Lieb1981} and \cite[Subsection~4.1]{Siedentop2020M}
for more detailed reviews.)
The following theorem asserts the existence and uniqueness of minimizers
of the TF functionals on $\ci$ and $\ci_N$.

\begin{theorem}
  \label{tftheorem}
  Let $\uZ=(Z_1,...,Z_K)\in(0,\infty)^K$ and $\uR=(R_1,...,R_K)\in\R^{3K}$.
  Then the following statements hold.
  \begin{enumerate}
  \item (Unconstrained problem):
    There exists a unique $0\leq\rho^{\rm TF}(\uZ,\uR,x)$ such that
    $\int_{\R^3}\rho^{\rm TF}(\uZ,\uR,x)\,\dx=|\uZ|=\sum_{\kappa=1}^K Z_\kappa$
    and
    \begin{align}
      \label{eq:tfunconstrained}
      \ce_V^{\rm{TF}}[\rho^{\rm TF}] = \inf\{\ce_V^{\rm TF}[\rho]: \rho\in\ci\}
      \equiv E^{\rm TF}(\uZ,\uR).
    \end{align}

  \item (Constrained problem): If $N\leq|\uZ|$, then there exists a unique,
    non-negative $\rho^{\rm TF}(N,\uZ,\uR,x)$ such that
    $\int_{\R^3}\rho^{\rm TF}(N,\uZ,\uR,x)\,\dx=N$ and 
    \begin{align}
      \label{eq:tfconstrained}
      \ce_V^{\rm{TF}}[\rho^{\rm TF}]
      = \inf\{\ce_V^{\rm TF}[\rho]: \rho\in\ci_N\}
      \equiv E^{\rm TF}(N,\uZ,\uR).
    \end{align}
    In particular, $N\mapsto E^{\rm TF}(N,\uZ,\uR)$ is
    strictly decreasing.
    
    If $N>\sum_{\kappa=1}^K Z_\kappa$, then
    $E^{\rm TF}(N,\uZ,\uR)$ is not a minimum, i.e., there are
    no negatively charged ions in TF theory.

  \item (Unconstrained Thomas--Fermi equation) In the
    unconstrained problem, the minimizer $\rho^{\rm TF}\in\ci$ obeys
    $\int\rho^{\rm TF}=|\uZ|$ and
    \begin{align}
      \label{eq:tfeq}
      \gtf(\rho^{\rm TF})^{\frac23}
      = V - \left(\rho^{\rm TF}(\uZ,\uR,\cdot)\ast\frac{1}{|\cdot|}\right).
    \end{align}
    Moreover, if $\rho\in\ci$ satisfies
    \eqref{eq:tfeq}, then it minimizes $\ce_V^{\rm TF}$ on $\ci$.
    If $K=1$, then $\rho^{\rm TF}$ is spherically symmetric and decreasing.

  \item (Thomas--Fermi equation in constrained problem) In the
    constrained problem with $0<N\leq|\uZ|$,
    the minimizer $\rho^{\rm TF}\in\ci_N$ satisfies
    \begin{align}
      \label{eq:tfeq2}
      \gtf\rho^{\rm TF}(x)^{\frac23}
      = \left(V(x)-\left(\rho^{\rm TF}(N,\uZ,\uR,\cdot)\ast\frac{1}{|\cdot|}\right)(x)-\mu\right)_+
    \end{align}
    for some (unique) $\mu=\mu(N)\geq0$.
    Moreover, there is no solution $\rho\in\ci_N$ to
    \eqref{eq:tfeq2} for any $\mu$ other than $\rho^{\rm TF}$.

    When $N=Z$, then $\mu=0$, and otherwise
    $\mu>0$, i.e., $E^{\rm TF}(N,\uZ,\uR)$ is strictly decreasing
    in $N$. As $N$ varies from $0$ to $|\uZ|$, $\mu$ varies
    continuously from $\infty$ to $0$. Moreover, $\mu$ is a convex,
    decreasing function of $N$.

  \item (Scaling) For any $a>0$, the scaling relations
    \begin{subequations}
      \label{eq:scalingtf}
      \begin{align}
        \rho^{\rm TF}(N,\uZ,\uR,x)
        & = a^{-2}\rho^{\rm TF}(a N, a \uZ,a^{-1/3}\uR,a^{-1/3}x), \\
        E^{\rm TF}(N,\uZ,\uR)
        & = a^{-7/3} E^{\rm TF}(a N, a\uZ,a^{-1/3} \uR)
      \end{align}
    \end{subequations}
    hold.
  \end{enumerate}
\end{theorem}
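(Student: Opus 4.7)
The plan is to establish the scaling identities (5) by a direct change-of-variables argument, exploiting the fact that every term of $\ce_V^{\rm TF}$ is homogeneous of the same degree $a^{7/3}$ under the prescribed joint scaling of $\rho$, $\uZ$, and $\uR$. Concretely, I introduce the map $T_a\colon \rho \mapsto \sigma$ defined by $\sigma(y) := a^2 \rho(a^{1/3}y)$, and first check that $T_a$ is a bijection of $\ci$ onto itself (with inverse $T_{1/a}$), and more specifically of $\ci_N$ onto $\ci_{aN}$: nonnegativity and $L^{5/3}$-regularity are preserved by inspection, one has $\int T_a\rho = a^2 \cdot a^{-1} \int \rho = a \int \rho$, and (as computed below) $D(T_a\rho,T_a\rho) = a^{7/3} D(\rho,\rho)$, so finiteness of the Coulomb norm is preserved as well.

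Next, I would carry out the substitution $x = a^{1/3}y$ term by term in $\ce^{\rm TF}_{V_{a\uZ, a^{-1/3}\uR}}[T_a\rho]$. The kinetic term picks up the factor $a^{10/3}\cdot a^{-1} = a^{7/3}$; the electron--nucleus term, with the potential $V_{a\uZ,a^{-1/3}\uR}(y) = \sum_\kappa (aZ_\kappa)/|y - a^{-1/3}R_\kappa|$, gives $a^{3} \cdot a^{1/3} \cdot a^{-1} = a^{7/3}$ after using $|y - a^{-1/3}R_\kappa| = a^{-1/3}|x - R_\kappa|$; the analogous substitution in $D(T_a\rho,T_a\rho)$ produces $a^{4} \cdot a^{1/3} \cdot a^{-2} = a^{7/3}$; and the nuclear--nuclear term $U_{a\uZ, a^{-1/3}\uR}$ is directly $a^{2} \cdot a^{1/3} \cdot U_{\uZ,\uR} = a^{7/3} U_{\uZ,\uR}$. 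Consequently,
\begin{equation*}
\ce^{\rm TF}_{V_{a\uZ,a^{-1/3}\uR}}[T_a\rho] = a^{7/3}\, \ce^{\rm TF}_{V_{\uZ,\uR}}[\rho]
\qquad \text{for every } \rho \in \ci.
\end{equation*}

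Taking the infimum over $\ci$ (respectively over $\ci_N$) and using the bijectivity of $T_a$ onto $\ci$ (respectively onto $\ci_{aN}$) immediately yields the stated energy scaling relation in both the unconstrained and constrained settings. The density scaling then follows from the uniqueness statements already established in parts (1) and (2): since $T_a \rho^{\rm TF}(N,\uZ,\uR)$ lies in $\ci_{aN}$ and attains the infimum of the rescaled functional, it must coincide with the unique minimizer $\rho^{\rm TF}(aN, a\uZ, a^{-1/3}\uR)$, which is precisely the claimed identity. I do not anticipate a genuine obstacle here: the entire argument reduces to an exponent-tracking exercise, and the only care required is to verify at the outset that $T_a$ preserves the admissibility conditions defining $\ci_N$, which is immediate from the opening paragraph.
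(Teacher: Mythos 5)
Your change-of-variables argument for the scaling relations in part (5) is correct and is the standard one: the exponent bookkeeping in each of the four terms of $\ce_V^{\rm TF}$ is right, the map $T_a\colon\rho\mapsto a^2\rho(a^{1/3}\cdot)$ is indeed a bijection of $\ci$ onto itself (and of $\ci_N$ onto $\ci_{aN}$) with inverse $T_{1/a}$, and passing to the infimum together with the uniqueness of minimizers gives both the energy and density scaling identities. The review paper does not itself prove this theorem (it cites Lieb--Simon \cite{LiebSimon1977} and Lieb's review \cite{Lieb1981}), but in those sources the scaling is proved exactly as you do, by an elementary substitution; there is no divergence of method here.

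However, there is a substantial gap in coverage: the theorem has five parts and you have proposed a proof of only the last one, while parts (1)--(4) carry the real mathematical weight and are, moreover, logically \emph{prior} to your argument, since you explicitly invoke the uniqueness statements of (1) and (2) to pass from energy scaling to density scaling. These parts require quite different tools: existence of minimizers over $\ci$ and $\ci_N$ by the direct method in the calculus of variations (weak lower semicontinuity of the kinetic and Coulomb terms in $L^{5/3}\cap\ci$, weak continuity of the attraction term, plus a compactness-free argument to handle the fact that the constraint $\int\rho=N$ can be lost in the limit); uniqueness by the strict convexity of $\rho\mapsto\ce_V^{\rm TF}[\rho]$ on the convex set $\ci$; the Euler--Lagrange equations \eqref{eq:tfeq} and \eqref{eq:tfeq2} by differentiating along admissible variations and identifying the Lagrange multiplier $\mu$, together with a separate argument (Lieb--Simon's ``no negative ions'' result) showing that no minimizer exists for $N>|\uZ|$; and the stated properties of $\mu(N)$ (continuity, convexity, monotonicity, $\mu(|\uZ|)=0$) by convexity arguments on $N\mapsto E^{\rm TF}(N,\uZ,\uR)$. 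None of this is addressed in your proposal, so as it stands it proves only the scaling relations, not the theorem.
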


\begin{remarks}
  (1) The number $E^{\rm TF}(N,\uZ,\uR)$ is called the
  \emph{Thomas--Fermi energy} and the minimizer $\rho^{\rm TF}$
  is called the \emph{Thomas--Fermi density}.

  (2) Although $E^{\rm TF}(N,\uZ,\uR)$ is not a minimum
  on $\ci_N$ and \eqref{eq:tfeq2} has no solution with
  $\int\rho=N$ if $N>\sum_{\kappa=1}^K Z_\kappa$, the number
  $E^{\rm TF}(N,\uZ,\uR)$ still exists and we have
  $E^{\rm TF}(N,\uZ,\uR)=E^{\rm TF}(|\uZ|,\uZ,\uR)$ in that
  case.

  (3) If $K=1$ in the unconstrained problem, then we write
  $\rho^{\rm TF}(Z,0,x)\equiv\rho_Z^{\rm TF}(x)$
  and $E^{\rm TF}(Z,0)\equiv E^{\rm TF}(Z)$.
  Similarly, in the constrained problem we shall write
  $\rho^{\rm TF}(N,Z,0,x)\equiv\rho_Z^{\rm TF}(N,x)$
  and $E^{\rm TF}(N,Z,0)\equiv E^{\rm TF}(N,Z)$.

  (4) The scaling relations for $K=1$ show, in particular, that
  TF theory has ``natural'' length and energy scales, the
  \emph{Thomas--Fermi length scale} $Z^{-1/3}$ and the
  \emph{Thomas--Fermi energy scale} $Z^{7/3}$, respectively.
  For $Z=1$ the minimizer $\rho^{\mathrm{TF}}$ (either in the
  unconstrained or constrained problem) is called \emph{hydrogenic
    Thomas--Fermi density}.
  The numerical value of the associated infimum is
  $E^{\rm TF}(Z=1)\approx -3.678\,74\cdot\gtf^{-1}$,
  cf.~Gomb{\'a}s \cite[p.~60]{Gombas1949}.
\end{remarks}

The following theorem due to Lieb and Simon
\cite[Theorem IV.5]{LiebSimon1977} (see also
\cite[Theorem 2.8]{Lieb1981}) summarizes some important properties
of the TF density. 

\begin{theorem}[Properties of $\rho^{\rm TF}$]
  \label{tfdensityproperties}
  Let $\uZ=(Z_1,...,Z_K)\in(0,\infty)^K$ and $\uR=(R_1,...,R_K)\in\R^{3K}$
  and let $\rho^{\rm TF}$ denote the solution to the constrained
  Thomas--Fermi equation \eqref{eq:tfeq2}
  with $\int\rho^{\rm TF}(N,\uZ,\uR,x)=N$.
  Then the following statements hold:
  \begin{enumerate}
  \item Let $\kappa\in\{1,...,K\}$ be arbitrary. Then, as $x\to R_\kappa$, one has
    \begin{align}
      \rho^{\rm TF}(N,\uZ,\uR,x)
      = \left(\frac{Z_\kappa}{\gtf}\right)^{\frac32}|x-R_\kappa|^{-3/2} + o(|x-R_\kappa|^{-1/2}).
    \end{align}
  \item $\rho^{\rm TF}(N,\uZ,\uR,x)\to0$ as $|x|\to\infty$.
  \item $\rho^{\rm TF}$ is real analytic on
    $\{x\in\R^3:\,x\neq R_\kappa\ \forall \kappa,\rho^{\rm TF}(x)>0\}$.
  \item In the neutral case ($N=|\uZ|$, $\mu=0$), one has
    $\rho^{\rm TF}(x)>0$ for all $x\in\R^3$.
  \item In the ionic case ($N<|\uZ|$, $\mu>0$), $\rho^{\rm TF}$
    is compactly supported and $C^1$ away from the $R_\kappa$.

  \item (Thomas--Fermi equation) Let the \emph{Thomas--Fermi potential}
    $\Phi^{\rm TF}$ be defined by
    \begin{align}
      \label{eq:tfeq3}
      \Phi^{\rm TF}(N,\uZ,\uR,x) := V(x) - (\rho^{\rm TF}(N,\uZ,\uR,\cdot)\ast|\cdot|^{-1})(x).
    \end{align}
    Then $\Phi^{\rm TF}$ obeys the \emph{Thomas--Fermi differential
      equation}
    \begin{align}
      \label{eq:tfeq4}
      -\frac{1}{4\pi}(\Delta\Phi^{\rm TF})(N,\uZ,\uR,x)
      = \sum_{\kappa=1}^K Z_\kappa\delta(x-R_\kappa) - \gtf^{-3/2}(\Phi^{\rm TF}-\mu)_+^{3/2}.
    \end{align}

  \item (Sommerfeld)
    In the neutral case ($\mu=0$) the \emph{Sommerfeld solution}
    \begin{align}
      \label{eq:sommerfeld}
      \psi(x) = \psi(|x|) = \gtf^3\cdot (3/\pi)^2\cdot |x|^{-4}
    \end{align}
    solves the TF differential equation \eqref{eq:tfeq4} for $|x|>0$
    and $x\neq R_\kappa$ and it is the only power law that does so.
    Moreover,
    \begin{align}
      \lim_{s\to\infty}\frac{\max_{|x|=s}\Phi^{\rm TF}(N,\uZ,\uR,x)}{\psi(s)}
      = \lim_{s\to\infty}\frac{\min_{|x|=s}\Phi^{\rm TF}(N,\uZ,\uR,x)}{\psi(s)}
      = 1.
    \end{align}
    In the atomic case ($K=1$), the TF density $\rho^{\rm TF}$ obeys
    \begin{align}
      \rho^{\rm TF}(N,Z,x)
      = \left(\frac{3\gtf}{\pi}\right)^3 |x|^{-6} + o(|x|^{-6})
    \end{align}
    as $|x|\to\infty$.
  \end{enumerate}
\end{theorem}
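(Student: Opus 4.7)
The plan is to anchor everything on the Thomas--Fermi potential $\Phi^{\rm TF}$ defined in \eqref{eq:tfeq3} and to establish the PDE~(6) first, since it converts the nonlocal integral equation \eqref{eq:tfeq2} into a local problem amenable to elliptic techniques. Applying $-\Delta/(4\pi)$ distributionally to \eqref{eq:tfeq3}, using $-\Delta|x|^{-1}=4\pi\delta_0$, yields $-\Delta\Phi^{\rm TF}/(4\pi) = \sum_\kappa Z_\kappa\delta_{R_\kappa}-\rho^{\rm TF}$; substituting $\rho^{\rm TF}=\gtf^{-3/2}(\Phi^{\rm TF}-\mu)_+^{3/2}$ from \eqref{eq:tfeq2} produces \eqref{eq:tfeq4}.

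For the near-nucleus asymptotics~(1), I would write $V(x)=Z_\kappa|x-R_\kappa|^{-1}+\widetilde V(x)$ with $\widetilde V$ bounded near $R_\kappa$, and observe that $\rho^{\rm TF}\ast|\cdot|^{-1}$ is continuous on $\R^3$ (split the Coulomb kernel into its singular and regular parts and apply Hardy--Littlewood--Sobolev and Young to $\rho^{\rm TF}\in L^{5/3}\cap L^1$). Hence $\Phi^{\rm TF}(x)=Z_\kappa|x-R_\kappa|^{-1}(1+o(1))$ as $x\to R_\kappa$, and \eqref{eq:tfeq2} raised to the $3/2$-power delivers the stated density asymptotics. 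Item~(3) follows by an elliptic bootstrap: on $\{\Phi^{\rm TF}>\mu\}\setminus\{R_1,\dots,R_K\}$ the PDE~\eqref{eq:tfeq4} is a semilinear elliptic equation with real-analytic nonlinearity; since $\Phi^{\rm TF}\in L^\infty_{\rm loc}$ by HLS, standard theory promotes this to $C^\infty$, and the Morrey--Nirenberg theorem on analytic elliptic regularity upgrades it to real-analyticity, which transfers to $\rho^{\rm TF}=\gtf^{-3/2}(\Phi^{\rm TF}-\mu)^{3/2}$.

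For~(2),~(4),~(5) I would first establish $\Phi^{\rm TF}\to 0$ at infinity using Newton's theorem on the exterior Coulomb potential of $\rho^{\rm TF}$ and the global bound $\int\rho^{\rm TF}\leq|\uZ|$. In the neutral case~(4), to rule out $\Phi^{\rm TF}<0$ I argue that on the open set $\{\Phi^{\rm TF}<0\}$ the PDE degenerates to $\Delta\Phi^{\rm TF}=0$ and the minimum principle combined with boundary and infinity values both equal to $0$ forces $\Phi^{\rm TF}\equiv 0$ there, contradicting the blow-up at the nuclei; the strong maximum principle for the semilinear equation then excludes interior zeros of $\Phi^{\rm TF}$ and yields $\rho^{\rm TF}>0$. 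In the ionic case~(5), $\mu>0$ together with $\Phi^{\rm TF}\to 0$ at infinity confines $\{\Phi^{\rm TF}>\mu\}=\{\rho^{\rm TF}>0\}$ to a bounded set, giving the compact support; matching the interior semilinear solution with the harmonic exterior continuation of $\Phi^{\rm TF}-\mu$ across $\{\Phi^{\rm TF}=\mu\}$ gives $C^1$ regularity at the free boundary.

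Finally, for the Sommerfeld asymptotics~(7), a direct calculation using $\Delta|x|^{-s}=s(s-1)|x|^{-s-2}$ in $\R^3$ verifies that $\psi(x)=\gtf^3(3/\pi)^2|x|^{-4}$ solves the homogeneous PDE $-\Delta\psi/(4\pi)=-\gtf^{-3/2}\psi^{3/2}$; the ansatz $|x|^{-s}$ matches exponents only for $s+2=3s/2$, forcing $s=4$, so this is the unique power-law solution. The asymptotic statement that $\Phi^{\rm TF}/\psi\to 1$ uniformly in angle as $|x|\to\infty$ I would then obtain by sandwiching $\Phi^{\rm TF}$ between super- and sub-solutions obtained by scaling $\psi$ by $1\pm\varepsilon$ and shifting by appropriate displacements, applying the maximum principle for the semilinear operator on the exterior of a large ball containing all nuclei. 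I expect this last comparison step to be the main obstacle, since controlling the solution uniformly in the angular variables on an unbounded domain requires careful barrier construction and careful management of the nonlinear source term; the atomic-case density asymptotic then follows from \eqref{eq:tfeq2}.
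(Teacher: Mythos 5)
The paper itself does not prove Theorem~\ref{tfdensityproperties}; it is stated as a summary of classical results and is attributed to Lieb--Simon \cite{LiebSimon1977} and Lieb \cite{Lieb1981}. Your overall strategy --- pass from the integral Euler--Lagrange equation \eqref{eq:tfeq2} to the local PDE \eqref{eq:tfeq4} for $\Phi^{\rm TF}$, then invoke elliptic regularity, maximum principles, and Newton's theorem --- is exactly the route taken in those references, so the plan is sound.

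There is, however, a genuine gap in your treatment of item~(1). You reduce to the statement $\Phi^{\rm TF}(x)=Z_\kappa|x-R_\kappa|^{-1}\bigl(1+o(1)\bigr)$. Raising this to the power $3/2$ via \eqref{eq:tfeq2} only yields
\[
\rho^{\rm TF}(x)=\left(\frac{Z_\kappa}{\gtf}\right)^{3/2}|x-R_\kappa|^{-3/2}\bigl(1+o(1)\bigr),
\]
i.e., an error of size $o(|x-R_\kappa|^{-3/2})$, which is far weaker than the $o(|x-R_\kappa|^{-1/2})$ in the theorem. The multiplicative $1+o(1)$ formulation throws away the information you already have: continuity of $\rho^{\rm TF}\ast|\cdot|^{-1}$ near $R_\kappa$ gives the \emph{additive} decomposition $\Phi^{\rm TF}(x)=Z_\kappa|x-R_\kappa|^{-1}+W(x)$ with $W$ continuous at $R_\kappa$, and only after a Taylor expansion of $(Z_\kappa/r + W-\mu)^{3/2}$ in the small quantity $r(W-\mu)/Z_\kappa$ does one see the correction to be of order $r^{-1/2}$. (Note that this expansion produces a term $\tfrac{3}{2}\gtf^{-3/2}Z_\kappa^{1/2}\bigl(W(R_\kappa)-\mu\bigr)r^{-1/2}$, so the error is in general only $O(r^{-1/2})$, as stated in \cite[Theorem~2.8]{Lieb1981}; the $o$ in the review's display appears to be a typo for $O$.) Two smaller points: in item~(4), the intended contradiction is not with ``the blow-up at the nuclei'' (the nuclei sit in $\{\Phi^{\rm TF}>0\}$) but simply that the minimum principle for a harmonic function that vanishes on the boundary of $\{\Phi^{\rm TF}<0\}$ and at infinity forces $\Phi^{\rm TF}\geq0$ there; and the strict positivity $\Phi^{\rm TF}>0$ is most cleanly obtained by writing the equation in the linearized form $(\Delta-c)\Phi^{\rm TF}=0$ with $c=4\pi\gtf^{-3/2}(\Phi^{\rm TF})^{1/2}\geq0$ bounded locally and applying the strong maximum principle to $-\Phi^{\rm TF}$. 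For item~(5), you do not really need a free-boundary matching argument: standard $L^p$ elliptic regularity gives $\Phi^{\rm TF}\in C^{1,\alpha}_{\rm loc}$ away from the nuclei, and then $\rho^{\rm TF}=\gtf^{-3/2}(\Phi^{\rm TF}-\mu)_+^{3/2}$ is automatically $C^1$ because $\nabla\rho^{\rm TF}=\tfrac{3}{2}\gtf^{-3/2}(\Phi^{\rm TF}-\mu)_+^{1/2}\nabla\Phi^{\rm TF}\to0$ as the free boundary is approached from either side. You correctly flag item~(7) as the most demanding step; the barrier construction by rescaled and shifted Sommerfeld functions is indeed how Lieb \cite[Theorem~2.10]{Lieb1981} (building on Hille and Sommerfeld) proves it, and carrying it out uniformly in the angular variable over the exterior of a ball is where most of the work lies.
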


\begin{remark}
  Observe that Sommerfeld's solution \eqref{eq:sommerfeld} (see
  \cite{Sommerfeld1932}) is \emph{independent of $Z$} and even solves
  the molecular TF equation; due to the scaling of the TF density, the
  Sommerfeld asymptotics $|x|^{-6}$ of the TF density still has
  magnitude $\mathcal{O}(Z^2)$ for $|x|\lesssim Z^{-1/3}$.  Sommerfeld
  type estimates are contained in Solovej's proof of the ionization
  conjecture in Hartree--Fock theory \cite[Theorems~4.6, 5.2,
  5.4]{Solovej2003}; see also \cite[Theorem~2.10]{Lieb1981}
  and \cite[Section~1]{IvriiSigal1993} for further estimates for the
  TF density and potential.
\end{remark}

In absence of electron repulsion, the Thomas--Fermi energy can be
computed easily. This is important for the heuristic derivation of
the Scott correction (Subsection \ref{sss:scottliebarguments}) and is
done in the following remark.

\begin{remark}[Thomas--Fermi energy for the Bohr atom]
  \label{bohrtfatom}
  Let $K=1$ and $Z>0$, and consider TF theory for an atom in absence
  of electron repulsion, i.e.,
  \begin{align}
    \ce_{Z,{\rm Bohr}}^{\rm TF}(\rho)
    = \int_{\R^3}\left(\frac{3}{5}\gtf\rho(x)^{5/3}-\frac{Z}{|x|}\rho(x)\right)\dx.
  \end{align}
  For $N>0$, let
  $$
  E^{\rm TF}_{\rm Bohr}(N,Z)
  = \inf\left\{\ce_{Z,{\rm Bohr}}^{\rm TF}(\rho) :\, 0\leq \rho\in L^{5/3}(\R^3),\, \int \rho = N \right\}.
  $$
  It is elementary to see that there is a unique minimizer
  $\rho_{{\rm Bohr}}^{\rm TF}(N,Z)$ and that this minimizer
  satisfies the Euler--Lagrange equation
  \begin{align}
    \gtf\cdot\rho_{{\rm Bohr}}^{\rm TF}(N,Z,x)^{2/3}
    = \left(\frac{Z}{|x|}-\mu\right)_+
  \end{align}
  with some $\mu>0$.
  Integrating the $3/2$-th power of this identity leads to the relation
  \begin{align*}
    \mu= \left( \frac{\pi^2}{4} \right)^\frac23 \frac{1}{\gtf} \frac{Z^2}{N^\frac23},
  \end{align*}
  and then to the formula for the energy
  \begin{align}
    \label{eq:tfenergyneutralbohr}
    E^{\rm TF}_{\rm Bohr}(N,Z) = \ce_{Z,{\rm Bohr}}^{\rm TF}(\rho_{{\rm Bohr}}^{\rm TF}(N,Z)) = - \frac{3}{\gtf} \left( \frac{\pi^2}{4} \right)^\frac23 Z^2 N^\frac13.
  \end{align}
\end{remark}

\subsection{Thomas--Fermi--Weizs\"acker theory}
\label{ss:tfw}

The semiclassical derivation of TF theory assumes that the density
is locally constant. In this regard, we recall Scott's observations
\cite[p.~859, p.~867]{Scott1952}:

\begin{quote}
  ``\emph{The Thomas--Fermi statistical model of the atom leads to the
    formula $20\cdot 92\, Z^{7/3}\,\mathrm{ev}$ for the total binding energy
    of an atom with atomic number $Z$, but this formula gives values
    which are too high by roughly $20\%$. The actual binding energies
    increase quite smoothly with increasing $Z$, which suggests the
    existence of a more appropriate formula. [...] The failure of the
    currently-quoted formula is due partly to the shortcomings of the
    statistical model in the region nearest the nucleus, and partly to
    the effect of exchange.}''
\end{quote}

\medskip
In 1935 Weizs\"acker \cite{Weizsacker1935} proposed a correction of
Thomas--Fermi theory that penalizes rapid changes of the density, which
are expected to occur close to the nucleus.

\begin{remark}
  Some words on the history:
  Weizs\"acker introduced this correction to explain the rise of the mass
  defect per nucleon in a nucleus from very heavy (say uranium) to
  semi-heavy nuclei (like iron). To that end he consulted Gamow's liquid
  drop model for nuclei and argued that, as a consequence of the
  uncertainty principle, the ``surface of the nucleus'' must be smeared out.
  For otherwise, an instantaneous drop of the density with infinite slope
  would lead to an infinite kinetic energy, which is unreasonable.
  This smearing of the surface could be accounted for by replacing the
  eigenfunctions that are used in the derivation of the Thomas--Fermi
  functional, namely plane waves, by waves with linearly varying amplitude.
  This gives rise to Weizs\"acker's term $\rho^{-1}(\nabla\rho)^2$.
\end{remark}

We consider the Thomas--Fermi--Weizs\"acker (TFW from now on) functional
\begin{align}
  \label{eq:deftfw}
  \ce_V^{\rm TFW}(\rho) := \frac{A}{2}\int_{\R^3} |\nabla\sqrt{\rho}|^2 + \ce_V^{\rm TF}(\rho)
\end{align}
with nuclei of charges $\uZ=(Z_1,...,Z_K)\in(0,\infty)^K$
situated at positions $\uR=(R_1,...,R_K)\in\R^{3K}$.
It is naturally defined on the set
\begin{align}
  \ca := \{\rho\in L_{\rm loc}^1(\R^3) :\, \rho\geq0,\, \nabla\sqrt\rho\in L^2(\R^3),\, \|\rho\|_C<\infty\},
\end{align}
where the gradient is understood in the sense of distributions.
For fixed particle number $N\in(0,\infty)$, the functional is defined on
\begin{align}
  \label{eq:defcalambda}
  \ca_N := \{\rho \in \ca:\, \int\rho=N\}.
\end{align}
Weizs\"acker introduced \eqref{eq:deftfw} with $A=1$. However, it is convenient
to have $A>0$ as an adjustable parameter, as we shall see soon.

The mathematical analysis of $\ce_V^{\rm TFW}$ started with the works
of Benguria \cite{Benguria1979} and Benguria et al.~\cite{Benguriaetal1981}.
Besides its mathematical richness, it turned out TFW theory describes --
at least qualitatively -- the physics of real atoms more closely than
Thomas--Fermi theory.
For instance, the TFW minimizer is finite at the nuclei and decays
exponentially at infinity. Moreover, binding is possible and anions
can be stable in TFW theory.  For a concise summary of TFW theory we
encourage the reader to consult \cite{Benguria1979,Benguriaetal1981},
as well as \cite[Sect.~VII]{Lieb1981}.  Here we restrict ourselves to
a summary of the energy expansion and the minimizing density as
$|\uZ|\to\infty$.  Our presentation follows closely Lieb
\cite{Lieb1981} and Lieb and Liberman \cite{LiebLiberman1982}.  We
start with the following result on existence and uniqueness of
minimizers of the TFW functional.

\begin{theorem} 
  \label{tfwtheorem}
  Let $A>0$, $\uZ=(Z_1,...,Z_K)\in(0,\infty)^K$, and
  $\uR=(R_1,...,R_K)\in\R^{3K}$. Then the following statements hold.

  \begin{enumerate}
  \item (Unconstrained problem) There is $N_c\in(|\uZ|,2|\uZ|)$
    such that the TFW functional $\ce_V^{\rm TFW}$ has a unique minimizer
    $\rho^{\rm TFW}(\uZ,\uR,x)$ on $\ca$ with particle number
    $\int\rho^{\rm TFW}(\uZ,\uR,x)\,\dx=N_c$.
    This minimizer satisfies the TFW equation
    \begin{align}
      \label{eq:tfweulerlagrange1}
      \left(-\frac{A}{2}\Delta+W\right)\sqrt{\rho^{\rm TFW}} = 0
    \end{align}
    with
    \begin{align}
      \label{eq:defw1}
      W(x) = \gtf\rho^{\rm TFW}(\uZ,\uR,x)^{2/3} - V(x) + \int_{\R^3}\frac{\rho^{\rm TFW}(\uZ,\uR,y)}{|x-y|}\,\dy.
    \end{align}
    The infimum is denoted by
    \begin{align}
      E^{\rm TFW}(\uZ,\uR) = \inf\{\ce_V^{\rm TFW}[\rho]:\,\rho\in\ca\}.
    \end{align}

  \item (Constrained problem) If $N\leq N_c$, then
    the TFW functional has a unique minimizer
    $\rho^{\rm TFW}(N,\uZ,\uR,x)$
    on $\ca_N$. This minimizer satisfies the TFW equation
    \begin{align}
      \label{eq:tfweulerlagrange2}
      \left(-\frac{A}{2}\Delta+W\right)\sqrt{\rho^{\rm TFW}} = -\mu\sqrt{\rho^{\rm TFW}}
    \end{align}
    with $W(x)$ as in \eqref{eq:defw1}, $\mu\geq0$, and $\mu=0$
    for $N=N_c$. The infimum is denoted by
    \begin{align}
      \label{eq:deftfwenergy}
      E^{\rm TFW}(N,\uZ,\uR) = \inf\{\ce_V^{\rm TFW}[\rho]:\,\rho\in\ca_N\}.
    \end{align}
    
    If $N>N_c$, there is no minimizer on $\ca_N$.
  \end{enumerate}
\end{theorem}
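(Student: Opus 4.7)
The plan is the direct method of the calculus of variations, combined with strict convexity and a separate analysis of the critical particle number $N_c$. First I would show that $\ce_V^{\rm TFW}$ is bounded below on $\ca_N$ for each fixed $N>0$. The attractive potential term $-\int V\rho$ is controlled by splitting $V=V_1+V_\infty$ into a bounded part (absorbed by a multiple of $N$) and a singular part near the nuclei; the latter is dominated by the Weizs\"acker kinetic energy $\frac{A}{2}\int|\nabla\sqrt\rho|^2$ via Hardy's inequality applied to $u=\sqrt\rho\in H^1(\R^3)$. For a minimizing sequence $\{\rho_n\}\subset\ca_N$, uniform control on $\|\sqrt{\rho_n}\|_{H^1}$, $\|\rho_n\|_{5/3}$, and $\|\rho_n\|_C$ gives weak compactness; passing to a subsequence, $\sqrt{\rho_n}\rightharpoonup u$ in $H^1(\R^3)$, and Rellich plus Fatou handle the nonlinear terms while weak lower semicontinuity of the positive quadratic form $D(\cdot,\cdot)$ handles the Coulomb self-energy. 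Some care is needed to rule out loss of mass at infinity; a tightness argument or a concentration--compactness step in the spirit of Lions is required for $N$ strictly below the critical value, since otherwise the limit $\rho_\ast=u^2$ could have $\int\rho_\ast<N$.

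Uniqueness is obtained from strict convexity. The map $\rho\mapsto\int|\nabla\sqrt\rho|^2$ is convex in $\rho$ because the integrand $|\nabla\rho|^2/(4\rho)$ is jointly convex in $(\rho,\nabla\rho)$ as a perspective function; the terms $\rho\mapsto\int\rho^{5/3}$ and $\rho\mapsto D(\rho,\rho)$ are strictly convex; and $-\int V\rho$ is linear. Hence $\ce_V^{\rm TFW}$ is strictly convex on the convex sets $\ca$ and $\ca_N$, forcing the minimizer to be unique. The Euler--Lagrange equation is derived by varying with respect to $u=\sqrt{\rho^{\rm TFW}}$: testing with $u+\epsilon\phi$ for $\phi\in C_c^\infty(\R^3)$ produces $(-\tfrac{A}{2}\Delta+W)u=-\mu u$, with $W$ as in \eqref{eq:defw1} and $\mu$ the Lagrange multiplier associated with the constraint $\int u^2=N$. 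In the unconstrained case the minimum is interior on $\ca$ and $\mu=0$.

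The delicate point, and the main obstacle, is the assertion $N_c\in(|\uZ|,2|\uZ|)$. I would set $N_c := \sup\{N\geq 0 : \ce_V^{\rm TFW}\text{ attains its infimum on }\ca_N\}$ and verify, using that $N\mapsto E^{\rm TFW}(N,\uZ,\uR)$ is strictly decreasing on $[0,N_c]$ because $\mu(N)>0$ for $N<N_c$, that the unconstrained infimum is attained precisely at $N=N_c$; this reconciles parts (1) and (2) of the theorem. The lower bound $N_c>|\uZ|$ reflects the fact that TFW atoms bind extra electrons; it is proved by showing $\mu(|\uZ|)>0$, equivalently that the Schr\"odinger operator $-\frac{A}{2}\Delta+W$ built from the neutral minimizer has a strictly negative ground state, so that adding a small amount of density far from the nuclei lowers the energy below $E^{\rm TFW}(|\uZ|,\uZ,\uR)$. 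The upper bound $N_c<2|\uZ|$ is the subtlest step, and follows the Benguria--Brezis--Lieb strategy: one multiplies \eqref{eq:tfweulerlagrange1} by $|x-R_\kappa|\sqrt{\rho^{\rm TFW}}$, integrates, and uses Newton's theorem to pit the nuclear attraction against the electronic repulsion. Summing over $\kappa$ and invoking $\int\rho^{\rm TFW}=N_c$ yields, after cancellations, the inequality $N_c<2|\uZ|$. I expect this ionization bound to be the hardest step, since it hinges on a precise balance of long-range Coulomb contributions rather than on any soft compactness argument.
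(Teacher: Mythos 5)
The paper states this theorem without proof, pointing to Benguria \cite{Benguria1979}, Benguria--Br\'ezis--Lieb \cite{Benguriaetal1981}, Lieb's review \cite[Sect.~VII]{Lieb1981}, and (for the sharper ionization bound in the remark that follows) Benguria--Lieb \cite{BenguriaLieb1985}. Measured against those, your outline is sound in strategy, but it departs from the literature in the existence step and is too thin in both halves of the bound on $N_c$.

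For existence you propose Lions-type concentration--compactness to prevent escape of mass when minimizing over $\ca_N$. Benguria--Br\'ezis--Lieb avoid tightness altogether: they first minimize over the relaxed convex set $\{\rho\in\ca:\int\rho\le N\}$, which is weakly closed along $H^1$-bounded sequences by Fatou, so the direct method is elementary and no loss-of-mass argument is needed. The dichotomy between $\int\rho^*=N$ and $\int\rho^*<N$ for that relaxed minimizer is then precisely what produces the multiplier $\mu\ge 0$, the unconstrained minimizer, and the critical value $N_c$, reconciling parts (1) and (2) more transparently than your route. Your concentration--compactness argument could be made to work, but it is not what any of the cited references do and it costs more than the problem requires. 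The convexity argument for uniqueness --- convexity of $\rho\mapsto\int|\nabla\sqrt\rho|^2$ via the perspective representation of $|\nabla\rho|^2/(4\rho)$, strict convexity of $\rho^{5/3}$ and $D(\cdot,\cdot)$, linearity of $-\int V\rho$ --- is exactly the standard one and is correct.

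For $N_c<2|\uZ|$ the key point you left implicit is that the Weizs\"acker contribution produced by multiplying \eqref{eq:tfweulerlagrange1} by $|x-R_\kappa|\sqrt{\rho^{\rm TFW}}$, namely $\tfrac{A}{2}\bigl(\int|x-R_\kappa|\,|\nabla\sqrt{\rho^{\rm TFW}}|^2 - \int\rho^{\rm TFW}/|x-R_\kappa|\bigr)$, is non-negative; this follows by setting $v=|x-R_\kappa|^{1/2}\sqrt{\rho^{\rm TFW}}$ and applying Hardy's inequality, and without it you only get $N_c<2|\uZ|+\mathcal{O}(A)$. Also, for $K\ge 2$ the attraction cross terms $\int\frac{|x-R_\kappa|}{|x-R_{\kappa'}|}\rho^{\rm TFW}$ with $\kappa'\ne\kappa$ do not cancel when you sum over $\kappa$; controlling them is precisely where the molecular work in \cite{BenguriaLieb1985} lies, and your one-line appeal to Newton's theorem glosses over it. Finally, the lower bound $N_c>|\uZ|$ is under-argued: since $\sqrt{\rho^{\rm TFW}}>0$ is automatically the ground state of $-\tfrac{A}{2}\Delta+W$ with eigenvalue $-\mu$, the claim $\mu(|\uZ|)>0$ is the claim that this operator has a strictly negative bottom eigenvalue even though the Coulomb monopole of the effective potential vanishes at neutrality. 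Whether a bound state survives then hinges on the sub-Coulombic decay profile of $W$ (the tail of $V-\rho^{\rm TFW}*|\cdot|^{-1}$ together with the $\gtf(\rho^{\rm TFW})^{2/3}$ term), which requires a genuine estimate rather than the heuristic about adding density far from the nucleus.
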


\begin{remark}
  Benguria and Lieb \cite{BenguriaLieb1985} proved the previously-mentioned
  ionization conjecture for TFW molecules and showed
  $0<N_c-|\uZ|\leq 270.74 \cdot (\frac{A}{2\gtf})^{3/2}\cdot K$.
  As we shall see below, the value $A=0.1859$ is in some sense natural.
  Together with the value of $\gtf=(6\pi^2/2)^{2/3}/2$ this leads to the bound
  $N_c-|\uZ|< 0.7335 \cdot K$.
\end{remark}

Theorem~\ref{tfwtheorem} shows, in particular, that anions can be stable in
TFW theory.
The next theorem says that $\rho^{\rm TFW}$ on the TF length
scale is described by $\rho^{\rm TF}$; see also Solovej \cite{Solovej1990}
for results when only some of the nuclear charges tend to infinity.

\begin{theorem}[{\cite[Theorem 7.30]{Lieb1981},\cite[(2.25)]{LiebLiberman1982}}]
  Let $A>0$, $\uZ=(Z_1,...,Z_K)\in(0,\infty)^K$,
  $\uR=(R_1,...,R_K)\in\R^{3K}$, and $N>0$ so that
  $\lambda:=N/|\uZ|$ is fixed. Define $\uz$ and $\ur$ by
  $\uZ=|\uZ|\uz$ and $\uR=|\uZ|^{-1/3}\ur$, respectively.
  Then
  \begin{align}
    \lim_{|\uZ|\to\infty}|\uZ|^{-2}\rho^{\rm TFW}(N,\uZ,\uR,|\uZ|^{-1/3}x)
    = \rho^{\rm TF}(\lambda,\uz,\ur,x)
  \end{align}
  weakly in $L^1$ if $\lambda\leq|\uZ|$ and weakly in
  $L_{\rm loc}^1$ if $\lambda>|\uZ|$.
\end{theorem}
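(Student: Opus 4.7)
Write $Z:=|\uZ|$ and define the rescaled density $\tilde\rho(x):=Z^{-2}\rho(Z^{-1/3}x)$. A direct computation using the Weizs\"acker and Thomas--Fermi scalings gives
\begin{align*}
\ce_V^{\rm TFW}[\rho] \;=\; Z^{7/3}\Big[\tfrac{A}{2}Z^{-2/3}\!\int_{\R^3}|\nabla\sqrt{\tilde\rho}|^2\,\dx \;+\; \ce_{\tilde V}^{\rm TF}[\tilde\rho]\Big],
\end{align*}
with $\tilde V(x)=\sum_\kappa z_\kappa/|x-r_\kappa|$ and constraint $\int\tilde\rho=\lambda$. Hence on the Thomas--Fermi scale the Weizs\"acker term is a small perturbation, suppressed by $Z^{-2/3}$. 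The plan is therefore to prove the matched energy asymptotics
\begin{align*}
Z^{-7/3}E^{\rm TFW}(N,\uZ,\uR) \;=\; E^{\rm TF}(\lambda,\uz,\ur) + o(1)\qquad (Z\to\infty),
\end{align*}
and then to extract density convergence from weak compactness together with uniqueness of the TF minimizer (Theorem~\ref{tftheorem}).

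First I would prove the lower bound. Since $|\nabla\sqrt{\rho}|^2\ge 0$, we have $\ce_V^{\rm TFW}[\rho]\ge\ce_V^{\rm TF}[\rho]$ for every admissible $\rho$, whence $E^{\rm TFW}(N,\uZ,\uR)\ge E^{\rm TF}(N,\uZ,\uR)$, and the TF scaling in Theorem~\ref{tftheorem}(v) converts the right-hand side into $Z^{7/3}E^{\rm TF}(\lambda,\uz,\ur)$. For the upper bound I would use a regularized TF minimizer as a trial density. Because $\rho^{\rm TF}(\lambda,\uz,\ur,x)\sim c_\kappa|x-r_\kappa|^{-3/2}$ near each $r_\kappa$ (Theorem~\ref{tfdensityproperties}(i)), one has $\sqrt{\rho^{\rm TF}}\sim|x-r_\kappa|^{-3/4}$ and its gradient fails to be in $L^2$ locally. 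I would mollify/cut off $\rho^{\rm TF}(\lambda,\uz,\ur)$ in balls of radius $\epsilon_Z\to 0$ around each nucleus (and, if $\lambda>1$, add the excess mass $\lambda-1$ in a low-density bump placed far from the origin so as not to disturb the TF energy). Denoting the resulting density by $\tilde\rho_Z$, one checks that $\ce_{\tilde V}^{\rm TF}[\tilde\rho_Z]\to E^{\rm TF}(\lambda,\uz,\ur)$ while $Z^{-2/3}\!\int|\nabla\sqrt{\tilde\rho_Z}|^2\to 0$; this fixes a permissible scale $\epsilon_Z$.

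The main obstacle is precisely this balancing act in the upper bound: the Weizs\"acker seminorm of a TF-type density near a nuclear singularity diverges and must be cut off just slowly enough to pay only $o(Z^{2/3})$, yet fast enough that the truncation-induced change in the TF kinetic, potential, and Coulomb terms is $o(1)$. A direct computation in spherical coordinates around each $r_\kappa$ shows that any choice $\epsilon_Z=Z^{-\alpha}$ with $\alpha$ in a suitable range achieves both; this is the step that mirrors the analogous estimates in Lieb--Liberman~\cite{LiebLiberman1982}. The anionic case $\lambda>1$ additionally requires that the auxiliary bump carrying the extra mass be placed at a $Z$-dependent distance tending to infinity, so as not to interfere either with $\tilde V$ or with the Coulomb self-energy budget.

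Finally, given the matching energies, I would deduce density convergence as follows. The rescaled TFW minimizer $\tilde\rho^{\rm TFW}_Z$ has $\int\tilde\rho^{\rm TFW}_Z=\lambda$ and $\ce_{\tilde V}^{\rm TF}[\tilde\rho^{\rm TFW}_Z]\le E^{\rm TF}(\lambda,\uz,\ur)+o(1)$, which yields uniform bounds on $\|\tilde\rho^{\rm TFW}_Z\|_{5/3}$ and $\|\tilde\rho^{\rm TFW}_Z\|_C$. Extract a subsequence converging weakly in $L^{5/3}$ and in the Coulomb norm to some $\rho_\star\ge 0$ with $\int\rho_\star\le\lambda$. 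Weak lower semicontinuity of the convex TF functional (strong lsc of $\int\rho^{5/3}$ and $D(\cdot,\cdot)$, plus continuity of the linear $\int\tilde V\rho$ piece via the decomposition $\tilde V\in L^{5/2}+L^\infty_{\rm loc}$ with suitable care near the nuclei) gives $\ce_{\tilde V}^{\rm TF}[\rho_\star]\le E^{\rm TF}(\lambda,\uz,\ur)$. Uniqueness of the TF minimizer (Theorem~\ref{tftheorem}(i)--(ii)) forces $\rho_\star=\rho^{\rm TF}(\lambda,\uz,\ur)$ (which equals $\rho^{\rm TF}(1,\uz,\ur)$ when $\lambda>1$). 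In the (sub)neutral case $\lambda\le 1$ the masses coincide, $\int\rho_\star=\lambda=\lim\int\tilde\rho^{\rm TFW}_Z$, so local weak convergence upgrades to weak $L^1$ convergence; in the overcharged case $\lambda>1$ the defect $\lambda-1$ escapes to infinity and only weak $L^1_{\rm loc}$ convergence survives, matching the statement of the theorem.
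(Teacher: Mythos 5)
The paper cites this theorem to Lieb (1981) and Lieb--Liberman (1982) without reproducing a proof, and your argument reconstructs essentially the approach used there: scaling out $Z^{7/3}$, the trivial lower bound from positivity of the Weizs\"acker term, an upper bound via a truncated Thomas--Fermi trial density with a cutoff radius $\epsilon_Z$ chosen so that the Weizs\"acker cost is $o(Z^{2/3})$ while the TF terms change by $o(1)$, and finally density convergence from uniform $L^{5/3}$ and Coulomb-norm bounds, weak lower semicontinuity, and uniqueness of the TF minimizer. Your handling of the cutoff balancing, of the far-away low-density bump carrying the excess charge when $\lambda>1$, and of the distinction between tightness ($\lambda\le 1$, weak $L^1$) versus mass escape ($\lambda>1$, weak $L^1_{\rm loc}$) are all correct (and you implicitly and correctly read the condition in the theorem as $\lambda\le 1$, i.e.\ $N\le|\uZ|$).
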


Naturally, the question arises of how close the two infima
$E^{\rm TFW}(N,\uZ,\uR)$ and $E^{\rm TF}(N,\uZ,\uR)$ are, i.e.,
one seeks an upper bound on the right side of
\begin{align}
  0\leq E^{\rm TFW}(N,\uZ,\uR) - E^{\rm TF}(N,\uZ,\uR).
\end{align}
Already in the neutral, atomic case ($K=1$, $N=Z$) one might be tempted
to say that the difference is $\mathcal{O}(Z^{5/3})$ by plugging in
the TF density $\rho_Z^{\rm TF}$ and using the scaling relation
$\rho_Z^{\rm TF}(x)=Z^2\rho_1^{\rm TF}(Z^{1/3}x)$.
However, this is not correct, as can be seen heuristically as follows.
By Theorem \ref{tfdensityproperties}, one has
$\rho_1^{\rm TF}(x)\sim \const |x|^{-3/2}$ as $|x|\to0$, which makes it
plausible that $|\nabla\sqrt{\rho_1^{\rm TF}}| \sim \const |x|^{-7/4}$,
but this is not square-integrable and leads to an infinite Weizs\"acker
term. Instead, as the following theorem shows, the difference
$E^{\rm TFW}(Z,Z,0) - E^{\rm TF}(Z,Z,0)$ is given, to leading order as
$Z\to\infty$, by a constant times $Z^2$. This is the Scott correction
in TFW theory.
As in the quantum problem, the $Z^2$-term originates
from effects on the hydrogenic length scale $Z^{-1}$ rather than the
TF length scale $Z^{-1/3}$, see \cite[p.~635]{Lieb1981}.
Lieb \cite[Theorem~7.30]{Lieb1981} also shows that the correction
is independent of the electron number, i.e., it also holds when comparing
the TF and TFW energies for ions with fixed ratio $N/Z$.

Let us return to the general, multi-center case.  To describe TFW
theory on the length scale $Z^{-1}$ more precisely, we consider the
\emph{atomic TFW functional without electron repulsion}.  After a
`renormalization' (that is, formally subtracting the integral of
$$
\frac25 \gtf \left( \frac{Z}{\gtf |x|} \right)^{5/2}
$$
from the right side of \eqref{eq:deftfw}) one can show that the resulting
functional has a unique minimizer and that this minimizer solves the
Euler--Lagrange equation
\begin{align}
  \label{eq:tfweulerlagrange}
  \left(-\frac{A}{2}\Delta+\gtf \rho^{2/3}- \frac Z{|x|} \right)\sqrt{\rho}
  = 0;
\end{align}
see \cite[Theorem 7.29]{Lieb1981}. By scaling, one has
$\rho(x) = (2Z^2/(A\gtf))^{3/2} \rho_\infty(2Zx/A)$, where $\rho_\infty$
is the solution corresponding to $Z=A/2=\gtf$.

Then we have the following results on the hydrogenic
energy and length scales.

\begin{theorem}[{\cite[Theorem 7.30]{Lieb1981}, \cite[(2.26)]{LiebLiberman1982}}]
  \label{tfwz2theorem}
  Let $A>0$, $\uZ=(Z_1,...,Z_K)\in(0,\infty)^K$,
  $\uR=(R_1,...,R_K)\in\R^{3K}$, and $N>0$ so that
  $\lambda:=N/|\uZ|$ is fixed.

  \begin{enumerate}
  \item (Energy) We have
    \begin{align}
      \label{eq:scotttfw}
      E^{\rm TFW}(N,\uZ,\uR) = E^{\rm TF}(N,\uZ,\uR) + D^{\rm TFW}\sum_{\kappa=1}^K Z_\kappa^2 + o(|\uZ|^2)
    \end{align}
    with $D^{\rm TFW} :=2^{1/2}A^{1/2}\gtf^{-3/2}\cdot \int_{\R^3} |\nabla \sqrt{\rho_\infty}|^2\,\dx$.

  \item (Density) Define $\uz$ and $\ur$ by $\uZ=|\uZ|\uz$ and
    $\uR=|\uZ|^{-1/3}\ur$, respectively. Then the solution $\rho^{\rm TFW}$
    of the problem \eqref{eq:deftfwenergy} converges to that of
    \eqref{eq:tfweulerlagrange} in the sense that for each
    $\kappa\in\{1,...,K\}$, 
    \begin{align}
      \label{eq:convtfwdensitywithoutrepulsion}
      \lim_{N\to\infty}Z_\kappa^{-3}\rho^{\rm TFW}(N,\uZ,\uR,R_\kappa+Z_\kappa^{-1}x)
      = (A\gtf/2)^{-3/2}\rho_\infty(2x/A)
    \end{align}
    both pointwise and in $L_{\rm loc}^1$.
  \end{enumerate}
\end{theorem}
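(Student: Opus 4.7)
The strategy is to establish matching upper and lower bounds for $E^{\rm TFW}(N,\uZ,\uR) - E^{\rm TF}(N,\uZ,\uR)$ of the form $D^{\rm TFW}\sum_\kappa Z_\kappa^2 + o(|\uZ|^2)$, and then deduce the density convergence by a compactness and uniqueness argument based on the saturation of these bounds. The scale separation to keep in mind throughout is: the Weizs\"acker correction is generated on the hydrogenic scale $Z_\kappa^{-1}$ near each nucleus, whereas the bulk of the TF energy lives on the length scale $|\uZ|^{-1/3}$.

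\textbf{Upper bound.} I would build a trial density $\rho_{\rm tr}$ by gluing the minimizer $\rho^{\rm TF}(N,\uZ,\uR)$ with the rescaled atomic TFW--without--repulsion solutions at each nucleus. Fix a length scale $r_\kappa$ with $Z_\kappa^{-1}\ll r_\kappa\ll |\uZ|^{-1/3}$ and smooth cutoffs $\chi_\kappa$ with $\chi_\kappa=1$ on $B(R_\kappa,r_\kappa)$ and supported in $B(R_\kappa,2r_\kappa)$. Let $\rho_\kappa^{\rm at}(x):=(2Z_\kappa^2/(A\gtf))^{3/2}\rho_\infty(2Z_\kappa x/A)$ be the solution of \eqref{eq:tfweulerlagrange} with nuclear charge $Z_\kappa$, and set
\begin{align*}
\rho_{\rm tr}(x) = \sum_\kappa \chi_\kappa(x)^2\,\rho_\kappa^{\rm at}(x-R_\kappa) + \Big(1-\sum_\kappa\chi_\kappa(x)^2\Big)\rho^{\rm TF}(N,\uZ,\uR,x),
\end{align*}
adjusting the overall mass by a negligible rescaling so that $\int\rho_{\rm tr}=N$. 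Since $\rho^{\rm TF}$ is $C^1$ away from the nuclei and satisfies the TF equation, the contribution of the Weizs\"acker term \emph{outside} the balls is $o(|\uZ|^2)$ (it is bounded by $|\uZ|^{5/3}$ times a scaling factor), while on each ball $B(R_\kappa,2r_\kappa)$ the Weizs\"acker term contributes $D^{\rm TFW}Z_\kappa^2+o(Z_\kappa^2)$ by the scaling of $\rho_\infty$. The TF part of $\ce_V^{\rm TFW}(\rho_{\rm tr})$ equals $E^{\rm TF}(N,\uZ,\uR)$ plus error terms that one checks are $o(|\uZ|^2)$ using Theorem~\ref{tfdensityproperties} (matching the $|x-R_\kappa|^{-3/2}$ singularity of $\rho^{\rm TF}$ with the tail of $\rho_\kappa^{\rm at}$ on the annulus $r_\kappa\lesssim|x-R_\kappa|\lesssim 2r_\kappa$).

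\textbf{Lower bound.} For the minimizer $\rho^{\rm TFW}$ of $\ce_V^{\rm TFW}$, write
\begin{align*}
\ce_V^{\rm TFW}(\rho^{\rm TFW}) = \ce_V^{\rm TF}(\rho^{\rm TFW}) + \frac{A}{2}\int|\nabla\sqrt{\rho^{\rm TFW}}|^2 \geq E^{\rm TF}(N,\uZ,\uR) + \frac{A}{2}\int|\nabla\sqrt{\rho^{\rm TFW}}|^2.
\end{align*}
For the Weizs\"acker piece I would use an IMS-type partition of unity $\theta_0^2+\sum_\kappa\theta_\kappa^2=1$ with $\theta_\kappa$ supported in $B(R_\kappa,r_\kappa)$, yielding
\begin{align*}
\int|\nabla\sqrt{\rho^{\rm TFW}}|^2 = \sum_{\kappa=0}^K\int|\nabla(\theta_\kappa\sqrt{\rho^{\rm TFW}})|^2 - \int\Big(\sum_{\kappa=0}^K|\nabla\theta_\kappa|^2\Big)\rho^{\rm TFW},
\end{align*}
where the localization error is $o(|\uZ|^2)$ by Lieb--Simon estimates on $\rho^{\rm TFW}$. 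Near each $R_\kappa$, the potential generated by the other nuclei and by the self-interaction term is smooth and $O(Z_\kappa^{4/3})$ on the ball $B(R_\kappa, r_\kappa)$; after rescaling $x=R_\kappa+Z_\kappa^{-1}y$ it becomes vanishingly small compared to the atomic operator $-\frac{A}{2}\Delta - Z_\kappa/|y|$. Hence the restriction of the TFW functional to $\theta_\kappa\sqrt{\rho^{\rm TFW}}$ is bounded below, up to $o(Z_\kappa^2)$, by the minimum of the renormalized atomic TFW--without--repulsion functional, which by \cite[Theorem~7.29]{Lieb1981} equals $D^{\rm TFW}Z_\kappa^2$ after the scaling $\rho(x)=(2Z_\kappa^2/(A\gtf))^{3/2}\rho_\infty(2Z_\kappa x/A)$. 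Summing over $\kappa$ gives the claimed lower bound.

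\textbf{Density convergence.} Once the energy asymptotic is sharp, part (ii) follows by a compactness argument on $\rho_\kappa^{(|\uZ|)}(y):=Z_\kappa^{-3}\rho^{\rm TFW}(N,\uZ,\uR,R_\kappa+Z_\kappa^{-1}y)$. The rescaled sequence is locally bounded in $H^1$ (via the Weizs\"acker energy) and by the matching of the upper and lower bounds is an approximate minimizing sequence for the renormalized atomic TFW--without--repulsion functional at nuclear charge $A/2=\gtf$. By weak lower semicontinuity of the Weizs\"acker term, strong $L_{\rm loc}^p$ compactness from the Sobolev embedding, and \emph{uniqueness} of the atomic minimizer $\rho_\infty$, one concludes $\rho_\kappa^{(|\uZ|)}\to (A\gtf/2)^{-3/2}\rho_\infty(2\cdot/A)$, first in $L_{\rm loc}^1$ and, by elliptic regularity applied to the rescaled TFW equation \eqref{eq:tfweulerlagrange2}, pointwise.

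\textbf{Main obstacle.} The delicate step is the lower bound, specifically showing that the interaction of the near-nuclear piece with the rest of the system (other nuclei plus the TF bulk) produces only $o(Z_\kappa^2)$ after the hydrogenic rescaling. This requires quantitative Sommerfeld-type estimates on the TFW potential on the intermediate scale $r_\kappa$, together with a careful choice of $r_\kappa$ to simultaneously defeat the localization error, the cross-terms in the partition of unity, and the leftover Coulomb-like tail of $\rho_\infty$.
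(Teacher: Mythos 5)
Your lower bound has a genuine factor-of-two gap, and it comes from splitting off the Weizs\"acker term too early. After writing
$\ce_V^{\rm TFW}(\rho^{\rm TFW}) \geq E^{\rm TF}(N,\uZ,\uR) + \tfrac{A}{2}\int|\nabla\sqrt{\rho^{\rm TFW}}|^2$,
you try to show that the remaining Weizs\"acker integral is $\geq D^{\rm TFW}\sum_\kappa Z_\kappa^2+o(|\uZ|^2)$. But this is false: the Weizs\"acker integral of $\rho^{\rm TFW}$ is only $\tfrac12 D^{\rm TFW}\sum_\kappa Z_\kappa^2+o(|\uZ|^2)$. The reason is a virial identity for the renormalized atomic TFW--without--repulsion functional $\tilde\ce[\rho]=\tfrac{A}{2}W[\rho]+T[\rho]$ (with $W$ the Weizs\"acker integral and $T$ the renormalized TF integral): the scaling $\rho_\mu(x):=\mu^{3/2}\rho_\infty(\mu x)$ preserves the $|x|^{-3/2}$ tail that the renormalization is matched to and gives $\tilde\ce[\rho_\mu]=\mu^{1/2}\tfrac{A}{2}W+\mu^{-1/2}T$, hence $\tfrac A2 W=T$ at the minimizer. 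Combined with the scaling $\rho_Z(x)=(2Z^2/(A\gtf))^{3/2}\rho_\infty(2Zx/A)$, one finds $\tilde\ce[\rho_Z]=D^{\rm TFW}Z^2$ while $\tfrac{A}{2}\int|\nabla\sqrt{\rho_Z}|^2=\tfrac12 D^{\rm TFW}Z^2$. Correspondingly, the inequality $\ce_V^{\rm TF}(\rho^{\rm TFW})\geq E^{\rm TF}$ is \emph{not} tight to order $|\uZ|^2$: regularizing the $|x-R_\kappa|^{-3/2}$ singularities of $\rho^{\rm TF}$ costs exactly the other $\tfrac12 D^{\rm TFW}\sum_\kappa Z_\kappa^2+o(|\uZ|^2)$, and you discard it. (Note the same slip occurs twice in your upper bound, where it accidentally cancels: the TF part of your trial density is $E^{\rm TF}+\tfrac12 D^{\rm TFW}\sum_\kappa Z_\kappa^2+o$, not $E^{\rm TF}+o$, while the Weizs\"acker part of the glued-in atomic profile is $\tfrac12 D^{\rm TFW}\sum_\kappa Z_\kappa^2$, not $D^{\rm TFW}\sum_\kappa Z_\kappa^2$.)

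To repair the lower bound one must not throw away $\ce_V^{\rm TF}(\rho^{\rm TFW})-E^{\rm TF}$. The correct structure is to split the \emph{full} TFW functional, not just the Weizs\"acker term. Using convexity of $\ce_V^{\rm TF}$ and the TF Euler--Lagrange equation, one writes $\ce_V^{\rm TF}(\rho^{\rm TFW})-E^{\rm TF}$ as a nonnegative second-order remainder in $\rho^{\rm TFW}-\rho^{\rm TF}$, localized near the nuclei. Then after the hydrogenic rescaling around $R_\kappa$, the Weizs\"acker term \emph{together with} this remainder converge to the renormalized atomic integrand, and the near-nucleus pieces are bounded below by the renormalized atomic minimum $D^{\rm TFW}Z_\kappa^2+o(Z_\kappa^2)$. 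This is more delicate than a pure Weizs\"acker estimate because the near-nucleus TF contributions, each individually divergent, must be paired with the renormalization counterterm with the correct constants; the IMS and Sommerfeld estimates you mention as the ``main obstacle'' are indeed needed, but they sit on top of this extra bookkeeping rather than replacing it.
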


As emphasized by Lieb and Liberman in
\cite[Section~2.C]{LiebLiberman1982}, the second term in
\eqref{eq:scotttfw} has the following properties, which allows one to
think of it as a ``core effect'':
\begin{itemize}
\item It is independent of $\lambda = N/|\uZ|$.
\item It is additive in the nuclei, that is, it is a sum of terms
  corresponding to each atom in the molecule.
\item The constant $D^{\rm TFW}$ does not change if the electron-electron
  repulsion is removed.	
\end{itemize}
By the last point, we mean that \eqref{eq:scotttfw} remains true, with
the same constant $D^{\rm TFW}$, if in the definitions of both
$E^{\rm TFW}(N,\uZ,\uR)$ and $E^{\rm TF}(N,\uZ,\uR)$ the term $D(\rho,\rho)$
is dropped. This is proved in \cite{Lieb1981}.

\medskip
The asymptotics \eqref{eq:scotttfw} and the convergence in
\eqref{eq:convtfwdensitywithoutrepulsion} suggest a discussion
of the parameter $A$ in \eqref{eq:deftfw}.
By Theorem \ref{tfwz2theorem}, it suffices to discuss the atomic case $K=1$.
While Weizs\"acker initially chose $A$ to be one, other
values have been suggested.
For instance, Kirzhnits \cite{Kirzhnits1957} suggested $A\approx1/9$
based on the gradient expansion of the Hohenberg--Kohn functional, assuming
the Coulomb potential was replaced by a ``weak perturbing potential''
(\cite[p.~12]{LiebLiberman1982}).
However, due to the local singularity, the Coulomb potential cannot
be regarded as such a weak perturbing potential.

More then 15 years before Theorem \ref{tfwz2theorem} was proved,
Yonei and Tomishima \cite{YoneiTomishima1965} analyzed \eqref{eq:tfweulerlagrange}
with $\mu$ such that the solution $\rho_\infty$ obeys $\int\rho_\infty=Z$.
From a numerical analysis they concluded $A\approx1/5$ (especially when $Z>25$)
leads to good agreement with the energy obtained from summing up the first $Z$
eigenvalues of the hydrogen operator \eqref{eq:defhydrogenoperator} (Bohr atom,
cf.~Remark \ref{bohrtfatom} and Subsection \ref{sss:scottliebarguments}).

Possibly inspired by Yonei's and Tomishima's work, Lieb and Liberman
\cite[(2.32)]{LiebLiberman1982} chose $A$ such that the $Z^2$-correction
in the TFW model agrees with that of the quantum model, i.e.,
$D^{\rm TFW}=q/4$. This choice leads to $A=0.1859$.

Another choice for $A$ is motivated by comparing the densities
$\rho_Z^{\rm TFW}$ and the one-particle ground state density $\rho_S$
in \eqref{eq:defrhohnonrel} on the length scale $Z^{-1}$.  As
indicated in \eqref{eq:strongscottinitial}, the spherical average over
$\rho_S$ tends to the hydrogenic density $\rho_S^H$
(cf.~\eqref{eq:defrhohs}) on the length scale $Z^{-1}$ \emph{pointwise} as
$Z\to\infty$. Recall that all hydrogenic eigenfunctions of
\eqref{eq:defhydrogenoperator2} are finite at the origin with only
eigenfunctions with $\ell=0$ being non-zero. (For a detailed analysis of
$\rho_S^H$, we refer to Theorem~\ref{heilmannlieb} by Heilmann and Lieb
\cite{HeilmannLieb1995}.)
Thus, the limiting value of $Z^{-3}\rho_S(Z^{-1}\cdot(0+))$ as
$Z\to\infty$ is well defined and can be computed explicitly thanks to
the explicit knowledge of hydrogen eigenfunctions.  At the same time,
the convergence in \eqref{eq:convtfwdensitywithoutrepulsion} is
pointwise as well, so $Z^{-3}\rho_Z^{\rm TFW}(Z^{-1}\cdot)$ is
also accessible and can be computed numerically. Thus, to have
agreement of the quantum density $\rho_S$ and the TFW density on the
scale $Z^{-1}$, one may choose Weizs\"acker's parameter $A$ so that
one has the equality
\begin{align}
  \rho_S^H(0) = (A\gtf/2)^{-3/2}\rho_\infty(0).
\end{align}
This led Lieb and Liberman to the numerical value $A\approx 0.4798$,
cf.~\cite[(2.33)]{LiebLiberman1982}.

The following table summarizes plausible choices for
Weizs\"acker's coefficient $A$.
\begin{center}
  \begin{tabular}{ l c }
    & $A$ \\
    \hline\\
    Weizs\"acker (mass defect theory) & $=1$ \\ 
    Kirzhnits (gradient expansion) & $\approx 0.11$ \\  
    Yonei--Tomishima (numerical computations, Bohr atom) & $\approx 0.2$ \\
    Lieb--Liberman (energy agreement) & $\approx 0.1859$ \\
    Lieb--Liberman (density agreement) & $\approx 0.4798$
  \end{tabular}
\end{center}

\medskip
In conclusion, one may regard the proof of the Scott conjecture in TFW
theory as a warm-up problem for its proof in the full quantum problem.
(One may wonder whether Scott was aware of Weizs\"acker's extension
\cite{Weizsacker1935} at the time of writing his work \cite{Scott1952}.)
%

\subsection{Hellmann--Weizs\"acker functional}
\label{sss:sweasyproof}

In this subsection we discuss the so-called
\emph{Hellmann--Weizs\"acker functional}, which
plays an important role in the proof of the Scott conjecture by
Siedentop and Weikard, as we will discuss in
Subsubsection~\ref{sss:scottliebarguments} below.

We first introduce the Hellmann--Weizs\"acker functional for fixed
angular momentum $\ell\in\N_0$. We work on $\R_+$ with the measure $\dr$.
We set
$$
\cg^{\rm W} := \left\{ \varrho \in L^3(\R_+) :\ \varrho\geq 0,\, \sqrt{\varrho}' \in L^2(\R_+),\, \varrho(0)=0 \right\}.
$$
Here, the derivative of $\sqrt{\varrho}'$ is understood in the sense
of distributions and we recall that the square integrability of this
derivative implies that $\sqrt{\varrho}$ is continuous on $\R_+$ and has
a boundary value $\sqrt{\varrho}(0)$. In particular, the last condition
in the definition of $\cg^{\rm W}$ is well defined. Let
$$
\alpha_\ell := \left(\frac{\pi}{q(2\ell+1)}\right)^2
$$
and define, for $Z>0$ and $\varrho\in\cg^{\rm W}$,
\begin{align}
  \label{eq:defhwl}
  \ce_{\ell,Z}^{\rm HW}(\varrho)
  & := \frac12\int_0^\infty \left(((\sqrt{\varrho})'(r))^2 + \frac{\ell(\ell+1)}{r^2}\varrho(r) + \frac{\alpha_\ell}{3}\varrho(r)^3\right)\dr - \int_0^\infty \frac Zr \varrho(r)\,\dr.
\end{align}
The second term is finite by Hardy's inequality and the last term is,
since, for any $R>0$,
$$
\int_0^\infty \frac{\varrho_\ell(r)}{r}\,\dr \leq R \int_0^R \frac{\varrho_\ell(r)}{r^2}\,\dr + \left( \int_R^\infty \varrho_\ell(r)^3\,\dr \right)^{1/3} \left( \int_R^\infty r^{-3/2}\,\dr \right)^{2/3}.
$$
Thus, $\ce_{\ell,Z}^{\rm HW}$ is well defined on $\cg^{\rm W}$.

Next, we introduce the full Hellmann--Weizs\"acker functional.
It is defined on sequences $\uvarrho:=(\varrho_0,\varrho_1,...)$
with $\varrho_\ell\in\cg^{\rm W}$ for all $\ell\in\N_0$.
For such a sequence, we set
$$
\tilde D(\uvarrho,\uvarrho) := \frac12\sum_{\ell,\ell'\geq0}\iint_{\R_+\times\R_+} \dr\,\dr'\, \frac{\varrho_\ell(r)\varrho_{\ell'}(r')}{\max\{r,r'\}}.
$$ 
Let
\begin{align}
\cm^{\rm W} := \{\uvarrho \in (\cg^{\rm W})^{\N_0}:\,  & \sum_{\ell\geq 0} \int_0^\infty \!\!
\left(((\sqrt{\varrho_\ell})'(r))^2 + \frac{\ell(\ell+1)}{r^2}\varrho_\ell(r) + \frac{\alpha_\ell}{3}\varrho_\ell(r)^3\right) \dr <\infty, \notag \\
& \tilde D(\uvarrho,\uvarrho)<\infty \}
\end{align}
The \emph{Hellmann--Weizs\"acker functional} \cite{Hellmann1936}
is defined, for $\uvarrho\in\cm^{\rm W}$, by
\begin{align}
	\label{eq:defhw}
	\ce_Z^{\rm HW}(\uvarrho)
	& := \sum_{\ell\geq0}\ce_{\ell,Z}^{\rm HW}(\varrho_\ell) + \tilde D(\uvarrho,\uvarrho).
\end{align}
One can prove that $\ce_Z^{\rm HW}$ is well defined on
$\uvarrho\in\cm^{\rm W}$. This functional is studied in detail in
\cite{SiedentopWeikard1986O}; see also Hoops \cite{Hoops1993}.
Finally, for $N>0$, we set
$$
\cm_N^{\rm W} := \Big\{ \uvarrho\in\cm^{\rm W}:\ \sum_{\ell\geq 0} \int_0^\infty \varrho_\ell(r)\,\dr = N \Big\}.
$$

\medskip
The following theorem shows how the terms $(\alpha_\ell/3) \varrho_\ell^3$
in the Hellmann--Weiz\-s\"acker functional are related to the term
$(3/5)\gtf \rho^{5/3}$ in the TF functional.

\begin{theorem}
  \label{hwz2}
  Let $E_{\rm Bohr}^{\rm TF}(N,Z)$
  be the Thomas--Fermi energy of the constrained problem without
  electron-electron repulsion; see Remark \ref{bohrtfatom}.
  Then, if $N=\alpha Z$ and $Z\to\infty$,
  \begin{align}
    \label{eq:hwz2}
    & \inf\big\{\ce_{Z}^{\rm HW}(\uvarrho)- \tilde D(\uvarrho,\uvarrho) :\, \uvarrho\in\cm_N^{\rm W}\big\} 
    = E_{\rm Bohr}^{\rm TF}(N,Z) + \co_\alpha(Z^2).
  \end{align}
\end{theorem}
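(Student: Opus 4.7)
The plan is to establish matching upper and lower bounds, both of which compare the angular-momentum decomposed semiclassical integral to the three-dimensional Bohr Thomas--Fermi energy. For the lower bound, I would observe that $((\sqrt{\varrho_\ell})')^2\geq 0$, so dropping the Weizs\"acker term in each $\ce_{\ell,Z}^{\rm HW}$ only decreases the functional. The remaining problem decouples across $\ell$ and is coupled only through the constraint $\sum_\ell\int\varrho_\ell\,\dr=N$; introducing a common Lagrange multiplier $\mu\geq 0$ and minimizing pointwise in $r$ yields the explicit channel minimizer $\varrho_\ell^\flat(r)=\sqrt{(2/\alpha_\ell)(V_\ell(r)-\mu)_+}$ with $V_\ell(r):=Z/r-\ell(\ell+1)/(2r^2)$, giving the dual lower bound
\begin{equation*}
\inf\{\ce_Z^{\rm HW}(\uvarrho)-\tilde D(\uvarrho,\uvarrho):\uvarrho\in\cm_N^{\rm W}\}\geq\sup_{\mu\geq 0}\Big[-\mu N-\tfrac{2\sqrt{2}\,q}{3\pi}\sum_{\ell=0}^\infty (2\ell+1)\!\!\int_0^\infty\!\! (V_\ell-\mu)_+^{3/2}\,\dr\Big].
\end{equation*}

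For the upper bound, I would use $\uvarrho^\flat=(\varrho_\ell^\flat)_{\ell\geq 0}$ as a trial, with $\mu$ fixed so that $\sum_\ell\int\varrho_\ell^\flat\,\dr = N$. Since $\varrho_\ell^\flat$ exhibits a square-root cusp at its classical turning points, the literal Weizs\"acker integral diverges, so I would replace $\varrho_\ell^\flat$ by a smooth monotone interpolation on neighborhoods of width $\delta_\ell$ about each turning point. A direct calculation shows that the Weizs\"acker cost is $O((2\ell+1)Z^{5/6}\delta_\ell^{-1/2})$ per channel while the remaining terms are perturbed by $O((2\ell+1)Z^{5/2}\delta_\ell^{5/2})$ per channel; balancing $\delta_\ell$ and summing over the $O(Z^{1/3})$ relevant channels yields a total error $o_\alpha(Z^2)$.

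Both bounds then reduce to the semiclassical identity
\begin{equation*}
\tfrac{2\sqrt{2}\,q}{3\pi}\sum_{\ell=0}^\infty (2\ell+1)\!\!\int_0^\infty\!\! (V_\ell-\mu)_+^{3/2}\,\dr = \tfrac{2}{5}\gtf^{-3/2}\!\!\int_{\R^3}\!\Big(\tfrac{Z}{|x|}-\mu\Big)_+^{5/2}\,\dx+o_\alpha(Z^2),
\end{equation*}
uniform in the admissible $\mu$ of order $Z^{4/3}$. Replacing $\sum_\ell (2\ell+1)$ by $2\int_0^\infty L\,\mathrm{d}L$ and changing variables $u=L^2/(2r^2)$ converts the angular-momentum decomposed phase-space integral into the three-dimensional Lebesgue integral with the correct $4\pi r^2\,\dr$ volume element; the prefactors match via $\gtf^{3/2}=3\pi^2/(\sqrt{2}\,q)$. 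The right-hand side is precisely the Legendre dual of $E^{\rm TF}_{\rm Bohr}(N,Z)$ computed in Remark~\ref{bohrtfatom}, so optimizing in $\mu$ closes both inequalities.

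The main obstacle is the quantitative Euler--Maclaurin control of the sum-to-integral step. After the scaling $r=Z^{-1/3}\tilde r$, $\ell=Z^{1/3}\tilde L$, $\mu=Z^{4/3}\tilde\mu$, the summand becomes a smooth compactly supported function of $\tilde L$, vanishing quadratically at the edge $\tilde L_{\max}=(2\tilde\mu)^{-1/2}$ of its effective support; the Riemann sum with step $Z^{-1/3}$ therefore has absolute error $O_\alpha(Z^{5/3})$. Together with the regularization error $o_\alpha(Z^2)$ from the upper bound, this establishes \eqref{eq:hwz2}.
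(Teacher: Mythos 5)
Your upper bound argument (smoothing the square--root cusps of the semiclassical profile $\varrho_\ell^\flat$ near its turning points, then balancing the Weizs\"acker cost against the potential/kinetic perturbation) is sound in spirit, although the scales you wrote down -- $|W_\ell'|\sim Z^{5/3}$, $\delta_\ell\sim Z^{-5/9}$ -- hold only near the outer turning point or for $\ell\sim Z^{1/3}$. For the inner turning point at $r_-\sim\ell(\ell+1)/Z$, one finds $|W_\ell'(r_-)|\sim Z^3/\ell^4$, so the optimal $\delta_\ell$ and the per-channel error are strongly $\ell$-dependent; the sum over $\ell$ still closes because $\sum_\ell\ell^{-5/3}$ converges, but the bookkeeping is more involved than a single uniform estimate. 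You also need a separate treatment of the $\ell=0$ channel, where $\varrho_0^\flat\sim r^{-1/2}$ as $r\to 0$, violates the boundary condition $\varrho_0(0)=0$ required for membership in $\cg^{\rm W}$, and has a Weizs\"acker integrand $\sim r^{-5/2}$; a cutoff at scale $\delta_0\sim Z^{-1}$ gives an $\co(Z^2)$ contribution, which is acceptable.

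The lower bound, however, has a genuine gap. Dropping the Weizs\"acker term entirely leaves the $\ell=0$ channel with functional
\[
\frac12\int_0^\infty\frac{\alpha_0}{3}\varrho_0(r)^3\,\dr - \int_0^\infty\frac{Z}{r}\varrho_0(r)\,\dr,
\]
which is \emph{not} bounded from below on $\{\varrho_0\geq0:\int\varrho_0=N_0\}$: taking $\varrho_0\sim c\,r^{-\beta}\one_{[\epsilon,1]}$ with $0<\beta<1/3$ keeps $\int\varrho_0$ and $\int\varrho_0^3$ bounded while $\int\varrho_0/r\to\infty$ as $\epsilon\to0$. Consistently, in your dual lower bound the $\ell=0$ term reads $\int_0^\infty(Z/r-\mu)_+^{3/2}\,\dr$, which diverges like $\int_0 r^{-3/2}\,\dr$ at the origin; the bracketed expression is therefore $-\infty$ for every $\mu\geq0$, and $\sup_\mu(-\infty)=-\infty$, so the dual bound is vacuous. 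The paper avoids this exactly as indicated in Remark~\ref{remarkshw}~(1): instead of discarding $((\sqrt{\varrho_\ell})')^2$, one bounds it from below via Hardy's inequality by $(4r^2)^{-1}\varrho_\ell$, which upgrades $\ell(\ell+1)$ to $(\ell+\tfrac12)^2$ in the centrifugal term. For $\ell=0$ this produces a barrier $1/(8r^2)$ that dominates $Z/r$ near the origin, so every channel integral $\int(Z/r-(\ell+\tfrac12)^2/(2r^2)-\mu)_+^{3/2}\,\dr$ is finite and the Hellmann lower bound \eqref{eq:hellmanntf} becomes nontrivial. The same replacement also cleans up your sum-to-integral step: with $L=\ell+\tfrac12$ the substitution $u=L^2/(2r^2)$ gives the three-dimensional integral exactly, whereas with $\ell(\ell+1)=L^2-\tfrac14$ one picks up a spurious $1/(8r^2)$ shift in the phase-space integral that is divergent at $r=0$ and would need separate control.
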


\begin{remarks}
  \label{remarkshw}
  (1) If one replaces Weizs\"acker's gradient term in the definition
  of $\ce_{Z}^{\rm HW}$ by $(4r^2)^{-1}\varrho_\ell$ (which is a lower
  bound by Hardy's inequality), one is led to the \emph{Hellmann functional}
  \begin{align}
    \ce_Z^{\rm H}(\uvarrho) & := \sum_{\ell\geq0}\ce_{\ell,Z}^{\rm H}(\varrho_\ell), \\
    \label{eq:hellmannl}
    \ce_{\ell,Z}^{\rm H}(\varrho_\ell) & := \frac12\int_0^\infty \left(\frac{\alpha_\ell}{3}\varrho_\ell(r)^3 + \frac{(\ell+1/2)^2}{r^2}\varrho_\ell(r)\right)\,\dr - \int_0^\infty \frac{Z}{r}\varrho_\ell(r)\,\dr.
  \end{align}
  These functionals are well defined on sets $\cg$ and $\cm$ that are
  defined in a similar manner as $\cg^{\rm W}$ and $\cm^{\rm W}$.
  A straightforward computation \cite[Theorem~1]{SiedentopWeikard1986}
  shows
  \begin{align}
    \label{eq:hellmanntf}
    \inf\{\ce_Z^{\rm H}(\uvarrho):\, \uvarrho\in\cm_N\}
    = E_{\rm Bohr}^{\rm TF}(N,Z) + \co(Z^2N^{-1/3}).
  \end{align}
  This is one step in the proof of Theorem \ref{hwz2}.

  (2) Hoops \cite[Theorem~4.5]{Hoops1993} showed that the electron-electron
  repulsion does not alter \eqref{eq:hwz2} significantly.
  Moreover, he computed the coefficient of the $Z^2$-term in this case.
  If $Z-\alpha Z^\beta \leq N\leq Z+Q_c$, where $\alpha>0$, $0\leq\beta<2/3$,
  and $Q_c\geq0$ is the $Z$-independent number specified in
  \cite[Theorem~3.3]{Hoops1993}, then one has
  \begin{align}
    \inf\{\ce_Z^{\rm HW}(\uvarrho):\, \uvarrho\in\cm_N^{\rm W}\}
    = E^{\rm TF}(Z) + qGZ^2 + o(Z^2),
  \end{align}
  where $G$ is the infimum of another explicit functional defined in
  \cite[(4.15)-(4.16)]{Hoops1993} and obeys the numerical bounds
  $2\cdot 0.388 \leq G \leq 2\cdot 0.417$, see \cite[p.~58]{Hoops1993}.
  The coefficient $G$ is about three times bigger than Scott's
  coefficient $1/4$. As Hoops puts it \cite[p.~58]{Hoops1993}:
  \emph{``To have such a big discrepancy suggests that the Hellmann--Weizs\"acker
    functional does not treat the innermost electrons sufficiently accurate
    to get the same behavior as the quantum mechanical ground state. That
    means that Weizs\"acker's gradient term is a major correction (it
    creates a $Z^2$-order term) at places where we have strong varying
    potentials but it does not suffice to give the right coefficient.''}
  %
  %
  %
  %
\end{remarks}

\subsection{Relativistic TFW functional by Engel and Dreizler}
\label{ss:tfwrel}

As discussed in the introduction, a relativistic description of large
Coulomb systems is mandatory.
This suggests to consider relativistic density functionals.
A particularly simple one can be traced back at least to Vallarta and Rosen
\cite{VallartaRosen1932} and Jensen \cite{Jensen1933}, who mimicked the steps
\eqref{eq:semiclassicalderivation1}-\eqref{eq:semiclassicalderivation3}
with the kinetic energy $p^2/2$ replaced by $\sqrt{c^2p^2+c^4}-c^2$
to derive a relativistic Thomas--Fermi theory.
For $q=2$ and nuclear configuration $\uZ=(Z_1,...,Z_K)\in(0,\infty)^K$,
$\uR=(R_1,...,R_K)\in\R^{3K}$, the resulting functional is
\begin{align}
  \ce_{c,V}^{\rm rTF}(\rho)
  := \ct^{\rm rTF}(\rho) - \int_{\R^3}V(x)\rho(x)\,\dx + D(\rho,\rho) + U
\end{align}
with
\begin{align}
  \ct^{\rm rTF}(\rho)
  := \int_{\R^3}\frac{c^5}{8\pi^2} T^{\rm rTF}\left(\frac{p(x)}{c}\right)\,\dx,
\end{align}
$T^{\rm rTF}(t):=t(t^2+1)^{3/2}+t^3(t^2+1)^{1/2}-\arsinh(t)-\frac{8}{3}t^3$,
and the Fermi momentum
\begin{align}
  p(x):=(3\pi^2\rho(x))^{1/3}.
\end{align}
This functional is unbounded from below since the relativistic kinetic
energy cannot control the Coulomb singularity.
This was already anticipated by Jensen, see also
Gomb{\'a}s \cite[\S 14]{Gombas1949}, \cite[Chapter~III, Section~16]{Gombas1956}
for a review of these facts.
Gomb{\'a}s also suggested that Weizs\"acker's (non-relativistic) inhomogeneity
correction would prevent the unboundedness from below.
His suggestion was first carried out by Tomishima \cite{Tomishima1969},
who showed, among other things, the finiteness of the energy and the electron
density at the nucleus.

While Gomb{\'a}s introduced the Weizs\"acker term ad hoc,
Engel and Dreizler \cite{EngelDreizler1987} offered a (formal) derivation
from quantum electrodynamics.
The Engel--Dreizler derivation also yields an exchange term.
In total, their functional reads
\begin{align}
  \ce_{c,V}^{\rm rTFWD}(\rho) := \ce_{c,V}^{\rm rTF}(\rho) + \ct^{\rm W}(\rho) - \cx(\rho).
\end{align}
The Weizs\"acker term is
\begin{align}
  \ct^{\rm W}(\rho) := \int_{\R^3}\frac{3A}{8\pi^2}(\nabla p)^2(x)\cdot c\cdot f\left(\frac{p(x)}{c}\right)^2\,\dx
\end{align}
with $f(t)^2:=t(1+t^2)^{-1/2}+2t^2 (1+t^2)^{-1}\arsinh(t)$ and an
adjustable parameter $A>0$.  The exchange term is
\begin{align}
  \cx(\rho) := \int_{\R^3}\frac{c^4}{8\pi^3}X\left(\frac{p(x)}{c}\right)\,\dx
\end{align}
with $X(t):=2t^4-3[t(1+t^2)^{1/2} - \arsinh(t)]^2$.

The analysis of $\ce_{c,V}^{\rm rTFWD}$ started with
Chen \cite{Chen2019} and was continued in the works
\cite{ChenSiedentop2020,Chenetal2020A}.
In the ultrarelativistic limit, i.e., in absence of the $\arsinh$ function
in the Weizs\"acker term, it had been investigated earlier
in \cite{Benguriaetal2008}.
The functional $\ce_{c,V}^{\rm rTFWD}$ is naturally defined on
\begin{align}
  P &:= \{\rho\in L^{4/3}(\R^3):\, \rho\geq0,\, D(\rho,\rho)<\infty,\, \nabla (F\circ p) \in L^2(\R^3)\},\\
  P_N&:= \{\rho\in P:\, \int_{\br^3}\rho\leq N\}
\end{align}
where $F(t):=\int_0^t f(s)\,\ds$.
In absence of the exchange term $\cx(\rho)$, Chen \cite[p.~39]{Chen2019}
proved the existence of minimizers of $\ce_{c,V}^{\rm rTFWD}$.

As we shall discuss in Subsection~\ref{ss:1poperators},
\emph{non-renormalized} relativistic quantum
models for Coulomb systems are not expected to be well defined for arbitrary
large nuclear charges, since the Coulomb potential and the kinetic energy have
the same scaling behavior, at least for high momenta.
The renormalization in Engel's and Dreizler's derivation leads to the
$\arsinh$ function in Weizs\"acker's term which ensures the lower boundedness
of $\ce_{c,V}^{\rm rTFWD}(\rho)$ for all nuclear charges $\uZ$.
The necessity of renormalization was realized early, see, e.g.,
Heisenberg and Euler \cite{HeisenbergEuler1936}.

\begin{theorem}[{\cite[Theorem~1]{Chenetal2020A}}]
  For $K=1$ and given $c,Z,A>0$ let $\kappa:=Z/(c\sqrt A)$.
  Let $\xi:=(4\pi)^{-1}\max\{X(t)/t^3:t>0\}$ and
  $s_0:\R_+\to\R_+$ be the explicit function given in
  \cite[(16)]{Chenetal2020A}, which is strictly monotone increasing and
  satisfies $s_0(0)=0$ and $\lim_{\kappa\to\infty}s_0(\kappa)=\infty$.
  Then for all $\rho\in P$ with $\int\rho=N$ one has
  \begin{align}
    \ce_{c,V}^{\rm rTFWD}(\rho) \geq -\frac{4s_0(\kappa)^5}{5 T^{\rm rTF}(s_0(\kappa))}E_{z=1}^{\rm TF}(1)Z^{7/3} - \xi c N.
  \end{align}  
\end{theorem}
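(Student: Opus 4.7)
The plan is to discard non-negative contributions, control the exchange correction by $N$, and compare the resulting relativistic Thomas--Fermi expression with a rescaled non-relativistic Thomas--Fermi problem whose infimum is known.

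First, since $K=1$ the nuclear repulsion $U$ vanishes, and since $\ct^{\rm W}(\rho)\geq 0$ pointwise, the initial reduction is
\[
\ce_{c,V}^{\rm rTFWD}(\rho) \geq \ct^{\rm rTF}(\rho) - \int_{\R^3} V\rho\,\dx + D(\rho,\rho) - \cx(\rho).
\]
(The Coulomb self-energy $D(\rho,\rho)$ is kept, because the potential term alone cannot be bounded against only the Thomas--Fermi kinetic energy.) The exchange term is estimated directly from the definition of $\xi$: the inequality $X(t)\leq 4\pi\xi t^3$ together with $p(x)^3 = 3\pi^2\rho(x)$ and $\int\rho = N$ yields $\cx(\rho)\lesssim \xi c N$, the remaining numerical factor being absorbed into the normalization of $\xi$ in the cited paper.

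The heart of the argument is a comparison between $\ct^{\rm rTF}$ and the non-relativistic Thomas--Fermi kinetic functional. Writing $s(x) := p(x)/c$ and using the identity $\tfrac{c^5}{8\pi^2}\cdot \tfrac{4}{5}s^5 = \tfrac{3}{5}\gtf\rho^{5/3}$ (valid for $q=2$), one seeks a multiplicative constant
\[
\mu(\kappa) := \frac{T^{\rm rTF}(s_0(\kappa))}{(4/5)\,s_0(\kappa)^5} \in (0,1]
\]
such that $\ct^{\rm rTF}(\rho) \geq \mu(\kappa)\int \tfrac{3}{5}\gtf\rho^{5/3}\,\dx$. Once this is in force, the right-hand side combined with $-\int V\rho\,\dx + D(\rho,\rho)$ is exactly a non-relativistic Thomas--Fermi functional in which $\gtf$ has been replaced by $\mu\gtf$. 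A scaling computation in the spirit of Theorem~\ref{tftheorem}(v) identifies its infimum with $\mu^{-1}E^{\rm TF}(Z) = \mu^{-1}E_{z=1}^{\rm TF}(1)\,Z^{7/3}$, which matches (up to the sign convention adopted for $E_{z=1}^{\rm TF}(1)$) the leading term of the theorem. Combined with the exchange estimate this gives the claim.

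The main obstacle is establishing the pointwise comparison $T^{\rm rTF}(s) \geq \mu(\kappa)\cdot (4/5)s^5$. A naive global version fails: $T^{\rm rTF}(s)/s^5 \to 4/5$ as $s \to 0^+$ but $T^{\rm rTF}(s)/s^5 \to 0$ as $s \to \infty$, since $T^{\rm rTF}(s) \sim 2s^4$ in the ultrarelativistic regime. The threshold $s_0(\kappa)$ given by \cite[(16)]{Chenetal2020A} encodes the optimal balance: for $s \leq s_0$ the comparison is automatic from the monotonicity of $T^{\rm rTF}(s)/s^5$, whereas for $s > s_0$ one must exploit $T^{\rm rTF}(s) \sim 2s^4$ together with the attractive Coulomb potential to absorb the overshoot, the resulting deficit being shifted into the $-\xi cN$ remainder. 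The specific definition of $s_0(\kappa)$ is precisely the value at which both regimes match. In the non-relativistic limit $\kappa\to 0$ one has $s_0\to 0$ and $\mu\to 1$, recovering the standard Thomas--Fermi lower bound to leading order.
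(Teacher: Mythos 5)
Your reduction in the first step is where the argument breaks down. Dropping $\ct^{\rm W}(\rho)\geq 0$ cannot work here, because the functional $\ct^{\rm rTF}(\rho)-\int V\rho\,\dx + D(\rho,\rho)$ is \emph{unbounded from below} for every $Z>0$ (not just large $Z/c$): under concentration $\rho_\lambda(x)=\lambda^3\rho(\lambda x)$, the ultrarelativistic kinetic energy $\sim\int p^4$, the Coulomb attraction $\int Z|x|^{-1}\rho$, and $D(\rho,\rho)$ all scale like $\lambda$, so any profile making the sum negative drives the energy to $-\infty$. The paper states this explicitly in Subsection~\ref{ss:tfwrel} ("This functional is unbounded from below since the relativistic kinetic energy cannot control the Coulomb singularity" and "the $\arsinh$ function in Weizs\"acker's term ... ensures the lower boundedness"). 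A more structural signal that $\ct^{\rm W}$ must stay in the estimate is that the parameter $\kappa=Z/(c\sqrt A)$ in the conclusion depends on the Weizs\"acker coefficient $A$; if the Weizs\"acker term were irrelevant for the bound, $A$ could not appear.

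Your own analysis surfaces the difficulty — you correctly observe that $T^{\rm rTF}(s)/s^5$ decreases from $4/5$ to $0$, so the pointwise comparison $T^{\rm rTF}(s)\geq\mu(\kappa)\tfrac45 s^5$ fails for $s>s_0(\kappa)$ — but your proposed repair does not close it. Using "$T^{\rm rTF}(s)\sim 2s^4$ together with the attractive Coulomb potential" still leaves the critical Hardy-type obstruction: on the set where $s$ is large (high density, near the nucleus), $\int p^4$ controls $\|\rho\|_{L^{4/3}}^{4/3}$, but $|x|^{-1}\notin L^4_{\mathrm{loc}}$, so $Z\int|x|^{-1}\rho$ cannot be bounded by $\|\rho\|_{L^{4/3}}$ and the total mass $N$ alone; this is exactly the borderline failure of Daubechies' inequality for the Coulomb singularity. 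The deficit is not of size $\co(cN)$ uniformly in $Z$, so it cannot be dumped into the $-\xi c N$ remainder (which you have already committed to the exchange term $\cx$; as an aside, the definition of $\xi$ gives $\cx(\rho)\leq\tfrac32\xi c N$, not $\xi c N$). The missing idea is precisely the one the paper emphasizes: the $\arsinh$ enhancement in $f(t)^2$ makes $\ct^{\rm W}$ logarithmically stronger than a plain gradient term at high momenta, and this logarithmic gain is what allows a lower bound \emph{uniformly in} $Z/c$. Any proof that discards $\ct^{\rm W}$ at the outset is doomed.

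The pieces of your argument that do work are the scaling computation showing that rescaling $\gtf\mapsto\mu\gtf$ multiplies the TF infimum by $\mu^{-1}$, the identity $\frac{c^5}{8\pi^2}\cdot\frac45 s^5 = \frac35\gtf\rho^{5/3}$ with $q=2$, and the observation that $T^{\rm rTF}(s)/s^5$ is decreasing with limit $4/5$ at $0$. These reappear in any correct proof, but they must be supplemented by a pointwise estimate on the high-density set that uses the Weizs\"acker term to dominate the Coulomb singularity; that is the step whose output defines the threshold $s_0(\kappa)$ in \cite[(16)]{Chenetal2020A}, rather than merely the monotonicity crossover you describe.
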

In absence of the Dirac term, an analogous result was proved
in \cite[Theorem~1]{ChenSiedentop2020}.
In fact, their result also holds in the molecular case.

The existence of minimizers and bounds for the excess charge (in absence or
presence of the Dirac term) were proved by Chen \cite[p.~39]{Chen2019} and in
\cite[Theorem~2]{ChenSiedentop2020} and \cite[Theorem~2]{Chenetal2020A}.
In passing, we mention that bounds on the excess charge are
available in many non-relativistic models; see, e.g., Lieb
\cite{Lieb1981}, as well as Benguria and Lieb \cite{BenguriaLieb1985},
Solovej \cite{Solovej1990}, and
the more recent work \cite{Franketal2018T2}.

The following result concerns the energy asymptotics in the atomic case.
\begin{theorem}[{\cite[Theorem~1]{Siedentop2021}}]
  \label{edtf}
  Let $K=1$ and $Z/c>0$ be fixed. Then
  \begin{align}
    \label{eq:edtf}
    \inf\ce_{c,Z/|x|}^{\rm rTFWD}(P_Z) = E^{\rm TF}(Z) + \co(Z^2)
    \quad \text{as}\ Z\to\infty.
  \end{align}
\end{theorem}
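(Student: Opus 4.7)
The proof splits into an upper bound and a matching lower bound, both of order $\co(Z^2)$ relative to $E^{\rm TF}(Z)$. For the upper bound I would take as trial density the minimizer $\rho^*$ of the constrained non-relativistic TFW functional with particle number $N = Z$, which exists by Theorem \ref{tfwtheorem}. By Theorem \ref{tfwz2theorem}, the TFW energy satisfies $\ce_{Z/|x|}^{\rm TFW}(\rho^*) = E^{\rm TF}(Z) + D^{\rm TFW} Z^2 + o(Z^2)$, so it remains to bound
\begin{align*}
  \ce_{c,Z/|x|}^{\rm rTFWD}(\rho^*) - \ce_{Z/|x|}^{\rm TFW}(\rho^*)
  = \int\tfrac{c^5}{8\pi^2}\bigl[T^{\rm rTF}(p^*/c) - \tfrac{4}{5}(p^*/c)^5\bigr]\dx + \int\tfrac{3A}{8\pi^2}(\nabla p^*)^2 c\bigl[f(p^*/c)^2 - p^*/c\bigr]\dx - \cx(\rho^*),
\end{align*}
with $p^*:=(3\pi^2\rho^*)^{1/3}$. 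Using the small-$t$ expansions $T^{\rm rTF}(t) = (4/5)t^5 + \co(t^7)$, $f(t)^2 = t + \co(t^3)$, and $X(t) = 2t^4 + \co(t^6)$, together with the Thomas--Fermi scaling $\rho^*(x) \sim Z^2 \rho_1^{\rm TF}(Z^{1/3}x)$ on the length scale $Z^{-1/3}$ and the Lieb--Liberman hydrogenic scaling $\rho^*(x) \sim (2Z^2/(A\gtf))^{3/2}\rho_\infty(2Zx/A)$ on the scale $Z^{-1}$, each of the three correction terms contributes $\co(Z^{5/3})$ from the Thomas--Fermi region (where $p^*/c \sim Z^{-1/3} \to 0$) and $\co(Z^2)$ from the Scott region (where $p^*/c \sim \gamma$ is bounded), for a net error of $\co(Z^2)$.

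For the lower bound, given any $\rho \in P_Z$, I would decompose
\begin{align*}
  \ce_{c,V}^{\rm rTFWD}(\rho) = \ce_V^{\rm TF}(\rho) + \bigl[\ct^{\rm rTF}(\rho) - \tfrac{3}{5}\gtf\textstyle\int\rho^{5/3}\dx\bigr] + \ct^{\rm W}(\rho) - \cx(\rho).
\end{align*}
Since $\int\rho \leq Z$, Theorem \ref{tftheorem}(ii) and the monotonicity of $N \mapsto E^{\rm TF}(N,Z)$ give $\ce_V^{\rm TF}(\rho) \geq E^{\rm TF}(Z,Z) = E^{\rm TF}(Z)$, so it suffices to establish
\begin{align*}
  \cx(\rho) + \bigl[\tfrac{3}{5}\gtf\textstyle\int\rho^{5/3}\dx - \ct^{\rm rTF}(\rho)\bigr] \leq C Z^2 + \ct^{\rm W}(\rho).
\end{align*}
Using the pointwise estimates $X(t) \leq 2t^4$ and $0 \leq (4/5)t^5 - T^{\rm rTF}(t) \leq C\min(t^7,t^5)$, the left-hand side reduces to integrals of $\rho^{4/3}$, $\rho^{7/3}/c^2$, and $\rho^{5/3}\one_{\{p\geq c\}}$. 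Splitting $\R^3$ into an outer region, where Gagliardo--Nirenberg interpolation together with $\int\rho\leq Z$ yields $\co(Z^{5/3})$ bounds, and an inner region near the nucleus, where the relativistic Weizs\"acker term — whose small-$p/c$ asymptotics match $(A/2)\int|\nabla\sqrt\rho|^2\dx$ — absorbs the negative contributions via AM--GM, produces the required $\co(Z^2)$ bound. As a global safety net, one may invoke the Thomas--Fermi-type a priori lower bound $\ce_{c,V}^{\rm rTFWD}(\rho) \geq -CZ^{7/3} - \xi cN$ from \cite{Chenetal2020A} to handle densities that are too concentrated near the nucleus.

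The principal obstacle is the lower bound on the Scott scale $|x|\sim Z^{-1}$, where $p/c$ is of order $\gamma$ and no Taylor expansion at $t = 0$ is available: there the relativistic reduction of the kinetic energy and the Dirac exchange can each contribute negatively at precise order $Z^2$, and must be absorbed by the relativistic Weizs\"acker term, whose integrand in this regime scales like $(\nabla p)^2 c\arsinh(p/c)$ and is itself $\co(Z^2)$. Making this mutual cancellation quantitative — with the net error of the correct order $Z^2$ and not of a larger power of $Z$ — and coping with the fact that the minimizer of $\ce^{\rm rTFWD}$ is not a priori known to inherit the structure of $\rho^*$, is the technical heart of the argument.
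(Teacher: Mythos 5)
Your upper bound is sound: with the TFW minimizer $\rho^*$ of particle number $Z$ as trial density, Theorem \ref{tfwz2theorem} gives $\ce_{Z/|x|}^{\rm TFW}(\rho^*)=E^{\rm TF}(Z)+\co(Z^2)$; the pointwise domination $T^{\rm rTF}(t)\leq\tfrac{4}{5}t^5$ (inherited from $\sqrt{c^2p^2+c^4}-c^2\leq p^2/2$) and the sign $-\cx(\rho^*)\leq 0$ (since $p^*/c\lesssim\gamma<2/\pi$, where $X\geq0$) discard two of the three corrections, and the remaining Weizs\"acker difference
\begin{equation*}
\ct^{\rm W}(\rho^*)-\frac{A}{2}\int|\nabla\sqrt{\rho^*}|^2\,\dx
=\frac{3A}{8\pi^2}\int(\nabla p^*)^2\,c\bigl[f(p^*/c)^2-p^*/c\bigr]\,\dx
\end{equation*}
is $\lesssim\gamma^2\cdot\frac{A}{2}\int|\nabla\sqrt{\rho^*}|^2\,\dx=\co(Z^2)$, using $f(t)^2-t\lesssim t^3$ on bounded intervals and $(p^*)^2/c^2\lesssim\gamma^2$ (which requires the a priori bound $\|\rho^*\|_\infty\lesssim Z^3$). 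This can be made rigorous.

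The lower bound, however, contains a structural gap. The step ``it suffices to establish $\cx(\rho)+[\tfrac{3}{5}\gtf\int\rho^{5/3}-\ct^{\rm rTF}(\rho)]\leq CZ^2+\ct^{\rm W}(\rho)$'' asserts an inequality that is false in $P_Z$. Take $\rho_\delta(x)=\delta^{-3}\sigma(x/\delta)$ for a fixed smooth bump $\sigma\geq0$ with $\int\sigma=1$, and let $\delta\to0$ with $Z,c$ fixed. Then $p_\delta/c\to\infty$ on $\supp\sigma$, and the large-$t$ asymptotics $T^{\rm rTF}(t)\sim2t^4$, $X(t)\sim -t^4$, $f(t)^2\sim2\log(2t)$ give
\begin{equation*}
\int\rho_\delta^{5/3}\,\dx\sim\delta^{-2},\quad
\ct^{\rm rTF}(\rho_\delta)\sim c\,\delta^{-1},\quad
\cx(\rho_\delta)\sim-\delta^{-1},\quad
\ct^{\rm W}(\rho_\delta)\sim c\,\delta^{-1}\log\tfrac{1}{\delta c},
\end{equation*}
so the left side of your inequality grows like $\delta^{-2}$ while the right side grows only like $\delta^{-1}\log(1/\delta)$. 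The root cause is that replacing $\ce_V^{\rm TF}(\rho)$ by $E^{\rm TF}(Z)$ discards the full non-relativistic kinetic energy $\tfrac{3}{5}\gtf\int\rho^{5/3}$, precisely the quantity that was cancelling the blow-up of the kinetic deficit for densities concentrated below the Scott scale. Indeed $\ce_V^{\rm TF}(\rho)+[\ct^{\rm rTF}(\rho)-\tfrac{3}{5}\gtf\int\rho^{5/3}]=\ce_{c,V}^{\rm rTF}(\rho)$ is the relativistic TF functional, which is unbounded from below; stabilizing it requires the Weizs\"acker term together with the attraction $-\int V\rho$ and the repulsion $D(\rho,\rho)$, not the Weizs\"acker term alone. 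The safety net from \cite{Chenetal2020A} does not rescue you either: its coefficient $\tfrac{4s_0(\kappa)^5}{5T^{\rm rTF}(s_0(\kappa))}>1$ yields a lower bound below $E^{\rm TF}(Z)$ by an amount of order $Z^{7/3}$, not $Z^2$. A working lower bound must retain $\ce_V^{\rm TF}(\rho)$ in the decomposition and localize into an inner region near the nucleus, controlled by a stability-type argument in the spirit of \cite{Chenetal2020A}, and an outer region where $p/c$ is small so that all relativistic terms reduce to their non-relativistic counterparts --- the latter being what the paper means by ``electrons far from the nucleus behave non-relativistically''.
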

The core elements of the proof are the facts that the relativistic
kinetic energy is dominated by the non-relativistic one, that
relativistically described electrons far away from the nucleus behave
non-relativistically, and that the TFW functional provides a
$Z^2$-correction to the Thomas--Fermi energy.
One may argue that Theorem \ref{edtf} is expected in view of the
heuristics explained in Subsection \ref{sss:relatomtf}. Still, it is
quite surprising that a formally derived functional correctly yields a
fundamental feature like the ground state energy of large atoms.

\section{Quantum mechanics of non-relativistic Coulomb systems}
\label{s:nonrelativistic}

In this section we review results for the energy and density of
non-relativistic systems with one or several nuclei. These are
described by the Hamilton operator $H_{N,V}$ in
\eqref{eq:manybodySchrodinger}.
Recall that the ground state energy $E_S(N,\uZ,\uR)$ is defined in
\eqref{eq:gsenergynonrel}.

\subsection{Atoms without magnetic fields}
\label{ss:nonrelatom}

\subsubsection{Thomas--Fermi scale}

One of the main results of the work \cite{LiebSimon1977} of Lieb and Simon
is that TF theory correctly describes both the leading order of the
quantum mechanical ground state energy of large atoms, and the one-particle
electron distribution on the length scale $Z^{-1/3}$ in the limit $Z\to\infty$.
This is summarized in the following theorem.

\begin{theorem}
  \label{quantumtfconvnonrelatom}
  Let $E_S(N,Z)=\inf\spec(H_{N,Z})$ be the ground state energy of $H_{N,Z}$.
  Fix $\alpha=N/Z$. Then
  \begin{align}
    \label{eq:quantumtf}
    \lim_{Z\to\infty}\frac{E_S(N,Z)}{Z^{7/3}} = E^{\mathrm{TF}}(\alpha,1).
  \end{align}
  If $E_S(N,Z)$ is an eigenvalue of $H_{N,Z}$ and $\rho_S$ is the one-particle density of
  any of its associated normalized eigenfunctions, then one has
  \begin{align}
    \label{eq:quantumtfdensitycoulomb}
    \lim_{Z\to\infty}\|Z^{-2}\rho_S(Z^{-1/3}\cdot)-\rho_{z=1}^{\rm TF}(\alpha,\cdot)\|_C=0,\\
    \label{eq:quantumtfdensityl53}
    \lim_{Z\to\infty}\int_{\R^3} U(x)Z^{-2}\rho_S(Z^{-1/3}x)\,\dx
    = \int_{\R^3} U(x)\rho_{z=1}^{\mathrm{TF}}(\alpha,x)\,\dx
  \end{align}
  for
  all $U\in L^{5/2}(\R^3)$.
  Finally, if $N\leq Z$, then 
  the convergence
  $Z^{-2}\rho_S(Z^{-1/3}\cdot) \to \rho_{z=1}^{\mathrm{TF}}(\alpha,\cdot)$
  also holds in the weak $L^1$-sense.
\end{theorem}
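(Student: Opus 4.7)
The plan is to upgrade the convergence of the previous item from $L^{5/2}$ test functions to arbitrary bounded measurable test functions, which is precisely weak $L^1$ convergence of $f_Z(x) := Z^{-2}\rho_S(Z^{-1/3}x)$ to $\rho_{z=1}^{\mathrm{TF}}(\alpha,\cdot)$. A change of variables gives $\int_{\R^3} f_Z\,\dx = N/Z = \alpha$, and when $N\leq Z$ (so $\alpha\leq 1$), Theorem~\ref{tftheorem} yields $\int_{\R^3}\rho_{z=1}^{\mathrm{TF}}(\alpha,\cdot)\,\dx = \alpha$ as well. Thus the hypothesis $N\leq Z$ is exactly what makes the total $L^1$-masses on the two sides agree.

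The two additional ingredients I would invoke are \emph{equiintegrability} and \emph{tightness}. For equiintegrability, a standard a priori bound---obtained, e.g., by combining the leading-order energy asymptotics \eqref{eq:quantumtf} with a Lieb--Thirring inequality applied to any (approximate) ground state---gives $\int_{\R^3}\rho_S^{5/3}\,\dy\lesssim Z^{7/3}$. A further change of variables then yields $\|f_Z\|_{5/3}\lesssim 1$ uniformly in $Z$, so Hölder's inequality gives $\int_A f_Z\,\dx\lesssim |A|^{2/5}$ for every measurable $A\subset\R^3$. For tightness, I would apply \eqref{eq:quantumtfdensityl53} with $U=\one_{B_R}\in L^{5/2}(\R^3)$ to conclude $\int_{B_R} f_Z\,\dx\to\int_{B_R}\rho_{z=1}^{\mathrm{TF}}(\alpha,\cdot)\,\dx$ for every fixed $R>0$. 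Combined with the mass identity from the previous paragraph, this gives $\int_{\{|x|>R\}} f_Z\,\dx\to \int_{\{|x|>R\}}\rho_{z=1}^{\mathrm{TF}}(\alpha,\cdot)\,\dx$, which vanishes as $R\to\infty$ since the TF density is integrable. This is tightness.

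With these two properties in hand, the weak $L^1$ convergence follows by a routine cut-off. For any $g\in L^\infty(\R^3)$ and $\epsilon>0$, I would choose $R$ so large that $\int_{\{|x|>R\}}\rho_{z=1}^{\mathrm{TF}}(\alpha,\cdot)\,\dx<\epsilon$ and then $Z_0$ so large that $\int_{\{|x|>R\}} f_Z\,\dx<2\epsilon$ for $Z\geq Z_0$. Splitting $\int_{\R^3} g f_Z\,\dx$ into bulk and tail, applying \eqref{eq:quantumtfdensityl53} to $g\one_{B_R}\in L^{5/2}(\R^3)$ in the bulk, and bounding the tails by $3\|g\|_\infty\epsilon$ yields the claim.

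The main obstacle is the tightness step, and it is exactly where the hypothesis $N\leq Z$ enters essentially. If $\alpha>1$, Theorem~\ref{tftheorem} only gives $\int\rho_{z=1}^{\mathrm{TF}}(\alpha,\cdot)\,\dx=1<\alpha=\int_{\R^3} f_Z\,\dx$, and the excess mass $\alpha-1$ escapes to infinity under the Thomas--Fermi rescaling $x\mapsto Z^{-1/3}x$. Testing against $g\equiv 1$ would then fail, and the weak $L^1$ limit (if it existed) would be only the sub-probability density $\rho_{z=1}^{\mathrm{TF}}(\alpha,\cdot)$. This pinpoints both the necessity of the hypothesis and the precise mechanism---loss of mass at infinity---behind the restriction in the statement.
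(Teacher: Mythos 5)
Your proposal addresses only the final claim of the four-part theorem (weak $L^1$ convergence when $N\leq Z$), taking the energy asymptotics \eqref{eq:quantumtf}, the Coulomb-norm convergence \eqref{eq:quantumtfdensitycoulomb}, and the $L^{5/2}$-tested convergence \eqref{eq:quantumtfdensityl53} as given. For that part, the argument is correct and clean: the change of variables gives $\int f_Z = N/Z = \alpha$; when $\alpha\leq 1$ Theorem~\ref{tftheorem} gives $\int\rho_{z=1}^{\rm TF}(\alpha,\cdot)=\alpha$; testing \eqref{eq:quantumtfdensityl53} against $\one_{B_R}\in L^{5/2}$ and subtracting from the total masses yields tightness; and the cut-off against $g\one_{B_R}\in L^{5/2}$ plus the tail bound finishes it. The identification of $\alpha\leq 1$ as the exact hypothesis needed to prevent mass escaping to infinity under the $Z^{-1/3}$ rescaling is the right diagnosis. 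The paper is a review that attributes the theorem to Lieb--Simon \cite{LiebSimon1977} (and Baumgartner, Fefferman--Seco) rather than reproving it, so there is no in-house proof to compare against, but your mass-conservation-plus-tightness route is the classical one.

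One minor but worth-noting redundancy: you announce equiintegrability and tightness as the two ingredients, then prove a uniform $L^{5/3}$ bound and derive $\int_A f_Z\lesssim |A|^{2/5}$ — yet neither appears in your closing cut-off argument, which uses only tightness and the local convergence. The $L^{5/3}$ bound does matter, but it is already implicit in the passage from characteristic-function test sets to general $L^{5/2}$ test functions in \eqref{eq:quantumtfdensityl53}, i.e., upstream of where you start; invoking it again is not needed for the tightness step. You could drop the equiintegrability paragraph without loss.
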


\begin{remark}
  \label{remnoeigenvalue}
  As mentioned before, $E_S(N,Z)$ is indeed an eigenvalue of $H_{N,Z}$
  in the most relevant case $N=Z$.  However, there are generalizations
  of the convergence statements
  \eqref{eq:quantumtfdensitycoulomb}--\eqref{eq:quantumtfdensityl53}
  that remain valid even if $E_S(N,Z)$ is not an eigenvalue. Namely,
  one can take $\rho_S$ to be the one-particle density of elements in
  any sequence of normalized functions $\psi_N$, $N=1,2,...$ in the
  form domain of $H_{N,Z}$ satisfying
  \begin{align}
    \lim_{N\to\infty}\frac{\langle \psi_N,H_{N,Z}\psi_N\rangle - E_S(N,Z)}{Z^{7/3}} = 0.
  \end{align}
  Such a sequence $(\psi_N)_{N\in\N}$ is sometimes called approximate
  ground state of $H_{N,Z}$ on the Thomas--Fermi scale.
\end{remark}

Asymptotics \eqref{eq:quantumtf} and \eqref{eq:quantumtfdensityl53}
(the latter for characteristic functions of bounded, measurable sets)
are due to Lieb and Simon \cite[Theorems~III.1, III.3]{LiebSimon1977}.
In the context of a proof of stability of matter, Thirring \cite{Thirring1981}
found a substantially simpler proof of the lower bound in \eqref{eq:quantumtf}
that used coherent states involving Gaussians.
Lieb slightly generalized these coherent states and found a shorter proof for
the upper bound in \eqref{eq:quantumtf},
see \cite[Theorem~5.1]{Lieb1981} (where also an adaption of Thirring's proof
of the lower bound is reported).
Formula \eqref{eq:quantumtfdensityl53} (for certain characteristic
functions) is also due to Baumgartner \cite{Baumgartner1976},
while the convergence \eqref{eq:quantumtfdensitycoulomb} of $\rho_S$ in
Coulomb norm was proved by Fefferman and Seco \cite{FeffermanSeco1989}.
(Note that the Scott correction with error bound $\mathcal{O}(Z^{47/24})$
(see \cite{SiedentopWeikard1987O,SiedentopWeikard1989,Bach1989E,Bach1989})
actually implies the quantitative bound
$\|Z^{-2}\rho_S(Z^{-1/3}\cdot)-\rho_{z=1}^{\rm TF}(\alpha,\cdot)\|_C\lesssim Z^{-3/16}$.)
Asymptotics \eqref{eq:quantumtfdensityl53} for general $U\in L^{5/2}(\R^3)$
follow from the fact that finite sums of characteristic functions of
bounded measurable sets are dense in $L^{5/2}$, together with the uniform
boundedness $\int_{\R^3} \left(Z^{-2}\rho_S(Z^{-1/3}x)\right)^{5/3}=\co(1)$,
which in turn follows from the kinetic Lieb--Thirring inequality.

%

\begin{remark}[The Bohr atom]
  Let $E_{\rm Bohr}(N,Z)$ be the ground state energy of $H_{N,V}$
  in the atomic case ($K=1$, $R=0$) and without electron-electron
  repulsion, that is, without the double sum in
  \eqref{eq:manybodySchrodinger}. Let $E_{\rm Bohr}^{\rm TF}(N,Z)$
  be the corresponding quantity in TF theory, defined in Remark
  \ref{bohrtfatom}, and note that, by scaling,
  $E_{\rm Bohr}^{\rm TF}(N,Z)= E_{\rm Bohr}^{\rm TF}(N/Z,1)\cdot Z^{7/3}$.
  We claim that, for any fixed $\alpha=N/Z$,
  $$
  \lim_{Z\to\infty} \frac{E_{\rm Bohr}(N,Z)}{Z^{7/3}} = E_{\rm Bohr}^{\rm TF}(\alpha,1).
  $$
  Indeed, this can either be proved by following the proof of
  Theorem \ref{quantumtfconvnonrelatom} or, much more directly, by
  using the explicit formula for the eigenvalue $E_{\rm Bohr}(N,Z)$
  and the explicit formula for $E_{\rm Bohr}^{\rm TF}(N,Z)$ in
  \eqref{eq:tfenergyneutralbohr}.
\end{remark}

Let us return to the situation of Theorem \ref{quantumtfconvnonrelatom}.
Recall that the TF minimizer satisfies
$\rho_{z=1}^{\rm TF}(x) \sim \const |x|^{-3/2}$ as $x\to0$
(see Theorem \ref{tfdensityproperties}).
Thus, Theorem \ref{quantumtfconvnonrelatom} implies that the ground state
density develops a singularity on the scale $Z^{-1/3}$.
This is not surprising, since the electrons in a Bohr atom have a density
proportional to $Z^3$,
which tends to infinity relative to the TF magnitude $Z^2$.
Thus, TF theory is not expected to provide a complete description of large
atoms, especially on scales $Z^{-1}$ close to the nucleus.
%
For this reason we now look closer at the electrons on the
shorter, hydrogenic length scale $Z^{-1}$.

\subsubsection{Scott scale}\label{sec:scottnonrel}

As explained in the introduction, Scott's idea is that the leading correction
to TF theory stems from the few innermost high-energy electrons on distances
$Z^{-1}$ from the nucleus, i.e., on the natural length scale of the hydrogen
operator $S_Z^H$. Recall also that the eigenvalues of this operator have
magnitude $\co(Z^2)$.

The following theorem states that 
\emph{Scott's conjecture is indeed true}.
Hughes \cite{Hughes1986,Hughes1990} (lower bound) and the works
\cite{SiedentopWeikard1987O,SiedentopWeikard1989} (upper and lower bound)
proved this conjecture for neutrally charged atoms.
Since the Scott correction should not depend on ``electrons on the outermost
shells'', it was believed that the conjecture also holds for ions. 
Indeed, Bach \cite{Bach1989,Bach1989E} showed that also this intuition is
correct, and proved Scott's conjecture for (positive and negative) ions.

\begin{theorem}[{\cite{Hughes1986,Hughes1990,SiedentopWeikard1987O,SiedentopWeikard1989,Bach1989,Bach1989E}}]
  \label{swscottnonrel}
  Let $\alpha>0$ and $N=\alpha Z$ with $N\in\N$.
  Then the ground state energy $E_S(N,Z)$ of $H_{N,Z}$ in
  \eqref{eq:manybodySchrodinger} satisfies
  \begin{align}
    \label{eq:swscottnonrel}
    E_S(N,Z) = E^{\rm TF}(\alpha,1)\cdot Z^{7/3} + \frac{q}{4} \cdot Z^2 + \mathcal{O}(Z^{47/24}).
  \end{align}
\end{theorem}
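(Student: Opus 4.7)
The plan is to prove matching upper and lower bounds for $E_S(N,Z)$ accurate to $\mathcal{O}(Z^{47/24})$. Both bounds reduce the many-body problem to the one-body operator $h := -\tfrac12\Delta - \Phi^{\rm TF}$, where $\Phi^{\rm TF}$ is the Thomas--Fermi potential from \eqref{eq:tfeq3}, and then extract the Scott term from a refined semiclassical analysis of $\sum_j e_j(h)$. The mechanism is that, because $\Phi^{\rm TF}$ inherits the Coulomb singularity $Z/|x|$ near the nucleus, the Weyl expansion of $\sum_j e_j(h)$ picks up an additional $+\tfrac{q}{4}Z^2$ beyond the TF phase-space integral; this extra term matches the correction computed in closed form for the Bohr atom via Euler--Maclaurin applied to $\sum_{n=1}^{n_0}qn^2(-Z^2/(2n^2))$.

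\textbf{Upper bound.} I would take $\psi$ to be the Slater determinant of the $N$ lowest eigenfunctions of $h$, respecting the $q$-fold spin degeneracy. Using $\Phi^{\rm TF}=V-\rho^{\rm TF}\ast|\cdot|^{-1}$, a direct calculation gives
\begin{equation*}
  \langle \psi,H_{N,Z}\psi\rangle = \sum_{j=1}^{N} e_j(h) + D(\rho_\psi - \rho^{\rm TF},\rho_\psi - \rho^{\rm TF}) - D(\rho^{\rm TF},\rho^{\rm TF}) - X(\psi),
\end{equation*}
where the exchange $X(\psi)$ is bounded by Lieb--Oxford at order $\mathcal{O}(Z^{5/3})$ and the quadratic deviation is of the same order by Theorem \ref{quantumtfconvnonrelatom}. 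Combining this with the refined Weyl expansion discussed below and with the TF energy identity yields the desired inequality.

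\textbf{Lower bound.} For any normalized $\psi$, Lieb--Oxford replaces the pair repulsion by $D(\rho_\psi,\rho_\psi) - \mathcal{O}(Z^{5/3})$. Completing the square,
\begin{equation*}
  D(\rho_\psi,\rho_\psi) - \int V\rho_\psi \;\geq\; -\int \Phi^{\rm TF}\rho_\psi - D(\rho^{\rm TF},\rho^{\rm TF}),
\end{equation*}
reduces the task to estimating $\sum_\nu\langle\psi,h_\nu\psi\rangle$ from below. By antisymmetry and the Pauli principle this is bounded below by the sum of the $N$ lowest eigenvalues of $h$, so the same refined Weyl expansion as in the upper bound closes the argument.

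\textbf{Main obstacle.} The crux is the refined Weyl expansion
\begin{equation*}
  \sum_{j=1}^{N} e_j(h) = -\frac{2\sqrt 2\, q}{15\pi^2}\int(\Phi^{\rm TF})_+^{5/2}\,\dx + \frac{q}{4}Z^2 + o(Z^2).
\end{equation*}
Naive coherent-state methods (Thirring--Lieb) suffice only if the potential is smooth, whereas $\Phi^{\rm TF}$ is singular at the nucleus and one must extract the subleading $+\tfrac{q}{4}Z^2$ boundary contribution. The Hellmann--Weizs\"acker functional cannot be used directly here, since it returns the wrong Scott coefficient (Remark \ref{remarkshw}). The workable strategy is an angular-momentum decomposition of $h$ into radial operators on $L^2(\R_+,\dr)$ with centrifugal term $\ell(\ell+1)/(2r^2)$, each split into an inner zone $r\lesssim Z^{-1+\delta}$ where $\Phi^{\rm TF}(r)\approx Z/r$ and the hydrogenic spectrum is explicit (yielding $+\tfrac{q}{4}Z^2$ via Euler--Maclaurin over principal quantum numbers up to the matching cutoff), and an outer zone where Thirring-type coherent states apply to the smooth part of $\Phi^{\rm TF}$. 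Verifying that the mean-field screening felt by the innermost electrons from the outer bulk is only $\mathcal{O}(Z^{4/3})\ll Z^2$, and that the Lieb--Oxford error $\mathcal{O}(Z^{5/3}) = o(Z^{47/24})$ does not invade the Scott term, are the remaining technical hurdles; optimizing the various matching parameters produces the explicit error $\mathcal{O}(Z^{47/24})$ of Siedentop--Weikard and Bach.
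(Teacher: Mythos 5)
Your proposal is a correct route, but it is not the one the paper outlines. The paper sketches the Siedentop--Weikard proof, whose upper bound uses a variational wavefunction built from hydrogen orbitals for $\ell<[Z^{1/12}]$ and Macke orbitals for larger $\ell$; the Macke orbitals lead to the Hellmann--Weizs\"acker functional $\ce_Z^{\rm HW}$, which is kept \emph{with} the electron-electron repulsion term $\tilde D(\uvarrho,\uvarrho)$, and the TF leading order then drops out of the Hellmann functional $\ce^{\rm H}_Z$. No reduction to the mean-field operator $h=-\tfrac12\Delta-\Phi^{\rm TF}$ is made. Your proposal instead performs that reduction at the outset (via Lieb--Oxford and completion of the square) and then pushes all the work into a refined Weyl expansion of $\tr h_-$; this is the Hughes / Ivrii--Sigal / Solovej--Spitzer line, which the paper cites as alternative proofs. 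What your route buys is a clean separation of the many-body combinatorics (disposed of by correlation estimates) from the one-body spectral asymptotics; what Siedentop--Weikard buy is that the angular-momentum/Macke machinery produces the Scott term and the phase-space localization simultaneously without needing a separate sharp one-body eigenvalue-sum theorem.

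Two points deserve correction. First, your dismissal of the Hellmann--Weizs\"acker functional via Remark \ref{remarkshw} misreads the role it plays: Hoops' result says that the \emph{infimum} of $\ce^{\rm HW}_Z$ has the wrong Scott coefficient, but Siedentop--Weikard never use that infimum. They employ the HW/Hellmann functionals only for $\ell\geq[Z^{1/12}]$, where the Weizs\"acker term is $o(Z^2)$ and the functional reproduces the TF energy; the Scott coefficient $q/4$ comes entirely from the explicit hydrogen eigenvalue sum for $\ell<[Z^{1/12}]$. Second, in your upper bound you invoke Theorem \ref{quantumtfconvnonrelatom} to bound $D(\rho_\psi-\rho^{\rm TF},\rho_\psi-\rho^{\rm TF})$, but that theorem concerns the density of the interacting ground state, not the density of the Slater determinant built from eigenfunctions of $h$; the bound you need (that the one-body semiclassical density of $\Phi^{\rm TF}$ equals $\rho^{\rm TF}$ up to $\co(Z^{5/3})$ in Coulomb norm) is a statement about $h$ alone and has to be extracted from the same refined semiclassics you use for the eigenvalue sum. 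Both are fixable, and the rest of the reduction (the exchange formula for a Slater determinant, Lieb--Oxford for the lower bound, the Pauli-principle comparison $\tr(h\gamma_\psi)\geq\sum_{j\leq N}e_j(h)$, and the identity $-\tfrac{2\sqrt2 q}{15\pi^2}\int(\Phi^{\rm TF})^{5/2}=E^{\rm TF}+D(\rho^{\rm TF},\rho^{\rm TF})$) is sound.
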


We will give a heuristic argument in favor of Theorem
\ref{swscottnonrel} later in Subsection \ref{sss:scottliebarguments}.
At this point we would also like to mention two alternative and
interesting proofs of Theorem \ref{swscottnonrel} by
Ivrii and Sigal \cite{IvriiSigal1993} and by
Solovej and Spitzer \cite{SolovejSpitzer2003}, which do not use the
spherical symmetry of the atomic case (see Theorem \ref{scottnonrelmolecules}).
Both used them to prove the Scott conjecture for clamped nuclei whose
distances are scaled by $Z^{-1/3}$.

\medskip

Next, we turn to the strong Scott conjecture. 
Scott's heuristics led Lieb \cite[p.~623]{Lieb1981} to the belief that
for large $Z$ the suitably rescaled electron density $\rho_S$ on distances
$Z^{-1}$ from the nucleus
converges to the hydrogenic density $\rho_S^H$ (see \eqref{eq:defrhohs}).
Since the magnitude of the total electronic density of a Bohr atom is
proportional to $Z^3$,
one suspects that the correct object to study is the scaled density
$Z^{-3}\rho_S(Z^{-1}x)$.
A step towards the proof of the strong Scott conjecture in the full
$N$-particle setting was taken in \cite{Siedentop1994A},
where the following upper bound was shown:
\begin{align}
  \label{eq:rakowskysiedentoppre}
  \limsup_{Z\to\infty} \frac{\rho_S(0)}{Z^3} \leq \frac{\pi}{24}q;
\end{align}
see also Theorem \ref{rakowskysiedentop} for a more general statement.
Interestingly, the numerical value $\pi/24$ is quite close to
$q^{-1}\cdot\rho_S^H(0)=\pi^{-1}\zeta(3)$ (see Theorem \ref{heilmannlieb}).
Motivated by this strong numerical evidence,
Iantchenko et al.~\cite{Iantchenkoetal1996} eventually
proved \emph{Lieb's strong Scott conjecture}.

In fact, in \cite{Iantchenkoetal1996} also an angular-momentum-resolved
version of this conjecture is proved. To formulate this result, for
$\ell\in\N_0$, we let $Y_{\ell,m}$, $m\in\{-\ell,...,\ell\}$, be
$L^2(\bs^{2})$-orthonormal spherical harmonics of degree $\ell$. We then
define one-dimensional, radial, angular-momentum-resolved densities of
$\psi\in \bigwedge_{\nu=1}^N L^2(\br^3:\bc^q)$ by
\begin{align}
  \label{eq:defrhononrell}
  \varrho_{\ell}(r)
  := Nr^2 \sum_{m=-\ell}^\ell \sum_{\sigma=1}^q \int_{\Gamma^{N-1}}\left|\int_{\bs^2}\rd\omega\,\overline{Y_{\ell,m}(\omega)}\psi(r\omega,\sigma;y_2,...,y_N)\right|^2\,\dy_2...\dy_N
\end{align}
for $\ell\in\N_0$ and $r>0$. (In the following we use the letter
$\varrho:\R_+\to\R_+$ to denote one-dimensional, radial densities with
particle number $\int_0^\infty\varrho(r)\dr$, i.e., we integrate with
respect to $\dr$ and not $r^2\dr$.  Three-dimensional densities are
denoted by the letter $\rho:\R^3\to\R_+$.) The densities
$\varrho_{\ell}$ in \eqref{eq:defrhononrell} are related to the total
density $\rho$ in \eqref{eq:defrhohnonrel} by
\begin{align}
  \int_{\bs^2}\rho(r\omega)\,\rd\omega = r^{-2}\sum_{\ell=0}^\infty\varrho_{\ell}(r)
\end{align}
for $r>0$.
If $\psi$ in \eqref{eq:defrhononrell} is an eigenfunction of $H_{N,V}$
with eigenvalue $E_S(N,Z,R)$, we write
\begin{align}
	\label{eq:defrhogroundstateintrol}
	\varrho_{\ell,S}
\end{align}
for \eqref{eq:defrhononrell} and analogously for other Hamiltonians
that we discuss later.

Recall that the (spinless) hydrogen Hamiltonian $S^H$ was defined in
\eqref{eq:defhydrogenoperator2}. Due to the spherical symmetry of $S^H$,
one can consider its parts in a fixed angular momentum channel
$\ell\in\N_0$, i.e.,
\begin{align}
  \label{eq:defhydrogenoperatorl}
  \frac12\left(-\frac{\rd^2}{\dr^2} + \frac{\ell(\ell+1)}{r^2}\right) - \frac{1}{r}
  \quad \text{in } L^2(\R_+,\dr: \C).
\end{align}
The eigenfunctions $\psi_{n,\ell,m}^S$ of $S^H$ are then of the form
\begin{align}
  \psi_{n,\ell,m}^S(x) = \frac{\psi_{n,\ell}^S(|x|)}{|x|}Y_{\ell,m}\left(\frac{x}{|x|}\right),
\end{align}
where $\psi_{n,\ell}^S\in L^2(\R_+,\dr)$ are normalized eigenfunctions
of \eqref{eq:defhydrogenoperatorl}.
Uns\"old's theorem (cf.~\cite[p.~377]{Unsold1927},
\cite[Section~18.4]{WhittakerWatson1927}) states that
$\sum_{m=-\ell}^\ell Y_{\ell,m}(\omega)\overline{Y_{\ell,m}(\sigma)}=\frac{2\ell+1}{4\pi}P_\ell(\omega\cdot\sigma)$
for $\omega,\sigma\in\bs^2$ and the $\ell$-th Legendre polynomial $P_\ell$,
which obeys $P_\ell(1)=1$. Using this, one can carry out the $m$-summation
in the definition of the hydrogenic density $\rho^H_S$ in \eqref{eq:defrhohs}
and obtains
\begin{align}
  \label{eq:defrhohs2}
  \rho_S^H(x) = \frac{1}{4\pi |x|^2}\sum_{\ell=0}^\infty \varrho_{\ell,S}^H(|x|),
  \quad x\in\R^3
\end{align}
with the one-dimensional, angular-momentum-resolved hydrogenic densities
\begin{align}
  \label{eq:defrhohsl}
  \varrho_{\ell,S}^H(r) := q(2\ell+1)\sum_{n=0}^\infty |\psi_{n,\ell}^S(r)|^2, \quad r>0.
\end{align}
Formula \eqref{eq:defrhohs2} implies, in particular, that $\rho_S^H$ is
spherically symmetric, as mentioned in the introduction. The right sides
of \eqref{eq:defrhohs2} and \eqref{eq:defrhohsl} converge;
see Theorem~\ref{heilmannlieb} for details.

The following theorem states the validity of the strong Scott conjecture.

\begin{theorem}[{\cite{Iantchenkoetal1996}}]
  \label{ilsdensitynonrel}
  Let $\ell\in\N_0$, $E_S(N,Z)$ be an eigenvalue of $H_{N,Z}$, and $\rho_S$
  and $\varrho_{\ell,S}$ be the total and angular momentum resolved
  one-particle densities associated to any of its eigenfunctions, respectively.
  Then the following statements hold.
  
  \begin{enumerate}
  \item (Convergence of angular momentum density). For all $r>0$ one has
    pointwise convergence
    \begin{align}
      \lim_{Z\to\infty}Z^{-3}\varrho_{\ell,S}(r/Z) = \varrho_{\ell,S}^H(r).
    \end{align}
    Moreover, for $v\in L^1(\R_+,\dr)$ one has
    \begin{align}
      \lim_{Z\to\infty}\int_0^\infty r^{-1}v(r) Z^{-3}\varrho_{\ell,S}(r/Z)\,\dr
      = \int_0^\infty r^{-1}v(r) \varrho_{\ell,S}^H(r)\,\dr.
    \end{align}

  \item (Convergence of total density) Let $W\in L^\infty(\bs^2)$ and $r>0$.
    Then the total density, when spherically averaged, converges pointwise
    to the hydrogenic density, i.e.,
    \begin{align}
      \lim_{Z\to\infty}\int_{\bs^2}W(\omega)Z^{-3}\rho_S(r\omega/Z)\,\rd\omega
      = \int_{\bs^2}W(\omega)\rho_S^H(r\omega)\,\rd\omega.
    \end{align}
    Moreover, for any locally bounded $v \in L^1(\R^3)$, one has
    \begin{align}
      \lim_{Z\to\infty}\int_{\R^3}|x|v(x)Z^{-3}\rho_S(x/Z)\,\dx
      = \int_{\R^3}|x|v(x)\rho_S^H(x)\,\dx.
    \end{align}
  \end{enumerate}
\end{theorem}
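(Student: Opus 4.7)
The plan is to extract the density convergence from the energy expansion of Theorem~\ref{swscottnonrel} via a Hellmann--Feynman/convexity argument applied to a carefully chosen one-body perturbation of $H_{N,Z}$ localized on the hydrogenic length scale $Z^{-1}$. I first note that part (2) is a consequence of part (1): decomposing a spherically averaged test function into spherical harmonics reduces the convergence of $Z^{-3}\rho_S(\cdot/Z)$ to a separate convergence statement for each $\varrho_{\ell,S}$, as packaged in \eqref{eq:defrhohs2} and \eqref{eq:defrhohsl}. Accordingly I focus on part~(1).

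\textbf{Perturbation.} Fix $\ell\in\N_0$ and a test function $v\in L^1(\R_+,\dr)$, taken bounded and compactly supported first and extended by density. Let $P_\ell$ denote the projection in $L^2(\R^3)$ onto the $\ell$-th angular momentum sector, and set
\begin{align*}
H_{N,Z}^{(\lambda)} := H_{N,Z} + \lambda\, Z \sum_{\nu=1}^N \frac{v(Z|x_\nu|)}{|x_\nu|}\, P_\ell^{(\nu)}, \qquad \lambda\in\R,
\end{align*}
where $P_\ell^{(\nu)}$ acts in the $\nu$-th variable. Let $E_S^{(\lambda)}(N,Z) := \inf\spec(H_{N,Z}^{(\lambda)})$. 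If $\psi$ is an eigenfunction of $H_{N,Z}$ with $\ell$-channel density $\varrho_{\ell,S}$, then the change of variables $r\mapsto r/Z$ gives
\begin{align*}
\langle \psi,(H_{N,Z}^{(\lambda)}-H_{N,Z})\psi\rangle = \lambda Z^2 I_Z, \qquad I_Z := \int_0^\infty \frac{v(r)}{r}\, Z^{-3}\varrho_{\ell,S}(r/Z)\,\dr,
\end{align*}
so the perturbation is tuned exactly to the Scott order $Z^2$ and $I_Z$ is the quantity whose limit I want to identify.

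\textbf{Scott expansion for the perturbed system.} The key technical step is to show, uniformly for $|\lambda|\le \lambda_0$ and some $\lambda_0>0$,
\begin{align*}
E_S^{(\lambda)}(N,Z) = E^{\rm TF}(N/Z,1)\, Z^{7/3} + s(\lambda)\, Z^2 + o(Z^2)\qquad(Z\to\infty),
\end{align*}
with $s(0)=q/4$ and $\partial_\lambda s(0) = \int_0^\infty r^{-1}v(r)\,\varrho_{\ell,S}^H(r)\,\dr$. To establish this I would adapt the upper- and lower-bound constructions behind Theorem~\ref{swscottnonrel} (Hughes, Siedentop--Weikard, Bach, or the alternative Solovej--Spitzer localization approach): the outer Thomas--Fermi region at scale $Z^{-1/3}$ is insensitive to the perturbation, because $v(Z|\cdot|)/|\cdot|$ is essentially supported on the scale $Z^{-1}$, so after spatial localization the $Z^2$-order contribution is governed by the spectral sum of the perturbed one-body Coulomb operator. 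After rescaling $y=Zx$ this operator becomes $Z^2\bigl[-\tfrac12\Delta_y - 1/|y| + \lambda\, v(|y|)/|y|\,P_\ell\bigr]$, whose ``quantum minus semiclassical'' Scott sum depends smoothly on $\lambda$ for small $\lambda$. Applying Hellmann--Feynman to each hydrogenic bound state $\psi_{n,\ell,m}^S$ and summing, using the absolute convergence properties of $\varrho_{\ell,S}^H$ recorded in Theorem~\ref{heilmannlieb}, identifies $\partial_\lambda s(0)$ as claimed; contributions from sectors $\ell'\neq\ell$ vanish because $P_\ell$ annihilates the corresponding eigenfunctions.

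\textbf{Extraction and main obstacle.} Given the uniform expansion, the Rayleigh--Ritz inequality applied to $\psi$ yields $E_S^{(\lambda)}(N,Z)\le E_S(N,Z)+\lambda Z^2 I_Z$, and the $-\lambda$ version gives the reverse inequality; dividing by $\lambda Z^2$, sending $Z\to\infty$ and then $\lambda\to 0^\pm$ produces $\lim_{Z\to\infty}I_Z=\partial_\lambda s(0)$, which is the integrated convergence in part~(1). Pointwise convergence $Z^{-3}\varrho_{\ell,S}(r/Z)\to\varrho_{\ell,S}^H(r)$ at each fixed $r>0$ is then obtained by testing against a sequence of mollifiers concentrated at $r$, controlling the mollification error through uniform-in-$Z$ local bounds on $Z^{-3}\varrho_{\ell,S}(r/Z)$ from the Lieb--Thirring kinetic energy inequality restricted to the $\ell$-channel, together with the continuity of $\varrho_{\ell,S}^H$ from Theorem~\ref{heilmannlieb}. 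The hard part will be the uniform Scott expansion itself: the perturbation $\lambda Z\, v(Z|\cdot|)/|\cdot|$ is at the nucleus comparable to the Coulomb potential, so one must revisit every error estimate in the proof of Theorem~\ref{swscottnonrel} and show that it remains $o(Z^2)$ uniformly in small $|\lambda|$. Once this is in place the concavity/convexity squeeze and the passage from integrated to pointwise convergence are comparatively routine.
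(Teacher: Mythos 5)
Your strategy is the same as the one used in the paper for this theorem and its relativistic analogue (Theorem~\ref{fmss}): perturb $H_{N,Z}$ by a one-body potential $-\lambda Z^2 U(Z|x|)\Pi_\ell$ living on the Scott scale, use a (perturbed) Scott expansion to control the ground state energy of the perturbed operator, and extract the density limit via convexity of $\lambda\mapsto E_S^{(\lambda)}$. You correctly identify the uniform perturbed Scott expansion as the hard analytic step. However, there are two genuine gaps.

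First, part~(2) does \emph{not} reduce directly to part~(1). Your equation~\eqref{eq:defrhohs2} expresses the total hydrogenic density as the infinite sum $\sum_{\ell\geq 0}\varrho_{\ell,S}^H$, and part~(1) gives convergence of $Z^{-3}\varrho_{\ell,S}(\cdot/Z)$ separately for each fixed $\ell$. To conclude anything about the total density you must interchange $\sum_\ell$ with $\lim_{Z\to\infty}$ (and also with $\lim_{\lambda\to 0}$), which requires a uniform-in-$\ell$ tail bound. In the paper's exposition of the relativistic case this is exactly the estimate $\tr(C_\ell - V - \lambda U)_- - \tr(C_\ell - V)_- \lesssim \lambda(\ell+1/2)^{-2-\epsilon}$, uniformly in the screening potential $V$, which feeds into dominated convergence for the $\ell$-sum. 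Without this, your ``decompose into spherical harmonics'' step is incomplete.

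Second, the identification $\partial_\lambda s(0) = \int_0^\infty r^{-1}v(r)\varrho_{\ell,S}^H(r)\,\dr$ is not a straightforward application of Hellmann--Feynman summed over bound states, because the operator $-\tfrac12\Delta - 1/|y|$ (restricted to the $\ell$-channel) has infinitely many eigenvalues accumulating at $0 = \inf\sigma_{\rm ess}$. The quantity one must differentiate is $\lambda\mapsto \tr\bigl(S^H_\ell + \lambda\, v/r\bigr)_-$, and the classical Hellmann--Feynman theorem (which handles finitely many eigenvalues or a spectral gap) does not apply directly; the passage of the derivative through the infinite trace is exactly the content of the paper's Theorems~\ref{diff} and \ref{diffgen0}, and it relies on relative form trace class conditions (verified via Hardy/Kato-type inequalities and Bessel-transform kernel bounds), not merely on the absolute convergence of $\varrho_{\ell,S}^H$ from Theorem~\ref{heilmannlieb}. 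Your appeal to termwise Hellmann--Feynman therefore glosses over precisely the point that makes the argument delicate and that the paper isolates as a standalone technical result.
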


\begin{remark}
  \label{remnoeigenvaluescott}
  Theorem \ref{ilsdensitynonrel} remains valid for the total and
  angular momentum resolved one-particle densities $\rho_S$ and
  $\varrho_{\ell,S}$ of elements in any sequence of normalized
  functions $\psi_N$, $N=1,2,...$ in the form domain of $H_{N,Z}$
  satisfying
  \begin{align}
    \lim_{N\to\infty}\frac{\langle \psi_N,H_{N,Z}\psi_N\rangle - E_S(N,Z)}{Z^{2}} = 0.
  \end{align}
  Such a sequence $(\psi_N)_{N\in\N}$ is sometimes called approximate
  ground state of $H_{N,Z}$ on the Scott scale.
\end{remark}

Later, we shall see that the hydrogenic density obeys
\begin{align}
  \label{eq:rhohlargedistances}
  \rho_S^H(x) = \left(\frac{1}{\gtf}\right)^{3/2}|x|^{-3/2} + o(|x|^{-3/2})
\end{align}
as $|x|\to\infty$. Since Theorem \ref{tfdensityproperties} asserts
that the TF density has the exact same asymptotic behavior,
but for small $|x|$ (on the scale $Z^{-\frac13}$),
Theorem \ref{ilsdensitynonrel} shows that there is a smooth transition
of $\rho_S$ between the length scales $Z^{-1}$ and $Z^{-\frac13}$.

\begin{remarks}[More results on the quantum density]
  We summarize some further results for the many-particle ground state
  density.
  \begin{enumerate}
  \item Recall that Theorem \ref{ilsdensitynonrel} does not say anything yet about
    the quantum density at the origin $r=0$. Therefore, it is of interest to find
    quantitative upper bounds on $\rho_S(0)$. We record the following theorem.
    \begin{theorem}[{\cite{Siedentop1994A,Rakowsky1995,RakowskySiedentop1995}}]
      \label{rakowskysiedentop}
      Let $(\psi_N)_{N\in\N}$ be an approximate ground state of $H_{N,Z}$ on the
      Scott scale in the sense of Remark \ref{remnoeigenvaluescott} and
      $\rho_S$ be the one-particle density of the element $\psi_N$ of that sequence.
      Fix any $c>0$ and assume that $N/Z>0$ is fixed or that $N>Z-cZ^{1/2}$.
      Then one has
      \begin{align}
        \label{eq:rakowskysiedentop}
        \limsup_{Z\to\infty} \frac{\rho_S(0)}{Z^3} \leq \frac{\pi}{24}q.
      \end{align}
    \end{theorem}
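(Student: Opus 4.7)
The strategy proceeds by reducing the density at the origin to a one-dimensional quantity (the $s$-wave sector of the reduced one-particle density matrix), estimating that quantity by a sharp 1D trace inequality, and finally controlling the result using the Scott asymptotics. First, decompose the natural orbitals of $\gamma_{\psi_N}$ into spherical harmonics, writing $u_i(x) = (v_i(|x|)/|x|)Y_{\ell_i, m_i}(\omega)$ with occupations $\nu_i \in [0,1]$. Since orbitals with $\ell_i \geq 1$ vanish at the origin while those with $\ell_i = 0$ satisfy $u_i(0) = v_i'(0)/\sqrt{4\pi}$, one gets
\begin{align*}
\rho_S(0) = \frac{q}{4\pi}\sum_{i:\,\ell_i=0}\nu_i\,|v_i'(0)|^2,
\end{align*}
reducing the problem to a trace question for the reduced $s$-wave density matrix $\gamma_0$ on $L^2(\R_+,\dr)$.

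Second, I would establish a sharp 1D trace inequality of the form
\begin{align*}
\sum_i \nu_i\,|v_i'(0)|^2 \leq c_1\cdot Z\cdot \mathrm{tr}(\gamma_0\,\hat h_Z)_- + o(Z^3),
\end{align*}
where $\hat h_Z = -\tfrac12\partial_r^2 - Z/r$ acts on $L^2(\R_+,\dr)$ with Dirichlet boundary at $r=0$, and $c_1$ is the sharp constant from an explicit optimization respecting the fermionic constraint $\nu_i \leq 1$. The Pauli principle is crucial here: without it, concentrating all occupation in a single high-energy orbital with arbitrarily large boundary derivative would destroy the bound. The inequality can be approached via the Birman--Schwinger kernel of the half-line Dirichlet Laplacian, or via a direct variational computation using hydrogen-like trial orbitals and bookkeeping of the $Z$-scaling induced by $r \mapsto r/Z$ on the Scott length scale.

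Finally, the right-hand side is controlled by Scott's asymptotics. By Theorem~\ref{swscottnonrel} and the assumption that $\psi_N$ is an approximate Scott ground state of $H_{N,Z}$, the total energy expansion holds to $o(Z^2)$; combined with the non-negativity of the electron-electron repulsion and of the $\ell\geq 1$ kinetic contributions, one extracts
\begin{align*}
\mathrm{tr}(\gamma_0\,\hat h_Z)_- \leq \frac{q}{2}Z^2\sum_{n\geq 1}\frac{1}{n^2} + o(Z^2) = \frac{q\pi^2}{12}Z^2 + o(Z^2),
\end{align*}
which is precisely the Bohr-atom $s$-wave sum. Plugging this into the trace inequality and tracking the $Z^3$ scaling yields $\limsup_{Z\to\infty}\rho_S(0)/Z^3 \leq \pi q/24$ as claimed. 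The hypothesis $N/Z>0$ fixed, or $N > Z - cZ^{1/2}$, ensures that sufficiently many $s$-wave orbitals are effectively occupied for the asymptotic bound on $\mathrm{tr}(\gamma_0\,\hat h_Z)_-$ to hold to the correct order.

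The main obstacle is identifying the sharp constant in the 1D trace inequality of the second step. This is not a classical Sobolev trace inequality: the fermionic constraint $\nu_i\in[0,1]$ makes the optimization non-standard, reminiscent of sharp Lieb--Thirring inequalities, and extracting the explicit constant $\pi/24$ demands careful analysis. It is noteworthy that this bound is strictly smaller than the extrapolation $q^{-1}\rho_S^H(0) = \zeta(3)/\pi$ suggested by the strong Scott conjecture (Theorem~\ref{ilsdensitynonrel}), which reflects the \emph{non-uniform} convergence of the rescaled density near $r=0$: the combined effects of the Pauli principle and electron-electron repulsion suppress the density at the exact origin beyond what one-body hydrogen considerations would predict.
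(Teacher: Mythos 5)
Your route differs fundamentally from the paper's. As noted right after the theorem, the paper's proof combines the pointwise Hoffmann-Ostenhof--Hoffmann-Ostenhof--Thirring estimate $\rho_S(0)\leq\frac{Z}{2\pi}\int_{\R^3}\rho_S(x)\,|x|^{-2}\,\dx$ with a linear-response argument (compare $E_S(N,Z)$ with the ground state energy of a suitably perturbed Hamiltonian, using Scott asymptotics for both) to bound the weighted density integral. You instead try to bound the $s$-wave boundary derivatives directly by a trace against $\hat h_Z$; the only common ingredient is the observation that $\ell=0$ alone contributes at $x=0$.

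The central step of your proposal fails. The claimed bound $\sum_i\nu_i|v_i'(0)|^2\leq c_1\,Z\,\tr(\gamma_0\hat h_Z)_-+o(Z^3)$ is not true for general fermionic $\gamma_0$: a single normalized $s$-orbital $v$ with occupation $1$ and energy $\langle v,\hat h_Z v\rangle$ slightly below zero makes the right side $o(Z^3)$ while $|v'(0)|^2$ stays of order $Z^3$, so no finite $c_1$ works at all. Even in the most favorable case, the ground state $v_1(r)=2Z^{3/2}re^{-Zr}$ of $\hat h_Z$ has $|v_1'(0)|^2=4Z^3$ and $\langle v_1,\hat h_Z v_1\rangle=-Z^2/2$, forcing $c_1\geq 8$; but tracking your constants, $\rho_S(0)=\frac{q}{4\pi}\sum\nu_i|v_i'(0)|^2\leq\frac{q}{4\pi}c_1 Z\cdot\frac{q\pi^2}{12}Z^2$ requires $c_1=2/q\leq 2$ to land on $\frac{\pi}{24}qZ^3$, so even the single best orbital violates your claim by a factor of at least four. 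Restricting to the specific $\gamma_0$ produced by an approximate Scott ground state is the only escape, but then you are no longer proving a trace inequality accessible by Birman--Schwinger considerations or hydrogen trial functions -- you are proving a nontrivial asymptotic claim about ground states, with no argument offered. The third step has the same problem: extracting $\tr(\gamma_0\hat h_Z)_-\leq\frac{q\pi^2}{12}Z^2+o(Z^2)$ from the total Scott formula is not a mere drop of non-negative terms; you must decouple the dominant $Z^{7/3}$ Thomas--Fermi bulk and all the $\ell\geq 1$ channels from the $s$-wave energy at accuracy $o(Z^2)$, which is essentially what the paper's linear-response step accomplishes for the single observable $\int\rho_S|x|^{-2}$. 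Finally, your closing interpretation of the bound as ``strictly smaller than $\zeta(3)/\pi$, reflecting non-uniform convergence'' should have been a warning sign rather than a feature: the non-interacting Bohr atom already gives $\rho(0)/Z^3\to q\zeta(3)/\pi$ by filling hydrogen $s$-shells, so an \emph{upper} bound for the interacting density sitting below that value would be inconsistent with the expected origin limit, not a diagnostic of non-uniformity.
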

    For $N=Z$ this was shown in \cite{Siedentop1994A} before
    the strong Scott conjecture \cite{Iantchenkoetal1996} was proved.
    The result for ions is due to Rakowsky \cite{Rakowsky1995} and
    \cite{RakowskySiedentop1995}.

    At the core of the proof of Theorem \ref{rakowskysiedentop} lies
    a linear response argument
    and the inequality
    \begin{align}
      \rho_S(0) \leq \frac{Z}{2\pi}\int_{\R^3}\frac{\rho_S(x)}{|x|^2}\,\dx
    \end{align}
    due to Hoffmann--Ostenhof et al.~\cite{Hoffmann-Ostenhofetal1978}.
    Theorem \ref{rakowskysiedentop} supports the belief that the strong
    Scott conjecture holds for ions as well.
    
  \item The work \cite{IantchenkoSiedentop2001}
    proved the convergence of the one-particle ground state density
    matrix on Scott's scale. Consider an approximate ground state
    $(\psi_N)_{N\in\N}$ of $H_{N,Z}$ on the Scott scale.
    For $x,y\in\R^3$ let
    \begin{align}
      \begin{split}
        \gamma_S(x,y)
        & := \sum_{\sigma_1,...,\sigma_N=1}^q\sum_{\nu=1}^N \int_{\R^{3(N-1)}} \dx_1...\dx_{\nu-1}\,\dx_{\nu+1}...\dx_N\\
        & \quad \psi_N(x_1,...,x_{\nu-1},x,x_{\nu+1},...,x_{N};\sigma_1,...,\sigma_N)\\
        & \quad \times \overline{\psi_N(x_1,...,x_{\nu-1},y,x_{\nu+1},...,x_{N};\sigma_1,...,\sigma_N)}.
      \end{split}
    \end{align}
    be the kernel of the associated approximate one-particle ground state density
    matrix $\gamma_S$.
    Observe that $0\leq\gamma_S\leq q$ is a trace class operator.
    Recall the $L^2(\R^3)$-normalized eigenfunctions $\psi_{n,\ell,m}^S$
    of the hydrogen operator \eqref{eq:defhydrogenoperator2}. Define
    the orthogonal projection onto the negative (purely discrete)
    spectral subspace of this operator by
    \begin{align}
      \gamma_S^H := q \cdot \sum_{n\geq0}\sum_{\ell\geq0}\sum_{m=-\ell}^\ell|\psi_{n,\ell,m}^S\rangle\langle\psi_{n,\ell,m}^S|,
    \end{align}
    where the series is pointwise convergent (see Theorem
    \ref{heilmannlieb} by Heilmann and Lieb \cite{HeilmannLieb1995}).
    Then the following result holds.

    \begin{theorem}[{\cite[Theorem 1]{IantchenkoSiedentop2001}}]
      \label{convdensitymatrixscott}
      Let $K$ be a trace-class operator and $\gamma_S$ be an approximate
      one-particle ground state density matrix of the Schr\"odinger
      operator $H_{N,Z}$ in \eqref{eq:manybodySchrodinger}.
      Then one has the weak convergence
      \begin{align}
        \lim_{Z\to\infty}\tr\left(K \cdot Z^{-3}\gamma_S\left(\frac{\cdot}{Z},\frac{\cdot}{Z}\right)\right)
        = \tr(K\gamma_S^H).
      \end{align}
    \end{theorem}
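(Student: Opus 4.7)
The plan is to combine a density argument, which reduces the claim to rank-one test operators, with a Hellmann--Feynman argument powered by a one-parameter extension of the Scott asymptotics. Let $V_Z$ denote the dilation $(V_Zf)(x):=Z^{3/2}f(Zx)$ on $L^2(\R^3)$ and set $\tilde\gamma_Z:=V_Z^*\gamma_S V_Z$, so that the rescaled operator of the statement has integral kernel $\tilde\gamma_Z(x,y)=Z^{-3}\gamma_S(x/Z,y/Z)$ and $\tr(K\cdot Z^{-3}\gamma_S(\cdot/Z,\cdot/Z))=\tr(K\tilde\gamma_Z)$. Since $0\leq\tilde\gamma_Z\leq q$ uniformly in $Z$, the sequence is bounded in operator norm; finite-rank operators are norm-dense in the trace class, and by the polarization identity every rank-one $K=|f\rangle\langle g|$ can be tested through quadratic forms $\langle h,\tilde\gamma_Z h\rangle$, so it suffices to prove
\[
\lim_{Z\to\infty}\langle f,\tilde\gamma_Z f\rangle=\langle f,\gamma_S^H f\rangle \qquad \text{for all } f\in C_c^\infty(\R^3).
\]

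To establish this limit, I would introduce the perturbed $N$-body Hamiltonian
\[
H_{N,Z}(t):=H_{N,Z}+t\sum_{\nu=1}^N\bigl(|f_Z\rangle\langle f_Z|\bigr)_\nu,\qquad f_Z:=V_Z f,\ t\in\R,
\]
and its lowest spectral point $E(t;N,Z):=\inf\spec H_{N,Z}(t)$, which is concave in $t$ as an infimum of affine functions. Testing with the approximate ground state $\psi_N$, using $\langle\psi_N,\sum_\nu(|f_Z\rangle\langle f_Z|)_\nu\psi_N\rangle=\langle f,\tilde\gamma_Z f\rangle$ together with the defining property $\langle\psi_N,H_{N,Z}\psi_N\rangle=E_S(N,Z)+o(Z^2)$ of an approximate ground state on the Scott scale, yields the two-sided bound
\[
\frac{E(t;N,Z)-E_S(N,Z)-o(Z^2)}{t}\leq\langle f,\tilde\gamma_Z f\rangle\leq\frac{E(-t;N,Z)-E_S(N,Z)-o(Z^2)}{-t}
\]
valid for all $t>0$. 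The key ingredient is then the perturbed Scott expansion
\[
E(t;N,Z)=E_S(N,Z)+t\,\langle f,\gamma_S^H f\rangle+o(Z^2)+O(t^2),\qquad Z\to\infty,
\]
with the $o(Z^2)$ and $O(t^2)$ remainders controlled independently. Inserting this expansion into the two-sided bound and letting $Z\to\infty$ first, then $t\to 0^{\pm}$, pinches $\langle f,\tilde\gamma_Z f\rangle$ between $\langle f,\gamma_S^H f\rangle$ and itself, giving the desired convergence.

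The main obstacle is the perturbed Scott expansion, the natural extension of Theorem~\ref{swscottnonrel} to an atom carrying an additional bounded rank-one one-body potential localized on the Scott scale. Heuristically this is plausible because $f_Z$ is concentrated at distances $|x|\lesssim Z^{-1}$ and therefore only affects the innermost, hydrogenic electrons, leaving the Thomas--Fermi leading order intact; first-order perturbation theory applied to the hydrogenic eigenvalue sum produces exactly $\tr(|f\rangle\langle f|\gamma_S^H)=\langle f,\gamma_S^H f\rangle$ after passing to the Scott scale, in view of the identification $\gamma_S^H=q\,\mathbf{1}_{(-\infty,0)}(S^H)$. Rigorously, one has to adapt one of the existing proofs of the Scott correction---e.g.\ the inner/outer electron decomposition of Siedentop--Weikard, the microlocal method of Ivrii--Sigal, or the coherent-state scheme of Solovej--Spitzer---so as to carry along the perturbation $t|f_Z\rangle\langle f_Z|$ uniformly in $Z$ and $t$, and one must verify that the electron--electron repulsion does not destroy the linear response of the inner shells. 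This last point is the same stability-of-linear-response mechanism that drives the proof of the strong Scott conjecture for densities in \cite{Iantchenkoetal1996}, and once that perturbed expansion is in hand, Theorem~\ref{convdensitymatrixscott} follows by the scheme above, in the spirit of the argument in \cite{IantchenkoSiedentop2001}.
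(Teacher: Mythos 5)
Your overall route is the right one and matches the scheme of the cited Iantchenko--Siedentop argument (cf.\ the exposition for the Chandrasekhar density in Subsection~\ref{sss:elementsfmss}): reduce to a single test observable, run a variational linear-response argument with a perturbed $N$-body Hamiltonian, invoke a perturbed Scott asymptotic to pass to a one-particle trace, and differentiate via a generalized Hellmann--Feynman theorem as in Theorems~\ref{diff} and~\ref{diffgen0}. The preliminary reduction to rank-one $K=|f\rangle\langle f|$ with $f\in C_c^\infty$ via trace-norm density, the uniform bound $0\le\tilde\gamma_Z\le q$, and polarization is correct. The identification $\gamma_S^H = q\,\one_{(-\infty,0)}(S^H)$ is also correct, and the Hellmann--Feynman step is unproblematic for the non-relativistic $S^H$ since the rank-one perturbation $|f\rangle\langle f|$ is trivially relatively form trace class and $S^H$ has no zero eigenvalue.

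However, there is a genuine scaling error that breaks the argument as written. Your perturbation $t\sum_\nu(|f_Z\rangle\langle f_Z|)_\nu$ has operator norm $|t|\,\|f\|_2^2$, \emph{independent of $Z$}; consequently $E(t;N,Z)-E_S(N,Z)=O(t)$. But the Scott asymptotics \eqref{eq:swscottnonrel} only controls $E_S(N,Z)$ (and any plausible perturbed analogue) up to an error $O(Z^{47/24})=o(Z^2)$, which \emph{grows} with $Z$. Thus the claimed ``perturbed Scott expansion'' $E(t;N,Z)=E_S(N,Z)+t\langle f,\gamma_S^H f\rangle+o(Z^2)+O(t^2)$ is vacuous -- the $o(Z^2)$ term can absorb the $O(t)$ quantity you are trying to isolate -- and after inserting it into the two-sided Hellmann--Feynman bound and dividing by $t$, the resulting $o(Z^2)/t$ terms do \emph{not} vanish as $Z\to\infty$ for fixed $t$. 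The order of limits ``$Z\to\infty$ then $t\to 0$'' therefore produces no constraint on $\langle f,\tilde\gamma_Z f\rangle$.

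The fix -- and the detail that makes the actual argument work, visible in the factor $c^2$ in the perturbation $\lambda\sum_\nu c^2\,U(c|x_\nu|)\Pi_{\ell,\nu}$ of Subsection~\ref{sss:elementsfmss} -- is to put the perturbation at the Scott energy scale: replace your $H_{N,Z}(t)$ by
$H_{N,Z}(t):=H_{N,Z}+tZ^2\sum_{\nu=1}^N(|f_Z\rangle\langle f_Z|)_\nu$.
Then $E(t;N,Z)-E_S(N,Z)=O(tZ^2)$ is commensurate with the $Z^2$-term of the Scott expansion, the approximate-ground-state defect $\langle\psi_N,H_{N,Z}\psi_N\rangle-E_S(N,Z)=o(Z^2)$ likewise becomes $o(1)/t$ after dividing by $tZ^2$, and the two-sided bound becomes
$\frac{E(t)-E_S-o(Z^2)}{tZ^2}\le\langle f,\tilde\gamma_Z f\rangle\le\frac{E(-t)-E_S-o(Z^2)}{-tZ^2}$.
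Once one establishes the perturbed Scott asymptotics in the form $E(t;N,Z)=E^{\rm TF}(1)Z^{7/3}+Z^2\,g(t)+o(Z^2)$ with $g(0)=q/4$ and $g'(0)=\langle f,\gamma_S^H f\rangle$ (the latter from Theorem~\ref{diff} applied to $S^H+t|f\rangle\langle f|$), the order of limits $Z\to\infty$, then $t\to 0^\pm$, does close the pinching argument. You are right that this perturbed Scott correction is the main technical input and that it must be obtained by running the lower-bound half of one of the existing Scott proofs with the rank-one perturbation carried along uniformly in $t$; but without the $Z^2$ factor the whole linear-response machinery produces nothing.
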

    
  \item As already indicated, there is a smooth transition of
    $\rho_S$ between the scales $Z^{-1}$ and $Z^{-1/3}$
    (in view of Theorems \ref{tfdensityproperties} and
    \ref{ilsdensitynonrel}, and Formula \eqref{eq:rhohlargedistances}).
    This may have led to Lieb's belief \cite{Lieb1981} that
    $Z^{-3/2-3/(2\delta)}\rho_S(x/Z^\delta)$ should also behave like
    $(1/\gtf)^{3/2}|x|^{-3/2}$ when $\delta\in(1/3,1)$.

    In \cite{Iantchenko1997} Iantchenko succeeded in proving this conjecture
    using some of Ivrii's and Sigal's methods \cite{IvriiSigal1993}
    adapted to the case of (neutral) atoms.
    
    \begin{theorem}[{\cite[Theorem 2]{Iantchenko1997}}]
      \label{iantchenkointermediate}
      Let $(\psi_N)_{N\in\N}$ be an approximate ground state of $H_{N,Z}$
      (in \eqref{eq:manybodySchrodinger}) on the Scott scale in the
      sense of Remark \ref{remnoeigenvaluescott} and let $\rho_S$ be the
      one-particle density associated to $\psi_N$.
      Let $U\in C^\infty(\R^3\setminus\{0\})\cap L^\infty(\R^3)$, and assume
      \begin{align}
        |\partial^\nu U(x)| & \lesssim_\nu |x|^{-|\nu|-1}\langle x\rangle^{-3}, \quad |x|\geq2, |\nu|\geq0,\\
        |\partial^\nu U(x)| & \lesssim_\nu |x|^{-|\nu|}, \quad |x|\leq2, |\nu|\geq0.
      \end{align}
      (Here $\nu=(\nu_1,\nu_2,\nu_3)\in(\Z\setminus(-\infty,0))^3$ and
      $|\nu|=\nu_1+\nu_2+\nu_3$. We also use the notation
      $\langle x\rangle:=\sqrt{1+|x|^2}$ for $x\in\R^d$.)
      Then for any $\delta\in(1/3,1)$, one has
      \begin{align}
        \lim_{Z\to\infty}\int_{\R^3}U(x) Z^{-3/2-3/(2\delta)}\rho_S(x/Z^\delta)\,\dx
        = \int_{\R^3}U(x) \left(\frac{1}{\gtf}\right)^{3/2}|x|^{-3/2}\,\dx.
      \end{align}
    \end{theorem}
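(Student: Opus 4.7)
The plan is to interpolate between Theorem \ref{quantumtfconvnonrelatom} (Thomas--Fermi scale $Z^{-1/3}$) and Theorem \ref{ilsdensitynonrel} (Scott scale $Z^{-1}$) by showing that, on any intermediate scale $Z^{-\delta}$ with $\delta\in(1/3,1)$, the one-particle ground state density $\rho_S$ is well approximated by the density $\rho^{\rm sc}$ of non-interacting fermions in the Thomas--Fermi potential $\Phi^{\rm TF}$. The key observation is that both the short-distance asymptotics of the Thomas--Fermi density from Theorem \ref{tfdensityproperties}(1) and the large-distance asymptotics of the hydrogenic density in \eqref{eq:rhohlargedistances} produce the same universal profile $\gtf^{-3/2}|x|^{-3/2}$, so the intermediate scale is precisely the regime where the semiclassical (Sommerfeld-type) behavior governs the atom.

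First, after the change of variable $y = Z^\delta x$, the claim reduces to showing that the rescaled density $\tilde\rho_Z(y) := Z^{-3(1+\delta)/2}\rho_S(y/Z^{\delta})$ tested against $U(Z^{-\delta}y)$ converges to the Sommerfeld profile $\gtf^{-3/2}|y|^{-3/2}$. Second, I would replace $\rho_S$ by the density $\rho^{\rm sc}$ of the spectral projection $\one_{(-\infty,0)}(-\tfrac12\Delta - \Phi^{\rm TF})$, invoking the microlocal machinery of Ivrii--Sigal \cite{IvriiSigal1993} (which Iantchenko adapts to spherically symmetric atoms) to establish an error estimate of the form
\begin{equation*}
  \left| \int U(x)\, (\rho_S - \rho^{\rm sc})(x/Z^\delta)\,\dx \right| = o(Z^{3(1+\delta)/2})
\end{equation*}
uniformly for $\delta$ bounded away from $1$. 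Third, I would apply the Weyl/coherent-state expansion
\begin{equation*}
  \rho^{\rm sc}(x) = \gtf^{-3/2}\,\Phi^{\rm TF}(x)^{3/2} + \text{lower-order terms},
\end{equation*}
valid away from the hydrogenic scale. Fourth, using the Thomas--Fermi differential equation \eqref{eq:tfeq4} and Sommerfeld-type pointwise estimates for $\Phi^{\rm TF}$, one shows that on the intermediate scale $|x|\sim Z^{-\delta}$ the screening is subdominant, namely $\Phi^{\rm TF}(x) = Z/|x| + O(Z^{4/3})$; this is negligible compared with $Z/|x|\sim Z^{1+\delta}$ exactly because $\delta > 1/3$. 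Plugging this into the semiclassical formula and rescaling produces the claimed limit $\gtf^{-3/2}|x|^{-3/2}$.

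The main obstacle is controlling the semiclassical remainder uniformly on the intermediate scale, where one must mesh two regimes with rather different small parameters. Concretely, one needs a microlocal partition of unity that simultaneously excises the hydrogenic core $|x|\lesssim Z^{-1}$ (where the semiclassical formula fails and the Scott correction is needed) and the Thomas--Fermi bulk $|x|\sim Z^{-1/3}$ (where the remainder estimates of Lieb--Simon suffice but are of a different order), and shows that the contribution from the annulus $Z^{-1}\ll|x|\ll Z^{-1/3}$ dominates. The hypotheses $\delta<1$ and $\delta>1/3$ correspond precisely to the requirement that the test function $U(x)$, after rescaling, places negligible weight on the two extreme regions. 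Secondary technical points are the use of the derivative bounds on $U$ (to control commutator errors in the microlocal analysis), and the use of the decay $|U(x)|\lesssim|x|^{-4}\langle x\rangle^{-3}$ together with the global kinetic Lieb--Thirring bound to truncate the integral near infinity, so that dominated convergence can be applied once pointwise convergence is established.
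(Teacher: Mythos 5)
The paper offers no proof of this statement: it is quoted from Iantchenko \cite{Iantchenko1997} with the remark that the argument adapts Ivrii--Sigal's multiscale microlocal analysis \cite{IvriiSigal1993}, and with the general caveat (immediately after the theorem) that all known density-convergence results in this area are obtained \emph{through energetic asymptotics}. So there is nothing in the survey to compare your proposal against line by line; I can only assess the plan on its own merits.

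The conceptual picture you draw is the right one. Replacing $\rho_S$ by the density $\rho^{\rm sc}$ of $\one_{(-\infty,0)}(-\tfrac12\Delta-\Phi^{\rm TF})$, noting that the leading semiclassical term $\gtf^{-3/2}(\Phi^{\rm TF})^{3/2}$ \emph{is} $\rho^{\rm TF}$ by the Thomas--Fermi equation, and then using $\Phi^{\rm TF}(x)=Z/|x|-\mathcal O(Z^{4/3})$ to see that screening is subdominant at $|x|\sim Z^{-\delta}$ precisely when $\delta>1/3$ --- this is indeed why the universal $\gtf^{-3/2}|x|^{-3/2}$ emerges, and why both endpoints are excluded. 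You have also tracked the normalization correctly: the magnitude of $\rho_S$ at distance $Z^{-\delta}$ is $\sim Z^{3(1+\delta)/2}$ (this gives $2$ at $\delta=1/3$ and $3$ at $\delta=1$, as it must), so the factor should be $Z^{-3(1+\delta)/2}$; the exponent $3/2+3/(2\delta)$ in the review's statement agrees with this only at $\delta=1$ and looks like a transcription slip.

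The gap is that the step carrying essentially all of the difficulty is reduced to a citation. Ivrii--Sigal's machinery yields two-term trace asymptotics for eigenvalue sums; it does not directly give a weighted-$L^1$ estimate on $\rho_S-\rho^{\rm sc}$. To pass from energy to density one needs the linear-response argument that the survey repeatedly stresses (cf.\ Theorem~\ref{quantumtfconvnonrelatom}, Theorem~\ref{ilsdensitynonrel}, and the displayed scheme \eqref{eq:scottdensityaux1} in the Chandrasekhar discussion): perturb $H_{N,Z}$ by $\lambda$ times a rescaled test potential, use a correlation inequality and coherent-state/microlocal estimates to reduce the interacting problem to a one-body trace in $\Phi^{\rm TF}$ plus the perturbation, prove the trace asymptotics with an error $o(Z^2)$ that is uniform and, crucially, differentiable at $\lambda=0$, and only then recover $\int U\,\rho_S$ as a $\lambda$-derivative. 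The derivative bounds on $U$ and the hypothesis $\delta<1$ belong to this perturbed trace computation --- commutator control in the microlocal calculus and the requirement that the perturbation not contaminate the Scott term --- not to a direct comparison of $\rho_S$ with $\rho^{\rm sc}$. As written, your proposal asserts the output of this machinery (``establish an error estimate of the form $\dots=o(Z^{3(1+\delta)/2})$'') rather than constructing it; that assertion is precisely the theorem being proved.
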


    It may very well be true that Theorem \ref{iantchenkointermediate} also
    holds under less severe assumptions on the test function $U$.
    Theorem \ref{iantchenkointermediate} can also be generalized to
    the molecular case, see \cite[Theorem~7]{Iantchenko1997} for details.
    Further extensions were recently outlined by Ivrii
    \cite{Ivrii2019,Ivrii2019S,Ivrii2019T}.
  \end{enumerate}
\end{remarks}

Some elements of the proof of Theorem \ref{ilsdensitynonrel} will
be given in the relativistic case, which is in spirit the same but
technically more elaborate (Subsection \ref{sss:elementsfmss}).
Here we just mention that all known proofs for the convergence of the
density (as of this writing) rely on energetic results.
It would be interesting to see if this scheme could be reversed. 

Before we come to the arguments in favor of Theorem \ref{swscottnonrel},
we collect some properties of the hydrogenic density $\rho_S^H$.

\subsubsection{Hydrogenic density}

Recall the normalized eigenfunctions $\psi_{n,\ell,m}^S$ of the
hydrogen operator \eqref{eq:defhydrogenoperator2} acting in
$L^2(\R^3:\C)$.  Recall also the definition of the hydrogenic
densities $\rho_S^H$ and, for $\ell\in\N_0$, the angular momentum
resolved version $\varrho_{\ell,S}^H$ in
\eqref{eq:defrhohs} and \eqref{eq:defrhohs2}--\eqref{eq:defrhohsl},
respectively. The specific knowledge of the $\psi_{n,\ell,m}^S$
allowed Heilmann and Lieb \cite{HeilmannLieb1995} to analyze
$\rho_S^H$ and $\varrho_{\ell,S}^H$ in great detail.
For $n\in\N_0$, $\ell\in\N_0$, and $m\in\{-\ell,...,\ell\}$ they are
\begin{align}
  \psi_{n,\ell,m}^S(x)
  = R_{n,\ell}(r)Y_{\ell,m}\left(\frac{x}{|x|}\right),
\end{align}
where the $R_{n,\ell}$ are explicitly known \cite[p.~3630]{HeilmannLieb1995}.
Thus, by Uns\"old's theorem,
\begin{align}
  \label{eq:defrhohs3}
  \rho_S^H(x) = \frac{q}{4\pi}\sum_{n\geq0}\sum_{\ell\geq0}(2\ell+1)R_{n,\ell}(r)^2.
\end{align}
Making heavy use of properties of special functions, they carried
out the $\ell$- and $n$-summations (in that order), and proved the
following theorem.

\begin{theorem}[{\cite[Theorem~1, p.~3631]{HeilmannLieb1995}}]
  \label{heilmannlieb}~
  \begin{enumerate}
  \item The series on the right side of \eqref{eq:defrhohs3}
    converges pointwise.

  \item One has the asymptotic expansion, as $r\to\infty$,
    \begin{align}
      \label{eq:heilmannlieb1}
      \begin{split}
        \rho_S^H(r) = \frac{q}{\sqrt2 \pi^2}r^{-3/2} & \left[\sum_{j\geq0}a_j(8r)^{-j} - \sin(\sqrt{32r})\sum_{j\geq1}b_j(8r)^{-j} \right. \\
        & \qquad\qquad\qquad \left. + \cos(\sqrt{32r})\sum_{j\geq1}c_j(8r)^{-j-1/2}\right]
      \end{split}
    \end{align}
	The first few coefficients are
    \begin{align*}
      a_0 = 2/3 \quad a_1 = -1/12 \quad a_2 = 79/960, \\
      b_1 = 3/2 \quad b_2 = -140\,589/11\,200, \\
      c_1 = 141/40 \quad c_2 = -2\,028\,627/44\,800.
    \end{align*}

  \item Let $\lambda\in(0,1]$ and $\rho^{\rm TF}(\lambda,1,x)$ be the
    TF minimizer of $\ce_{z=1}^{\rm TF}$ on $\ci_\lambda$.
    Then for any $\lambda\in(0,1]$,
    \begin{align}
      \label{eq:heilmannlieb2}
      \lim_{|x|\to\infty} |x|^{3/2}\rho_S^H(|x|)
      = \lim_{|x|\to0} |x|^{3/2}\rho_{z=1}^{\rm TF}(\lambda,x).
    \end{align}
    
  \item $\rho_S^H(r)$ is monotone decreasing and achieves its
    maximum at $r=0$ with
    \begin{align}
      \label{eq:heilmannlieb3}
      \rho_S^H(r) \leq \rho_S^H(0)
      = \frac{q}{\pi}\sum_{n\geq0}(n+1)^{-3} \approx q\cdot 0.383.
    \end{align}
  \end{enumerate}
\end{theorem}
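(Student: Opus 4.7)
The plan is to exploit the fact that for the Coulomb Hamiltonian \eqref{eq:defhydrogenoperator2} the radial eigenfunctions $R_{n,\ell}$ are explicitly given (up to normalization) by $e^{-r/(n+\ell+1)}\,r^{\ell}\,L_n^{2\ell+1}(2r/(n+\ell+1))$ in terms of associated Laguerre polynomials. Combined with Uns\"old's theorem, this reduces the object of study to the explicit double series \eqref{eq:defrhohs3}. The single idea on which everything else rests is to carry out one of the two summations in closed form, either by invoking Fock's $SO(4)$-symmetry of the Coulomb problem (which writes each degenerate eigenspace projector as an integral over $\bs^3$) or, equivalently, by applying the Hille--Hardy bilinear generating function for Laguerre polynomials. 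Either route delivers an integral representation of $\rho_S^H(r)$ involving, up to elementary factors, a Bessel function of argument $\sqrt{8r}$.

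From such a representation the four claims follow roughly in this order. First, (1) is immediate from the dominated convergence afforded by the integral representation. Second, for (4) only $\ell=0$ contributes to $\rho_S^H(0)$ since $R_{n,\ell}(r)=O(r^\ell)$ at the origin; the identity $L_n^{1}(0)=n+1$ together with the standard normalization gives $R_{n,0}(0)^2 = 4/(n+1)^3$ and hence $\rho_S^H(0) = (q/\pi)\sum_{n\geq 0}(n+1)^{-3}$, while monotonicity is extracted either from a manifestly monotone integrand in the closed-form expression or from a first-order differential inequality satisfied by the resulting Bessel-type function. Third, for (3) one reads off the leading term $(\sqrt{2}q/(3\pi^2))\,r^{-3/2}$ of the Bessel asymptotics and matches it with Theorem~\ref{tfdensityproperties}(1), which gives $\rho_{z=1}^{\rm TF}(\lambda,x)\sim \gtf^{-3/2}|x|^{-3/2}$ near the origin; direct substitution of $\gtf = (6\pi^2/q)^{2/3}/2$ yields $\gtf^{-3/2} = \sqrt{2}q/(3\pi^2)$, and the $\lambda$-independence of the leading singular behaviour then gives \eqref{eq:heilmannlieb2}.

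The main obstacle is (2): extracting the full asymptotic expansion \eqref{eq:heilmannlieb1} with the explicit coefficients $a_j,b_j,c_j$ from the closed-form integral representation. The strategy is to apply the classical large-argument expansion of the Bessel function, separating its monotone algebraic part from its oscillatory part; the frequency $\sqrt{32 r}$ traces back to twice the zero-energy Coulomb semiclassical action $\int_0^r \sqrt{2/s}\,\ds = 2\sqrt{2r}$, so the oscillatory factors $\sin(\sqrt{32r})$ and $\cos(\sqrt{32r})$ appear naturally. The coefficients are then in principle read off mechanically, but the bookkeeping of all constants is the genuinely delicate calculation and is where the bulk of the special-function machinery in \cite{HeilmannLieb1995} lives; any proposal that does not reproduce $a_0=2/3$, $b_1 = 3/2$, $c_1=141/40$ correctly from a single manageable integral representation should be rejected as a sign that a miscalculation has crept in.
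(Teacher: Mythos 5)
The paper does not give a proof of this theorem; it is cited from Heilmann and Lieb \cite{HeilmannLieb1995}, and the text preceding the statement only records that they use the explicit $R_{n,\ell}$, Uns\"old's theorem, and ``carry out the $\ell$- and $n$-summations (in that order).'' Measured against that, your overall plan (explicit Laguerre form, Uns\"old, do one summation in closed form, then use Bessel-type asymptotics) points in the right direction, and your boundary checks are solid: the computation $R_{n,0}(0)^2 = 4/(n+1)^3$ giving \eqref{eq:heilmannlieb3}, and the matching $\gtf^{-3/2} = \sqrt{2}q/(3\pi^2) = (q/\sqrt{2}\pi^2)\,a_0$ giving \eqref{eq:heilmannlieb2}, are both correct.

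However, one of your two proposed summation tools does not work as stated. The Hille--Hardy bilinear generating function sums $\sum_n c_n\, L_n^\alpha(x) L_n^\alpha(y)\,t^n$ with a \emph{fixed} argument $x$. For hydrogen, $R_{n,\ell}(r)\propto r^\ell e^{-r/(n+\ell+1)} L_n^{2\ell+1}\bigl(2r/(n+\ell+1)\bigr)$, so the Laguerre argument varies with $n$; there is no fixed generating variable, and the Hille--Hardy identity cannot be applied to $\sum_n R_{n,\ell}(r)^2$ directly. The two ``equivalent'' routes you offer are therefore not equivalent: only the Fock-type route (summing the degenerate $(\ell,m)$-multiplet at fixed principal quantum number, i.e.\ the $\ell$-sum first) is viable, and that is precisely the order the paper reports Heilmann--Lieb to use. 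You should commit to that route and drop the Hille--Hardy alternative.

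Two further soft spots. First, it is not clear that the $\ell$-sum yields a \emph{single} clean Bessel-type integral from which the rest falls out by dominated convergence; Heilmann--Lieb's analysis is a term-by-term Euler--Maclaurin treatment of the $n$-sum of the closed-form shell densities, with Bessel asymptotics entering through the large-$n$ behaviour of the confluent hypergeometric functions. Second, the monotonicity in item (4) is not a triviality of a ``manifestly monotone integrand''; in \cite{HeilmannLieb1995} it is a separate argument and is treated as a nontrivial fact. Your write-up should flag both points instead of absorbing them into the integral representation.
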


\begin{remark}
  The ``shell structure'', i.e., the oscillations of $\rho_S^H(r)$
  are barely visible.
  As Heilmann and Lieb put it \cite[p.~3633]{HeilmannLieb1995}:
  ``\emph{In fact, it is necessary to take two derivatives with
    respect to $r$ in order to make the oscillatory terms as large
    as the nonoscillatory ones. In short, shell structure is not a
    prominent property of this universal atomic function.}''
  For graphical illustrations, see, e.g., \cite[p.~3633]{HeilmannLieb1995}.
\end{remark}

\subsubsection{Scott's derivation of the $Z^2$-correction and ideas of
  Siedentop and Weikard}
\label{sss:scottliebarguments}

We first present Scott's \cite{Scott1952} heuristic derivation of the
$Z^2$-correction to TF theory. For simplicity, we restrict ourselves
to the neutral case $N=Z$.
Our exposition follows March \cite[pp.~8--11]{March1957T}.
Afterwards we present the ideas in \cite{SiedentopWeikard1987O}
to prove the upper bound \eqref{eq:swscottnonrel}.
\medskip

Scott's idea is that the $Z^2$-correction stems from the innermost
electrons, which live roughly a distance $Z^{-1}$ away from the nucleus.
Since there are only ``few'' electrons on this scale, one expects the
electron-electron repulsion to be irrelevant there, i.e., the electrons
should be described by the hydrogen operator $S_Z^H$.

These heuristics suggest that the first correction of the Thomas--Fermi energy
for a large atom should be the difference between the ``Bohr energy'' (i.e.,
the sum over all hydrogen eigenvalues) and the Thomas--Fermi Bohr energy.
We computed the latter in \eqref{eq:tfenergyneutralbohr}. In the neutral case,
we obtain
\begin{align}
  E^{\rm TF}_{\rm Bohr}(Z) = -(3^{1/3}/2)q^{2/3}Z^{7/3}.
\end{align}
On the other hand, in the Bohr atom, each shell (indexed by $n\in\N$)
of energy $-Z^2/(2n^2)$ has $qn^2$ states. Thus, the $N=Z$ many electrons
occupy $k$ shells, where $k$ is determined by
\begin{align}
  Z
  = N
  = q\left(\sum_{n=1}^k n^2+(k+1)^2\epsilon\right)
  = q\left(\frac{k^3}{3} + \frac{k^2}{2} + \frac{k}{6} + (k+1)^2\epsilon\right)
\end{align}
and where $0\leq\epsilon<1$ is the fraction of the $(k+1)$-st shell that
is filled. One finds $k=(3Z/q)^{1/3}-\frac12-\epsilon+o(1)$ as $Z\to\infty$.
Thus, the energy of the Bohr atom is
\begin{align}
  E_{\rm Bohr}(Z) = -\frac{Z^2}{2}q\left(\epsilon+\sum_{n=1}^k 1\right)
  = E^{\rm TF}_{\rm Bohr}(Z) + \frac{q}{4}Z^2 + o(Z^2).
\end{align}
The second term here is exactly the Scott correction.

\medskip

We now present some ideas from \cite{SiedentopWeikard1987O} that enter
into the proof of the upper bound in \eqref{eq:swscottnonrel}. As
usual, an upper bound will be derived using a suitable trial state. As
we will describe later in more detail, this trial state contains, in
addition to hydrogen orbitals, also so-called Macke orbitals. It is in
connection with these Macke orbitals that the Hellmann--Weizs\"acker
functional defined in Subsection \ref{sss:sweasyproof} comes into
play \cite{Siedentop1981}.

To motivate why this functional is relevant, we recall that the TFW
energy has an asymptotic expansion whose first term coincides with the
TF energy and whose second term behaves like $Z^2$.  If one could show
that the TFW functional (with some choice of $A$) provides an upper
bound for the ground state energy $E_S(N,Z)$, then there might be a
chance that this functional can be used to prove the upper bound in
\eqref{eq:swscottnonrel}.
%
Apart from a factor that decreases like $N^{-2}$ and is irrelevant in the
limit $N\to\infty$, this was indeed proved in \emph{one spatial dimension}
by March \cite{March1957} and March and Young \cite{MarchYoung1958}.
They showed that the sum of the first $N$ eigenvalues of $-\frac{d^2}{dx^2}- v$
in $L^2(\R)$ is bounded from above by
\begin{align}
  \begin{split}
    \inf\left\{\int_\R\left[\frac12((\sqrt\rho)')^2+\frac12\cdot\frac{\pi^2}{3}\left(1-\frac{1}{N^2}\right)\rho(x)^3 - v(x)\rho(x)\right]\dx\,:\, \rho\in\ca_N\right\},
  \end{split}
\end{align}
where $\ca_N$ is the one-dimensional analog of \eqref{eq:defcalambda};
see also \cite[Section~3]{Siedentop2020M} or M\"uller
\cite[pp.~114--139]{Muller1978}.  To prove this, March and Young used
cleverly constructed trial functions by Macke
\cite{Macke1955,Macke1955W}.  These so-called Macke orbitals may be
seen as a (finer) analogue of coherent states.  March and Young also
proposed $d$-dimensional versions of Macke's orbitals, which, however,
lead to a contradiction, as was pointed out by M\"uller
\cite[pp.~131--134]{Muller1978}, Lieb \cite[p.~96]{Lieb1980},
and the author of
\cite[p.~213]{Siedentop1981}; see also Dietze \cite{Dietze2018} for a
closer inspection.

Let us return to the three-dimensional case.  In
\cite[Theorem~5.2]{SiedentopWeikard1986O} and
\cite[(4.1)]{SiedentopWeikard1986} (see also Lad\'anyi
\cite{Ladanyi1958} and \cite{Siedentop1981} for precursors) the
following upper bound was proved in the atomic case:
\begin{align}
  \label{eq:upperboundhw}
  E_S(N,Z) \leq \inf\left\{\ce_Z^{\rm HW}(\uvarrho) + \mathcal R(\uvarrho) :\, \uvarrho\in\cm_N^{\rm W}\right\}
\end{align}
with an (unimportant) remainder term
\begin{align}
  \mathcal R(\uvarrho) := \sum_{\ell\geq0, n_\ell\neq0} \frac{\alpha_\ell}{3}\left(\frac{-
  1+6\epsilon_\ell-3\epsilon_\ell^2}{n_\ell^2} +
  \frac{2\epsilon_\ell^3-
  6\epsilon_\ell^2+4\epsilon_\ell}{n_\ell^3}\right) \int_0^\infty
  \varrho_\ell(r)^3\,\dr.
\end{align}
Here $n_\ell=(q(2\ell+1))^{-1}\int_0^\infty\varrho_\ell(r)\,\dr$
and $\epsilon_\ell := n_\ell - [ n_\ell]\in[0,1)$.
Physically, $n_\ell$ has the interpretation as the number of electrons in
the sector of angular momentum $\ell$ corresponding to fixed magnetic
quantum number $m$ and spin $s$.
(In \cite{SiedentopWeikard1986O,SiedentopWeikard1986,SiedentopWeikard1987O},
this number is denoted by $N_{\ell,m,s}$.)
The total number of electrons in the angular momentum channel $\ell$ is therefore
$q(2\ell+1)n_\ell$. 
Note that, if $n_\ell$ is integer, then $\mathcal R(\uvarrho)\leq 0$
and the term can be dropped.
For general $n_\ell$, this term has the effect of slightly modifying
the prefactor of the term involving $\varrho_\ell^3$ in the
Hellmann--Weizs\"acker functional.
For properly chosen $\uvarrho$, it is $\mathcal{O}(Z^{5/3})$,
cf.~\cite[pp.~472--473, Proposition~3.6]{SiedentopWeikard1987O}.
In this way, one can obtain from \eqref{eq:upperboundhw} the bound
\begin{align}
  \label{eq:hwinteracting}
  E_S(N,Z) \leq E^{\rm TF}(N,Z) + \const Z^2.
\end{align}
This contains an error term of the correct order and is the
`interacting analogue' of the upper bound in Theorem \ref{hwz2}.
The proof also reveals that Weizs\"acker's correction is $o(Z^2)$
when the $\ell$-summation starts at $L=[Z^{1/12}]$,
cf.~\cite[Proposition~3.3]{SiedentopWeikard1987O}. Thus, we have
obtained an upper bound of the correct order, but not yet with the
correct coefficient. In order to obtain the correct coefficient,
we need one more modification of the above strategy.

\medskip

After this preparation we can give a brief outline of the proof in
\cite{SiedentopWeikard1987O} of the upper bound in \eqref{eq:swscottnonrel}.
The basic idea is to describe electrons close to the nucleus using
hydrogen eigenfunctions, and electrons far from the nucleus using the
Hellmann--Weizs\"acker functional.  The distinction between ``close''
and ``far'' electrons is implemented by an angular momentum cut-off
$L=[Z^\delta]$ for an appropriate $\delta\in(0,1)$.  This is suggested
by the solution of Kepler's problem, where the perihelion of a planet
grows like the square of its angular momentum.  For the hydrogen atom,
this is reflected by the explicitly computable expectation values of
powers of $|x|$ in their eigenfunctions, see, e.g., Bethe
\cite[(3.19)-(3.27)]{Bethe1933}.

In the following outline, we disregard the electron-electron repulsion.
Treating this term requires some effort, but it does not affect the Scott
correction and, therefore, ignoring this term helps to clarify the basic
steps in the proof.

\begin{enumerate}
\item Use the variational principle with a trial state that is made up of
  hydrogen eigenfunctions for angular momenta $0\leq\ell<[Z^{\frac{1}{12}}]$ and
  Macke orbitals for $\ell\geq[Z^{\frac{1}{12}}]$.
  For the precise form of the resulting upper bound, see
  \cite[(2.3)]{SiedentopWeikard1987O}.
	
\item The Macke orbitals for large $\ell$ lead to the Hellmann--Weizs\"acker
  functionals $\ce_{\ell,Z}^{\rm HW}$ (see \eqref{eq:defhwl}).
  These functionals are summed starting from $\ell\geq[Z^{1/12}]$.
  As remarked after \eqref{eq:hwinteracting}, the Weizs\"acker term is $o(Z^2)$;
  see \cite[Proposition~3.3]{SiedentopWeikard1987O}.
  Thus,
  $\sum_{\ell\geq[Z^{1/12}]}\ce_{\ell,Z}^{\rm HW}[\varrho_\ell] \leq \sum_{\ell\geq[Z^{1/12}]} \ce_{\ell,Z}^{\rm H}[\varrho_\ell^{\rm H}] + o(Z^2)$
  in terms of the Hellmann functionals $\ce_{\ell,Z}^{\rm H}$ in
  \eqref{eq:hellmannl} and their minimizers $\varrho_\ell^{\rm H}$.
  
\item The summation of the hydrogen eigenvalues up to angular momentum
  $\ell<[Z^{1/12}]$ leads to
  $\sum_{\ell<[Z^{1/12}]} \ce_{\ell,Z}^{\rm H}[\varrho_\ell^{\rm H}]$
  plus Scott's correction $\frac{q}{4}Z^2$ plus $\mathcal{O}(Z^{23/12})$;
  see \cite[Propositions~3.1 and 3.2]{SiedentopWeikard1987O}.
  
\item By \eqref{eq:hellmanntf}, the leading order of the Hellmann functional is
  the TF energy modulo $\co(Z^{5/3})$-errors;
  see \cite[Lemma~4.1]{SiedentopWeikard1987O}.

\end{enumerate}

This concludes our sketch of the upper bound in
\eqref{eq:swscottnonrel}.  We mention that Macke orbitals and the
Hellmann--Weizs\"acker functional can also be used to give a proof of
the lower bound of \eqref{eq:scottnonrelinitial}; see
\cite{SiedentopWeikard1991} and, for an exposition of the basic ideas,
\cite{SiedentopWeikard1994}. Both in the proof of the upper and lower
bound, the use of Macke orbitals is reminiscent of the use of coherent
states by Lieb in his proof of the asymptotic exactness of TF theory.

\subsection{Molecules without magnetic fields}
\label{ss:nonrelmolecule}

\subsubsection{Thomas--Fermi scale}

As in the atomic case, TF theory predicts the leading term of
$E_S(N,\uZ,\uR)$ and the density $\rho_S$ on the TF length scale
correctly.  More precisely, fix $\uR=(R_1,...,R_K)\in \R^{3K}$,
$\uZ=(Z_1,...,Z_K)\in(0,\infty)^K$,
$|\uZ|=\sum_{\kappa=1}^K Z_\kappa$, and the TF electron number
$\lambda$. It is not necessary to assume $\lambda\leq|\uZ|$. For each
$N=1,2,\ldots$, define $a_N=N/\lambda$.  In $H_{N,V}$ in
\eqref{eq:manybodySchrodinger} replace each $Z_\kappa$ by
$a_N\cdot Z_\kappa$ and each $R_\kappa$ by $a_N^{-1/3}R_\kappa$.  This
means that the nuclei come together at the rate
$a_N^{-1/3}\sim|\uZ|^{-1/3}$ as $N\to\infty$ when $N/|\uZ|$ is kept
constant.
At this point we want to emphasize that this scaling of $R_\kappa$,
which has become customary in the mathematical literature, is motivated
by mathematical considerations rather than by physical reality.
However, neither the energy nor the density is expected to be
close to the molecular ground state energy or ground state density. In
fact the two energies are expected to differ already to leading order,
since the minimal positions of the nuclei do not scale in this way.

In case $E_S(N,a_N\uZ,a_N^{-1/3}\uR)$ is an eigenvalue, let $\psi_N$ be
an associated eigenfunction with corresponding one-particle density $\rho_S$.
If not, $\rho_S$ shall denote the one-particle density associated to the
element $\psi_N$ of an approximate ground state $(\psi_N)_{N\in\N}$
in the sense of Remark \ref{remnoeigenvalue}.

By the scaling relations \eqref{eq:scalingtf} for TF theory, we have
\begin{subequations}
  \begin{align}
    \rho^{\rm TF}(\lambda,\uZ,\uR,x)
    & = a_N^{-2}\rho^{\rm TF}(N,a_N \uZ,a_N^{-1/3}\uR,a_N^{-1/3}x)\\
    E^{\rm TF}(\lambda,\uZ,\uR)
    & = a_N^{-7/3} E^{\rm TF}(N,a_N \uZ,a_N^{-1/3} \uR).
  \end{align}
\end{subequations}

We now make the connection between the full quantum problem
\eqref{eq:gsenergynonrel} with frozen nuclear positions associated to
$H_{N,V}$, and TF theory by letting the total nuclear charge $|\uZ|$ and
the electron number $N$ tend to infinity in such a way that the ionization
degree $N/|\uZ|$ is kept constant.
The following theorem due to Lieb and Simon \cite{LiebSimon1977}
generalizes Theorem \ref{quantumtfconvnonrelatom}.

\begin{theorem}[{\cite[Section~III]{LiebSimon1977}, \cite[Theorem~5]{Lieb1976}}]
  \label{quantumtfconvnonrelmolecule}
  Let $a_N=N/\lambda$, let $\uz=(z_1,...,z_K)\in(0,\infty)^K$, and let 
  $\uR=(R_1,...,R_K)\in\R^{3K}$.
  Let $H_{N,V}$ be as in \eqref{eq:manybodySchrodinger} with nuclear
  configuration $\{a_N\uz,a_N^{-1/3}\uR\}$. Then the following statements hold:

  \begin{enumerate}
  \item The quantity $a_N^{-7/3}E_S(N,a_N\uz,a_N^{-1/3}\uR)$ has a limit
    as $N\to\infty$ and this limit coincides with
    $E^{\rm TF}(\lambda,\uz,\uR)$.

  \item The scaled one-particle density $a_N^{-2}\rho_S(a_N^{-1/3}x)$
    associated to a (possibly approximate) ground state on the TF scale
    has a limit as $N\to\infty$. If $\lambda\leq|\uz|$, then the convergence
    is weakly in $L^1$ and the limit is $\rho^{\rm TF}_{|\uz|}(\lambda,x)$.
    If $\lambda>|\uz|$, then the convergence is weakly in $L_{\rm loc}^1$
    and the limit is $\rho^{\rm TF}_{|\uz|}(|\uz|,x)$.
  \end{enumerate}
\end{theorem}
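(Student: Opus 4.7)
The plan is to follow the Lieb--Simon proof of Theorem \ref{quantumtfconvnonrelatom}; its core estimates never use the single-center structure of the atomic potential, only its $L^{5/2}_{\mathrm{loc}} + L^\infty$ membership, which the molecular $V$ enjoys. Via the scaling relations \eqref{eq:scalingtf} the assertion is that, for the Hamiltonian $H_{N,V}$ with nuclear data $(a_N\uz, a_N^{-1/3}\uR)$, the rescaled ground state energy and density converge to their TF counterparts. I would split this into (i) matching sharp upper and lower bounds for $a_N^{-7/3}E_S(N, a_N\uz, a_N^{-1/3}\uR)$, and (ii) weak convergence of the rescaled one-particle density obtained by linearizing the energy asymptotics via a Feynman--Hellmann / convex-duality argument.

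\textbf{Energy upper bound.} I would use a trial Slater determinant built from Lieb's coherent-state device. Let $\rho^{\mathrm{TF}}_N$ denote the TF density for $(a_N\uz, a_N^{-1/3}\uR)$ with electron number $\min(N, a_N|\uz|)$, and let $\Phi^{\mathrm{TF}}_N$ be the associated TF potential. Construct a one-particle density matrix $\gamma$ whose Husimi function approximates $\tfrac{q}{(2\pi)^3}\one_{\{p^2/2 \leq \Phi^{\mathrm{TF}}_N(x)\}}$, smoothed on scale $a_N^{-1/3+\varepsilon}$. The associated Slater determinant $\Psi$ gives
\[
\langle \Psi, H_{N,V}\Psi\rangle = \tr\bigl((-\tfrac12\Delta - V)\gamma\bigr) + D(\rho_\gamma, \rho_\gamma) - X(\gamma) + U \leq \mathcal{E}_V^{\mathrm{TF}}[\rho^{\mathrm{TF}}_N] + o(a_N^{7/3}),
\]
where the exchange term $X(\gamma) \geq 0$ only improves the bound, and the semiclassical smoothing error is controlled by the regularity of $\rho^{\mathrm{TF}}_N$ away from the nuclei (Theorem \ref{tfdensityproperties}). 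When $\lambda > |\uz|$ there is no TF minimizer of mass $\lambda$; I would then place $(\lambda - |\uz|)\,a_N$ extra electrons in widely separated far-field Gaussians whose total energetic cost is $o(a_N^{7/3})$.

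\textbf{Energy lower bound.} For any normalized antisymmetric $\psi_N$ with density $\rho_\psi$, I would combine Lieb's coherent-state lower bound for the kinetic energy — which recovers the sharp semiclassical constant $\tfrac{3}{5}\gtf$ (rather than the smaller Lieb--Thirring constant) by minimizing the Husimi function pointwise in $p$ subject to $0 \leq W \leq q/(2\pi)^3$ and $\int W\,\rd p \approx \rho_\psi$ — with the Lieb--Oxford bound
\[
\Bigl\langle \psi_N, \sum_{\nu<\mu} |x_\nu - x_\mu|^{-1} \psi_N \Bigr\rangle \geq D(\rho_\psi, \rho_\psi) - C\int \rho_\psi^{4/3}\,\rd x.
\]
After the semiclassical smoothing, this yields $\langle \psi_N, H_{N,V}\psi_N\rangle \geq \mathcal{E}_V^{\mathrm{TF}}[\rho_\psi] - o(a_N^{7/3})$. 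Minimizing the right-hand side over admissible densities and using Theorem \ref{tftheorem}(ii) gives $E_S \geq a_N^{7/3} E^{\mathrm{TF}}(\lambda, \uz, \uR) + o(a_N^{7/3})$.

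\textbf{Density convergence and the main obstacle.} With the energy asymptotics established, I would prove weak $L^1$ convergence of $a_N^{-2}\rho_S(a_N^{-1/3}\cdot)$ by a Feynman--Hellmann argument. For a test potential $U$, perturb $V \mapsto V - \varepsilon\,a_N^{4/3}U(a_N^{1/3}\cdot)$ and apply the energy asymptotics to the perturbed system; its TF limit is $E^{\mathrm{TF}}_{V - \varepsilon U}(\lambda, \uz, \uR)$. The map $\varepsilon \mapsto E^{\mathrm{TF}}(\varepsilon)$ is concave with one-sided derivatives at $\varepsilon = 0$ equal to $-\int U\rho^{\mathrm{TF}}\,\rd x$ (differentiability follows from uniqueness of the TF minimizer), and $\varepsilon \mapsto a_N^{-7/3}E_S(\varepsilon)$ is likewise concave. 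Pointwise convergence of concave functions forces convergence of their one-sided derivatives, giving $\int U \cdot a_N^{-2}\rho_S(a_N^{-1/3}\cdot)\,\rd x \to \int U \rho^{\mathrm{TF}}\,\rd x$. The main obstacle is the ionic regime $\lambda > |\uz|$: the constrained TF minimizer saturates at mass $|\uz|$, so $(\lambda - |\uz|)a_N$ quantum electrons are ``missing'' from the limit density. They escape to spatial infinity on the TF scale, contributing zero to any compactly supported $U$, which is why one obtains only weak $L^1_{\mathrm{loc}}$ convergence in that case. Controlling this requires ensuring the upper-bound trial state can carry the extra mass at vanishing local density, and confirming that, in the lower bound, the constraint $\int \rho_\psi \leq N$ can be relaxed to $\leq a_N|\uz|$ plus a vanishing background without losing the sharp TF coefficient.
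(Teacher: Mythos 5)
Your proposal is correct and follows the standard streamlined approach — the Thirring--Lieb coherent-state upper and lower bounds combined with a correlation estimate (Lieb--Oxford) for the repulsion, then a Feynman--Hellmann/concavity argument for weak convergence of the density — which is precisely the strategy the paper itself describes for the atomic Theorem \ref{quantumtfconvnonrelatom} and which indeed carries over verbatim to the multi-center potential, since the coherent-state semiclassics only require local $L^{5/2}+L^\infty$ control of $V$. Two small remarks: the original Lieb--Simon proof (which is what the paper formally cites) predates the Lieb--Oxford bound and handles the repulsion differently, so what you have reconstructed is the later simplified proof rather than the cited one; and in the ionic case $\lambda>|\uz|$ you should note explicitly that passing to weak $L^1_{\rm loc}$ (rather than $L^1$) is forced not just by mass escaping to infinity but also by the fact that the limit $\rho^{\rm TF}(|\uz|,\uz,\uR,\cdot)$ has total mass $|\uz|<\lambda$, so there is no norm-convergence to hope for.
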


As Lieb \cite[p.~560]{Lieb1976} notices:
``\emph{Note that if $\lambda>|\uz|$, then this result says that the
  surplus charge moves off to infinity  and the result is a neutral
  molecule. This means that large atoms or molecules cannot have a
  negative ionization proportional to the total nuclear charge; at
  best they can have a negative ionization which is a vanishingly
  small fraction of the total charge.}''

\begin{remark}
  Note that, if the $R_\kappa$ are kept \emph{fixed and unscaled},
  one ends up with isolated atoms in the limit $N\to\infty$,
  see Lieb \cite[pp.~559--560]{Lieb1976} for the precise statements.
\end{remark}

\subsubsection{Scott scale}

In addition to being independent of the absence or presence of the
electron-electron repulsion and of the ionization degree, one expects
the $Z^2$-correction to be the sum of the Scott corrections of the
atoms constituting the molecule, as long as the minimal internuclear
distance is not too close to the Scott scale. In particular, this is
expected for ground states of molecules where the interatomic distance
is expected to be of order one in the scaling parameter of the nuclear
charges.

The Scott conjecture for molecules with frozen nuclear positions was
first proved by Ivrii and Sigal \cite{IvriiSigal1993} using a
multiscale analysis and microlocal techniques.
Later, Solovej and Spitzer \cite{SolovejSpitzer2003} found a
proof that is partly similar to the multi-scale
analysis in \cite{IvriiSigal1993} using an interesting new
coherent states method.
Around the same time, Balodis \cite{Balodis2004} proved the Scott correction
uniformly in $K$.
The following theorem is the content of \cite[Theorem~1]{SolovejSpitzer2003}.

\begin{theorem}
  \label{scottnonrelmolecules}
  Let $\uZ=(Z_1,...,Z_K)\in(0,\infty)^K$
  and $\uR=|\uZ|^{-1/3}(r_1,...,r_K)\in\R^{3K}$
  with $\min_{k\neq\ell}|r_k-r_\ell|>r_0$
  for some $r_0>0$. Define $\uz=(z_1,...,z_K):=|\uZ|^{-1}\uZ$.
  Let $E^{\rm TF}(\uz,\ur)$ be the Thomas--Fermi energy of the
  unconstrained problem \eqref{eq:tfunconstrained} and
  $E_{S}(N,\uZ,\uR)$ be the ground state energy of $H_{|\uZ|,V}$
  with nuclear configuration $\uZ$ and $\uR$.
  Then
  \begin{align}
    E_S(|\uZ|,\uZ,\uR) = E^{\rm TF}(\uz,\ur)\cdot|\uZ|^{7/3} + \frac{q}{4}\sum_{\kappa=1}^K Z_\kappa^2 + \mathcal{O}(|\uZ|^{2-1/30})
  \end{align}
  as $|\uZ|\to\infty$, where the error term $\mathcal{O}(|\uZ|^{2-1/30})$
  besides $|\uZ|$ depends only on $Z_1,...,Z_K$ and $r_0$.
\end{theorem}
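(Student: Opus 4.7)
The plan is to establish matching upper and lower bounds on $E_S(|\uZ|, \uZ, \uR)$, guided by the heuristic that the energy decomposes into a Thomas--Fermi bulk contribution of order $|\uZ|^{7/3}$, supported on the length scale $|\uZ|^{-1/3}$, and a sum of Scott corrections of order $Z_\kappa^2$, each arising on the hydrogenic length scale $Z_\kappa^{-1}$ around a single nucleus $R_\kappa$. The assumption $\min_{k\neq\ell}|r_k - r_\ell| \geq r_0$ guarantees that these Scott neighbourhoods are pairwise disjoint for large $|\uZ|$, so that the atomic Scott corrections simply add up and do not interact via Coulomb cross terms at leading order.

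For the upper bound, I would construct a Slater determinant in the spirit of the atomic proof sketched in Subsection~\ref{sss:scottliebarguments}, but localised near the $K$ centres. Near each nucleus $R_\kappa$, I would use the low angular momentum hydrogenic eigenfunctions of $-\frac12 \Delta - Z_\kappa/|x - R_\kappa|$ (cut off to a ball of radius $\sim Z_\kappa^{-1+\eta}$), summing those eigenvalues exactly and thereby producing Scott's contribution $\frac{q}{4}Z_\kappa^2$ via the counting argument of Subsection~\ref{sss:scottliebarguments}. For larger angular momenta and far from all nuclei, I would use Macke-type orbitals associated to the molecular Thomas--Fermi minimiser $\rho^{\rm TF}(\uz,\ur,\cdot)$; by the molecular analogue of Theorem~\ref{hwz2} and the estimates of \cite{SiedentopWeikard1987O}, these orbitals reproduce $E^{\rm TF}(\uz,\ur)|\uZ|^{7/3}$ up to $o(|\uZ|^2)$. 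Direct Coulomb cross terms between the different Scott regions are handled by Newton's theorem and the fact that the Scott balls are concentrated on scales $\ll |\uZ|^{-1/3}$ compared to their mutual distance $\gtrsim r_0 |\uZ|^{-1/3}$.

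For the lower bound, I would follow the multi-scale coherent-state method of Solovej--Spitzer. The first step is an electrostatic correlation inequality (of Lieb--Oxford type) that reduces the $N$-body problem to bounding the sum of negative eigenvalues of a one-body Schr\"odinger operator $-\frac12 \Delta - \Phi^{\rm TF}$, up to self-energy corrections which reconstruct $E^{\rm TF}$. I would then apply a Husimi-style coherent state family whose localisation length $\ell(x)$ interpolates smoothly between the Thomas--Fermi scale $|\uZ|^{-1/3}$ far from the nuclei and the hydrogenic scale $Z_\kappa^{-1}$ as $x \to R_\kappa$. Outside small balls of radius $Z_\kappa^{-1+\delta}$ around the nuclei, this yields the Weyl term $-\frac{1}{15\pi^2}\int_{\R^3}(2\Phi^{\rm TF}_+)^{5/2}\,\dx$, contributing to the TF leading order; inside these balls, where the semiclassical approximation fails due to the Coulomb singularity, one replaces the coherent-state estimate by the exact trace of the negative part of the hydrogenic Hamiltonian, whose difference with the corresponding Weyl integral is, by the counting argument of Remark~\ref{bohrtfatom} and Subsection~\ref{sss:scottliebarguments}, precisely $-\frac{q}{4}Z_\kappa^2 + o(Z_\kappa^2)$.

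The main obstacle is the quantitative matching between these regimes, uniformly in the molecular geometry. One must simultaneously control the localisation errors of the partition of unity, the discrepancy between $\Phi^{\rm TF}$ and the pure Coulomb potential $Z_\kappa/|x-R_\kappa|$ inside each Scott ball (in particular showing that the screening contribution from the other electrons can be absorbed as an $o(Z_\kappa^2)$ perturbation of the hydrogenic problem), the semiclassical error in the bulk, and the error from the correlation inequality, in such a way that the total error is bounded by a power of $|\uZ|$ strictly less than $2$. Optimising the radius $Z_\kappa^{-1+\delta}$ of the Scott balls, the smoothing scale of the partition of unity, and the coherent-state width against one another gives the explicit exponent $|\uZ|^{2-1/30}$; ensuring that this optimisation is uniform in $K$, in the relative positions $\ur$ (through the separation constant $r_0$), and in $\uz$ is the technical heart of the Solovej--Spitzer argument.
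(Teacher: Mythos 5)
The paper does not give a proof of this theorem; it simply attributes it to Solovej and Spitzer \cite{SolovejSpitzer2003} (and, alternatively, to Ivrii and Sigal \cite{IvriiSigal1993} and Balodis \cite{Balodis2004}), so there is no detailed argument in the text for your proposal to be compared against. That said, your overall picture -- Thomas--Fermi bulk on the $|\uZ|^{-1/3}$ scale, disjoint Scott balls of radius $Z_\kappa^{-1+\delta}$ around each nucleus, a Lieb--Oxford reduction to a one-body operator, and a multiscale coherent-state semiclassical estimate -- is the correct skeleton of both the Ivrii--Sigal and the Solovej--Spitzer arguments.

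There is, however, a genuine gap in your upper bound. You propose to build the trial state from low-$\ell$ hydrogenic orbitals near each nucleus together with \emph{Macke orbitals} for the bulk, controlled via a ``molecular analogue of Theorem~\ref{hwz2}.'' The Macke-orbital and Hellmann--Weizs\"acker machinery of \cite{SiedentopWeikard1986O,SiedentopWeikard1987O} is built on the angular-momentum decomposition of $L^2(\R^3)$ about a single centre; the functionals $\ce_{\ell,Z}^{\rm HW}$ in \eqref{eq:defhwl} act channel by channel on radial densities $\varrho_\ell$, and the whole construction requires the spherical symmetry of the effective potential. In the molecular case the Thomas--Fermi minimiser $\rho^{\rm TF}(\uz,\ur,\cdot)$ is \emph{not} spherically symmetric about any point, and there is no useful angular-momentum decomposition adapted to $K\geq 2$ centres, so no ``molecular analogue of Theorem~\ref{hwz2}'' exists. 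The review itself flags exactly this issue: the Ivrii--Sigal and Solovej--Spitzer proofs are singled out precisely because they ``do not use the spherical symmetry of the atomic case.'' What replaces the Macke-orbital construction in the upper bound of \cite{SolovejSpitzer2003} is a coherent-state (or projection-based) trial density matrix, using the same family of coherent states as the lower bound; spherical symmetry enters only \emph{locally}, inside each Scott ball where the potential is approximately $Z_\kappa/|x-R_\kappa|$, never globally. As written, your upper bound does not close.

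Two further points you gloss over and that are substantive in \cite{SolovejSpitzer2003}: (i) the claim that the screening correction inside each Scott ball is $o(Z_\kappa^2)$ must be made quantitative, which requires Sommerfeld-type control of the molecular Thomas--Fermi potential near each nucleus (Theorem~\ref{tfdensityproperties}(7) is the atomic version of what is needed); and (ii) the error exponent $|\uZ|^{2-1/30}$ is not the result of optimising a Scott-ball radius against a partition-of-unity smoothing scale in the way you suggest, but comes from the specific coherent-state construction and a second-order semiclassical expansion for the smooth part of $\Phi^{\rm TF}$, which is considerably more delicate than a Weyl-plus-exact-hydrogen matching.
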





\subsection{Molecules with self-generated magnetic fields}
\label{ss:nonrelmagnetic}

In this subsection we consider molecules in the presence of a classical
magnetic field. Quoting Erd\H{o}s and Solovej \cite[p.~229]{ErdosSolovej2010}:
``\emph{External magnetic fields were
  taken into account in \cite{Liebetal1994L,Liebetal1994S} (homogeneous)
  and \cite{ErdosSolovej1997} (inhomogeneous), but subject to certain
  regularity conditions. Self-generated magnetic fields, obtained from
  Maxwell’s equation are not known to satisfy these conditions.}''
For this reason we shall consider self-generated magnetic
fields here. (See (2) in Remark \ref{remarksmagnetic} below for an explanation
of the word ``self-generated''.)

To introduce the setting precisely, let
\begin{align}
  \gA := \left\{ \uA\in L^2_{\rm loc}(\R^3:\R^3):\, \nabla\otimes \uA\in L^2(\R^3,\R^{3\times 3}),\, \mathrm{div}(\uA) = 0 \right\},
\end{align}
where all derivatives are understood in the sense of distributions and
where $\nabla\otimes \uA$ denotes the $3\times 3$ matrix of all derivatives
$\partial_i A_j$.
We set $|\nabla\otimes \uA|^2=\sum_{i,j=1}^3|\partial_i A_j|^2$ and,
with $c>0$ denoting the velocity of light, consider the magnetic field energy
\begin{align}
  \frac{c^2}{8\pi}\int_{\R^3} |(\nabla\times \uA)(x)|^2\,\dx
  = \frac{c^2}{8\pi}\int_{\R^3} |(\nabla \otimes \uA)(x)|^2\,\dx.
\end{align}
The identity here is a consequence of the Coulomb gauge.

In the non-relativistic approximation, the one-particle kinetic energy
is the \emph{magnetic Schr\"odinger} or \emph{Pauli operator}
\begin{align}
  \label{eq:magnetickinetic}
  T(\uA) = \frac12\left(-i\nabla + \uA(x)\right)^2
  \quad \text{or} \quad
  T(\uA) = \frac12\left[\usigma\cdot\left(-i\nabla + \uA(x)\right)\right]^2
\end{align}
in $L^2(\R^3:\C^2)$ depending on whether the particles are (effectively)
considered spinless, or have spin-$\frac12$.
Here $\usigma=(\sigma_1,\sigma_2,\sigma_3)$ is the vector of Pauli matrices.

The total energy of a non-relativistic molecule with charges
$\uZ=(Z_1,...,Z_K)\in(0,\infty)^K$ fixed at positions
$\uR=(R_1,...,R_K)\in\R^{3K}$ in a classical magnetic vector potential
$\uA\in\gA$ is then described by the Hamilton operator
\begin{align}
  \label{eq:manybodySchrodingerMagnetic}
  H_{N,V,\uA} = \sum_{\nu=1}^N\left( T^{(\nu)}(\uA) - V(x_\nu)\right)
  + \sum_{1\leq\nu<\mu\leq N}\frac{1}{|x_\nu-x_\mu|} + U
  \quad \text{in}\ \bigwedge_{\nu=1}^N(L^2(\R^3:\C^2)),
\end{align}
where $V$ is the electron-nucleus interaction as in \eqref{eq:defv}.

In case the kinetic energy is described by the Pauli operator, we will need
to impose an additional restriction on the quotient $|\uZ|/c^2$. Indeed,
if $|\uZ|/c^2$ is sufficiently small, then the quadratic form associated
to \eqref{eq:manybodySchrodingerMagnetic} is bounded from below uniformly
in $\uA$, see, e.g.,
\cite{Frohlichetal1986,Fefferman1995,Liebetal1995,LiebLoss2001}.
Crucially, however, stability fails, if $|\uZ|/c^2$ is too large, see
\cite{LossYau1986,ErdosSolovej2001}.

For both choices of the magnetic kinetic energy $T(A)$ and each fixed
$\uA\in\gA$, the operator $H_{N,V,\uA}$ is defined as the Friedrichs extension
of the corresponding quadratic form defined on $\cs(\R^3:\C^2)$.

For fixed magnetic potential $\uA\in\gA$ and fixed nuclear positions,
the electronic ground state energy is
\begin{align}
  \label{eq:gsmagneticnonrel}
  E(N,\uZ,\uR,\uA) & := \inf\spec(H_{N,V,\uA}).
\end{align}
The total ground state density with fixed nuclear positions
arises from minimizing this energy with respect to $\uA\in\gA$, i.e.,
\begin{align}
  \label{eq:totalgsmagneticnonrel}
  E_{S,{\rm mag}}(N,\uZ,\uR,c) & := \inf_{\uA\in\gA} \left(E(N,\uZ,\uR,\uA) + \frac{c^2}{8\pi}\int_{\R^3} |(\nabla \times \uA)(x)|^2\,\dx\right).
\end{align}

\begin{remarks}
  \label{remarksmagnetic}
  (1) In \eqref{eq:totalgsmagneticnonrel} it suffices to minimize over
  all compactly supported $\uA\in\gA$.

  (2) As remarked in \cite[p.~231]{ErdosSolovej2010}, the Euler--Lagrange
  equation that arises from minimizing 
  $\left\langle\Psi,\left(H_{N,V,\uA}+\frac{c^2}{8\pi}\int_{\R^3} |(\nabla \times \uA)(x)|^2\,\dx \right)\Psi\right\rangle$
  over $\Psi$ and $\uA$, corresponds to the stationary version of the
  coupled Maxwell--Pauli system, i.e., the eigenvalue problem
  $H_{N,V,\uA}\Psi=E_{S,{\rm mag}}(N,\uZ,\uR,c)\Psi$ together with the
  Maxwell equation for the magnetic field, i.e.,
  $\nabla\times \uB=4\pi c^{-2}\underline{J}_\Psi$;
  here $\underline{J}_\Psi$ is the current of the wave function $\Psi$.
  This explains why $\uB$ is called a self-generated magnetic field.
\end{remarks}

\subsubsection{Thomas--Fermi scale}

On the Thomas--Fermi scale it turns out -- as the semiclassical picture suggests --
that the magnetic field does not change the leading order of the energetic expansion
as $Z\to\infty$.
To make this precise, define
\begin{align}
  \label{eq:totalgsmagneticnonreltf}
  E_{S,{\rm mag}}(\uZ,\uR,c) & := \inf_{N\in\N}E_{S,{\rm mag}}(N,\uZ,\uR,c),\\
  E_{S,{\rm nonmag}}(\uZ,\uR,c) & := \inf_{N\in\N} \inf\spec(H_{N,V,0}).
\end{align}
Erd\H{o}s and Solovej \cite{ErdosSolovej2010} proved the following
theorem.

\begin{theorem}[{\cite[Theorem~1.1]{ErdosSolovej2010}}]
  \label{tfnonrelmagn}
  Suppose that $T(\uA)$ is either the Pauli or the magnetic Schr\"odinger
  operator and assume (for simplicity) $Z:=Z_1=Z_2=...=Z_K$ and
  $|R_i-R_j|\geq c_1 Z^{-1/3}$ for all $i\neq j$. Then there is a
  positive constant $\kappa_0$ such that if $Z/c^2\leq\kappa_0$, then
  \begin{align}
    E_{S,{\rm nonmag}}(\uZ,\uR,c)
    \geq E_{S,{\rm mag}}(\uZ,\uR,c)
    \geq E_{S,{\rm nonmag}}(\uZ,\uR,c)- c_2 Z^{\frac73-\frac{1}{63}}.
  \end{align}
\end{theorem}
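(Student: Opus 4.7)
The inequality $E_{S,\mathrm{nonmag}}(\uZ,\uR,c)\geq E_{S,\mathrm{mag}}(\uZ,\uR,c)$ is immediate from the definitions, since $\uA=0$ is an admissible choice in the infimum over $\uA\in\gA$ in~\eqref{eq:totalgsmagneticnonrel}. For the nontrivial lower bound I would route everything through Thomas--Fermi theory. Theorem~\ref{scottnonrelmolecules}, together with a trivial upper bound for $E_{S,\mathrm{nonmag}}$ coming from an uncorrelated trial state, yields $E_{S,\mathrm{nonmag}}(\uZ,\uR,c)\leq E^{\rm TF}(\uz,\ur)\,|\uZ|^{7/3}+C Z^{2-1/30}$; since $2-\tfrac{1}{30}<\tfrac{7}{3}-\tfrac{1}{63}$, it suffices to prove the uniform lower bound
\begin{equation*}
\langle\Psi,H_{N,V,\uA}\Psi\rangle+\frac{c^{2}}{8\pi}\|\nabla\times\uA\|_{2}^{2}\geq E^{\rm TF}(\uz,\ur)\,|\uZ|^{7/3}-C'Z^{7/3-1/63}
\end{equation*}
for every $N\in\N$, every normalized antisymmetric $\Psi$, and every $\uA\in\gA$, under the hypothesis $Z/c^{2}\leq\kappa_{0}$.

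To establish this uniform bound I would follow the classical three-step reduction to Thomas--Fermi, adapted to the magnetic setting. First, apply the Lieb--Oxford inequality to replace $\sum_{\nu<\mu}|x_{\nu}-x_{\mu}|^{-1}$ by $D(\rho_{\Psi},\rho_{\Psi})-C\int\rho_\Psi^{4/3}$, whose exchange error is $\mathcal{O}(Z^{5/3})$ on the TF density. Second, add and subtract the TF potential $\Phi^{\rm TF}$ from~\eqref{eq:tfeq3}, complete the square in $\rho_{\Psi}-\rho^{\rm TF}$ using $D\geq 0$, and invoke the Thomas--Fermi equation~\eqref{eq:tfeq} to identify the resulting potential part with $E^{\rm TF}(\uz,\ur)\,|\uZ|^{7/3}$ up to an explicit integral of $(\Phi^{\rm TF})^{5/2}$. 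Third, control the one-body sum from below by a magnetic Lieb--Thirring estimate
\begin{equation*}
\sum_{\nu=1}^{N}\bigl\langle\Psi,(T^{(\nu)}(\uA)-\Phi^{\rm TF}(x_{\nu}))\Psi\bigr\rangle\geq -L_{\rm sc}\int(\Phi^{\rm TF})_{+}^{5/2}\,\dx-\mathcal{E}(\uA),
\end{equation*}
where $L_{\rm sc}$ is the semiclassical constant and $\mathcal{E}(\uA)\geq 0$ is a magnetic error.

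For the magnetic Schr\"odinger operator Kato's diamagnetic inequality lifts to eigenvalue sums of one-body operators and yields $\mathcal{E}(\uA)\equiv 0$, so the argument collapses to the standard Lieb--Simon semiclassical analysis with error $\mathcal{O}(Z^{5/3})\ll Z^{7/3-1/63}$, and one is done. The Pauli case is where the real work lies: the Zeeman term $\tfrac12\usigma\cdot\uB$ is sign-indefinite, and the Pauli magnetic Lieb--Thirring bound (in the Lieb--Solovej--Yngvason form) carries an additional schematic term $C\int\bigl(|\uB|(\Phi^{\rm TF})^{3/2}+|\uB|^{2}(\Phi^{\rm TF})^{1/2}\bigr)\,\dx$ which must be absorbed into the field energy $\tfrac{c^{2}}{8\pi}\|\uB\|_{2}^{2}$. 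A direct Cauchy--Schwarz on the cross term would demand $\int(\Phi^{\rm TF})^{3}$, which diverges at each nucleus; the remedy is to excise balls of radius $Z^{-\alpha}$ around each $R_{\kappa}$ and handle their interiors by the stability of matter for hydrogen-like Pauli atoms in a self-generated field, valid precisely under the small-coupling condition $Z/c^{2}\leq\kappa_{0}$ (following Fr\"ohlich--Lieb--Loss, Fefferman, and Lieb--Loss--Solovej).

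The main obstacle is thus the simultaneous control of three competing length scales --- the Thomas--Fermi scale $Z^{-1/3}$, the hydrogenic inner scale $Z^{-\alpha}$ around each nucleus, and the diamagnetic scale set by $\|\uB\|_{2}$ --- while balancing the Lieb--Oxford exchange error, the semiclassical Lieb--Thirring remainder, the magnetic cross-term absorbed into the field energy, and the inner-region Pauli stability bound. The peculiar exponent $\tfrac{1}{63}$ in the theorem is precisely the outcome of optimizing over $\alpha$ and the Lieb--Thirring semiclassical remainder subject to $Z/c^{2}\leq\kappa_{0}$; this optimization is what makes the Pauli case substantially more delicate than its magnetic Schr\"odinger counterpart and accounts for the loss relative to the Scott-level error $Z^{2-1/30}$ in the non-magnetic case.
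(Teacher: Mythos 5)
Your overall architecture is the correct one and matches Erd\H{o}s--Solovej in broad strokes: reduce to a one-body eigenvalue sum via Lieb--Oxford and the TF potential, then establish a lower bound that treats the eigenvalue sum and the field energy together. You correctly single out the Pauli operator as the essential difficulty and the divergence of $\int (\Phi^{\rm TF})^3$ at each nucleus as the obstacle to absorbing the $\uB$-dependent error into the field energy by a naive Cauchy--Schwarz.

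Two steps, however, are not as you state them. For the magnetic Schr\"odinger operator, diamagnetism gives a Lieb--Thirring inequality with an $\uA$-independent constant, but that constant is not $L_{\rm sc}$; obtaining the sharp lower bound $\Tr[T(\uA)-\Phi^{\rm TF}]_- \geq -L_{\rm sc}(1+o(1))\int(\Phi^{\rm TF})^{5/2}$ uniformly in $\uA$ still requires a magnetic coherent-state or phase-space localization argument, so the Schr\"odinger case does not ``collapse'' merely by invoking Kato's inequality. For the Pauli operator, Erd\H{o}s--Solovej do not apply the Lieb--Solovej--Yngvason estimate for a fixed external field and then absorb: they prove a new Lieb--Thirring-type bound in which the eigenvalue sum and $\tfrac{c^2}{8\pi}\|\uB\|_2^2$ are estimated jointly from the outset, handling the Coulomb singularities by a multiscale localization rather than a single excision of balls of radius $Z^{-\alpha}$. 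In your excision plan the inner region is not actually under control: the one-body trace $\Tr\big[\big(T(\uA)-\tfrac{Z}{|x-R_\kappa|}\big)\one_{\{|x-R_\kappa|<Z^{-\alpha}\}}\big]_-$ is not the same object as the many-body ground-state energy of a Pauli hydrogen atom, so the Fr\"ohlich--Lieb--Loss and Lieb--Loss--Solovej stability theorems do not apply directly, and the IMS localization error at scale $Z^{-\alpha}$ still has to be shown $o(Z^{7/3})$. Whether such a one-scale argument can be optimized to the stated $Z^{7/3-1/63}$ is exactly the bookkeeping your sketch leaves open; this is where the real content of the proof lies.
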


\subsubsection{Scott scale}

The following theorem due to Erd\H os et
al.~\cite{Erdosetal2012Sc,Erdosetal2012S} characterizes the Scott
correction in the presence of a self-generated magnetic field.

\begin{theorem}[{\cite[Theorem~1.1]{Erdosetal2012Sc}}]
  \label{scottnonrelmagn}
  Suppose that $T(\uA)$ is either the Pauli or the magnetic Schr\"odinger
  operator.
  Let $\uz=(z_1,...,z_K)\in(0,\infty)^K$ with $\sum_{\kappa=1}^K z_k=1$
  and $\ur=(r_1,...,r_K)\in\R^{3K}$ with
  $\min_{k\neq\ell}|r_k-r_\ell|\geq r_0$ for some $r_0>0$ be given.
  Let $\uZ=(Z_1,...,Z_K)=|\uZ|(z_1,...z_K)\in(0,\infty)^K$ for some $|\uZ|>0$
  and $\uR=|\uZ|^{-1/3}\ur$ be the charges and the positions of the nuclei
  of the operator $H_{N,V,\uA}$ in \eqref{eq:manybodySchrodingerMagnetic}.
  Then there is a universal (independent of $\uz,\ur,K$), continuous,
  monotone non-increasing function $S:(0,\kappa_0]\to\R$
  with some universal $\kappa_0>0$ and with
  $\lim_{\kappa\searrow0}S(\kappa)=\frac14$ such that, as
  $|\uZ|=\sum_{\kappa=1}^K Z_\kappa\to\infty$ and $c\to\infty$ with
  $\max_\kappa 8\pi Z_\kappa/c^2\leq\kappa_0$, one has
  \begin{align}
    E_{S,{\rm mag}}(|\uZ|,\uZ,\uR,c) = E^{\rm TF}(\uz,\ur)|\uZ|^{7/3} + 2\cdot |\uZ|^2 \sum_{\kappa=1}^K z_\kappa^2 \cdot S(8\pi Z_\kappa/c^2) + o(|\uZ|^2).
  \end{align}
\end{theorem}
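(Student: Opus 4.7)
The plan is to prove matching upper and lower bounds on $E_{S,{\rm mag}}(|\uZ|,\uZ,\uR,c)$ accurate to $o(|\uZ|^2)$, following the multi-scale coherent-state strategy of Solovej--Spitzer (Theorem \ref{scottnonrelmolecules}) adapted to accommodate the self-generated field. The Thomas--Fermi leading term $E^{\rm TF}(\uz,\ur)|\uZ|^{7/3}$ is essentially supplied by Theorem \ref{tfnonrelmagn}, so the task is to identify the $\mathcal{O}(|\uZ|^2)$-correction and to show that it is additive over the nuclei. The guiding heuristic, as in the non-magnetic case, is that the Scott-order contribution is produced by the innermost electrons on the hydrogenic scale $Z_\kappa^{-1}$ around each $R_\kappa$, where the electron--electron repulsion, the other nuclei, and the bulk magnetic field are subleading perturbations. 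The natural dimensionless coupling on this scale arises from the rescaling $y = Z_\kappa(x - R_\kappa)$ and $\uA(x) = Z_\kappa a(y)$: the Coulomb potential and the kinetic energy both acquire a factor $Z_\kappa^2$, while $\tfrac{c^2}{8\pi}\int|\nabla\times\uA|^2\,\dx$ becomes $Z_\kappa^2\tau^{-1}\int|\nabla\times a|^2\,\dy$ with $\tau = 8\pi Z_\kappa/c^2$. The function $S(\tau)$ is characterized as the universal Scott constant of the resulting one-atom problem at coupling $\tau$, namely the $Z^2$-coefficient in the ground-state asymptotics of $H_{N,Z/|x|,\uA}+\tfrac{c^2}{8\pi}\int|\nabla\times\uA|^2\,\dx$ (minimized in $\uA$) for $N = Z \to\infty$ with $8\pi Z/c^2 = \tau$ fixed. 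Monotonicity and continuity of $S$ follow from the variational characterization, and the limit $S(\tau)\to 1/4$ as $\tau\searrow 0$ reflects that the field becomes infinitely penalized, reducing to the non-magnetic Scott constant $q/4 = 1/2$ with $q = 2$.

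For the upper bound, construct a Slater-determinant trial state with orbitals of two kinds separated by a smooth partition of unity on scale $|\uZ|^{-1/3}$: near each $R_\kappa$, nearly minimizing orbitals of the one-atom Scott problem at coupling $\tau_\kappa := 8\pi Z_\kappa/c^2$, rescaled back to physical units; in the bulk, coherent-state orbitals realizing Lieb's proof of the TF upper bound. As a trial vector potential, take the sum of the one-atom optimizing fields, each cut off at a radius $\gg Z_\kappa^{-1}$ but much smaller than the inter-nuclear distance $\sim|\uZ|^{-1/3}$. Mixed Hartree terms, cross field terms, and localization errors are all $o(|\uZ|^2)$ because the Scott-scale density is concentrated in a ball of radius $\sim Z_\kappa^{-1}$. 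For the lower bound, introduce a Ruelle--Simon partition of unity $\sum_{\kappa=0}^K \chi_\kappa^2 = 1$ with $\chi_\kappa$ ($\kappa\ge 1$) supported in a ball of radius $\tfrac13 r_0 |\uZ|^{-1/3}$ around $R_\kappa$. Apply the IMS formula to the magnetic kinetic energy; the localization cost $\sum_\kappa|\nabla\chi_\kappa|^2$ contributes $o(|\uZ|^2)$ after summation against the one-particle density. On the bulk region $\{\chi_0>0\}$, combine the magnetic Lieb--Thirring inequality (available for $Z/c^2\le\kappa_0$) with Lieb's electrostatic estimate to recover the molecular TF lower bound for the outer electrons. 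On each inner region $\{\chi_\kappa>0\}$, $\kappa\ge 1$, replace the effective potential by the Coulomb singularity of nucleus $\kappa$ plus a Sommerfeld-controlled remainder (screening the other nuclei and the Hartree contribution via bounds on the TF potential), and bound the localized operator from below by invoking the variational definition of $S(\tau_\kappa)$.

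The main obstacle is the distribution of the magnetic field energy $\tfrac{c^2}{8\pi}\int|\nabla\times\uA|^2\,\dx$ across the partition in the lower bound: unlike the matter kinetic energy, the vector potential $\uA$ is a single global object, and cutting $\uA$ or $\uB$ off with $\chi_\kappa$-type bumps produces boundary gradient terms that recouple the regions. The resolution is to show that a near-optimal $\uA$ can be replaced, at $o(|\uZ|^2)$ cost, by a sum $\uA_0 + \sum_{\kappa\ge 1}\uA_\kappa$ with $\uA_\kappa$ essentially supported in the $\chi_\kappa$-region, using an a priori bound on $\int|\nabla\times\uA|^2$ from the stability threshold together with the fact that the current of the optimal many-body wave function is localized near the nuclei (as can be extracted from the arguments behind Theorem \ref{tfnonrelmagn}). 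A secondary technical point is the quantitative analysis of $S(\tau)$ near $\tau = 0$: this is obtained by comparing the variational problem to its $a = 0$ value from above and to a magnetic Lieb--Thirring lower bound that becomes tight in the weak-field limit, with an interpolation argument yielding continuity at the endpoint.
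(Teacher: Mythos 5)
Your proposal correctly identifies the right multi-scale, Solovej--Spitzer-type strategy and, crucially, puts its finger on the genuine technical obstacle: the magnetic field energy $\tfrac{c^2}{8\pi}\int|\nabla\times\uA|^2$ is a global quadratic functional of $\uA$ and does not split additively under the IMS partition. However, the proposed resolution of this obstacle is not a proof. Replacing a near-optimal $\uA$ by a sum $\uA_0 + \sum_\kappa\uA_\kappa$ of essentially localized pieces ``at $o(|\uZ|^2)$ cost'' cannot be justified by appealing to localization of the many-body current: the vector potential (and the field) produced by a current compactly supported near a nucleus is \emph{not} localized --- it has a long-range tail --- and the cross terms in the quadratic form $\int|\nabla\times(\sum_\kappa\uA_\kappa)|^2$ do not vanish. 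More importantly, an a priori bound on $\int|\nabla\times\uA|^2$ of order $|\uZ|^{7/3}/c^2\sim|\uZ|^{4/3}$ (from stability) is far too crude to control contributions at the Scott order $|\uZ|^2$. The actual proof in \cite{Erdosetal2012Sc} does not attempt such a decomposition of $\uA$ at all; it is built on a separate technical theorem from \cite{Erdosetal2012S}, a second-order semiclassical eigenvalue asymptotics \emph{with self-generated field}, in which the localization of the field energy is done via a delicate comparison between the true minimizing field and its local averages, together with a local Lieb--Thirring-type bound that is uniform over all vector potentials with bounded field energy.

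There is also a structural gap in how you use the function $S$. You define $S(\tau)$ as the $|\uZ|^2$-coefficient of the one-atom ($K=1$) many-body minimization, and then for the lower bound you ``bound the localized operator from below by invoking the variational definition of $S(\tau_\kappa)$.'' For $K=1$ this is circular (it proves nothing), and for $K>1$ it merely reduces the molecular problem to the atomic one --- but the atomic problem with a self-generated field \emph{is} the hard case, and it is not solved anywhere in your sketch. The correct path, followed in \cite{Erdosetal2012Sc,Erdosetal2012S}, is to characterize $S(\kappa)$ as a \emph{one-particle} semiclassical quantity (the second-order coefficient for $\inf_{\uA}\{\tr(T_h(\uA)-|x|^{-1})_- + (\kappa h^2)^{-1}\int|\nabla\times\uA|^2\}$ as $h\to 0$), establish existence, monotonicity, and the limit $S(\kappa)\to\tfrac14$ at the level of this one-particle variational problem, and only then transfer the result to the many-body operator via correlation estimates and the Solovej--Spitzer coherent-state machinery. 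Without the one-particle semiclassical input, the ``variational definition'' gives you nothing to stand on.

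Finally, a smaller point: a single IMS localization on the scale $|\uZ|^{-1/3}$ around each nucleus does not by itself resolve the Scott physics inside each ball, which involves length scales from $Z_\kappa^{-1}$ to $Z_\kappa^{-1/3}$; a genuine multi-scale decomposition, or the semiclassical theorem at all intermediate scales, is needed there too. Your outline has the right flavor, but the heart of the theorem --- the second-order semiclassics with a self-generated field --- is missing.
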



\begin{remarks}
  (1) The theorem is independent of the existence and uniqueness
  (modulo gauge freedom) of the minimizer $\uA$. In fact, it is not
  clear whether the infimum is attained at $\uA=0$ or at a non-trivial
  magnetic field.

  (2) The threshold $\kappa_0$ for which the assertion of Theorem
  \ref{scottnonrelmagn} is shown to hold is less than the number
  $\kappa_{\rm cr}$ above which $H_{N,V,\uA}$ fails to be bounded from
  below.

  (3) The theorem does not assert that
  $S(\kappa)$ is \emph{strictly} decreasing, although this is believed
  to be the case. In fact, it is conceivable that $S(\kappa)$ is constant
  equal to $1/4$ for all $\kappa$ up to the critical value $\kappa_{\rm cr}$
  beyond which it is minus infinity.
\end{remarks}

\begin{remark}
  Theorem \ref{tfnonrelmagn} concerning the leading order holds also
  in the case where the $\uA$-field is quantized
  \cite{ErdosSolovej2010}. So far, the Scott correction for quantized
  $\uA$-fields -- both in the non-relativistic setting of Theorem
  \ref{scottnonrelmagn} and in the relativistic one of Theorem
  \ref{scottchandrasekharmoleculemagnetic} below -- can only be proved
  with a low ultraviolet cutoff of the magnetic field which corresponds
  to a length scale that is longer than the Scott scale, i.e., would
  be of limited physical meaning.
\end{remark}

We emphasize that all results in this subsection concern the
energy. We are not aware of results concerning the density.

\section{Relativistic Coulomb systems}
\label{s:relativistic}

In this section, we discuss relativistic models of large Coulomb systems
and begin with an overview of the relevant underlying one-particle operators.

\subsection{One-particle operators}
\label{ss:1poperators}

We first introduce the relativistic one-particle operators that will
later be used to construct the many-particle operators that we are
mostly interested in.  We state conditions on the coupling constants
of the Coulomb potential for which the operators can be defined and
recall some of their spectral properties.  More detailed treatments
are contained, e.g., in the textbooks by Balinsky and Evans
\cite{BalinskyEvans2011} and Thaller \cite{Thaller1992}, as well as in
the paper by Matte and Stockmeyer \cite{MatteStockmeyer2010} and the
references therein.

\subsubsection{Chandrasekhar operator}

\subsubsection*{Definition}

The Chandrasekhar operator is the simplest relativistic operator
discussed here.  In the literature this operator is sometimes referred
to as Herbst operator or pseudorelativistic operator.  Its origins can
be traced back at least to Chandrasekhar in the context of
(in)stability of neutron stars \cite{Chandrasekhar1931} (see also
\cite{LiebThirring1984,LiebYau1987}). The mathematical investigation
of this operator started with the work of Herbst \cite{Herbst1977};
see also Weder \cite{Weder1974} for electric potentials
$V(x)=|x|^{-\beta}$ with $\beta\in(0,1)$.  The operator is defined as
the Friedrichs extension of the quadratic form -- whenever it is
bounded from below (see \eqref{eq:kato}) -- associated to
\begin{align}
  \label{eq:defchandraz}
  C_{c,Z}^{H} := \sqrt{-c^2\Delta+c^4}-c^2-\frac{Z}{|x|}
  \quad \text{in}\ L^2(\br^3:\bc)
\end{align}
with form domain being the Schwartz space $\cs(\br^3:\bc)$.
In the following, we abbreviate the fractional Laplace
by $|p|:=\sqrt{-\Delta}$.

\subsubsection*{Scaling}

The operator has a natural length scale, namely $c^{-1}$.
Indeed, scaling $x\mapsto x/c$ and writing $\gamma:=Z/c$ shows that
$C_{c,Z}^{H}$ is unitarily equivalent to $c^2 C_{1,\gamma}^{H}=:c^2C_\gamma^H$
with
\begin{align}
  \label{eq:defchandra}
  C_\gamma^{H} := \sqrt{-\Delta+1}-1-\frac{\gamma}{|x|}
  \quad \text{in}\ L^2(\br^3:\C).
\end{align}

\subsubsection*{Kato's inequality}

The sharp Hardy--Kato--Herbst inequality -- for short Kato's
inequality -- states that
\begin{align}
  \label{eq:kato}
  \frac2\pi\int_{\R^3} \frac{|u(x)|^2}{|x|}\,\dx
  \leq \int_{\R^3}|\xi||\hat u(\xi)|^2\,\rd\xi, \quad u\in \cs(\R^3),
\end{align}
where $\hat u(\xi):=(2\pi)^{-3/2}\int_{\R^3}\me{-ix\cdot\xi}u(x)\,\dx$;
see Kato \cite[Chapter~5, Formula~(5.33)]{Kato1966} (without proof)
and Herbst \cite[Theorem~2.5]{Herbst1977}. It follows from Kato's
inequality and the inequalities $|p|\geq\sqrt{p^2+1}-1\geq|p|-1$ that
the quadratic form associated to $C_\gamma^H$ is bounded from below
if and only if $\gamma\leq\gamma_C$ with
\begin{align}
  \label{eq:defgammac}
  \gamma_C := \frac2\pi.
\end{align}
In fact, Raynal et al.~\cite{Raynaletal1994}
showed that the form is strictly greater than $-1$, even if $\gamma=2/\pi$.
Numerical evidence for this fact had been provided by Hardekopf and Sucher
\cite{HardekopfSucher1985}.

\subsubsection*{Domain considerations}
The quadratic form domain of $C_\gamma^H$ is $H^{1/2}(\R^3)$ when $\gamma<2/\pi$.
For $\gamma=2/\pi$ the form domain is the closure of $\cs(\R^3)$
with respect to the norm $(\langle u,C_\gamma^H u\rangle+\|u\|^2)^{1/2}$.
In analogy to the local case, we believe that there are functions in the
form domain of $C_{2/\pi}^H$ for which both sides of Kato's inequality are
infinite and, therefore, that this form domain strictly contains $H^{1/2}(\R^3)$.
For domain considerations, see also Le Yaouanc et al.~\cite{LeYaouancetal1997}.

\subsubsection*{Decomposition into angular momenta}

The spherical symmetry allows to decompose $C_\gamma^H$ into angular
momentum channels. Decomposing $L^2(\R^3)$ into the direct sum
$$
L^2(\br^3)=L^2(\br_+,r^2\,\dr)\otimes L^2(\bs^2,\rd\omega)=\bigoplus_{\ell\in\N_0}\ch_\ell,
$$
induces the decomposition
\begin{align}
  C_\gamma^H = \bigoplus_{\ell\in\N_0} \tilde C_{\ell,\gamma}^H \otimes \one_{K_\ell},
\end{align}
where $\tilde C_{\ell,\gamma}^H$ acts in $L^2(\R_+,r^2\,\dr)$.
Here $\ch_\ell=L^2(\br_+,r^2\,\dr)\otimes K_\ell$
with $K_\ell$ being the eigenspace associated to the $\ell$-th
eigenvalue $\ell(\ell+1)$ of the Laplace--Beltrami operator on $\bs^2$.
Defining $U:L^2(\R_+,r^2\,\dr)\to L^2(\R_+,\dr)$ by
$(Uf)(r)=rf(r)$ for $f\in L^2(\R_+,r^2\,\dr)$,
we may introduce
\begin{align}
  \label{eq:defcellh}
  C_{\ell,\gamma}^H := U \tilde C_{\ell,\gamma}^H U^* \equiv \sqrt{-\frac{\rd^2}{\dr^2}+\frac{\ell(\ell+1)}{r^2}+1}-1-\frac{\gamma}{r} \quad \text{in}\ L^2(\R_+,\dr).
\end{align}

\subsubsection*{Fourier--Bessel transform}

The kinetic energy operator
\begin{align}
  \label{eq:defcl}
  C_\ell := \sqrt{-\frac{\rd^2}{dr^2}+\frac{\ell(\ell+1)}{r^2}+1}-1
  \quad \text{in}\ L^2(\R_+,\dr)
  \end{align}
can be diagonalized by the Fourier--Bessel
transform $\Phi_\ell:L^2(\R_+,\dr)\to L^2(\R_+,\dr)$. For
$u\in \cs(\R_+)$ it acts as
\begin{align}
  u\mapsto (\Phi_\ell u)(k) := \int_0^\infty \dr\, \sqrt{kr}J_{\frac{d-2}{2}+\ell}(kr)u(r),
  \quad k\in\R_+.
\end{align}
Note that $\Phi_\ell=\Phi_\ell^*$ is unitary on $L^2(\R_+,\dr)$.
Just as the Fourier transform diagonalizes translation invariant operators,
the Fourier--Bessel transform diagonalizes translation invariant,
spherically symmetric operators when restricted to a specific angular
momentum channel $\ell$.
Define the operator $p_\ell$ in $L^2(\R_+,\dr)$ by the equality
\begin{align}
  \langle u, p_\ell u\rangle_{L^2(\R_+,\dr)}
  = \langle f,(-\Delta)^{1/2}f\rangle_{L^2(\R^3)} 
\end{align}
for any $f(x)=|x|^{-1}u(|x|)Y_{\ell,m}(x/|x|)$ with $u\in \cs$.
Formally, we have
$$
p_\ell = \sqrt{-\frac{\rd^2}{\dr^2}+\frac{\ell(\ell+1)}{r^2}}.
$$
Then, for any spectral multiplier $F\in L_{\rm loc}^1(\R_+)$
of $p_\ell$, one has
\begin{align}
  (\Phi_\ell (F(p_\ell) f))(k) = F(k)\cdot(\Phi_\ell f)(k), \quad k>0,
\end{align}
and (in weak sense) the kernel of $F(p_\ell)$ is
\begin{align}
  \label{eq:fourierbessel1}
  (F(p_\ell))(r,s)
  = \int_0^\infty \dk\, F(k) \sqrt{kr}J_{\frac{d-2}{2}+\ell}(kr)\sqrt{ks}J_{\frac{d-2}{2}+\ell}(ks),
  \quad r,s>0.
\end{align}

\subsubsection*{Hydrogen eigenvectors and density}
Let $\psi_{n,\ell,m}^C$ denote the $L^2(\R^3)$-normalized eigenvectors
of $C_\gamma^H$. Due to the spherical symmetry of $C_\gamma^H$ we have
\begin{align}
  \psi_{n,\ell,m}^C(x) = \frac{\psi_{n,\ell}^C(|x|)}{|x|}Y_{\ell,m}\left(\frac{x}{|x|}\right),
\end{align}
where $\psi_{n,\ell}^C$ are the $L^2(\R_+,\dr)$-normalized
eigenvectors of $C_{\ell,\gamma}^H$. 
The radial, one-dimensional hydrogenic density in angular momentum channel
$\ell\in\N_0$ is
\begin{align}
  \varrho_{\ell,C}^H(r) := q(2\ell+1)\sum_{n\geq0}|\psi_{n,\ell}^C(r)|^2, \quad r>0,
\end{align}
and the (spherically symmetric) total, three-dimensional density is
\begin{align}
  \rho_C^H(x) = \frac{1}{4\pi |x|^2}\sum_{\ell\geq0}\varrho_{\ell,C}^H(|x|),
  \quad x\in\R^3.
\end{align}
These quantities are indeed well defined as the following
theorem due to \cite[Theorem~1.4]{Franketal2020P} shows.
To state it, we define
\begin{align}
  \label{eq:defsigmagammalequalzero}
  \begin{split}
    \Phi: (-1,1] & \to (-\infty,2/\pi]\\
    \sigma & \mapsto \frac{2\Gamma \left(\frac{1}{2} (3-\sigma)\right) \Gamma\left(\frac{1}{2} (1+\sigma)\right)}{\Gamma \left(\frac{\sigma}{2}\right) \Gamma \left(\frac{1}{2} (2-\sigma)\right)}
    = (1-\sigma)\tan\left(\frac{\pi\sigma}{2}\right).
  \end{split}
\end{align}
This is a monotone increasing function, which satisfies
$\lim_{\sigma\searrow-1}\Phi(\sigma)=-\infty$
and $\Phi(0)=0$, and whose maximal value is $2/\pi = \Phi(1)$.
Consequently, for any $\gamma\in[0,2/\pi]$ there is a unique
$\sigma_\gamma\in[0,1]$
such that $\Phi(\sigma_\gamma)=\gamma$ for $\gamma\in[0,2/\pi]$, i.e.,
\begin{align}
  \label{eq:defsigmagamma}
  \sigma_\gamma = \Phi^{-1}(\gamma)\in[0,1] \quad \text{for}\ \gamma\in[0,2/\pi].
\end{align}

\begin{theorem}[{\cite[Theorem~1.4]{Franketal2020P}}]
  \label{existencerhoh}
  Let $1/2<s\leq 3/4$ if $0<\gamma<(1+ \sqrt 2)/4$ and
  $1/2<s<3/2-\sigma_\gamma$ if $(1+\sqrt 2)/4\leq\gamma<2/\pi$.
  Then for all $\ell\in\N_0$ there is a constant $A_{s,\gamma}>0$
  such that for all $r\in\R_+$ one has
  \begin{align}
    \label{eq:existencerhoh1}
    \begin{split}
      \varrho_{\ell,C}^H(r)
      & \leq q\cdot A_{s,\gamma} \left( \ell+\tfrac 12 \right)^{-4s+1}\\
      & \qquad \times \left[\left(\frac{r}{\ell+\tfrac12}\right)^{2s-1}\one_{\{r\leq\ell+\frac12\}}+\left(\frac{r}{\ell+\tfrac12}\right)^{4s-1}\one_{\{\ell+\frac12< r\leq(\ell+\frac12)^2\}}\right.\\
      & \qquad\qquad \left.+ \left(\ell+\tfrac12\right)^{4s-1}\one_{\{r>(\ell+\frac 12)^2\}}\right].
    \end{split}
  \end{align}
  Moreover, for any $\epsilon>0$, there are constants
  $A_\gamma,A_{\gamma,\epsilon}>0$ such that for all $r\in\R_+$ one has
  \begin{align}
    \label{eq:existencerhoh2}
    \rho_C^H(r) \leq 
    \begin{cases}
      q\cdot A_\gamma \, r^{-\frac32} & \text{if}\ 0<\gamma<(1+\sqrt 2)/4 , \\
      q\cdot A_{\gamma,\epsilon} \left( r^{-2\sigma_\gamma-\epsilon} \one_{\{r\leq 1\}} + r^{-\frac32} \one_{\{r>1\}} \right)
      & \text{if}\ (1+\sqrt 2)/4\leq \gamma<2/\pi.
    \end{cases}
  \end{align}
\end{theorem}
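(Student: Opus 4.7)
The starting point is the identification $\varrho_{\ell,C}^H(r) = q(2\ell+1)\,P^-_\ell(r,r)$, where $P^-_\ell := \mathbf{1}_{(-\infty,0)}(C^H_{\ell,\gamma})$ is the projection onto the negative spectral subspace of the radial Chandrasekhar operator \eqref{eq:defcellh} acting on $L^2(\R_+,\dr)$. The task is thus to produce a pointwise bound on the diagonal of this projection's kernel, regime-by-regime in $r$ against $\ell+\tfrac12$, and then to sum over $\ell$.

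For the per-channel bound \eqref{eq:existencerhoh1} I would exploit the resolvent identity $\psi = (\sqrt{p_\ell^2+1}-1+\mu)^{-1}(\gamma/r)\psi$ satisfied by each $L^2$-normalised eigenfunction of $C^H_{\ell,\gamma}$ with eigenvalue $-\mu\leq 0$, together with the Birman--Schwinger type bound $\|(\gamma/r)^{1/2}(\sqrt{p_\ell^2+1}-1)^{-1}(\gamma/r)^{1/2}\|_{\mathrm{op}}\leq\gamma/\gamma_C<1$, which follows from Kato's inequality \eqref{eq:kato}. Iterating and using that $(\sqrt{p_\ell^2+1}-1)^{-1}$ is dominated by $|p_\ell|^{-1}$ for large momenta and by $2|p_\ell|^{-2}$ for small momenta, one obtains an operator inequality controlling $P^-_\ell$ by a suitable sandwich of weights and fractional powers of $|p_\ell|$. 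Taking the pointwise diagonal via the Fourier--Bessel representation \eqref{eq:fourierbessel1},
\begin{equation*}
(|p_\ell|^{-2s})(r,r) = r\int_0^\infty k^{1-2s}\,J_{\ell+\frac12}(kr)^2\,\dk,
\end{equation*}
and invoking the standard Bessel asymptotics -- polynomial growth $\sim(kr)^{\ell+\frac12}$ for $kr\ll\ell+\tfrac12$, oscillatory amplitude $(kr)^{-1/2}$ for $kr\gg\ell+\tfrac12$ -- produces the three regimes $r\leq\ell+\tfrac12$, $\ell+\tfrac12<r\leq(\ell+\tfrac12)^2$, and $r>(\ell+\tfrac12)^2$ of \eqref{eq:existencerhoh1}. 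The second threshold $(\ell+\tfrac12)^2$ is the classical Kepler turning point for the centrifugal barrier and marks the scale at which the mass term in the Chandrasekhar kinetic energy becomes relevant.

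For \eqref{eq:existencerhoh2} I write $\rho_C^H(r)=(4\pi r^2)^{-1}\sum_\ell \varrho_{\ell,C}^H(r)$ and split the $\ell$-sum according to which of the three regimes applies. For $r>1$ the ``large $r$'' block from $\ell+\tfrac12<\sqrt r$ yields $\sum\sim\sqrt r$, dominates the other two pieces, and gives the desired $r^{-3/2}$ decay. For $r\leq 1$ only the ``small $r$'' block contributes, and convergence of $\sum_\ell(\ell+\tfrac12)^{2-6s}$ for $s>\tfrac12$ bounds it by $\sim r^{2s-1}$, hence $\rho_C^H(r)\lesssim r^{2s-3}$. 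For $\gamma<(1+\sqrt 2)/4$ the choice $s=\tfrac34$ is admissible and produces $r^{-3/2}$ uniformly, whereas for $(1+\sqrt 2)/4\leq\gamma<2/\pi$ one is forced to take $s$ strictly below $\tfrac32-\sigma_\gamma$, which accounts for the $r^{-2\sigma_\gamma-\epsilon}$ term.

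The hard part is the critical-coupling regime $\gamma\nearrow 2/\pi$, where $\sigma_\gamma\nearrow 1$ and the operator inequality of the first step degrades as $s$ approaches $\tfrac32-\sigma_\gamma$: the massless Coulomb operator $|p|-\gamma/|x|$ develops a ground state behaving like $|x|^{-\sigma_\gamma}$, which cannot be controlled by $|p|^{-2s}$ alone near the threshold. The remedy I envisage is to use a ground-state representation, conjugating the massless operator by $r^{\sigma_\gamma-\frac12}$ to produce a manifestly non-negative Dirichlet-type bilinear form, and to handle the mass term via the two-scale interpolation of $(\sqrt{p_\ell^2+1}-1)^{-1}$ mentioned above. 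A secondary technicality is obtaining uniform-in-$\ell$ kernel bounds for $|p_\ell|^{-2s}$ across the transition $kr\sim\ell+\tfrac12$, which requires sharper-than-standard Bessel estimates in the transitional regime.
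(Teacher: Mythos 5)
Your overall architecture is sound: the identity $\varrho_{\ell,C}^H(r)=q(2\ell+1)\,P^-_\ell(r,r)$ with $P^-_\ell=\one_{(-\infty,0)}(C^H_{\ell,\gamma})$, the Fourier--Bessel formula for $|p_\ell|^{-2s}(r,r)$, the three Bessel regimes, and the $\ell$-summation at the end are all the right ingredients and match what the paper does. The fatal gap is the claimed Birman--Schwinger bound, which you use to produce the per-channel operator inequality. Kato's inequality gives $\gamma/r\leq(\gamma\pi/2)|p_\ell|$ and hence $\|(\gamma/r)^{1/2}|p_\ell|^{-1}(\gamma/r)^{1/2}\|\leq\gamma\pi/2<1$, but $(\sqrt{p_\ell^2+1}-1)^{-1}\geq|p_\ell|^{-1}$ -- the inequality runs the wrong way -- and near $p=0$ one has $(\sqrt{p^2+1}-1)^{-1}\sim 2p^{-2}$, so $(\gamma/r)^{1/2}(\sqrt{p_\ell^2+1}-1)^{-1}(\gamma/r)^{1/2}$ is \emph{unbounded}; at small momenta it is essentially the Birman--Schwinger operator of a Schr\"odinger--Coulomb problem at zero energy. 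Regularizing with $\mu>0$ does not help: by the Birman--Schwinger principle, $\|(\gamma/r)^{1/2}(\sqrt{p_\ell^2+1}-1+\mu)^{-1}(\gamma/r)^{1/2}\|\geq1$ for every $\mu$ at or below the magnitude of the lowest eigenvalue (precisely the range of $\mu$ that matters), and the number of eigenvalues $\geq1$ tends to infinity as $\mu\downarrow0$. There is no contraction, and the iteration you sketch has no mechanism to close.

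The correct route -- recorded in the remarks following the theorem in the text -- bypasses Birman--Schwinger entirely. One uses the elementary spectral inequality $\one_{(-\infty,0)}(C^H_{\ell,\gamma})\leq a_\ell^{\,t}(C^H_{\ell,\gamma}+a_\ell)^{-t}$ with $a_\ell\sim a_\gamma(\ell+\tfrac12)^{-2}$ chosen to dominate the lowest eigenvalue in channel $\ell$, and then estimates the diagonal kernel of $(C^H_{\ell,\gamma}+a_\ell)^{-t}$. For $\gamma<1/2$ the Coulomb term is an operator perturbation of $p_\ell$ in every channel via Hardy's inequality and one can pass to the free kernel directly; for $\gamma\geq1/2$ one needs the genuinely nontrivial comparison of Theorem \ref{fms}, namely $|p|^{2s}\lesssim(|p|-\gamma/|x|)^{2s}$ for $0<s<3/2-\sigma_\gamma$, whose proof is where your ground-state-representation idea actually lives. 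This is exactly why the hypotheses restrict $s<3/2-\sigma_\gamma$ and why the threshold $\gamma=(1+\sqrt2)/4$ (where $\sigma_\gamma=3/4$) appears. Two small slips: in the radial $L^2(\R_+,\dr)$ picture the conjugating weight is $r^{1-\sigma_\gamma}$ (the radial part of $|x|^{-\sigma_\gamma}$), not $r^{\sigma_\gamma-1/2}$, and the free kernel used in the paper is $(C_\ell+a_\ell)^{-t}$ rather than $|p_\ell|^{-2s}$, though both yield the same Bessel regimes.
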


\begin{remarks}
  (1) The proof of \eqref{eq:existencerhoh1} uses Bessel kernel bounds for
  $C_{\ell,\gamma}^H$, i.e., bounds for the integral kernel
  \begin{align}
    (C_{\ell,\gamma}^H+a_\ell)^{-t}(r,s)
  \end{align}
  with $a_\ell=a_\gamma(\ell+1/2)^{-2}$, $r,s>0$, and
  $t\in(1,\min\{3-2\sigma_\gamma,3\})$.
  For $\gamma<1/2$ the Hardy potential $\gamma/r$ is an
  \emph{operator perturbation} for $p_{\ell=0}$ by Hardy's inequality.
  Hence, the proof in this case effectively only uses Bessel kernel bounds
  for $C_\ell$ (recall \eqref{eq:defcl}), which can be obtained using the
  Fourier--Bessel transform.
  On the other hand, if $\gamma\geq1/2$, a comparison between powers of
  $C_{\ell,\gamma}^H$ and $p_\ell$ is not straightforward. In this case, the
  proof of \cite[Theorem~1.4]{Franketal2020P} used the comparison result in
  \cite{Franketal2021} (see Theorem \ref{fms} later). This is the reason
  for the assumption $s<3/2-\sigma_\gamma$ and the appearance of $\epsilon>0$
  in \eqref{eq:existencerhoh2}.

  (2) Instead of comparing Bessel kernels of $C_{\ell,\gamma}^H$ and $p_\ell$,
  one can compare the Bessel kernels of $C_{\ell,\gamma}^H$ and $p_\ell-\gamma/r$.
  For $\ell=0$, the latter can immediately be derived using the spectral
  theorem and recent heat kernel bounds for $|p|-\gamma/|x|$ by
  Bogdan et al.~\cite{Bogdanetal2019}.
  Since for $\ell\geq1$ the Hardy potential is again an operator
  perturbation for $\gamma<1/2$ (by Hardy's inequality), Bessel kernel
  bounds for $p_\ell-\gamma/r$ are similar to those for $p_\ell$.
  We expect that this strategy allows to remove the $\epsilon>0$ in
  \eqref{eq:existencerhoh2} when $\gamma>(1+\sqrt2)/4$.
  It is an open problem to prove that the behavior of $\varrho_{\ell,C}^H$
  and $\rho_C^H$ at $r=0$ for $\gamma>(1+\sqrt2)/4$ is optimal.

  (3) We do not expect Bessel kernel bounds for $p_\ell-\gamma/r$ to
  yield precise (probably $\gamma$-dependent) bounds for $\varrho_{\ell,C}^H$
  at the origin in the cases $\ell=0$ and $\gamma\leq1/2$, and $\ell\geq1$
  and $\gamma\leq2/\pi$, because the Bessel kernel bounds for
  $p_\ell-\gamma/r$ are similar to those for $p_\ell$.
  
  (4) The appearance of $\gamma=(1+\sqrt 2)/4$ is technical and comes
  from the restriction $\sigma\leq 3/4$ together with the fact that
  $\sigma_{(1+\sqrt 2)/4}=3/4$.  
\end{remarks}

\subsubsection*{Larger coupling constants for higher angular momenta}

Using a relative to the Fourier--Bessel transform, namely the Mellin
transform, Le Yaouanc et al.~\cite{LeYaouancetal1997}
showed that the largest admissible coupling constant associated to
$C_{\ell,\gamma}^H$ increases as $\ell$ increases.
Independently, Yafaev \cite[(2.4), (2.26)]{Yafaev1999} gave an
alternative proof of this fact and showed
\begin{align}
  \label{eq:yafaev}
  \int_0^\infty \dk\, k|(\Phi_\ell u)(k)|^2
  \geq \frac{2 \Gamma \left(\frac{1}{4} (4+2\ell)\right)^2}{\Gamma \left(\frac{1}{4} (2+2\ell)\right)^2}\int_0^\infty \frac{|u(r)|^2}{r}\,\dr.
\end{align}
For $d=3$ and $\ell=1$, the largest admissible coupling constant
is $\pi/2$, which compares to the critical value $2/\pi$ when $\ell=0$.
We note that similar inequalities hold in dimensions $d$ other then three
and powers $\alpha\in(0,\min\{2,d\})$ of the square root of the Laplacian.

\subsubsection*{Ground state transform}
An alternative representation of the operator $(-\Delta)^{1/2}-\gamma/|x|$
in $L^2(\R^3)$ proceeds via the \emph{ground state transform}.
To that end, recall \eqref{eq:defsigmagammalequalzero}.
The ground state transform makes use of the fact that the radial function
$x\mapsto|x|^{-\sigma}$ is a \emph{(generalized) ground state}
for $|p|-\Phi(\sigma)|x|^{-\alpha}$. It states that
\begin{align}
  \label{eq:orggstransform}
  \langle u,(|p|-\Phi(\sigma)|x|^{-1})u\rangle
  = \frac{1}{2\pi^2}\int_{\R^3}\frac{|v(x)-v(y)|^2}{|x-y|^{4}}\cdot (|x||y|)^{-\sigma}\,\dx\,\dy
\end{align}
where $\sigma\in[0,1]$, $u(x)=|x|^{-\sigma}v(x)$ with
$v\in \cs(\R^3\setminus\{0\})$, cf.~\cite[Proposition~4.1]{Franketal2008H}.
Note that for $\alpha=2$ the ground state transform has been known long
before, see, e.g., \cite[p.~169]{ReedSimon1975} for a textbook treatment
when $d=3$.
For a further study of the ground state transform in the
fractional case, see \cite{FrankSeiringer2008}.

\subsubsection*{Spectrum}
Although the eigenvalues $\lambda_{Z,n,\ell}$ ($n\in\N_0$) of $C_{c,Z}^{H}$
are not explicitly known, the inequality $\sqrt{p^2+1}-1\leq p^2/2$
and the lower bound of \cite[Theorem~2.2]{Franketal2009} imply the
inequalities
\begin{align}
  -\frac{Z^2}{2(n+\ell+1)^2} \geq \lambda_{Z,n,\ell} \geq -\const\cdot\frac{Z^2}{(n+\ell+1)^2},
\end{align}
where the constant in the second inequality can be chosen independently of $\gamma\in[0,2/\pi]$.
The expression on the left side is just the $n$-th eigenvalue of the
hydrogen operator \eqref{eq:defhydrogenoperator} in angular momentum
channel $\ell$. Thus, although the relativistic eigenvalues are smaller
than the non-relativistic ones, their magnitudes in $Z$ are the same and
many of their summability properties with respect to $n$ are similar.

Finally, the spectrum of $C_\gamma^H$ in $[0,\infty)$ is purely absolutely
continuous, the singular continuous spectrum is empty, and there are
no embedded eigenvalues \cite[Theorem~2.3]{Herbst1977}.
In particular, there is no zero eigenvalue, a fact that we will use later.

\subsubsection*{Physical shortfalls}

Although $C_\gamma^H$ is mathematically well understood, it has a number of
physical deficits.
For instance, the restriction $\gamma\leq2/\pi$ implies that only atoms with
nuclear charge $<88$ can be described.
Moreover, the predicted ground state energies for heavy atoms are much too low.
This can already be anticipated by comparing the ground state energies for
hydrogen with $c=1$ and coupling $\gamma$ close to $2/\pi$, which are
$\approx-0.5$ in the Chandrasekhar model \cite[p.~106]{Raynaletal1994}
and $\approx-0.06$ (cf.~\eqref{eq:eigenvalue}) in the Dirac model, respectively.

Although the model is unsuitable for the quantitative description
of systems with strong attractive external Coulomb forces, Chandrasekhar
\cite{Chandrasekhar1931} used it successfully for attractive two-particle
Coulomb forces in his Nobel prize winning estimate on the mass necessary to
collapse a star to a white dwarf.
Despite its mathematical simplicity and its success in correctly describing
some qualitative features of relativistic Coulomb systems,
it is desirable to examine models that also lead to quantitatively correct
predictions for ground state properties.
Such models are, e.g., based on the Coulomb--Dirac operator, which we discuss next.

\subsubsection{Coulomb--Dirac operator}

\subsubsection*{Free Dirac operator}
In 1928 Dirac \cite{Dirac1928,Dirac1928II} derived a Lorentz invariant
equation of motion for quantum mechanical particles with spin moving
in an external electromagnetic field, the so-called Dirac equation.
We refer to \cite{Bethe1933,Thaller1992} for comprehensive treatments.
For a free particle, the equation reads
\begin{align}
  \label{eq:Diracgleichung}
 i\partial_t\psi(t,x)=\left(-ic\ualpha\cdot\nabla+\beta c^2\right)\psi(t,x)
\end{align}
with the Dirac matrices $\ualpha=(\alpha_1,\alpha_2,\alpha_3)$,
\begin{align*}
  \alpha_j=\left(\begin{array}{cc}
             0_{\bc^2} & \sigma_j\\
             \sigma_j & 0_{\bc^2}
           \end{array}\right),
\end{align*}
the Pauli matrices $\usigma:=(\sigma_1,\sigma_2,\sigma_3)$, and
$\beta=\diag(1,1,-1,-1)$.
The operator on the right side of \eqref{eq:Diracgleichung} is called the
free Dirac operator. It acts on states $\psi_t(x)\in\bc^4$, called
Dirac spinors. The underlying Hilbert space is $L^2(\br^3:\bc^4)$.
The domain on which the free Dirac operator can be realized as a self-adjoint
operator is $H^1(\br^3:\bc^4)$.
The Foldy--Wouthuysen transform $U_{\rm FW}$ allows to perform a
block diagonalization, whereby the free Dirac operator takes the form
\begin{align}
  U_{\mathrm{FW}}\left(-ic\ualpha\cdot\nabla+\beta c^2\right)U_{\mathrm{FW}}^*
  =\left(\begin{array}{cc}
           \sqrt{-c^2\Delta+c^4} & 0\\
           0 & -\sqrt{-c^2\Delta+c^4}
         \end{array}\right).
\end{align}
This shows that the spectrum equals $(-\infty,-c^2]\cup[c^2,\infty)$.
Physically, this means that states can possess ``negative energy'' and
that there is an infinitely deep energy reservoir, the so-called Dirac
sea. By adding electromagnetic fields and the charge conjugation operator,
one can interpret states with negative energy as ``antiparticles'',
i.e., particles with same mass but opposite charge. Such particles are
called positrons.

\subsubsection*{Self-adjoint extensions of the Coulomb--Dirac operator}
The one-particle Dirac operator describing the hydrogen atom can initially
be defined on $\cs(\R^3:\C^4)$ and is formally given by the differential
operator
\begin{align}
  D_{c,Z}^H := -ic\ualpha\cdot\nabla + c^2\beta-\frac{Z}{|x|} \quad \text{in}\ L^2(\br^3:\bc^4).
\end{align}
Scaling $x\mapsto x/c$ and writing $\gamma:=Z/c$ shows that
$D_{c,Z}^H$ is unitarily equivalent to
\begin{align}
  \label{eq:defcoulombdirac}
  c^2\left[-i\ualpha\cdot\nabla + \beta-\frac{\gamma}{|x|}\right]
  =: c^2 D_{1,\gamma}^H \equiv c^2 D_\gamma^H.
\end{align}

Weidmann \cite{Weidmann1971} showed that $D_\gamma^H$ is essentially
self-adjoint on $\cs(\R^3\setminus\{0\}:\C^4)$ if and only
if $|\gamma|<\sqrt3/2$, see also \cite[Theorem~4.4]{Thaller1992}.
For $\gamma\in[\sqrt3/2,1]$ there is a ``distinguished'' (sometimes called
``physically relevant'') self-adjoint extension of $D_\gamma^H$.
For $\gamma\in(\sqrt3/2,1)$ it was established by Schmincke \cite{Schmincke1972},
W\"ust \cite{Wuest1975}
(see also Kalf, Schmincke, Walter, and W\"ust \cite{Kalfetal1974} for a
review of these results),
Nenciu \cite{Nenciu1976}, and Klaus and W\"ust \cite{KlausWust1978}.
According to Schmincke and W\"ust, this realization stands out by the property
that all states in the domain of the Coulomb--Dirac operator had finite potential
energy. On the other hand, Nenciu's realization was distinguished by the fact
that states had finite kinetic energy.
Klaus and W\"ust showed that both realizations coincide, and that the essential
spectrum is $(-\infty,1]\cup[1,\infty)$, see \cite{KlausWust1979}
(or \cite[p.~117]{Thaller1992} for a textbook treatment).
In summary, the domain $\dom(D_\gamma^H)$ of the distinguished realization
satisfies $H^1(\br^3:\bc^4)\subset \dom(D_\gamma^H)\subset H^{1/2}(\br^3:\bc^4)$,
and the quadratic form domain is $H^{1/2}(\br^3:\bc^4)$. In particular,
the expectation values of both kinetic and potential energy are finite in
$\dom(D_\gamma^H)$; this motivates the term ``physically relevant extension''.
With the help of the sharp Hardy--Dirac inequality~\cite{Dolbeaultetal2000},
\begin{align}
  \label{eq:hardydirac}
  \int_{\R^3}\frac{|\phi(x)|^2}{|x|}\,\dx
  \leq \int_{\R^3}\left(\frac{|(\usigma\cdot\nabla\phi)(x)|^2}{1+|x|^{-1}}+|\phi(x)|^2\right)\,\dx,
  \quad \phi\in H^1(\R^3:\C^2),
\end{align}
Esteban and Loss \cite{EstebanLoss2007} constructed a distinguished
self-adjoint extension for $\gamma=1$.
States in the domain of this operator need not have finite kinetic and
potential energy separately.
We remark that similar results in two dimensions (where the critical
coupling is $\gamma=1/2$) were proved by Warmt \cite{Warmt2011}.
In this review we will only focus on the three-dimensional case and $\gamma<1$.

\subsubsection*{Partial wave analysis}

Since $D_\gamma^H$ is spherically symmetric, one can, analogously to
the angular momentum decomposition for spherically symmetric scalar
operators, perform a partial wave decomposition, see, e.g.,
\cite{Evansetal1996}, \cite[Section~2.1]{BalinskyEvans2011},
\cite[Sections~4.6.3-4.6.5]{Thaller1992}, and
\cite[Appendix A]{MerzSiedentop2020}.
We begin by observing that those of the spherical $\C^2$-spinors
\begin{equation}
  \label{2.6}
  \Omega_{\ell,m,s}(\omega) :=
  \begin{pmatrix} 2s\sqrt{\frac{\ell+\frac12+2sm}{2\ell+1}}
    Y_{\ell,m-\frac12}(\omega)\\ \sqrt{\frac{\ell+\frac12-2sm}{2\ell+1}}
    Y_{\ell,m+\frac12}(\omega)
  \end{pmatrix} 
\end{equation}
with $\ell=0,1,2,...$ and $m=-\ell-\frac12,...,\ell+\frac12$, that
do not vanish, form an orthonormal basis of $L^2(\mathbb{S}^2:\C^2)$,
see, e.g., \cite[(7)]{Evansetal1996}.
Moreover, they are joint eigenfunctions of $L^2$, $J^2$ ($J=L+S$ being
the total angular momentum), and $J_3$ with respective eigenvalues
$\ell(\ell+1)$, $(\ell+s)(\ell+s+1)$, and $m$.

Introducing the spin-orbit operator $K=\beta(J^2-L^2+1/4)$, there is
an orthonormal basis of eigenvectors $\Phi^\sigma_{\kappa,m}$ of
$L^2(\bs^2:\C^4)$ such that
$J^2\Phi^\sigma_{\kappa,m}=j_\kappa(j_\kappa+1)\Phi^\sigma_{\kappa,m}$,
$J_3\Phi^\sigma_{\kappa,m}=m\Phi^\sigma_{\kappa,m}$, and
$K\Phi^\sigma_{\kappa,m}=\kappa\Phi^\sigma_{\kappa,m}$ with
the total angular momentum $j_\kappa$ and orbital
angular momentum $\ell_\kappa$ defined as
\begin{equation}
  \label{l}
  j_\kappa:=|\kappa|-\tfrac12\ \text{and}\
  \ell_\kappa:= j_\kappa-\tfrac12\sgn(\kappa)= |\kappa|-\theta(\kappa),
\end{equation}
the magnetic quantum numbers $m\in\{-j_\kappa,...,j_\kappa\}$,
the spin-orbit coupling
\begin{equation}
  \label{zp}
  \kappa\in\zp:=\gz\setminus\{0\},
\end{equation}
and $\sigma\in\{+,-\}$.
A standard choice is
\begin{equation}\label{fi}
  \Phi^+_{\kappa,m} :=  \left(\begin{array}{c}
          {\rm i}\sgn(\kappa)\Omega_{\ell_\kappa,m,\frac12\sgn(\kappa)}\\
          0
                            \end{array}
                          \right), \
                          \Phi^-_{\kappa,m}:=\left(\begin{array}{c}
                              0\\
                              -\sgn(\kappa)\Omega_{\ell_\kappa+\sgn(\kappa),m,-\frac12\sgn(\kappa)}
                            \end{array}
                          \right).    
\end{equation}
Using these spinors, we introduce the spaces
\begin{align}
  \label{a3}
  \gh_{\kappa,m} :=& \mathrm{span}\{x\mapsto \tfrac{f^+(|x|)}{|x|} \Phi_{\kappa,m}^+(\tfrac x{|x|}) + \tfrac{f^-(|x|)}{|x|}\Phi_{\kappa,m}^-(\tfrac x{|x|}):f^+,f^-\in L^2(\R_+)\},\\
  \label{a2}
  \gh_\kappa:=& \bigoplus_{m=-j_\kappa}^{j_\kappa}\gh_{\kappa,m}.
\end{align}
These spaces form an orthogonal decomposition of $L^2(\R^3:\C^4)$.
Note that Dirac operators with radial potentials leave the spaces
$\gh_{\kappa,m}$ invariant.
To see this, let $f\in\gh_{\kappa,m}\cap H^{1}(\R^3:\C^4)$ and
$g\in\gh_{\kappa',m'}\cap H^{1}(\R^3:\C^4)$. Then one has
\begin{equation}
  \label{eq:radialdirac}
  \begin{split}
    & \langle f,D_\gamma^H g\rangle_{L^2(\R^3:\C^4)}\\
    & \quad = \left\langle \left(\begin{array}{c}
                      f^+\\ f^-
                    \end{array}
    \right),\left(\begin{array}{cc}
                   1-\frac{\gamma}{r} & -\frac{\rd}{\dr}-\frac{\kappa}{r}\\
                   \frac{\rd}{\dr}-\frac{\kappa}{r} & -1-\frac{\gamma}{r}
                 \end{array}
    \right)\left(\begin{array}{c}
                   g^+\\ g^-
                 \end{array}
    \right)\right\rangle_{L^2(\R_+:\C^2)}\delta_{\kappa\kappa'}\delta_{mm'},
  \end{split}
\end{equation}
see also \cite[(7.105)]{Thaller1992}.

\subsubsection*{Spectrum}
The Coulomb--Dirac operator $D_\gamma^H$ has no embedded eigenvalues
(Kalf \cite{Kalf1976}) and no singular continuous spectrum in $[0,\infty)$
(Vogelsang \cite{Vogelsang1988}, and Richard and Tiedra de Aldecoa
\cite{RichardTiedraDeAldecoa2007}).
The lowest eigenvalue is $\lambda_1=\sqrt{1-\gamma^2}$ and one has
$\lim_{k\to\infty}\lambda_k=1$.
The eigenvalues of $D_\gamma^H$ are explicitly known and given by
\begin{align}
  \label{eq:eigenvalue}
  \lambda_{n,\kappa} = \left(1+\tfrac{\gamma^2}{\left(n+\sqrt{\kappa^2-\gamma^2}\right)^2} \right)^{-1/2}.
\end{align}
They only depend on $(\kappa,n)\in(-\N\times\N)\cup(\N\times\N_0)$.
Sommerfeld \cite{Sommerfeld1916} anticipated these eigenvalues 
in the framework of the old relativistic theory of quanta
even before the Coulomb--Dirac operator was written down.
Darwin \cite{Darwin1928}, Gordon \cite{Gordon1928},
and Pidduck \cite{Pidduck1929} solved the eigenvalue equation
for the Coulomb--Dirac operator only 12 years later;
see also Bethe's \cite{Bethe1933} or
Thaller's \cite[Section~7.4]{Thaller1992} textbooks for comprehensive treatments
and Mawhin and Ronveaux \cite{MawhinRonveaux2010} for interesting
historical comments.
In particular, one has the bounds
$$
-\frac{\gamma^2}{2(n+\ell+1)^2} \geq \lambda_{n,\kappa}-1 \geq -\const\cdot\frac{\gamma^2}{(n+\ell+1)^2},
$$
see also \cite[Lemma~1]{HandrekSiedentop2015}.

\begin{remark}
  If the Coulomb potential $\gamma/|x|$ is replaced by a more general
  measurable Hermitian $4\times 4$-matrix-valued function
  $V:\R^3\to\C^{4\times 4}$ with
  $0\leq V(x)\leq\gamma/|x|\otimes\one_{\C^4}$, then the eigenvalues
  of $-i\ualpha\cdot\nabla + \beta-V$ can be computed using analogues
  of the classical Courant--Fischer min-max principle,
  (cf.~\cite[Theorems~XIII.1-2]{ReedSimon1978}) originally written down
  by Talman \cite{Talman1986} and Datta and Devaiah
  \cite{DattaDeviah1988}, mathematically established by Esteban and
  S{\'e}r{\'e} \cite{EstebanSere1997} and by
  \cite{GriesemerSiedentop1999,Griesemeretal1999}, and further
  developed by Dolbeault et
  al.~\cite{Dolbeaultetal2000O,Dolbeaultetal2000}, Morozov and
  M\"uller \cite{MorozovMueller2015}, M\"uller \cite{Mueller2016},
  Esteban et al.~\cite{Estebanetal2019,Estebanetal2021}, and Schimmer
  et al.~\cite{Schimmeretal2020}.  The latter turned the principle
  around and used it to define the Hamiltonian, since the maximization
  leads to the Hardy type inequality \eqref{eq:hardydirac}.
\end{remark}

\subsubsection*{Hydrogen eigenvectors and density}

Similarly, the eigenvectors $\psi_{n,\kappa,m}^D$ of the eigenvalue equation
$D_{\gamma}^H\psi_{n,\kappa,m}^D=\lambda_{n,\kappa}\psi_{n,\kappa,m}^D$ are also
well known, see, e.g., Pidduck \cite{Pidduck1929} (in terms of Laguerre
polynomials) or Gordon \cite{Gordon1928} and Darwin \cite{Darwin1928} (in
terms of hypergeometric confluent functions). 
For nice pictorial representations, see, e.g., White \cite{White1931}
and, for textbook references, see Bethe \cite[Formula (9.37)]{Bethe1933} and
Thaller \cite[p.~427]{Thaller2005}.
The \emph{three-dimensional} density for a Bohr atom for a given $\gamma\in(0,1)$
in spin-orbit channel $\kappa\in\dot\Z$ is
\begin{align}
  \rho_{\kappa,D}^H(x) := \sum_{n=\theta(-\kappa)}^\infty\sum_{m=-j_\kappa}^{j_\kappa}
  \sum_{\sigma=1}^4|\psi_{n,\kappa,m}^D(x,\sigma)|^2, \quad x\in\R^3,
\end{align}
and the total, three-dimensional hydrogenic density is 
\begin{equation}
  \label{rhoh}
  \rho_D^H(x) := \sum_{\kappa\in\zp}\rho_{\kappa,D}^H(x), \quad x\in\R^3.
\end{equation}
These quantities are indeed well defined, as the following theorem
demonstrates. To state it, let
\begin{align}
  \label{eq:defsigmanu}
  \Sigma_\gamma :=1-\sqrt{1-\gamma^2}\in[0,1] \quad \text{for}\ \gamma\in[0,1].
\end{align}

\begin{theorem}[{\cite[Theorem~2.3]{MerzSiedentop2020}}]
  \label{existencerhohdirac}
  Let $1/2<s\leq3/4$, if $\gamma\in(0,\sqrt{15}/4)$ and
  $1/2<s<3/2-\Sigma_\gamma$, if $\gamma\in[\sqrt{15}/4,1)$.
  Then for all $\kappa\in\zp$ there is a constant $A_{s,\gamma}>0$ such that
  for all $x\in\rz^3\setminus\{0\}$ one has
  \begin{multline*}
    \rho_{\kappa,D}^H(x)
    \leq A_{s,\gamma} \frac{|\kappa|^{1-4s}}{|x|^2}\\
    \times\left[\left(\frac{|x|}{|\kappa|}\right)^{2s-1}\one_{\{|x|\leq |\kappa|\}}+\left(\frac{|x|}{|\kappa|}\right)^{4s-1}\one_{\{|\kappa|\leq |x|\leq |\kappa|^2\}}
      + |\kappa|^{4s-1}\one_{\{|x|\geq |\kappa|^2\}}\right].
  \end{multline*}
  Moreover, for any $\varepsilon>0$
  there are constants $A_{\gamma,\varepsilon},A_\gamma>0$
  such that for all $x\in\rz^3\setminus\{0\}$ one has   
  \begin{align}
    \label{eq:boundrhoh}
    \rho_D^H(x)\leq
    \begin{cases}
      A_\gamma |x|^{-3/2} & \text{if}\ \gamma\in(0,\sqrt{15}/4]\\
      A_{\gamma,\varepsilon}\left(|x|^{-2\Sigma_\gamma-\varepsilon}\one_{\{|x|\leq1\}}
        + |x|^{-3/2}\one_{\{|x|>1\}}\right) & \text{if}\ \gamma\in(\sqrt{15}/4,1)
    \end{cases}.
  \end{align}
\end{theorem}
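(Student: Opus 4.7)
The plan is to mimic closely the proof of Theorem~\ref{existencerhoh}, with angular momentum channels replaced by spin-orbit channels via the partial wave decomposition \eqref{a2}--\eqref{a3}. Fix $\kappa\in\zp$. Writing each Dirac eigenfunction $\psi^D_{n,\kappa,m}$ in the form \eqref{fi} with radial components $f_{n,\kappa}^{\pm}(r)/r$, summing $|\psi^D_{n,\kappa,m}(x)|^2$ over $m\in\{-j_\kappa,\ldots,j_\kappa\}$, and invoking the Uns\"old identity for the $\C^2$-spherical spinors $\Omega_{\ell,m,s}$, one eliminates the angular variable and produces the multiplicity $2j_\kappa+1=2|\kappa|$. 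This reduces the claim to pointwise diagonal estimates for the spectral projector $\Pi_{\kappa,\gamma}$ onto the discrete subspace of the radial Coulomb--Dirac operator $T_{\kappa,\gamma}$ on $L^2(\R_+:\C^2)$, namely the $2\times 2$ matrix operator appearing in \eqref{eq:radialdirac}.

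Next, I would exploit the explicit spectrum \eqref{eq:eigenvalue}, which yields $1-\lambda_{n,\kappa}^2\sim \gamma^2(n+|\kappa|)^{-2}$ and, in particular, $\sup_n(1-\lambda_{n,\kappa}^2)\lesssim_\gamma |\kappa|^{-2}$. A spectral theorem argument together with the positivity of the projected resolvent then produces, for admissible $t>1$ in a range to be specified below,
\begin{align*}
\Pi_{\kappa,\gamma}(r,r) \;\lesssim_\gamma\; |\kappa|^{-2t}\, K_{\kappa,\gamma,t}(r,r),
\end{align*}
where $K_{\kappa,\gamma,t}(r,r)$ denotes the diagonal of an appropriate Bessel-type resolvent kernel built from $T_{\kappa,\gamma}^2$. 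This is in strict analogy with the role played by $(C_{\ell,\gamma}^H+a_\ell)^{-t}(r,r)$ in the proof of Theorem~\ref{existencerhoh}.

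The principal obstacle is to obtain pointwise bounds on $K_{\kappa,\gamma,t}(r,r)$ uniformly in $\kappa$. For $\gamma<1/2$ the Coulomb term $\gamma/r$ is an operator perturbation of the free radial Dirac square by Hardy's inequality applied to each component, and the bound follows from a Fourier--Bessel computation analogous to \eqref{eq:fourierbessel1}; the three-regime shape ($r\leq|\kappa|$, $|\kappa|\leq r\leq|\kappa|^2$, $r\geq|\kappa|^2$) emerges from the natural scaling of the Bessel kernel. In the critical range $\gamma\in[\sqrt{15}/4,1)$ the Coulomb singularity is no longer perturbative, and one must invoke a Dirac analogue of the comparison result in Theorem~\ref{fms} to estimate powers of $T_{\kappa,\gamma}^2$ by powers of the free radial Dirac square, at the cost of the $\Sigma_\gamma$-dependent restriction $s<3/2-\Sigma_\gamma$ and of the loss $\varepsilon>0$ in \eqref{eq:boundrhoh}. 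The threshold $\gamma=\sqrt{15}/4$ is precisely the point where $\Sigma_\gamma=3/4$, mirroring the Chandrasekhar threshold $\sigma_\gamma=3/4$ at $\gamma=(1+\sqrt{2})/4$.

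Finally, summing the per-channel bound on $\rho^H_{\kappa,D}$ over $\kappa\in\zp$ with the multiplicity $2|\kappa|$ yields \eqref{eq:boundrhoh}: the tail $|x|^{-3/2}$ at infinity comes from the geometric sum of the semiclassical regime $r\geq|\kappa|^2$ over channels with $|\kappa|\leq\sqrt{|x|}$, while the short-distance behavior near the nucleus is dictated by the innermost channels and produces the Sommerfeld-type exponent $2\Sigma_\gamma$ in the critical coupling range. The hardest ingredient is the matrix-valued comparison estimate in the critical range, because $T_{\kappa,\gamma}^2$ contains off-diagonal cross terms proportional to $\gamma\kappa/r^2$ that are borderline with respect to Hardy's inequality; the natural route is to adapt the approach of \cite{Franketal2021} in combination with a ground-state transform modelled on \eqref{eq:orggstransform} in order to identify the correct exponent at the nucleus.
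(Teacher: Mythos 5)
Your outline follows exactly the route the authors indicate --- the paper itself gives no detailed proof of Theorem~\ref{existencerhohdirac} beyond the one-line remark that it ``is similar to that of Theorem~\ref{existencerhoh},'' and your sketch is a faithful elaboration of that remark: partial-wave reduction to the $2\times 2$ radial operator of \eqref{eq:radialdirac}, an Uns\"old-type identity to produce the multiplicity $2j_\kappa+1=2|\kappa|$, Bessel-kernel resolvent bounds weighted by the explicit eigenvalue spacing $1-\lambda_{n,\kappa}^2\lesssim|\kappa|^{-2}$, a Hardy-perturbation argument for small $\gamma$, and a comparison theorem in the spirit of Theorem~\ref{fms} for $\gamma\geq\sqrt{15}/4$ (the threshold being precisely where $\Sigma_\gamma=3/4$, exactly parallel to $\sigma_\gamma=3/4$ at $\gamma=(1+\sqrt{2})/4$ in the Chandrasekhar case). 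Two points deserve more care than your sketch gives them. First, your phrase ``Hardy's inequality applied to each component'' understates the difficulty even in the subcritical range: squaring $T_{\kappa,\gamma}$ produces not only the diagonal terms $-\partial_r^2+\kappa(\kappa\mp1)/r^2$ plus $(1\mp\gamma/r)^2$ but also off-diagonal operators of the form $\tfrac{2\gamma}{r}\partial_r+\tfrac{\gamma(2\kappa\mp1)}{r^2}$, so the comparison is genuinely matrix-valued and not a component-wise Hardy bound; the authors of \cite{MerzSiedentop2020} in fact sidestep the full square by exploiting the block structure of the projected (Furry) operator. Second, in the final channel sum you should be careful that $\rho_{\kappa,D}^H$ as defined already contains the $m$- and spin-sums, hence the multiplicity $2|\kappa|$; the sum $\rho_D^H=\sum_{\kappa\in\zp}\rho_{\kappa,D}^H$ carries no additional weight, and the $|x|^{-3/2}$ tail arises, just as you say, from the channels with $|\kappa|\lesssim\sqrt{|x|}$, while the $|x|^{-2\Sigma_\gamma-\varepsilon}$ behavior near the origin is contributed by $\kappa=\pm1$. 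With those caveats, your identification of the matrix-valued comparison estimate in the critical regime as the hardest ingredient, and the suggestion to adapt the ground-state transform ideas of \eqref{eq:orggstransform} and the machinery of \cite{Franketal2021}, is exactly the right assessment.
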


The proof of this theorem is similar to that of
Theorem~\ref{existencerhoh}.

\subsubsection*{Instability}
In the Chandrasekhar model we call an atom ``unstable'', if the
coupling constant is so large that the operator is unbounded from
below.  Recall that for $\gamma<1$ the Coulomb--Dirac operator
$D_\gamma^H$ has a ``distinguished'' self-adjoint extension with the
property that the expectation values of both kinetic and potential
energy are finite in $\dom(D_\gamma^H)$.  \emph{Instability} for the
Coulomb--Dirac operator refers to the fact that all self-adjoint
extensions of the Coulomb--Dirac operator have infinitely many
eigenfunctions with infinite expectation value of the potential
energy.  This situation occurs when $\gamma>1$, i.e., in the case when
the lowest eigenvalue of $D_\gamma^H$ has hit the lower threshold of
the continuous spectrum.  See, e.g., Hogreve
\cite[Theorem~2.1.(iii)]{Hogreve2013} and the references therein for
details and Thaller \cite[p.~218]{Thaller1992} for an overview.

\subsubsection*{Brown--Ravenhall operator}

According to Dirac, the ``vacuum'', i.e., the situation in which no
(negatively charged) electrons with positive energies are present, is
described by a completely filled negative energy continuum of the Dirac
operator (the so-called ``Dirac sea''), whereas only solutions to the
free Dirac equation with positive kinetic energy should be regarded
as ``physical electrons''.
Dirac proposed the following interpretation \cite[p.~362]{Dirac1930A}:
\begin{quote}
  ``\emph{The most stable states for an electron (the states of lowest energy) are
    those with negative energy and very high velocity. All the electrons in the
    world will tend to fall into these states with emission of radiation. The Pauli
    exclusion principle, however, will come into play and prevent more than one
    electron going into any one state. Let us assume there are so many electrons
    in the world that all the most stable states are occupied, or, more accurately,
    that all the states of negative energy are occupied except perhaps a few of small
    velocity. Any electrons with positive energy will now have very little chance
    of jumping into negative-energy states and will therefore behave like electrons
    are observed to behave in the laboratory. We shall have an infinite number of
    electrons in negative-energy states, and indeed an infinite number per unit
    volume all over the world, but if their distribution is exactly uniform we
    should expect them to be completely unobservable. Only the small departures
    from exact uniformity, brought about by some of the negative-energy states being
    unoccupied, can we hope to observe.}''
\end{quote}

Shortly after Dirac's equation was formulated, Breit
\cite{Breit1929,Breit1930,Breit1932} derived a relativistic wave equation
for helium using the quantum electrodynamics of Heisenberg and Pauli,
which reads
\begin{subequations}
  \label{eq:breit}
  \begin{align}
    \label{eq:breit1}
    i\frac{\partial\psi(x_1,x_2)}{\partial t}
    & = \left[(-i\ualpha\cdot\nabla + \beta)^{(1)} + (-i\ualpha\cdot\nabla + \beta)^{(2)} - \frac{\gamma}{|x_1|} - \frac{\gamma}{|x_2|} + \frac{1}{|x_1-x_2|}\right. \\
    \label{eq:breit2}
    & \hspace{-1em} \left. - \frac{1}{2|x_1-x_2|}\left(\ualpha^{(1)}\cdot\ualpha^{(2)} + \frac{\ualpha^{(1)}\cdot(x_1-x_2)\cdot\ualpha^{(2)}\cdot(x_1-x_2)}{|x_1-x_2|^2}\right)\right]\psi(x_1,x_2).
  \end{align}
\end{subequations}
As is explained in \cite[p.~552]{BrownRavenhall1951}
``the superscripts (1) and (2) indicate operation on the coordinate or
spinor components of the first or second electron respectively in the
sixteen-component wave function $\psi(x_1,x_2)$''.
Up to the single term in \eqref{eq:breit2}, Breit's equation coincides with
the naive extension of Dirac's equation for two particles.
Brown and Ravenhall \cite{BrownRavenhall1951} observed that the energies predicted
by Breit's equation did not match the experimentally measured values very well
and explained their observation as follows \cite[pp.~552--553]{BrownRavenhall1951}:
\begin{quote}
  ``\emph{Because of the negative-energy states, equation \eqref{eq:breit}
    is in fact meaningless.
    This can be seen by constructing a solution of [the Dirac equation for two
    electrons in absence of electron-electron repulsion] and then turning on the
    inter-electron interaction slowly. The system can make real transitions to
    states where one electron has a large negative energy and the other electron
    is in the positive-energy continuum; thus equation \eqref{eq:breit} has no
    stationary solutions if interpreted in this way.}''
\end{quote}
In the language of spectral theory, the spectrum of a two-particle
Coulomb--Dirac operator occupies the whole real axis (already without
electron-electron repulsion), and all eigenvalues are embedded.
This phenomenon is sometimes called \emph{Brown--Ravenhall disease}
\cite{Sucher1987,Pilkuhn2005,ReiherWolf2009}.
As is well known from the non-relativistic theory, these are likely to turn
into resonances when the electron-electron repulsion is turned on, which
leads to unphysical consequences, such as the instability of the atom.
We conclude this discussion by mentioning that despite these physical
deficits, Oelker \cite{Oelker2019} recently showed that the single-particle
Dirac operator may be extended to a self-adjoint multi-particle operator.

To remedy the above serious defects, Brown and Ravenhall proposed to only
allow states with positive energy with respect to the free Dirac operator,
i.e., they restricted the Hilbert space of admissible states to
\begin{align}
  \gh_{c,0} := \Lambda_{c,0}(L^2(\br^3:\bc^4))
  := \one_{(0,\infty)}(-ic\ualpha\cdot\nabla+c^2\beta)(L^2(\br^3:\bc^4)).
\end{align}
The energy of an electron in the Brown--Ravenhall picture is then given by
\begin{align}
  \langle\psi,(D_{c,Z}^H-c^2)\psi\rangle,
  \quad \psi\in\Lambda_{c,0}\cs(\R^3:\C^4).
\end{align}
The work \cite{Evansetal1996} showed that this energy is
bounded from below, if and only if $\gamma=Z/c\leq \gamma_B$ with
\begin{align}
  \label{eq:defgammab}
  \gamma_B:= \frac{2}{\pi/2+2/\pi},
\end{align}
which implies $Z<125$.
For such $\gamma$ the energy form can be extended to a closed quadratic
form in $\gh_{c,0}$ with form domain $\gh_{c,0}\cap \cs(\R^3:\C^4)$.
The resulting self-adjoint operator constructed according to Friedrichs
is called Brown--Ravenhall operator and is denoted by $B_{c,Z}$.
By scaling $x\mapsto x/c$ the Brown--Ravenhall operator is seen to be
unitarily equivalent to $c^2B_{1,\gamma}$ with $\gamma=Z/c$. We write
$B_\gamma:=B_{1,\gamma}$.

Although we will not use it in this review, we record the following convenient
representation of the Brown--Ravenhall operator as a self-adjoint operator in
$L^2(\R^3:\C^2)$. For $E_c(\xi):=\sqrt{c^2|\xi|^2+c^4}$ with $\xi\in\R^3$ and
\begin{align}
  \phi_j(\xi) := \sqrt{\frac{E_1(\xi)+(-1)^j}{2E_1(\xi)}},
  \quad j\in\{0,1\}, \quad \xi\in\R^3,
\end{align}
we define the $\C^{2\times2}$-valued functions
\begin{align}
  \Phi_0(\xi):=\phi_0(|\xi|)\one_{\C^2}, \quad
  \Phi_1(\xi):=\phi_1(|\xi|)\frac{\usigma\cdot\xi}{|\xi|},
  \quad \xi\in\R^3.
\end{align}
Then the map
\begin{align}
  \begin{split}
    \bPhi_c:L^2(\br^3:\bc^2) \to L^2(\R^3:\C^4), \quad
    u\mapsto\left(\begin{array}{c}
                     \Phi_0(-i\nabla/c)u\\
                     \Phi_1(-i\nabla/c)u
                   \end{array}\right)
               \end{split}
\end{align}
maps $L^2(\R^3:\C^2)$ unitarily onto $\gh_{c,0}$, cf.~\cite{Evansetal1996}.
Therefore, $B_\gamma$ in $\gh_{c,0}$ is unitarily equivalent to the operator
\begin{align}
  \label{eq:brunitary}
  E_c(-i\nabla)-c^2-\ct_c\left(\frac{Z}{|x|}\right) \quad \text{in}\ L^2(\R^3:\C^2)
\end{align}
with the ``twisted potential''
\begin{align}
  \begin{split}
    \ct_c\left(V\right)
    :&= \left(\begin{array}{c}\Phi_0(-i\nabla/c)\\ \Phi_1(-i\nabla/c)\end{array}\right)V\left(\begin{array}{c}\Phi_0(-i\nabla/c)\\ \Phi_1(-i\nabla/c)\end{array}\right)\\
    & = \Phi_0\left(\frac{-i\nabla}{c}\right) V \Phi_0\left(\frac{-i\nabla}{c}\right)
    + \Phi_1\left(\frac{-i\nabla}{c}\right) V \Phi_1\left(\frac{-i\nabla}{c}\right)
  \end{split}
\end{align}
for any $V:\R^3\to\C^4$, whenever meaningfully defined.
From a technical point of view the representation \eqref{eq:brunitary}
is important, e.g., in
\cite{Evansetal1996,CassanasSiedentop2006,Franketal2009,Merz2019D}.
(Figuratively speaking, the transformation $\ct_c$ mollifies the Coulomb
singularity and ensures the lower boundedness of $B_\gamma$ for coupling
constants greater than $2/\pi$. This transpires, e.g., in
\cite[Lemma~2.7]{Franketal2009} and \cite[Lemma~5.2]{Frank2009}.)

In \cite{Evansetal1996} the authors showed that the Brown--Ravenhall
operator $B_\gamma$ is bounded from below by $-\gamma(\pi/4-1/\pi)-1$ when
$\gamma\leq\gamma_B$. In fact, Tix \cite{Tix1997,Tix1998} proved the
stronger lower bound $-\gamma_B$.

If $\gamma<\gamma_B$, the essential spectrum of the Brown--Ravenhall
operator $B_\gamma$ is $[0,\infty)$ and the singular continuous spectrum is
empty \cite[Theorem 2]{Evansetal1996}. Moreover, there are no embedded
eigenvalues, and the spectrum in $[0,\infty)$ is purely absolutely
continuous \cite[Theorem 3.4.1]{BalinskyEvans2011}.
As in the Chandrasekhar case, the eigenvalues $\lambda_{Z,n,j,\ell}$
of $B_{c,Z}$ in channel $(j,\ell)$ satisfy the bounds
$$
-\frac{Z^2}{2(n+\ell+1)^2} \geq \lambda_{Z,n,j,\ell} \geq -\const\cdot\frac{Z^2}{(n+\ell+1)^2},
$$
where the constant in the second inequality can be chosen independently of $\gamma$.
The lower bound is due to \cite[Theorem~2.1]{Franketal2009}, while the
upper bound follows from the fact that the non-relativistic kinetic energy
dominates the relativistic one. In particular, the Brown--Ravenhall
eigenvalues are smaller than those of $D_{c,Z}^H-c^2$ due to the
min-max-principle for operators with spectral gaps,
cf.~\cite{GriesemerSiedentop1999,Griesemeretal1999}.

\subsubsection*{Furry operator}

Naturally, the projection onto the positive spectral subspace of the
free Dirac operator is not the only possibility to get rid of the
positronic part of the wave functions. Furry and Oppenheimer
\cite{FurryOppenheimer1934} proposed rather to project onto the
positive spectral subspace of the Coulomb--Dirac operator, i.e.,
the Hilbert space of admissible states is
$$
\gh_{c,Z}:=\Lambda_{c,Z}(L^2(\br^3:\bc^4)):=\one_{(0,\infty)}(D_{c,Z}^H)(L^2(\br^3:\bc^4)).
$$
Since $D_\gamma^H$ can be realized as a self-adjoint operator with
form domain $H^{1/2}(\R^3:\C^4)$ by Nenciu's method \cite{Nenciu1976}
when $Z/c=\gamma<\gamma_F$ with
\begin{align}
  \label{eq:defgammaf}
  \gamma_F := 1,
\end{align}
we have
$$
\Lambda_{c,Z}(\cs(\br^3:\bc^4))\subseteq H^{1/2}(\br^3:\bc^4)
$$
and dense in $\gh_{c,Z}$. Therefore, the quadratic form
\begin{align}
  \langle\psi,(D_{c,Z}^H-c^2)\psi\rangle,
  \quad \psi\in\Lambda_{c,Z}\cs(\R^3:\C^4)
\end{align}
is well-defined and bounded from below when $\gamma\in(0,1)$.
By Friedrichs, there is a corresponding self-adjoint operator.
The quadratic form domain of this operator is
$H^{1/2}(\br^3:\bc^4)\cap\gh_{c,Z}$.
This operator is called Furry operator and denoted by $F_{c,Z}$.
Scaling $x\mapsto x/c$ shows that the Furry operator is unitarily
equivalent to $c^2F_{1,\gamma}$ with $\gamma=Z/c$ and we write
$F_\gamma:=F_{1,\gamma}$ in the rescaled picture.

\subsubsection*{Mittleman operator}

Although, at this point, the choice of the projections seems to be
arbitrary and only justifiable by the comparison of the results
with the measured quantities this is -- according to Mittleman
\cite{Mittleman1981} not the case:
the optimal projection and optimal ground state should be obtained by
a minimax-principle, namely the infimum over the states in a class of
fermionic Hilbert spaces defined by the positive spectral subspace of
some Dirac operator $-i\ualpha\cdot\nabla +mc^2\beta - \varphi$ followed
by a supremum over a suitable class of potentials $\varphi$.
However, this has not been implemented on a mathematical level.
In fact there some elementary no-go results
\cite{Barbarouxetal2005S,Barbarouxetal2006} that a potential mathematical
implementation has to circumvent.
For a more detailed review of Mittleman's principle and references,
see, e.g., \cite[Section~4.5]{Estebanetal2008}.

For a variational principle inspired by Mittleman,
see, e.g., \cite{Bachetal1999},
and for works connecting Mittleman's principle and the Dirac--Fock
equations, see, e.g.,
\cite{Barbarouxetal2005S,Barbarouxetal2005,Barbarouxetal2006}.

\subsection{Many-particle operators}

The results for the above-discussed single-particle operators allow
to define many-particle operators.

\subsubsection{Chandrasekhar operator}

Chandrasekhar molecules with $q=2$ are described by
\begin{align}
  \label{eq:defchandramanyparticle}
  C_{N,V} := \sum_{\nu=1}^N\left(\sqrt{-c^2\Delta_\nu+c^4}-c^2-V(x_\nu)\right) + \sum_{1\leq\nu<\mu\leq N}\frac{1}{|x_\nu-x_\mu|} + U
\end{align}
for $\gamma\leq\gamma_C=2/\pi$ and $V$ and $U$ as in \eqref{eq:defv}
and \eqref{eq:defu}, respectively. (Technically, $C_{N,V}$ is defined
as the Friedrichs extension of the corresponding quadratic form with
form domain $\cs(\R^{3N}:\C^{2^N})$.)

The ground state energy of a Chandrasekhar molecule is 
\begin{align}
  \label{eq:defgsenergychandrasekhar}
  E_{c,C}(N,\uZ,\uR):=\inf\spec(C_{N,V}).
\end{align}
In the neutral, atomic case $K=1$, $\uR=0$, $N=Z$, the ground state energy
$E_{c,C}(Z):=E_{c,C}(Z,Z,0)$ is an eigenvalue \cite{Lewisetal1997}.

For an associated ground state $\psi$ of $C_{N,V}$
(or an approximate ground state on the Thomas--Fermi or Scott scale),
we define the associated three-dimensional one-particle ground state densities
\begin{align}
  \rho_C(x) := N\sum_{\sigma=1}^2\int_{\Gamma^{N-1}}|\psi(x,\sigma;y_2,...,y_N)|^2\,\dy_2...\dy_N,
  \quad x\in\R^3.
\end{align}
Similarly as in the Schr\"odinger case, we define the one-dimensional angular
momentum resolved version of $\rho_C$ for any $\ell\in\N_0$ by
\begin{align}
  \varrho_{\ell,C}(r) := Nr^2\sum_{\sigma=1}^2\sum_{m=-\ell}^\ell\int_{\Gamma^{N-1}}\left|\int_{\bs^2}\overline{Y_{\ell,m}(\omega)}\psi(r\omega,\sigma;y_2,...,y_N)\right|^2\,\dy_2...\dy_N
\end{align}
for $r>0$. Note that
\begin{align}
  \int_{\bs^2}\rho_C(r\omega)\,\rd\omega = r^{-2}\sum_{\ell\geq0}\varrho_{\ell,C}(r),
  \quad r>0.
\end{align}

\subsubsection{Brown--Ravenhall and Furry operators}

The energy of an atom in the Brown--Ravenhall or Furry pictures is
given by
\begin{align}
  \ce_{c,Z,N}[\psi] := \langle \psi,\left(\sum_{\nu=1}^N \left(D_{c,Z}-c^2\right)_\nu + \sum_{1\leq\nu<\mu\leq N}\frac{1}{|x_\nu-x_\mu|}\right)\psi\rangle,
\end{align}
whenever $\psi\in \bigwedge_{\nu=1}^N \Lambda_{c,\#} \cs(\R^3:\C^4)$
with $\#\in\{0,Z\}$.
The quadratic form is bounded from below if $\gamma\leq\gamma_B$ in
the Brown--Ravenhall case and if $\gamma\leq1$ in the Furry case
by the previous discussion of the one-particle operators.
The resulting self-adjoint operators constructed
according to Friedrichs are called Brown--Ravenhall and Furry operator,
respectively. If $N=Z$, we drop the third index in the above energy
form and write $\ce_{c,Z}[\psi]:=\ce_{c,Z,Z}$.

The ground state energy of a Brown--Ravenhall or Furry atom is
\begin{align}
  E_{c,B}(N,Z) := \inf\{\ce_{c,Z,N}[\psi]:\,\psi\in \Lambda_{c,0}\cs(\R^3:\C^4)\}, \\
  E_{c,F}(N,Z) := \inf\{\ce_{c,Z,N}[\psi]:\,\psi\in \Lambda_{c,Z}\cs(\R^3:\C^4)\}.
\end{align}
We record that Morozov and Vugalter \cite{MorozovVugalter2006}
(see also Morozov \cite{Morozov2008}, Jakuba{\ss}a-Amundsen
\cite{Jakubassa-Amundsen2005L} for HVZ theorems),
and Matte and Stockmeyer \cite{MatteStockmeyer2010} proved that
for $N=Z$, the Brown--Ravenhall and Furry ground state energies
$E_{B/F}(Z):=E_{B/F}(Z,Z)$ are eigenvalues.

For an associated ground state $\psi$ of $\ce_{c,Z,N}$
(or an approximate ground state on the Thomas--Fermi or Scott scale)
in either the Brown--Ravenhall or Furry picture, we define the
associated one-particle ground state densities
\begin{align}
  \rho_{B/F}(x) := N\sum_{\sigma=1}^4\int_{\Gamma^{N-1}}|\psi(x,\sigma;y_2,...,y_N)|^2\,\dy_2...\dy_N,
  \quad x\in\R^3.
\end{align}
As in the Chandrasekhar case, we define a spin-orbit
resolved version of $\rho_{B/F}$.
More precisely, for given spin-orbit coupling $\kappa\in\dot\Z$
(recall \eqref{l}-\eqref{zp}), we define
\begin{align}
  \rho_{\kappa,B/F}(x) := \frac{N}{4\pi} \sum_{\sigma\in\{+,-\}}\sum_{m=-j_\kappa}^{j_\kappa}\int_{\Gamma^{N-1}}\dy\,
  \left|\sum_{\tau=1}^4\int_{\mathbb{S}^2}\,\rd\omega\,
    \overline{\Phi_{\kappa,m}^\sigma(\omega,\tau)}
    \psi(|x|\omega,\tau,y)\right|^2
\end{align}
for $x\in\R^3$.
Here $\Phi^\sigma_{\kappa,m}$ are the spherical Dirac spinors
\eqref{fi}.
Note that
\begin{align}
  \frac{1}{4\pi}\int_{\bs^2}\rho_{B/F}(|x|\omega)\rd\omega = \sum_{\kappa\in\zp}\rho_{\kappa,B/F}(x),
  \quad x\in\R^3.
\end{align}

\begin{remarks}
  (1) The Brown--Ravenhall and Furry operators are examples of so-called
  ``no-pair'' operators, i.e., Schr\"odinger operators that can
  \emph{formally} be derived from
  quantum electrodynamics by \emph{neglecting the creation of
    electron-positron pairs} \cite{Sucher1980}. These operators
  are popular among quantum chemists, as they provide decent
  numerical results which are in good accordance with experimentally
  measured data.
  For instance, the Scott correction in the Furry picture (Formula
  \eqref{eq:scottfurryatom} in Theorem \ref{scottatom}) coincides
  astonishingly well with experimental data (see, e.g., \cite{NIST_ASD}),
  see \cite[Section~6]{HandrekSiedentop2015}
  and \cite{Pilkuhn2005,ReiherWolf2009} for textbook treatments.

  (2) The Scott correction is also believed to be true when the mean
  field in the sense of Mittleman \cite{Mittleman1981} is taken into
  account.
  However, this is so far only known in the Hartree--Fock approximation
  when the involved projection is given by the Dirac--Fock operator,
  see Fournais et al.~\cite{Fournaisetal2020}.
\end{remarks}

In the following subsections we summarize results concerning the
asymptotic expansion of the ground state energies and convergence
of the one-particle ground state densities for the above-introduced
models in the atomic and molecular cases.
In particular we review the elements of the proof of the
\emph{relativistic strong Scott conjecture} for Chandrasekhar atoms
in \cite{Franketal2020P,Franketal2020R}.

\subsection{Atoms without magnetic fields}
\label{ss:relatom}

As far as we know, the results below have only been proved
in the neutral case $N=Z$.
We believe that they also hold for ions.

\subsubsection{Thomas--Fermi scale}
\label{sss:relatomtf}

In all of the above relativistic models the leading order of the
asymptotic expansion of the ground state energy is non-relativistic.
These results indicate that electrons whose distances to the nucleus
are of order $Z^{-1/3}$ still behave non-relativistically, although
they are being sucked into the nucleus.  The following theorem was
proved by S\o rensen \cite{Sorensen2005} (Chandrasekhar), by
\cite{CassanasSiedentop2006} and \cite{Franketal2009}
(Brown--Ravenhall), and by \cite{HandrekSiedentop2015} (Furry, as a
consequence of the Scott correction).  Recall the critical coupling
constants $\gamma_C,\gamma_B,\gamma_F$ in \eqref{eq:defgammac},
\eqref{eq:defgammab}, \eqref{eq:defgammaf}.

\begin{theorem}
  \label{gsenergyrelativistictf}
  Let $N=Z$ and $E^{\rm TF}(Z)$ denote the Thomas--Fermi energy of a
  neutral Thomas--Fermi atom with $q=2$.
  Then
  \begin{align}
    \lim_{\substack{Z,c\to\infty}}\frac{E_{c,C}(Z)-E^{\rm TF}(Z)}{Z^{7/3}} = 0
    \quad \text{for fixed}\ \frac{Z}{c}\leq\gamma_C,\\
    \lim_{\substack{Z,c\to\infty}}\frac{E_{c,B}(Z)-E^{\rm TF}(Z)}{Z^{7/3}} = 0
    \quad \text{for fixed}\ \frac{Z}{c}\leq\gamma_B,\\
    \lim_{\substack{Z,c\to\infty}}\frac{E_{c,F}(Z)-E^{\rm TF}(Z)}{Z^{7/3}} = 0
    \quad \text{for fixed}\ \frac{Z}{c}<\gamma_F.
  \end{align}
\end{theorem}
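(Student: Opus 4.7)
The plan is to reduce all three statements to the non-relativistic Thomas--Fermi asymptotics of Theorem~\ref{quantumtfconvnonrelatom}. The driving heuristic is that in the joint limit $Z,c\to\infty$ at fixed $\gamma=Z/c$, the bulk of the electrons sit on the Thomas--Fermi length scale $|x|\sim Z^{-1/3}$ with characteristic momenta $|p|\sim Z^{1/3}$, so $|p|/c\sim Z^{-2/3}\to 0$; on such momenta $\sqrt{c^2p^2+c^4}-c^2$ agrees with $p^2/2$ to leading order. Since $E^{\rm TF}(Z)=E^{\rm TF}(1)\,Z^{7/3}$ by the scaling in \eqref{eq:scalingtf}, the theorem reduces to showing that each of $E_{c,C}(Z)$, $E_{c,B}(Z)$, $E_{c,F}(Z)$ equals $E_S(Z)+o(Z^{7/3})$.

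\textbf{Upper bounds.} For Chandrasekhar, the elementary pointwise inequality $c^2(\sqrt{1+x/c^2}-1)\leq x/2$ for $x\geq 0$ gives the quadratic form inequality $C_{N,V}\leq H_{N,V}$, hence $E_{c,C}(Z)\leq E_S(Z)=E^{\rm TF}(1)Z^{7/3}+o(Z^{7/3})$. For Brown--Ravenhall and Furry I would take a near-minimizer $\Psi$ of $H_{N,Z}$, embed it in $L^2(\R^{3N}:\C^{4^N})$ by a fixed tensor factor in the ``upper'' spinor components, and apply the antisymmetric projection associated with $\Lambda_{c,0}$ or $\Lambda_{c,Z}$ respectively. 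In the representation \eqref{eq:brunitary} both the projection loss and the replacement of $V$ by the twisted potential $\mathcal{T}_c(V)$ generate errors dominated by $\langle p^2/c^2\cdot V\rangle_\Psi$, which the non-relativistic kinetic Lieb--Thirring inequality renders $O(Z^2)=o(Z^{7/3})$.

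\textbf{Lower bounds.} The main work is the Chandrasekhar case; a naive pointwise bound such as $\sqrt{c^2p^2+c^4}-c^2\geq p^2/2-p^4/(8c^2)$ is insufficient, because the fourth moment of the momentum of an approximate Chandrasekhar ground state is not a priori controlled by its kinetic energy. Instead I would follow S\o rensen's coherent-state semiclassical approach: for a sequence of approximate ground states $\Psi_N$ of $C_{N,V}$, express $\langle\Psi_N,C_{N,V}\Psi_N\rangle$ through a Husimi phase-space measure $\mu_{\Psi_N}$ up to localization errors of lower order, and reduce the argument to showing that the infimum of the associated relativistic semiclassical (Thomas--Fermi) functional equals $E^{\rm TF}(1)Z^{7/3}+o(Z^{7/3})$. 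This holds because the minimizing semiclassical density concentrates on momenta $|p|\ll c$, where the relativistic kinetic density agrees with $p^2/2$ to the required precision. The Brown--Ravenhall case is handled in the representation \eqref{eq:brunitary}: the kinetic operator is unchanged, and writing $\mathcal{T}_c(V)=V+R_c(V)$ one shows that $R_c(V)$ contributes only $o(Z^{7/3})$ when inserted in the same semiclassical analysis. For Furry one uses that $\Lambda_{c,Z}-\Lambda_{c,0}$ is small in the relevant sense on the Thomas--Fermi scale (since $Z/|x|\ll c^2$ for $|x|\gtrsim Z^{-1/3}$), transferring the Brown--Ravenhall lower bound; alternatively the TF asymptotics can be extracted as a corollary of the Scott correction proved in \cite{HandrekSiedentop2015}.

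\textbf{Main obstacle.} The chief technical difficulty is the control of the momentum-dependent positive spectral projections in the Brown--Ravenhall and Furry cases: they turn the local Coulomb singularity into a nonlocal twisted operator, and one must quantify how the resulting nonlocality acts on the Coulomb potential so that only subleading energy corrections appear on the Thomas--Fermi scale. The Furry projection $\Lambda_{c,Z}$ is additionally $Z$-dependent, which makes its comparison with $\Lambda_{c,0}$ a delicate perturbative problem in the joint limit.
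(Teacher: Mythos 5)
Your Chandrasekhar upper bound via $\sqrt{p^2+1}-1\le p^2/2$ and the reduction $E_{c,C}(Z)\le E_S(Z)$ is exactly the paper's argument and is correct. The lower bound, however, as you sketch it contains a genuine gap. You write that a Husimi/coherent-state reduction to ``the associated relativistic semiclassical (Thomas--Fermi) functional'' works because ``the minimizing semiclassical density concentrates on momenta $|p|\ll c$.'' This misses the central difficulty: the relativistic Thomas--Fermi functional with an unregularized Coulomb attraction is \emph{unbounded from below} (cf.\ the discussion of the Vallarta--Rosen--Jensen functional in Subsection~\ref{ss:tfwrel}), because for large momenta the relativistic kinetic energy scales like the Coulomb potential. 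Consequently one cannot simply pass to a phase-space picture with the Coulomb potential intact and appeal to concentration on small momenta; near the nucleus the momenta are of order $c$, and the ``localization errors of lower order'' you invoke are not automatically of lower order there. S\o rensen's proof (as outlined in the Remarks after the theorem) therefore does not work with a single semiclassical reduction. It first uses a correlation inequality to pass to a one-body problem, and then introduces a \emph{multi-scale} position-space decomposition: an innermost region $|x|\lesssim Z^{-i}$ with $i\in(8/9,1)$ controlled by the sharp Hardy--Lieb--Thirring inequality \eqref{eq:hlt} (which uses the critical constant $2/\pi$ from Kato's inequality), an intermediate region $Z^{-i}\lesssim|x|\lesssim Z^{-o}$ with $o\in(1/3,2/3)$ controlled by Daubechies' inequality, and only in the outer region $|x|\gtrsim Z^{-o}$ the coherent-state semiclassics you describe, where indeed $|p|\lesssim Z^{o}\ll c$. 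Without the inner two layers and the critical Hardy-type estimate your argument does not close.

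For Brown--Ravenhall and Furry your sketch is too optimistic in a related way. Writing $\ct_c(V)=V+R_c(V)$ and treating $R_c(V)$ as a subleading correction is misleading precisely where it matters: the twisting \emph{mollifies} the Coulomb singularity (it is what permits coupling constants up to $\gamma_B>2/\pi$), so near the origin $R_c(V)$ is comparable to $V$ itself and cannot be disposed of as an error term; the analysis again must go through a multi-scale decomposition and the relevant Hardy-type bounds (cf.\ \cite{CassanasSiedentop2006,Franketal2009}). For Furry, $\Lambda_{c,Z}-\Lambda_{c,0}$ is not small in any norm that would make your comparison perturbative on the Scott scale; in the literature the Furry statement is obtained \emph{as a consequence of} the Scott correction of \cite{HandrekSiedentop2015}, which is the safer route you already mention as an alternative. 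Finally, even your upper bound for BR/Furry needs more: the bound on the projection loss plus twisting error you assert ``dominated by $\langle p^2/c^2\cdot V\rangle_\Psi$'' requires justification, since $p^2/c^2\sim 1$ on the Scott scale; it is only the smallness of the number of electrons there that saves the estimate, and this must be argued.

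Finally, you identify the projections in the BR/Furry cases as ``the chief technical difficulty.'' I would correct this emphasis: the chief difficulty, already present for Chandrasekhar, is that the relativistic kinetic energy and the Coulomb singularity have the same scaling, so the singularity is non-perturbative and requires the critical Hardy/Kato inequality plus a multi-scale localization. The projections are an additional complication layered on top of that.
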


\begin{remarks}[Elements in the proof of Theorem \ref{gsenergyrelativistictf}]
  We make some remarks on S\o rensen's proof \cite{Sorensen2005} for
  the Chandrasekhar case.
  Since $\sqrt{p^2+1}-1\leq p^2/2$, it suffices to prove
  the lower bound, whose proof is similar to Lieb's simplified proof
  of Theorem \ref{quantumtfconvnonrelatom}, see \cite[Theorem~5.1]{Lieb1981}.
  It can be split into the following steps.
  
  \begin{enumerate}
  \item Reduce the linear many-particle problem to estimating the
    non-linear one-particle quadratic form
    \begin{align}
      \sum_{\nu=1}^N\langle m_\nu,\left[\sqrt{-c^2\Delta+c^4}-c^2-\frac{Z}{|x|} + \rho_Z^{\rm TF}\ast\frac{1}{|\cdot|}\right]m_\nu\rangle
    \end{align}
    with the Thomas--Fermi density $\rho_Z^{\rm TF}$ and orthonormal
    orbitals $\{m_\nu\}_{\nu=1}^N$ from below with the help of a
    correlation inequality (e.g., by Lieb and Oxford
    \cite{Onsager1939,Lieb1979,LiebOxford1981}
    or that of \cite{Mancasetal2004}).

  \item Due to the (non-perturbative) Coulomb singularity at the origin,
    one localizes position space into the regions $|x|\lesssim Z^{-o}$
    and $|x|\gtrsim Z^{-o}$ with $o\in(1/3,2/3)$. Guided by the proof of
    Theorem \ref{quantumtfconvnonrelatom} (Lieb \cite[Theorem~5.1]{Lieb1981})
    and the intuition that electrons on distances $Z^{-o}$ behave
    non-relativistically, it is expected that the electrons in the region
    $|x|\gtrsim Z^{-o}$ lead to the TF energy, while the contribution
    from $|x|\lesssim Z^{-o}$ is $o(Z^{7/3})$.
    The localization errors can be controlled at the end of the argument
    with the help of an arbitrarily small amount ($Z^{-\epsilon}$) of
    kinetic energy.

  \item The contribution of the electrons in $\{|x|\lesssim Z^{-o}\}$,
    where the Coulomb singularity is located, can be controlled with the
    help of the following strengthening of one of Daubechies' inequalities
    \cite{Daubechies1983},
    \begin{align}
      \label{eq:hlt}
      \tr\left(|p|-\frac{2/\pi}{|x|}-V\right)_-
      \lesssim \int_{\br^3} V(x)_+^4\,\dx
    \end{align}
    by \cite{Franketal2008H} and the inequality
    $\sqrt{p^2+1}-1\geq|p|-1$.
    (Inequality \eqref{eq:hlt} is often called Hardy--Lieb--Thirring inequality
    because of the homogeneity of the ``unperturbed Hardy operator''
    $|p|-\frac{2/\pi}{|x|}$.)
    Here, $V$ is a bounded function, supported on $\{|x|\lesssim Z^{-o}\}$.
    Using \eqref{eq:hlt} one computes the contribution to the energy to be
    $\co(Z^{2+3(1-o)})$, which is more than $Z^{7/3}$.
    For this reason another localization on the length scale $Z^{-i}$ is
    necessary. The Hardy--Lieb--Thirring inequality for $V$ being supported
    on $\{|x|\lesssim Z^{-i}\}$ then produces an
    $o(Z^{7/3})$ error when $i\in(8/9,1)$.
    The additionally introduced localization error can be controlled by an
    $\epsilon$ of kinetic energy, too.

  \item The energy contribution of electrons located in the intermediate
    region $\{Z^{-i}\lesssim|x|\lesssim Z^{-o}\}$ can be controlled using
    Daubechies' inequality,
    \begin{align}
      \tr\left(\sqrt{-\Delta+1}-1-V(x)\right)_-
      \lesssim \int_{\R^3} (V_+^{5/2}(x)+V_+^4(x))\,\dx.
    \end{align}
    Note that $|x|^{-1}\notin L^4(\R^3)$.

  \item Electrons in the region $\{|x|\gtrsim Z^{-o}\}$ are expected to
    generate the TF energy. Here semiclassical analysis is used.
    Roughly speaking, one compares the quantum energy
    \begin{align}
      \sum_{\nu=1}^N\langle m_\nu,\one_{\{|x|\gtrsim Z^{-o}\}}\left[\sqrt{-c^2\Delta+c^4}-c^2-\frac{Z}{|x|}+\rho_Z^{\rm TF}\ast\frac{1}{|\cdot|}\right] \one_{\{|x|\gtrsim Z^{-o}\}}m_\nu\rangle
    \end{align}
    to the classically expected energy
    \begin{align}
      \iint\limits_{|q|\gtrsim Z^{-o}} \left(\sqrt{c^2p^2+c^4}-c^2-\frac{Z}{|q|}+\rho_Z^{\rm TF}\ast\frac{1}{|\cdot|}(q)\right)_-\,\frac{\rd p\,\rd q}{(2\pi)^{3}}.
    \end{align}
    The latter leads to the \emph{non-relativistic} TF energy, since
    $\sqrt{c^2p^2+c^4}-c^2\sim p^2/2$ for $|p|\lesssim Z^{o}\ll Z$
    (since $|x|\gtrsim Z^{-o}$ with $o\in(1/3,2/3)$).
    To that end a phase-space localization using coherent states
    \cite[Theorem~5.1]{Lieb1981} is used.
    In fact, merely the localization errors coming from the
    phase-space localization force the position localization to
    the scale $\lesssim Z^{-1/3}$.
  \end{enumerate}
\end{remarks}

Theorem \ref{gsenergyrelativistictf} is accompanied by the following
convergence results for the ground state densities.

\begin{theorem}[{\cite{MerzSiedentop2019,Merz2019D}}]
  \label{gsdensityrelativistictf}
  Let $N=Z$ and $\rho_1^{\rm TF}$ denote the hydrogenic Thomas--Fermi
  minimizer with $\int\rho_1^{\rm TF}(x)\,\dx=1$ and $q=2$. Then
  \begin{align}
    \lim_{\substack{Z,c\to\infty}}Z^{-2}\rho_{C}(Z^{-1/3}\cdot) = \rho_1^{\rm TF}
    \quad \text{for fixed}\ \frac{Z}{c}\leq\gamma_C, \\
    \lim_{\substack{Z,c\to\infty}}Z^{-2}\rho_{B}(Z^{-1/3}\cdot) = \rho_1^{\rm TF}
    \quad \text{for fixed}\ \frac{Z}{c}\leq\gamma_B, \\
    \lim_{\substack{Z,c\to\infty}}Z^{-2}\rho_{F}(Z^{-1/3}\cdot) = \rho_1^{\rm TF}
    \quad \text{for fixed}\ \frac{Z}{c}<\gamma_F.
  \end{align}
  In all three formulae the convergence holds in Coulomb norm
  (see \eqref{eq:defD}-\eqref{eq:defcoulombnorm}) with convergence rate
  $\co(Z^{-3/16})$. In the Chandrasekhar case the convergence also holds
  when both sides are integrated against any $U\in L^{5/2}\cap L^4(\R^3)$
  and in the Brown--Ravenhall case when in addition
  $U\in |\cdot|^{-1}L^\infty$ is Lipschitz.
\end{theorem}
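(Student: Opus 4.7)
The plan is to deduce the density convergence from the corresponding Scott correction for each model, which has been established in the works cited earlier (S\o rensen, Frank et al., Handrek--Siedentop). The organising principle is that strict convexity of the Thomas--Fermi functional around its (unique, positive) neutral minimizer converts an excess-energy bound into a Coulomb-norm bound. Concretely, by rescaling $\tilde\rho(x) := Z^{-2}\rho_{\#}(Z^{-1/3}x)$ for $\#\in\{C,B,F\}$, a change of variables yields $\|\tilde\rho-\rho_1^{\rm TF}\|_C^2 = Z^{-7/3}\|\rho_{\#}-\rho_Z^{\rm TF}\|_C^2$, so the rate $Z^{-3/16}$ will follow once one establishes $\|\rho_{\#}-\rho_Z^{\rm TF}\|_C^2 = \mathcal O(Z^{47/24})$. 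Expanding $\ce_Z^{\rm TF}[\rho]-\ce_Z^{\rm TF}[\rho_Z^{\rm TF}]$, using Theorem~\ref{tfdensityproperties}(4) to invoke the Euler--Lagrange equation \eqref{eq:tfeq} everywhere, and applying convexity of $t\mapsto t^{5/3}$, gives the fundamental identity
\begin{equation*}
\ce_Z^{\rm TF}[\rho]-E^{\rm TF}(Z) \;\geq\; \|\rho-\rho_Z^{\rm TF}\|_C^2
\end{equation*}
for every $\rho\geq 0$ with $\int\rho=Z$. The task therefore reduces to bounding $\ce_Z^{\rm TF}[\rho_{\#}]$ from above by $E^{\rm TF}(Z)+\mathcal O(Z^{47/24})$.

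To obtain this bound I would construct a lower bound on the many-body Hamiltonian of the form
\begin{equation*}
H_{\#} \;\geq\; \ce_Z^{\rm TF}[\rho_\psi] + \text{(inner-shell Scott term)} - \mathcal O(Z^{47/24})
\end{equation*}
valid on any admissible antisymmetric $\psi$ of charge $Z$. The electron repulsion is replaced by $D(\rho_\psi,\rho_\psi)-C\int\rho_\psi^{4/3}$ via the Lieb--Oxford correlation inequality, and the kinetic energy is bounded below by the semiclassical value $\tfrac35\gtf\int\rho_\psi^{5/3}$ plus correction terms obtained from Daubechies' inequality together with the Hardy--Lieb--Thirring inequality \eqref{eq:hlt} (which is essential because for $\gamma$ near the critical coupling the Coulomb singularity cannot be treated perturbatively). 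For Brown--Ravenhall the estimates are performed via the twisted-potential representation \eqref{eq:brunitary}; for Furry one uses the partial wave decomposition \eqref{eq:radialdirac} and the min-max/Talman characterization of eigenvalues to reduce to channel-wise one-body estimates. The Scott term on the scale $|x|\lesssim Z^{-1}$ is isolated by an IMS-type localisation and estimated by the same spectral bounds on the relevant one-particle operator ($C_\gamma^H$, $B_\gamma$, respectively $F_\gamma$) used in the proofs of the Scott correction. Combining with the matching \emph{upper} bound $E_{\#}(Z)\leq E^{\rm TF}(Z)+C_{\#}^{\rm Scott}Z^2+\mathcal O(Z^{47/24})$ obtained from a trial-state analysis, the inner-shell contributions cancel and the claimed inequality on $\ce_Z^{\rm TF}[\rho_{\#}]$ emerges.

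For the functional convergence against test functions, Cauchy--Schwarz in the Coulomb inner product yields $|\int U(\tilde\rho_{\#}-\rho_1^{\rm TF})|\lesssim \||\nabla|^{-1}U\|_2\,\|\tilde\rho_{\#}-\rho_1^{\rm TF}\|_C$, which handles any $U$ whose Coulomb norm is finite. To reach general $U\in L^{5/2}\cap L^4(\R^3)$ in the Chandrasekhar case and Lipschitz $U\in L^{5/2}\cap L^4\cap|\cdot|^{-1}L^\infty$ in the Brown--Ravenhall case, I would extract uniform bounds $\int\tilde\rho_{\#}^{5/3}\lesssim 1$ and $\int\tilde\rho_{\#}^{4/3}\lesssim 1$ from the same semiclassical kinetic-energy inequalities used above, giving equi-integrability, and then approximate $U$ by functions of finite Coulomb norm in the relevant norms. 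In the Brown--Ravenhall case the additional Lipschitz/$|\cdot|^{-1}L^\infty$ hypothesis is needed to control the commutator of (multiplication by) $U$ with the spectral projector $\Lambda_{c,0}$ that appears when restricting localisations to the positive subspace.

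The hardest step is clearly step two, namely obtaining the \emph{sharp} many-body lower bound with remainder $\mathcal O(Z^{47/24})$. One must simultaneously handle the non-perturbative Coulomb singularity near the origin (which for $\gamma$ close to $\gamma_C$, $\gamma_B$, or $\gamma_F$ forces the use of Hardy-type kinetic inequalities rather than ordinary Lieb--Thirring), the semiclassical errors of the coherent-states comparison in the outer region, and---for Brown--Ravenhall and especially Furry---the non-local spreading effects of the spectral projectors, with the Furry projection $\Lambda_{c,Z}$ depending additionally on $Z$. The delicate bookkeeping between these error sources, and in particular the extraction of the inner-shell Scott contribution in a way that cancels cleanly against the upper bound, is the central technical difficulty and explains why the test-function statement is currently restricted to the Chandrasekhar and Brown--Ravenhall models.
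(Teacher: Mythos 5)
Your first two steps are right: the convexity/coercivity inequality $\ce_Z^{\rm TF}[\rho]-E^{\rm TF}(Z)\geq\|\rho-\rho_Z^{\rm TF}\|_C^2$ for $\int\rho=Z$ is indeed the standard consequence of the Euler--Lagrange equation and convexity of $t\mapsto t^{5/3}$, and the rescaling $\|\tilde\rho-\rho_1^{\rm TF}\|_C^2=Z^{-7/3}\|\rho_\#-\rho_Z^{\rm TF}\|_C^2$ together with $a=47/24$ does give the rate $Z^{-3/16}$. However, the route you propose to the key estimate is not the paper's, and as stated it contains a genuine gap.

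The gap is in the proposed many-body lower bound
$H_{\#}\geq\ce_Z^{\rm TF}[\rho_\psi]+\text{(Scott term)}-\mathcal O(Z^{47/24})$. Writing $\ce_Z^{\rm TF}[\rho_\psi]$ on the right requires, after Lieb--Oxford has disposed of the two-body term, the kinetic-energy inequality
$\sum_\nu\langle\psi,T_\nu\psi\rangle\geq\tfrac35\gtf\int\rho_\psi^{5/3}-\text{(error)}$
with the \emph{semiclassical} constant $\gtf$. This is precisely the (open) Lieb--Thirring conjecture; the Daubechies and Hardy--Lieb--Thirring inequalities you invoke all come with non-sharp constants and cannot upgrade themselves to $\gtf$. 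Moreover, your reduction to $\ce_Z^{\rm TF}[\rho_\#]\leq E^{\rm TF}(Z)+\mathcal O(Z^{47/24})$ is strictly stronger than what is needed: the coercivity inequality is one-sided, so the Coulomb-norm bound does not conversely control the TF-functional excess (the term $\int[(\rho_\psi^{5/3}-(\rho^{\rm TF})^{5/3})-\tfrac53(\rho^{\rm TF})^{2/3}(\rho_\psi-\rho^{\rm TF})]$ is not dominated by the Coulomb norm). The observation of Fefferman and Seco that the paper cites bypasses this entirely: in the energy lower bound one completes the square in $D(\rho_\psi,\rho_\psi)$ around the \emph{fixed} TF density, i.e.\ writes $D(\rho_\psi,\rho_\psi)=2D(\rho^{\rm TF},\rho_\psi)-D(\rho^{\rm TF},\rho^{\rm TF})+\|\rho_\psi-\rho^{\rm TF}\|_C^2$, so that the Coulomb norm appears \emph{directly} as a nonnegative remainder, and the remaining one-body problem involves the mean-field operator $T-\Phi^{\rm TF}$, which is treated by the same semiclassical/localization analysis used to prove the Scott correction — no semiclassical-constant Lieb--Thirring inequality for $\int\rho_\psi^{5/3}$ is ever used. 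Once the two-sided Scott asymptotics from S{\o}rensen, Frank et al., and Handrek--Siedentop are fed in, this yields $\|\rho_\psi-\rho_Z^{\rm TF}\|_C^2\lesssim Z^{47/24}$ without passing through $\ce_Z^{\rm TF}[\rho_\psi]$.

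Two secondary remarks on the test-function part. First, your duality bound is stated with the wrong exponent: $|\int U\sigma|\lesssim\|\nabla U\|_2\,\|\sigma\|_C$ (not $\||\nabla|^{-1}U\|_2$), since in Fourier variables $\|\sigma\|_C\propto\||\xi|^{-1}\hat\sigma\|_2$ pairs with $\||\xi|\hat U\|_2$. Second, the paper obtains the convergence against $U\in L^{5/2}\cap L^4$ (resp.\ with the extra Lipschitz/$|\cdot|^{-1}L^\infty$ hypothesis for Brown--Ravenhall) by a \emph{linear response} argument, differentiating the ground-state energy with respect to a coupling $-\lambda\sum_\nu U(x_\nu)$ and using the Daubechies and Hardy--Lieb--Thirring trace estimates to control the perturbed one-body spectrum; this is why the test-function class is phrased in terms of $L^{5/2}\cap L^4$ (the Daubechies integrability conditions) rather than a gradient $L^2$ condition. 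Your Cauchy--Schwarz-plus-approximation scheme could in principle work, but would require a uniform bound $\int\tilde\rho_\#^{5/3}=\mathcal O(1)$ which again needs a relativistic kinetic Lieb--Thirring input, and it does not naturally produce the stated test-function classes.
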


\begin{remarks}
  (1) The proof of the convergence in Coulomb norm uses an observation of
  Fefferman and Seco \cite{FeffermanSeco1989}, together with the energetic
  expansion of the ground state energy.
  For an error term $\co(Z^{a})$ to the leading TF energy with
  $a<7/3$ the convergence rate is $\co(Z^{(a-7/3)/2})$.
  In view of the energetic results
  \cite{Franketal2008,Solovejetal2008,Franketal2009,HandrekSiedentop2015}
  (Scott correction), we have $a=47/24$.

  (2) The proof of weak convergence uses the proof of the energetic
  results (Theorem \ref{gsenergyrelativistictf}) together with a
  linear response argument. We will flesh out the details in the
  discussion of the densities on the Scott scale (Subsection
  \ref{sss:elementsfmss}).
\end{remarks}

\subsubsection{Scott scale}

As explained in the introduction, electrons in proximity of the
nucleus are expected to generate relativistic effects that should
be visible in the ground state energy and density on the spatial scale
$Z^{-1}$. In fact, the Scott correction is relativistically lowered.
The precise amount depends on the sum of the differences of the
non-relativistic and the relativistic hydrogen eigenvalues.
To that end, let $\lambda_n^S$, $\lambda_n^C$, $\lambda_n^B$, and
$\lambda_n^F$ denote the $\gamma$-dependent eigenvalues of the non-relativistic
operator $S_\gamma^{H}$ in \eqref{eq:defhydrogenoperator}, of the
Chandrasekhar operator $C_\gamma^H$ in \eqref{eq:defchandra} (with $q=2$),
of the Brown--Ravenhall operator $B_\gamma$.
and of the Furry operator $F_\gamma$.
(The Furry eigenvalues coincide of course with those
\eqref{eq:eigenvalue} of $D_\gamma^H-1$.)
We introduce the spectral shifts
\begin{align}
  [0,\gamma_C]\ni\gamma\mapsto s_C(\gamma) & := \gamma^{-2}\sum_{n\geq0}\left(\lambda_n^S-\lambda_n^C\right) \geq 0, \\
  [0,\gamma_B]\ni\gamma\mapsto s_B(\gamma) & := \gamma^{-2}\sum_{n\geq0}\left(\lambda_n^S-\lambda_n^B\right) \geq 0, \\
  [0,\gamma_F]\ni\gamma\mapsto s_F(\gamma) & := \gamma^{-2}\sum_{n\geq0}\left(\lambda_n^S-\lambda_n^F\right) \geq 0,
\end{align}
and record the following observation.

\begin{proposition}
  Let $\#\in\{C,B,F\}$. Then the functions $s_{\#}$ on their
  respective domains are continuous and monotone decreasing,
  and obey $s_{\#}(0)=0$.
\end{proposition}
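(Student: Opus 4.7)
My plan is to reduce all three claims to a term-by-term analysis of the individual shifts $s_\#^{(n)}(\gamma) := \gamma^{-2}(\lambda_n^S(\gamma) - \lambda_n^\#(\gamma)) \geq 0$, and then transfer continuity, monotonicity, and the $\gamma\to 0$ limit from the sequence $(s_\#^{(n)})_{n\geq 0}$ to the series $s_\#(\gamma)=\sum_{n\geq 0}s_\#^{(n)}(\gamma)$ by dominated convergence on the counting measure of $\N_0$.

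The first step is a scaling computation. The unitary dilation $(U_\gamma\psi)(y):=\gamma^{-3/2}\psi(y/\gamma)$ yields $U_\gamma^*S_\gamma^H U_\gamma=\gamma^2 S_1^H$, so $\gamma^{-2}\lambda_n^S(\gamma)=-1/(2(n+1)^2)$ is independent of $\gamma$. Applied to the Chandrasekhar operator the same dilation gives
\begin{align*}
\gamma^{-2}\,U_\gamma^* C_\gamma^H U_\gamma = \phi(\gamma^2,-\Delta)-\frac{1}{|y|},\qquad \phi(\beta,k):=\frac{\sqrt{1+\beta k}-1}{\beta}.
\end{align*}
An elementary check---essentially concavity of the square root---shows that for each $k\geq 0$ the function $\beta \mapsto \phi(\beta,k)$ is continuous and monotone on $(0,\infty)$ with $\lim_{\beta\to 0}\phi(\beta,k)=k/2$. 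Hence the right-hand side depends continuously and monotonically on $\gamma$ as a quadratic form, and by the min--max principle each ordered eigenvalue $\gamma^{-2}\lambda_n^C(\gamma)$ is a continuous, monotone function of $\gamma$ with limit $-1/(2(n+1)^2)$ at $\gamma=0$. This yields the three per-term properties of $s_C^{(n)}$.

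For the Brown--Ravenhall and Furry operators I would carry out an analogous scaling: conjugating $B_\gamma$ respectively $F_\gamma$ by $U_\gamma$ and dividing by $\gamma^2$ transforms these operators into the corresponding no-pair operator $B_{c,1}$ respectively $F_{c,1}$ for a hydrogen atom with speed of light $c=1/\gamma$, and the nonrelativistic limit $c\to\infty$ recovers the Schr\"odinger operator $S_1^H$. The required continuity and monotonicity in $c$---hence in $\gamma$---then follow from the corresponding smooth and monotone dependence of the free Dirac operator $-ic\ualpha\cdot\nabla+c^2\beta$ and of the spectral projections $\Lambda_{c,0}$, $\Lambda_{c,1}$ on $c$. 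For $\Lambda_{c,1}$ a little care is needed because of the distinguished self-adjoint extension, but this is a standard application of Kato-type perturbation theory, using the uniform form-boundedness of the Coulomb potential for $\gamma$ in a compact subinterval of the admissible range.

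The hard part---and the step I expect will require the most care---is the uniform tail bound needed to turn the per-term statements into statements for the series. Concretely, I need $s_\#^{(n)}(\gamma)\leq C(n+1)^{-p}$ for some $p>1$, uniformly in $\gamma$ on compact subsets of the admissible range. This follows from the identity $|\gamma^{-2}\lambda_n^S(\gamma)|=1/(2(n+1)^2)$ together with the uniform eigenvalue bound $|\lambda_n^\#(\gamma)|\leq C\gamma^2/(n+1)^2$, which is exactly of the type stated in the excerpt for all three of the Chandrasekhar, Brown--Ravenhall, and Furry operators (with a constant independent of $\gamma$). With this summable majorant in hand, the dominated convergence theorem transfers continuity, monotonicity, and the boundary value $s_\#(0)=0$ from the individual $s_\#^{(n)}$ to the series $s_\#$, completing the proof.
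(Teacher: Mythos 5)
The scaling computation for the Chandrasekhar operator is correct and elegant: $\gamma^{-2} U_\gamma^* C_\gamma^H U_\gamma = \phi(\gamma^2,-\Delta)-1/|y|$ with $\phi(\beta,k)=(\sqrt{1+\beta k}-1)/\beta$, and $\beta\mapsto\phi(\beta,k)$ is indeed continuous, decreasing, with $\phi(0^+,k)=k/2$. The per-term continuity, monotonicity, and nonrelativistic limit for $s_C^{(n)}$ then follow from min--max, as you say. This mirrors the argument the paper cites to \cite{Franketal2008}, but gives it in a pleasantly explicit form.

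The proof breaks down, however, at the ``uniform tail bound'' step, and the gap is not a technicality but the heart of the problem. The bound $|\lambda_n^\#(\gamma)|\lesssim \gamma^2/(n+\ell+1)^2$ quoted from the paper is an estimate on each eigenvalue \emph{inside a fixed angular-momentum channel $\ell$}, and the spectral shift $s_\#(\gamma)$ sums over \emph{all} eigenvalues, weighted by the angular multiplicity $q(2\ell+1)$. If you only use the magnitude bound, you are summing
\begin{equation*}
\sum_{\ell\geq 0}\sum_{n\geq 0} (2\ell+1)\,(n+\ell+1)^{-2},
\end{equation*}
which diverges (group by the principal quantum number $n'=n+\ell+1$; each $n'$ contributes $\sim n'^2\cdot n'^{-2}=\mathcal{O}(1)$). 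This is the same divergence that forces the paper, in the molecular case, to define the spectral shift $\cs_C$ via a $\kappa\to 0$ regularization against a semiclassical phase-space integral. The convergence of $\sum_n(\lambda_n^S-\lambda_n^\#)$ comes entirely from \emph{cancellation}: the difference of eigenvalues decays much faster than either eigenvalue alone. For instance, in the Furry case, expanding the explicit eigenvalues \eqref{eq:eigenvalue} around the Schr\"odinger limit gives $\lambda_{n,\kappa}^S-\lambda_{n,\kappa}^F\sim \gamma^4\bigl[|\kappa|^{-1}(n+|\kappa|)^{-3}+(n+|\kappa|)^{-4}\bigr]$, which, after weighting by $2|\kappa|$, gives a convergent double series. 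A bound on $|\lambda_n^\#|$ alone has no way of seeing this extra $\gamma^2/(n+\ell+1)$ factor, so it cannot produce a summable dominant for dominated convergence. Relatedly, the identity $|\gamma^{-2}\lambda_n^S(\gamma)|=1/(2(n+1)^2)$ is only true if $n$ counts distinct Schr\"odinger levels; if $n$ enumerates eigenvalues with multiplicity the magnitude is $\sim\gamma^2 n^{-2/3}$, so even the starting identity is off.

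The paper avoids these issues by not attempting a unified self-contained proof: for $s_C$ it cites \cite{Solovejetal2008,Franketal2008}, for $s_F$ it uses the explicit Coulomb--Dirac eigenvalues \eqref{eq:eigenvalue} (from which the cancellation estimate just described is transparent), and for $s_B$ it squeezes $0\leq s_B\leq s_F$ using $\lambda_n^F\geq\lambda_n^B$, so that the summable majorant for $s_B$ is the already-controlled Furry series. To repair your argument you would need a per-term difference estimate (e.g., a first-order Hellmann--Feynman expansion of $\lambda_n^\#$ in the relativistic perturbation, with uniformity in $\gamma$), not just an eigenvalue magnitude bound. One small additional point: you write ``monotone'' without a direction, which is wise, because the argument you give for $\gamma^{-2}\lambda_n^C(\gamma)$ decreasing in $\gamma$ actually makes each $s_C^{(n)}$ \emph{increasing}, consistent with $s_\#\geq 0$ and $s_\#(0)=0$ but opposite to the direction stated in the Proposition; this wording in the Proposition is most naturally read as referring to the Scott coefficient $\tfrac12-s_\#(\gamma)$.
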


\begin{proof}
  For $s_C$ this is proved in
  \cite[Theorems~1.1, 1.4, Corollary~1.6]{Solovejetal2008}.
  See also \cite[p.~552]{Franketal2008}, where it is shown that
  $s_C$ is monotone decreasing and finite.
  For $s_F$ the claim can be inferred from the explicitly known
  eigenvalues $\lambda_n^S$ and $\lambda_n^F$, respectively.
  The continuity and monotonicity of $s_B$ follow from the explicit
  knowledge of the Schr\"odinger and Coulomb--Dirac eigenvalues,
  and the inequality $\lambda_n^F\geq\lambda_n^B$.
\end{proof}

The following theorem concerning the energy asymptotics of
Chandrasekhar atoms was proved independently by
Solovej et al.~\cite{Solovejetal2008}, and
the work \cite{Franketal2008}
using different techniques.
The result for Brown--Ravenhall atoms was proved by
\cite{Franketal2009}, and that for
Furry atoms by \cite{HandrekSiedentop2015}.

\begin{theorem}
  \label{scottatom}
  Let $N=Z$ and $E^{\rm TF}(Z)$ denote the Thomas--Fermi energy of
  a neutral Thomas--Fermi atom with $q=2$.
  Then
  \begin{align}
    \label{eq:scottchandrasekharatom}
    \lim_{\substack{Z,c\to\infty}}\frac{E_{c,C}(Z)-\left[E^{\rm TF}(Z)+\left(\frac12-s_C(\gamma)\right)Z^2\right]}{Z^{2}} & = 0
    \quad \text{for fixed}\ \frac{Z}{c}\leq\gamma_C,\\
    \label{eq:scottbratom}
    \lim_{\substack{Z,c\to\infty}}\frac{E_{c,B}(Z)-\left[E^{\rm TF}(Z)+\left(\frac12-s_B(\gamma)\right)Z^2\right]}{Z^{2}} & = 0
    \quad \text{for fixed}\ \frac{Z}{c}\leq\gamma_B,\\
    \label{eq:scottfurryatom}
    \lim_{\substack{Z,c\to\infty}}\frac{E_{c,F}(Z)-\left[E^{\rm TF}(Z)+\left(\frac12-s_F(\gamma)\right)Z^2\right]}{Z^{2}} & = 0
    \quad \text{for fixed}\ \frac{Z}{c}<\gamma_F.
  \end{align}
  In all of the above limits the error term can be quantified
  and is $\co(Z^{47/24})$.
\end{theorem}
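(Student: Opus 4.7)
The plan is to prove matching upper and lower bounds accurate to $o(Z^2)$, unifying the strategies behind the non-relativistic Scott conjecture (Theorem \ref{swscottnonrel}) with the relativistic ingredients sketched after Theorem \ref{gsenergyrelativistictf}. Both bounds are reduced to the analysis of an effective one-body Hamiltonian $T_\# - \Phi^{\mathrm{TF}}$, where $T_\#$ is the relevant relativistic kinetic operator in the model $\# \in \{C,B,F\}$, and $\Phi^{\mathrm{TF}}$ is the atomic Thomas--Fermi potential of \eqref{eq:tfeq3}. The heart of the argument is that on the Thomas--Fermi length scale $Z^{-1/3}$ the relativistic kinetic energy is effectively non-relativistic and reproduces $E^{\mathrm{TF}}(Z)$, while on the Scott length scale $Z^{-1}$ a rescaling $x\mapsto x/Z$ turns $T_\# - Z/|x|$ into the rescaled hydrogenic operator ($C_\gamma^H$, $B_\gamma$, or $F_\gamma$), whose eigenvalue sum produces the coefficient $\tfrac{1}{2}-s_\#(\gamma)$ by the very definition of $s_\#(\gamma)$.

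For the lower bound I would first apply a correlation inequality of Lieb--Oxford type (e.g.\ \cite{Mancasetal2004}) to replace the electron-electron repulsion by $D(\rho_\psi,\rho_\psi)$ modulo an exchange error of order $Z^{5/3}$, then complete the square against the external potential to reduce the many-body lower bound to a one-body trace $\tr(T_\# - \Phi^{\mathrm{TF}})_- - D(\rho^{\mathrm{TF}},\rho^{\mathrm{TF}}) + U$. An IMS-type localization on a scale $r_Z = Z^{-\delta}$ with $\delta\in(1/3,1)$ separates an inner region around the nucleus from the bulk, with commutator errors absorbable into an arbitrarily small fraction of the kinetic energy. On the bulk one has $|p|\lesssim r_Z^{-1}\ll c$, so $T_\#$ is within $o(Z^2)$ of $-\tfrac12\Delta$; coherent-state semiclassics à la \cite{Lieb1981} then yields the Thomas--Fermi energy. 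On the inner region a further rescaling $y=Zx$ turns the trace into $Z^2\,\tr\bigl(T_\#(1)-\gamma/|y|+\mu/Z^2\bigr)_-$ plus corrections controlled by the Hardy--Lieb--Thirring inequality of \cite{Franketal2008H} (and its Brown--Ravenhall/Furry variants); the rescaled trace equals $\tfrac{q}{4}-s_\#(\gamma) = \tfrac12-s_\#(\gamma)$ by the identity $\sum_n \lambda_n^\# = \sum_n \lambda_n^S - \gamma^2 s_\#(\gamma)$ and the Scott computation of Subsubsection \ref{sss:scottliebarguments}.

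For the upper bound I would construct a trial state with two populations of orbitals: for small angular momenta (and radii of order $Z^{-1}$) the orbitals are chosen as eigenfunctions of the rescaled relativistic hydrogenic operator, which directly contribute $Z^2\sum_n\lambda_n^\#(\gamma)$; for larger angular momenta or larger radial scales, Macke-type or coherent-state orbitals adapted to $\rho^{\mathrm{TF}}$ produce the Thomas--Fermi energy. The cross terms between the two populations (Coulomb repulsion, orthogonalization, projection corrections in the Brown--Ravenhall and Furry cases) have to be shown to be $o(Z^2)$, using the rapid decay of the relativistic hydrogenic eigenfunctions outside the Scott length (quantified by Theorems \ref{existencerhoh} and \ref{existencerhohdirac}) and the $|x|^{-3/2}$ singularity of $\rho^{\mathrm{TF}}$.

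The main obstacle will be the inner region when $\gamma$ approaches the critical coupling $\gamma_\#$: the eigenvalues $\lambda_n^\#(\gamma)$ accumulate at the continuous spectrum and the spectral sum defining $s_\#(\gamma)$ must be controlled uniformly in the truncation scales. A secondary difficulty, specific to the Brown--Ravenhall and Furry cases, is that IMS localization does not commute with the spectral projections $\Lambda_{c,0}$ or $\Lambda_{c,Z}$, so localization must be implemented through projection-preserving variants as in \cite{Franketal2009} and \cite{HandrekSiedentop2015}, with commutator estimates based on resolvent kernels of the free and Coulomb--Dirac operators. Matching the powers of $Z$ in all the error terms (exchange, semiclassical, IMS, projection commutators, tail of the eigenvalue sum) to extract the quantitative rate $\co(Z^{47/24})$ requires optimizing the localization exponent $\delta$ and is the most delicate bookkeeping step.
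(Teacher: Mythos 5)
Since this is a review article, Theorem \ref{scottatom} is quoted from the literature (\cite{Franketal2008,Solovejetal2008,Franketal2009,HandrekSiedentop2015}) rather than proved here; so I can only compare your outline against the strategies in those references and against what the review does spell out elsewhere. Your high-level picture — correlation inequality plus completing the square to a one-body trace on the lower side, hybrid hydrogenic/semiclassical trial states on the upper side, with the bulk giving Thomas--Fermi and the inner region giving the Scott coefficient — is the right general shape, and you correctly identify the role of the spectral shifts $s_\#(\gamma)$ and the special difficulties of the projected models. The published proofs do not, however, proceed by a single direct multiscale attack in the way you describe. In \cite{Franketal2008} the key move is a difference argument: one writes $\sqrt{-c^2\Delta+c^4}-c^2 = -\tfrac12\Delta + [\sqrt{-c^2\Delta+c^4}-c^2+\tfrac12\Delta]$ with the bracket non-positive, imports the \emph{non-relativistic} Scott result (Theorem \ref{swscottnonrel}) as a black box, and shows that the bracket produces the lowering $-s_C(\gamma)Z^2+o(Z^2)$; this is much cleaner than re-doing the full asymptotic from scratch. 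The independent proof in \cite{Solovejetal2008} uses the Solovej--Spitzer coherent-state scheme of \cite{SolovejSpitzer2003}, which also differs structurally from what you sketch.

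Two points in your outline need correction. First, the relativistic Scott scale is $c^{-1}$, not $Z^{-1}$: the rescaling that turns $C^H_{c,Z}$ into the dimensionless $C^H_\gamma$ is $x\mapsto x/c$, and the inner trace is $c^2\,\tr(C^H_\gamma)_-$ with $c^2=Z^2/\gamma^2$; it is exactly the prefactor $\gamma^{-2}$ in the definition of $s_\#$ that converts this back into $Z^2\,s_\#(\gamma)$, so using $y=Zx$ obscures the bookkeeping. Second, a single IMS cut at $Z^{-\delta}$ with $\delta\in(1/3,1)$ is not sufficient even for the leading-order analysis in the relativistic setting: as recalled after Theorem \ref{gsenergyrelativistictf}, S{\o}rensen's argument already requires \emph{three} zones (an innermost region of scale $Z^{-i}$ with $i\in(8/9,1)$ handled by Hardy--Lieb--Thirring, an intermediate annulus handled by Daubechies, and the semiclassical bulk). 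For the Scott order one must push the inner scale essentially to $c^{-1}$ while keeping every localization error below $o(Z^2)$, which demands either position-dependent localization lengths (Ivrii--Sigal, Solovej--Spitzer) or the difference trick mentioned above.

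The genuine gap, though, is that your inner-region step is asserted, not proved. The identity $\sum_n\lambda_n^\#=\sum_n\lambda_n^S-\gamma^2 s_\#(\gamma)$ relates two divergent sums whose difference converges; extracting the $Z^2$-coefficient from $\tr(T_\#-\Phi^{\rm TF})_-$ requires a renormalization against the corresponding semiclassical phase-space integral, and that renormalized computation \emph{is} the Scott proof. The Hardy--Lieb--Thirring inequality controls tails and localization errors but cannot by itself give the precise coefficient. Subsubsection \ref{sss:scottliebarguments} makes clear how delicate this counting is even in the exactly soluble non-relativistic Bohr atom; for the Chandrasekhar, Brown--Ravenhall and Furry operators, whose eigenvalues are not (all) explicit, the corresponding shell-filling analysis is substantially more involved and is the core technical content that your proposal leaves open. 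Finally, for Brown--Ravenhall and Furry the issue is not that one needs a ``projection-preserving IMS'': the actual mechanism in \cite{Franketal2009} is the unitary $\bPhi_c$ of \eqref{eq:brunitary}, which transports $B_\gamma$ into a Chandrasekhar-like operator with the mollified potential $\ct_c(Z/|x|)$ in $L^2(\R^3:\C^2)$, and \cite{HandrekSiedentop2015} proves the Furry case by a comparison with Brown--Ravenhall rather than by localizing through $\Lambda_{c,Z}$.
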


\begin{remark}
  It is believed that these results also hold for ions (at least as
  long as the ionization degree is sufficiently small), since the electrons
  on length scales $\co(Z^0)$ should not disturb the energy generated by
  electrons on length scales $\co(Z^{-1})$.
  However, a rigorous proof is lacking.
\end{remark}

The energetic results on the Scott scale are accompanied by recent
results \cite{Franketal2020P,Franketal2020R,MerzSiedentop2020}
for the density in the Chandrasekhar and Furry cases.

\begin{theorem}[{\cite{Franketal2020P,Franketal2020R,MerzSiedentop2020}}]
  \label{fmss}
  Let
  \begin{align}
    \label{eq:testfunctionsexample}
    U \in \D := & \left\{W\in L_{\rm loc}^1(\R_+): \, \forall \epsilon>0 \, \exists a>0\, \forall r>0: \right.\notag \\
    & \qquad\qquad\qquad\quad\left. |W(r)|\leq a\left(r^{-1}\one_{\{r\leq1\}} + r^{-\frac32-\epsilon}\one_{\{r\geq1\}}\right) \right\}
  \end{align}
  be arbitrary.
  Then the following statements hold.
  \begin{enumerate}
  \item (Convergence for fixed angular momentum/spin-orbit coupling)
    Let $\ell\in\N_0$ and $\kappa\in\dot\Z$ be fixed. Then
    \begin{align}
      \label{eq:fmss1}
      \lim_{\substack{Z,c\to\infty}}\int_0^\infty c^{-3}\varrho_{\ell,C}(c^{-1}r)U(r)\,\dr & = \int_0^\infty \varrho_{\ell,C}^H(r)U(r)\,\dr
      \quad \text{for fixed}\ \frac{Z}{c}<\gamma_C, \\
      \label{eq:ms1}
      \lim_{\substack{Z,c\to\infty}}\int_{\R^3}c^{-3}\rho_{\kappa,F}(c^{-1}x)U(|x|)\,\dx & = \int_{\R^3}\rho_{\kappa,D}^H(x)U(|x|)\,\dx
      \quad \text{for fixed}\ \frac{Z}{c}<\gamma_F.
    \end{align}

  \item (Convergence of total density) We have
    \begin{align}
      \label{eq:fmss2}
      \lim_{\substack{Z,c\to\infty}}\int_{\R^3}c^{-3}\rho_{C}(c^{-1}x)U(|x|)\,\dx & = \int_{\R^3}\rho_C^H(x)U(|x|)\,\dx
      \quad \text{for fixed}\ \frac{Z}{c}<\gamma_C,\\
      \label{eq:ms2}
      \lim_{\substack{Z,c\to\infty}}\int_{\R^3}c^{-3}\rho_{F}(c^{-1}x)U(|x|)\,\dx & = \int_{\R^3}\rho_D^H(x)U(|x|)\,\dx
      \quad \text{for fixed}\ \frac{Z}{c}<\gamma_F.
    \end{align}
  \end{enumerate}
\end{theorem}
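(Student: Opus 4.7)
The natural approach is a linear response argument adapting the one in \cite{Iantchenkoetal1996} for the non-relativistic strong Scott conjecture (Theorem \ref{ilsdensitynonrel}), and building on the relativistic Scott expansion of Theorem \ref{scottatom}.

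First I would focus on the Chandrasekhar total-density case \eqref{eq:fmss2}; the three remaining assertions follow by the same scheme with cosmetic modifications. The starting point is the identity
\begin{align*}
\int_{\R^3} c^{-3}\rho_C(c^{-1}x)\,U(|x|)\,\dx
= \left\langle \psi_C, \sum_{\nu=1}^N U(c|x_\nu|)\, \psi_C\right\rangle,
\end{align*}
where $\psi_C$ is a (possibly approximate) ground state of $C_{N,V}$. Consider the perturbed family
\begin{align*}
C_{N,V}^{(\lambda)} := C_{N,V} + \lambda \sum_{\nu=1}^N U(c|x_\nu|),
\qquad e(\lambda,Z,c) := \inf\spec\bigl(C_{N,V}^{(\lambda)}\bigr),
\end{align*}
which defines a bounded-below closed quadratic form for $\lambda$ in a neighborhood of $0$ since $U\in\D$ has a sub-Coulomb singularity at the origin and decays like $r^{-3/2-\epsilon}$ at infinity. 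By the min--max principle, $\lambda\mapsto e(\lambda,Z,c)$ is concave, and pointwise convergence of a suitably normalized family of such concave functions on an open interval forces convergence of their subdifferentials at any point of differentiability of the limit. The proof thus reduces to establishing a uniform Scott-type expansion of $e(\lambda,Z,c)$ on a neighborhood of $\lambda=0$.

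Second, I would establish this uniform expansion. The perturbation $U(c|x|)$ lives on the Scott scale $c^{-1}$ and, by the decay of $U$, contributes only $o(Z^{7/3})$ at the Thomas--Fermi scale $Z^{-1/3}$; hence the leading Thomas--Fermi term is unchanged. After rescaling $y=cx$, the innermost electrons are governed by the perturbed hydrogenic operator $c^2 C_\gamma^H + \lambda\, U(|y|)$; first-order perturbation theory for its discrete spectrum, the summability of $\sum_n\langle\phi_n^C,U\phi_n^C\rangle$ provided by Theorem \ref{existencerhoh}, and the quantitative form of Theorem \ref{scottatom} from \cite{Solovejetal2008,Franketal2008} suggest the expansion
\begin{align*}
e(\lambda,Z,c) = E^{\rm TF}(Z) + \Bigl(\tfrac12 - s_C(\gamma)\Bigr)Z^2 + \lambda\!\int_{\R^3}\!\rho_C^H(x)\,U(|x|)\,\dx + o(Z^2) + o_Z(\lambda),
\end{align*}
uniform for $\lambda$ in a bounded neighborhood of $0$. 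Differentiating at $\lambda=0$ via the concavity argument yields \eqref{eq:fmss2}. For \eqref{eq:fmss1} I would replace $U(c|\cdot|)$ in the perturbation by $\Pi_\ell U(c|\cdot|)\Pi_\ell$, with $\Pi_\ell$ the projection onto the $\ell$-th angular momentum sector; the Furry cases \eqref{eq:ms1} and \eqref{eq:ms2} are handled identically using the positive-spectral projector $\Lambda_{c,Z}$ and invoking Theorem \ref{existencerhohdirac} in place of Theorem \ref{existencerhoh}.

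The main obstacle is the uniform validity of the Scott asymptotics in the second step. The unperturbed expansion of Theorem \ref{scottatom} was obtained in \cite{Solovejetal2008,Franketal2008,HandrekSiedentop2015} by a delicate multi-scale analysis blending coherent states on the Thomas--Fermi scale with hydrogenic trial functions on the Scott scale, and one has to verify that those estimates remain stable when the one-particle operator is perturbed by $\lambda U(c|x|)$ with $U\in\D$. The sub-Coulomb nature of the singularity at the origin (absorbable into the available Scott-scale kinetic energy via the Hardy--Lieb--Thirring inequality \eqref{eq:hlt}) and the decay of $U$ at infinity (controlling the Thomas--Fermi-scale remainder) make this plausible, but the technical work is substantial, particularly as $\gamma$ approaches the critical couplings $\gamma_C$ or $\gamma_F$---this explains the strict inequality $Z/c<\gamma_\#$ in the hypotheses.
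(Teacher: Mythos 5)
Your overall strategy---linear response in $\lambda$, applying the relativistic Scott expansion to the perturbed Hamiltonian, then extracting the density as a $\lambda$-derivative via concavity/Hellmann--Feynman---is the same as the paper's, and your reduction to a one-particle spectral statement is on the right track. However, you misidentify where the difficulty lies and pass over what the paper treats as the core technical obstacle.

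You locate the main obstacle in re-verifying the many-body Scott expansion uniformly in $\lambda$, but this part is in fact robust: since $U\in\D$ is dominated by a Coulomb potential near the origin and the perturbation $\lambda U(c|x|)\Pi_\ell$ lives on the Scott scale inside a single angular-momentum channel, the Scott correction of Theorem~\ref{scottatom} carries over with the hydrogenic operator $C_{\ell,\gamma}^H$ replaced by $C_{\ell,\gamma}^H-\lambda U$, and the Thomas--Fermi term is untouched. What you do \emph{not} address is the step that the paper devotes Theorems~\ref{diff}, \ref{diffgen0}, and \ref{fms} to: showing that
\begin{equation*}
\lambda\mapsto\Tr\bigl(C_{\ell,\gamma}^H-\lambda U\bigr)_-
\end{equation*}
is differentiable at $\lambda=0$ with derivative $\Tr\bigl[U\,\one_{(-\infty,0)}(C_{\ell,\gamma}^H)\bigr]=\int_0^\infty \varrho_{\ell,C}^H U\,\dr/(2\ell+1)$. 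Your invocation of ``first-order perturbation theory for its discrete spectrum'' together with the summability of $\sum_n\langle\phi_n^C,U\phi_n^C\rangle$ is exactly where the gap sits. Perturbation theory gives the derivative of each individual eigenvalue, but here $\inf\sigma_{\mathrm{ess}}(C_{\ell,\gamma}^H)=0$, so there are infinitely many negative eigenvalues accumulating at zero, and the interchange of derivative and infinite sum is not justified by summability of the limit alone. The paper resolves this either (a) by a dominated-convergence argument (short proof), or (b) via the generalized Hellmann--Feynman Theorems~\ref{diff}--\ref{diffgen0}, which are formulated precisely to cover the case of essential spectrum touching zero. Verifying the hypotheses of Theorem~\ref{diffgen0} requires the trace-class condition $(A+M)^{-s}U(A+M)^{-s}\in\mathfrak{S}^1$ for some $s>1/2$ (not $s=1/2$, which fails because $(k+1)^{-1}\notin L^1(\R_+,\dk)$), and converting this condition from $C_{\ell,\gamma}^H$ to the free radial operator $C_\ell$ requires the operator comparison of Theorem~\ref{fms} from \cite{Franketal2021}. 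Your proposal contains none of this machinery, and without it the $\lambda$-derivative step is unsupported.

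Finally, for the total-density statements \eqref{eq:fmss2} and \eqref{eq:ms2} you would additionally need to interchange the $\ell$-summation with the limits $Z\to\infty$ and $\lambda\to0$; the paper does this by proving a summable-in-$\ell$ bound of the form \eqref{eq:scottdensityaux2}, which your sketch omits entirely.
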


\begin{remarks}
  (1) For the sake of clarity we restricted attention to the above
    class $\D$ of test functions, although the results actually hold for
    a substantially larger class.

  (2) However, the exemplary test function class \eqref{eq:testfunctionsexample}
  is believed to be optimal.
  \begin{enumerate}[(a)]
  \item Due to Kato's inequality, we cannot expect (at least not for
    $\ell=0$) \eqref{eq:fmss1}-\eqref{eq:ms2} to hold for test
    functions, whose singularity is worse than $|x|^{-1}$.
    
  \item The $|x|^{-3/2-\epsilon}$ decay seems optimal in view of the expected
    $|x|^{-3/2}$-decay of $\rho_{C/F}^H$, see Theorems
    \ref{existencerhoh} and \ref{existencerhohdirac}.
  \end{enumerate}

  (3) In view of Theorems \ref{existencerhoh} and
  \ref{existencerhohdirac}, a transition between the length scales
  $Z^{-1}$ and $Z^{-1/3}$ is again clearly visible.
\end{remarks}

\subsubsection{Elements in the proof of Theorem \ref{fmss}}
\label{sss:elementsfmss}

The rest of this subsection is concerned with explaining the key elements
of the proof of Theorem \ref{fmss} in the Chandrasekhar case, i.e.,
the limits \eqref{eq:fmss1} and \eqref{eq:fmss2}.
For simplicity we set $q=1$ here.
We begin with the argument to prove \eqref{eq:fmss1}
for a fixed angular momentum channel.

We follow the lines of Lieb and Simon \cite{LiebSimon1977},
Baumgartner \cite{Baumgartner1976}, and 
\cite{Iantchenkoetal1996} by employing a linear-response argument.
Let $\lvert\psi\rangle\langle\psi\rvert$ be a ground state density
matrix of the atomic many-particle operator $C_Z$ (see
\eqref{eq:defchandramanyparticle}), and define, for $U\in\D$ and in
slight abuse of notation, the perturbed operator
\begin{align}
  C_{Z,\lambda} := C_Z - \lambda\sum_{\nu=1}^Z c^2U(c|x_\nu|)\Pi_{\ell,\nu}
  \quad \text{in}\ \bigwedge_{\nu=1}^Z L^2(\R^3).
\end{align}
Here $\Pi_{\ell}$ is the orthogonal projection in $L^2(\R^3)$ onto
the $\ell$-th angular momentum channel defined by
$$
\Pi_{\ell} := \sum_{m=-\ell}^{\ell} |Y_{\ell,m}\rangle\langle Y_{\ell,m}|,
$$
and $\Pi_{\ell,\nu}$ acts as $\Pi_\ell$ with respect to the $\nu$-th
particle. Since the singularity of $U$ is Coulombic,
$C_{Z,\lambda}$ is realized as a
self-adjoint operator by Kato's inequality if $\gamma<2/\pi$ and
$|\lambda|$ is sufficiently small. Moreover, since \eqref{eq:fmss1}
is linear in $U$ we may assume $U\geq0$ and $\lambda>0$ without loss
of generality.

By the linear response argument,
the Scott correction (Theorem \ref{scottatom}),
and scaling $x\mapsto x/c$, we have
\begin{align}
  \label{eq:scottdensityaux1}
  \begin{split}
    & \varlimsup\limits_{Z\to\infty}\int_0^\infty c^{-3}\varrho_{\ell,C}(r/c)U(r)\,\dr \\
    & \quad = \varliminf\limits_{\lambda\searrow0}\varlimsup\limits_{Z\to\infty}\tr_{\bigwedge_{\nu=1}^Z L^2(\R^3)}\left[\frac{|\psi\rangle\langle\psi|(C_Z-C_{Z,\lambda})}{\lambda c^2}\right]\\
    & \quad \leq
    (2\ell+1) \cdot \varliminf\limits_{\lambda\searrow0} \frac{\tr_{L^2(\R_+)}(C_{\ell,\gamma}^H-\lambda U(r))_- - \tr_{L^2(\R_+)}(C_{\ell,\gamma}^H)_-}{\lambda}
  \end{split}
\end{align}
with $C_{\ell,\gamma}^H$ as in \eqref{eq:defcellh}.
To compute the right side of \eqref{eq:scottdensityaux1} we have
two options.
\begin{enumerate}
\item Find a majorant to apply the dominated convergence theorem
  to interchange $\liminf_{\lambda\searrow0}$ and $\tr$. Then apply
  standard perturbation theory, i.e., the classical Hellmann--Feynman
  theorem for a single eigenvalue.
  This lead to the shorter proof in \cite{Franketal2020R}.
  
\item Compute the derivative with respect to $\lambda$ directly.
  This lead to the longer, original proof in \cite{Franketal2020P}.
\end{enumerate}

Here we shall present the arguments of the longer proof,
as we believe that it better unearths the involved mathematics
of the relativistic strong Scott conjecture.
Besides, it allows us to popularize a generalization of the
classical Hellmann--Feynman theorem that may be of independent
interest in the analysis of many-particle problems. 

First we state this generalized Hellmann--Feynman theorem with
``natural'' assumptions on the perturbation.
However, this version is not applicable to our problem.
Afterwards we state a generalization with weaker
assumptions, which suffices for our
purposes. Recall that an operator $B$ is called \emph{relatively form
  trace class} with respect to a self-adjoint operator $A$ that
is bounded from below, if $(A+M)^{-1/2}B(A+M)^{-1/2}$ is trace class for
some (and hence any) large enough $M>0$.

\begin{theorem}[{\cite[Theorem~3.1]{Franketal2020P}}]
  \label{diff}
  Assume that $A$ is a self-adjoint operator in some Hilbert space
  with $A_-$ trace class.
  Assume that $B$ is a non-negative operator in the same Hilbert space,
  and relatively form trace class with respect to $A$.
  Then the one-sided derivatives of
  $$
  \lambda\mapsto S(\lambda) := \Tr(A-\lambda B)_-
  $$
  satisfy
\begin{equation}
\label{eq:diff}
  \Tr B \one_{(-\infty,0)}(A) = D^-S(0) 
  \leq D^+S(0) = \Tr B \one_{(-\infty,0]}(A).
\end{equation}
  In particular, $S$ is differentiable at $\lambda=0$, if and only
  if $B|_{\ker A} = 0$.
\end{theorem}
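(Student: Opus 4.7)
My plan is to exploit the variational characterization
\[
S(\lambda) \;=\; \Tr(A - \lambda B)_- \;=\; \sup_{0 \leq \gamma \leq 1,\ \gamma\ \text{trace class}}\ \bigl[\lambda \Tr(B\gamma) - \Tr(A\gamma)\bigr],
\]
which expresses $S$ as a pointwise supremum of affine functions of $\lambda$. Consequently $S$ is convex on $\R$, and the one-sided derivatives $D^\pm S(0)$ exist in $(-\infty,\infty]$ and satisfy $D^- S(0) \leq D^+ S(0)$ automatically. To locate them, I would first extract half of \eqref{eq:diff} using test states: writing $P^- := \one_{(-\infty, 0)}(A)$ and $P^0 := \one_{\{0\}}(A)$, for any trace-class $Q$ with $0 \leq Q \leq P^0$ the operator $\gamma = P^- + Q$ is a minimizer of $\gamma \mapsto \Tr(A\gamma)$, so the variational inequality yields $S(\lambda) - S(0) \geq \lambda \Tr(B\gamma)$ for all $\lambda$. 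By convexity this places $\Tr(B\gamma)$ in the subdifferential $[D^- S(0), D^+ S(0)]$. Taking $Q = 0$ gives $D^- S(0) \leq \Tr B P^-$, while letting $Q$ increase to $P^0$ along finite-rank projections and using monotone convergence gives $D^+ S(0) \geq \Tr B \one_{(-\infty, 0]}(A)$.

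For the matching inequalities I would analyze the spectral projections $\gamma_\lambda := \one_{(-\infty, 0)}(A - \lambda B)$, which realize the supremum at parameter $\lambda$. The identity $S(\lambda) = -\Tr((A - \lambda B)\gamma_\lambda)$ combined with the variational bound $-\Tr(A\gamma_\lambda) \leq S(0)$ yields $\tfrac{S(\lambda) - S(0)}{\lambda} \leq \Tr(B\gamma_\lambda)$ for $\lambda > 0$ and the reverse for $\lambda < 0$. The crux is then to prove $\limsup_{\lambda \to 0^+} \Tr(B\gamma_\lambda) \leq \Tr B \one_{(-\infty, 0]}(A)$ and $\liminf_{\lambda \to 0^-} \Tr(B\gamma_\lambda) \geq \Tr B P^-$. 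My approach is a Birman--Schwinger sandwich: for $M > \|A_-\|$, the relative form-trace-class hypothesis makes $K := (A+M)^{-1/2} B (A+M)^{-1/2}$ genuinely trace class and nonnegative, and setting $C_0 := I - M(A+M)^{-1}$ one has $(A + M)^{-1/2}(A - \lambda B)(A+M)^{-1/2} = C_0 - \lambda K$, where $C_0$ shares the spectral sign structure and kernel of $A$. Analytic perturbation theory for the compact self-adjoint operator $C_0 - \lambda K$ near $\lambda = 0$ shows that eigenvalues of $C_0$ bounded away from $0$ are essentially unaffected, while the zero-eigenspace $\ker C_0 = \ker A$ splits under $\lambda K$ into an invariant piece on which $K$ (equivalently $B$) is annihilated and a complementary piece that descends into the negative spectrum for $\lambda > 0$ and ascends into the positive spectrum for $\lambda < 0$. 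Transferring this spectral information back through the sandwich pins down the two one-sided limits and yields the matching bounds.

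The final assertion is then immediate: $S$ is differentiable at $0$ iff $D^+ S(0) - D^- S(0) = \Tr B \one_{\{0\}}(A) = 0$, and since $B \geq 0$ this trace vanishes precisely when $P^0 B P^0 = 0$, i.e.\ when $B|_{\ker A} = 0$. The main obstacle in the programme is the interplay between a potentially infinite-dimensional $\ker A$ and the fact that $B$ is only form-relatively-trace-class rather than genuinely trace class, which prevents one from simply invoking finite-dimensional Hellmann--Feynman or writing $\Tr B \one_{(-\infty,0]}(A)$ as a trace-norm limit of $\Tr B \gamma_\lambda$. The Birman--Schwinger reduction to $C_0 - \lambda K$ is what circumvents this by transferring the question to an analytic perturbation statement for a genuinely trace-class compact operator at $\lambda = 0$.
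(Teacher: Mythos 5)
The first half of your proposal is sound and, as far as I can tell, matches the paper's strategy: you use the Ky Fan variational representation
\[
S(\lambda)=\sup_{0\le\gamma\le 1}\bigl[\lambda\Tr(B\gamma)-\Tr(A\gamma)\bigr],
\]
which makes $S$ convex as a supremum of affine functions of $\lambda$; testing with $\gamma=P^{-}$ resp.\ $\gamma=P^{-}+P^{0}$ places $\Tr BP^{-}$ and $\Tr B\one_{(-\infty,0]}(A)$ in the subdifferential $\partial S(0)$ and so yields $D^{-}S(0)\le\Tr BP^{-}$ and $D^{+}S(0)\ge\Tr B\one_{(-\infty,0]}(A)$. (You can take $\gamma=P^{-}+P^{0}$ outright; $\Tr(A\gamma)=-\Tr A_{-}$ is finite since $A_{-}$ is trace class, and $\Tr(B\gamma)$ is finite because $(A+M)^{1/2}\one_{(-\infty,0]}(A)$ is bounded and $K=(A+M)^{-1/2}B(A+M)^{-1/2}$ is trace class, so the monotone-convergence detour is not needed.) The reduction of the matching inequalities to $\limsup_{\lambda\to 0^{+}}\Tr(B\gamma_\lambda)\le\Tr B\one_{(-\infty,0]}(A)$ and $\liminf_{\lambda\to 0^{-}}\Tr(B\gamma_\lambda)\ge\Tr BP^{-}$ is also correct.

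The genuine gap is the proposed mechanism for those two limits. Your Birman--Schwinger reduction to $C_{0}-\lambda K$ with $C_{0}=A(A+M)^{-1}$ faces three obstacles you have not addressed, and I do not believe the programme can be carried out in the form sketched. First, $C_{0}-\lambda K$ is \emph{not} compact (only $K$ is; $C_{0}=I-M(A+M)^{-1}$ has essential spectrum accumulating at $1$ whenever $A$ is unbounded), so the appeal to ``analytic perturbation theory for the compact self-adjoint operator $C_{0}-\lambda K$'' is not meaningful as stated. Second, and more seriously, analytic (Kato--Rellich) perturbation theory near $\lambda=0$ applies to isolated eigenvalues of finite multiplicity, whereas the hypotheses permit $\ker A$ to be infinite-dimensional and permit negative eigenvalues of $A$ to accumulate at $0$ (then $0$ is not even isolated in $\spec C_0$); this is precisely the situation the theorem is designed to cover, as the paper's Remark~(3) after Theorem~\ref{diff} emphasizes, so any proof must handle it, and your sketch offers no substitute for isolation and finite multiplicity. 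Third, even if the spectral family of $C_{0}-\lambda K$ near $0$ were understood, the quantity you need is $\Tr(B\gamma_\lambda)$ with $\gamma_\lambda=\one_{(-\infty,0)}(A-\lambda B)$, and the Birman--Schwinger congruence $A-\lambda B=(A+M)^{1/2}(C_{0}-\lambda K)(A+M)^{1/2}$ only relates the \emph{inertia} (dimensions of positive/negative/null subspaces), not the spectral projections themselves; the conjugation $(A+M)^{1/2}$ is unbounded and non-unitary, so $\gamma_\lambda$ is not conjugate to $\one_{(-\infty,0)}(C_{0}-\lambda K)$, and ``transferring the spectral information back through the sandwich'' to identify $\lim\Tr(B\gamma_\lambda)$ is a substantial unsolved step, not a routine transcription. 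Until those three points are resolved, the hard half of \eqref{eq:diff} remains unproven; the paper itself establishes the matching bounds by a direct argument rather than by a Birman--Schwinger reduction of the kind you propose.
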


\begin{remarks}
  (1) The relative form trace class assumption implies
  that the expression on the right of \eqref{eq:diff}, and
  consequently also that on the left, is finite.
  
  (2) By the variational principle it follows that $S$ is convex.
  Thus, $S$ has left and right sided derivatives
  (cf.~\cite[Theorem 1.26]{Simon2011}).
  
  (3) If $\inf\sigma_{\mathrm{ess}}(A)>0$ or $A-\lambda B$ has only
  finitely many negative eigenvalues, then the derivative and the
  trace can be interchanged and the result follows from the classical
  Hellman--Feynman theorem.
  The point is that the formulae remain valid even when the bottom
  of the essential spectrum is zero, so that perturbation theory is
  not directly applicable.  
\end{remarks}

In our application, $A$ is the Chandrasekhar operator $C_{\ell,\gamma}^H$
(which has no zero eigenvalue) and $B=U$ in $L^2(\br_+,\dr)$.
Thus, \eqref{eq:fmss1} would follow from Theorem \ref{diff}, if one could
verify its assumptions. By Kato's inequality (using $\gamma<2/\pi$),
one can replace $A$ by the kinetic energy $C_\ell$. In this case, the relative
trace class condition can be formulated explicitly using the Fourier--Bessel
transform. But since $(k+1)^{-1} \notin L^1(\R_+,\dk)$, this shows that
$C_{\ell,\gamma}^H$ cannot satisfy the relative trace class assumption,
no matter how nice the perturbation $B=U$ is.

For this reason, we proved and used a generalization of Theorem \ref{diff},
where the relative trace class assumption is stated with respect to
$(A+M)^{2s}$ for some $s>1/2$. We will state this generalization in
Theorem~\ref{diffgen0} in a moment.
Using the following special case of \cite[Theorem~1.1]{Franketal2021},
the assumption of Theorem~\ref{diffgen0} can be recast as an assumption
involving $C_\ell$ instead of $C_{\ell,\gamma}^H$.

\begin{theorem}[{\cite[Theorem~1.1]{Franketal2021}}]
  \label{fms}
  Let $2/\pi\geq\gamma\geq0$,
  $0<s<3/2-\sigma_\gamma$ (with $\sigma_\gamma$ as in
  \eqref{eq:defsigmagamma}), and $V\in L_{\rm loc}^1(\R^3)$ satisfying
  $-\gamma/|x|\leq V\leq0$. Then we have the quadratic
  form inequality
  \begin{align}
    |p|^{2s} \lesssim (|p|+V)^{2s}.
  \end{align}
\end{theorem}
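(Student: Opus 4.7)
The strategy is to combine Kato's inequality with the ground state transform \eqref{eq:orggstransform} to establish the comparison $|p|^{2s}\lesssim(|p|+V)^{2s}$ in two ranges of $s$. Since $V\geq -\gamma/|x|$ pointwise, the form inequality $|p|+V\geq|p|-\gamma/|x|$ holds, and Kato's inequality \eqref{eq:kato} yields $\gamma/|x|\leq(\pi\gamma/2)|p|$, so that $|p|+V\geq(1-\pi\gamma/2)|p|$ as quadratic forms when $\gamma<2/\pi$. For the range $0<s\leq\tfrac12$, the L\"owner--Heinz theorem (operator monotonicity of $t\mapsto t^{2s}$ on $[0,\infty)$ for $2s\leq 1$) immediately gives
\begin{align*}
  (|p|+V)^{2s}\geq (1-\pi\gamma/2)^{2s}|p|^{2s},
\end{align*}
which is the desired inequality with an explicit constant. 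The endpoint $\gamma=2/\pi$ can be reached by invoking the strict improvement of Kato's inequality by Raynal et al.~\cite{Raynaletal1994} mentioned after \eqref{eq:kato}, or by a limiting argument that uses the ground state transform to handle the otherwise vanishing constant.

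The substantive range is $s\in(\tfrac12,\tfrac32-\sigma_\gamma)$, where $t\mapsto t^{2s}$ fails to be operator monotone and the naive lifting from $s=\tfrac12$ breaks down. Here the plan is to exploit \eqref{eq:orggstransform}: writing $u(x)=|x|^{-\sigma_\gamma}v(x)$ and recalling $\Phi(\sigma_\gamma)=\gamma$, the quadratic form $\langle u,(|p|-\gamma/|x|)u\rangle$ becomes a weighted Dirichlet-type form for $v$, exhibiting $|p|-\gamma/|x|$ as a weighted fractional operator associated to the measure $|x|^{-2\sigma_\gamma}\,dx$. The comparison of the $2s$-th powers then reduces to comparing a weighted homogeneous Sobolev norm of order $s$ for $v$ with the unweighted one for $u$. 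Via Plancherel and the mapping properties of Riesz potentials between weighted $L^2$-spaces, this becomes a Stein--Weiss-type estimate; the constraint $s<\tfrac32-\sigma_\gamma$ emerges as the sharp admissibility condition $2(s+\sigma_\gamma)<3$ for the relevant weight in dimension three. The reduction from the general $V$ to the extremal potential $-\gamma/|x|$ is then carried out by keeping the pointwise bound $-V\leq\gamma/|x|$ in hand throughout this step rather than invoking operator monotonicity.

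The main obstacle will be the combination of the failure of operator monotonicity for $2s>1$ with the non-commutativity of $|p|$ and $|x|^{-1}$: one cannot simply raise the $s=\tfrac12$ inequality to higher powers by spectral calculus, and any attempt to do so via an integral representation of $t^{2s}$ forces one to control resolvents of the two operators on common weighted spaces. Bridging this gap through the ground state transform requires Stein--Weiss or Muckenhoupt-type bounds in the sharp range, and verifying that the commutator errors introduced by the substitution $u\mapsto|x|^{\sigma_\gamma}u$ remain controlled. The sharpness of the upper limit reflects the fact that $|x|^{-\sigma_\gamma}$, the generalized ground state, just fails to lie in the form domain of $(|p|-\gamma/|x|)^{s}$ at $s=\tfrac32-\sigma_\gamma$, so no further extension is available by this approach.
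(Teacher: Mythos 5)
Your two-range plan rests on a base case that collapses precisely where the theorem is most delicate, and on a ``reduction'' for the substantive range that is not actually carried out.

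For $0<s\le\tfrac12$, the L\"owner--Heinz argument is fine when $\gamma<2/\pi$, but at $\gamma=2/\pi$ the constant $(1-\pi\gamma/2)^{2s}$ vanishes and the argument produces nothing. The theorem at $\gamma=2/\pi$ asserts $|p|^{2s}\lesssim(|p|-\tfrac2\pi|x|^{-1})^{2s}$ for all $0<s<\tfrac12$, which is a genuinely nontrivial statement: there is \emph{no} $c>0$ with $|p|-\tfrac2\pi|x|^{-1}\ge c\,|p|$, so no version of Kato plus operator monotonicity can reach it. Your appeal to Raynal et al.\ does not help: that result concerns the massive operator $\sqrt{-\Delta+1}-1-\gamma/|x|$ being bounded below by a constant $>-1$, which says nothing about a strict improvement of the homogeneous Kato inequality for $|p|$. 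And the ``limiting argument'' is not described; it is not clear what is being limited or how the constant survives.

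For $\tfrac12<s<\tfrac32-\sigma_\gamma$, the central difficulty is exactly the one you flag (non-commutativity, failure of operator monotonicity), but the proposed route does not overcome it. The ground state transform \eqref{eq:orggstransform} gives a nonlocal Dirichlet-form representation of the quadratic form of the \emph{first} power $|p|-\gamma/|x|$; it does not give any representation of $(|p|-\gamma/|x|)^{2s}$. Passing to the similar operator $L_\sigma=W(|p|-\gamma/|x|)W^{-1}$ on $L^2(|x|^{-2\sigma_\gamma}dx)$ merely conjugates: the form of $L_\sigma^{2s}$ at $v$ is $\|(|p|-\gamma/|x|)^s(|x|^{-\sigma_\gamma}v)\|^2$, and the form of $W|p|^{2s}W^{-1}$ at $v$ is $\||p|^s(|x|^{-\sigma_\gamma}v)\|^2$. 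Comparing these is verbatim the statement of the theorem; nothing has been gained. To make the Stein--Weiss idea work one would need a priori control of $(|p|-\gamma/|x|)^s$ -- e.g.\ via heat-kernel or resolvent-kernel bounds for the Hardy operator (which is indeed how such comparisons are usually proved, and heat-kernel bounds of Bogdan et al.\ are cited nearby in the paper) -- but this is a separate, substantial ingredient that your proposal does not supply.

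Finally, the reduction from general $V$ with $-\gamma/|x|\le V\le0$ to the extremal potential is also not addressed: $|p|+V\ge|p|-\gamma/|x|$ is a form inequality for the first power, and for $2s>1$ it does \emph{not} yield $(|p|+V)^{2s}\ge(|p|-\gamma/|x|)^{2s}$. ``Keeping the pointwise bound in hand'' does not fix this; some mechanism that is monotone under pointwise potential comparison (such as Feynman--Kac/Trotter domination of heat kernels) is needed, and none is proposed. As it stands, the proposal restates the difficulty in the language of weighted forms rather than resolving it.
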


We are now ready to state the generalization of Theorem \ref{diff}.
\begin{theorem}[{\cite[Theorem~3.2]{Franketal2020P}}]
  \label{diffgen0}
  Assume that $A$ is self-adjoint with $A_-$ trace class. Assume that $B$ is
  non-negative and relatively form bounded with respect to $A$. Assume that
  there are $1/2< s\leq 1$ such that for some $M>-\inf\spec A$,
    \begin{equation}
    \label{eq:traceclassdelta0}
    (A+M)^{-s} B(A+M)^{-s} \qquad\text{is trace class}
  \end{equation}
  and
  \begin{equation}
    \label{eq:relbounddelta}
    \limsup_{\lambda\to 0} \left\| (A+M)^{s} (A-\lambda B+M)^{-s} \right\| <\infty.
  \end{equation}
  Then the conclusions in Theorem \ref{diff} are valid.
\end{theorem}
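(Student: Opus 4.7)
The plan is to follow the trace-variational scheme of Theorem~\ref{diff}, replacing uses of the full relative form trace class hypothesis by the fractional substitute \eqref{eq:traceclassdelta0}--\eqref{eq:relbounddelta}. The variational formula
$$-S(\lambda)=\inf_{0\le Q\le I}\Tr\bigl((A-\lambda B)Q\bigr)$$
exhibits $-S$ as a pointwise infimum of affine functions of $\lambda$, so $S$ is convex and the one-sided derivatives $D^\pm S(0)$ exist. Testing with the fixed trial operators $Q=\one_{(-\infty,0)}(A)$ and $Q=\one_{(-\infty,0]}(A)$ yields
$$S(\lambda)-S(0)\ge\lambda\,\Tr\bigl(B\,\one_{(-\infty,0)}(A)\bigr),\qquad S(\lambda)-S(0)\ge\lambda\,\Tr\bigl(B\,\one_{(-\infty,0]}(A)\bigr),$$
where both traces on the right are finite because $(A+M)^{2s}\,\one_{(-\infty,0]}(A)$ has operator norm at most $M^{2s}$ and $(A+M)^{-s}B(A+M)^{-s}$ is trace class by hypothesis. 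Taking $\lambda\to 0^+$ in the second estimate and $\lambda\to 0^-$ in the first yields the easy directions $D^+S(0)\ge\Tr(B\,\one_{(-\infty,0]}(A))$ and $D^-S(0)\le\Tr(B\,\one_{(-\infty,0)}(A))$.

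For the matching bounds, I would set $P_\lambda:=\one_{(-\infty,0)}(A-\lambda B)$, which is the minimizer in the variational formula at parameter $\lambda$. Then $-S(\lambda)=\Tr((A-\lambda B)P_\lambda)$, while $-S(0)\le\Tr(AP_\lambda)$ by the same formula at $\lambda=0$; subtracting gives
$$S(\lambda)-S(0)\le\lambda\,\Tr(BP_\lambda).$$
Dividing by $\lambda$ and passing to the one-sided limits reduces the theorem to the semicontinuity estimates
$$\varlimsup_{\lambda\searrow 0}\Tr(BP_\lambda)\le\Tr\bigl(B\,\one_{(-\infty,0]}(A)\bigr),\qquad\varliminf_{\lambda\nearrow 0}\Tr(BP_\lambda)\ge\Tr\bigl(B\,\one_{(-\infty,0)}(A)\bigr).$$

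The central step, and main obstacle, is to establish these semicontinuity claims under the fractional hypothesis. I would factor
$$\Tr(BP_\lambda)=\Tr\bigl(K_\lambda\,(A-\lambda B+M)^{2s}P_\lambda\bigr),\qquad K_\lambda:=(A-\lambda B+M)^{-s}B(A-\lambda B+M)^{-s},$$
where the bounded factor satisfies $\|(A-\lambda B+M)^{2s}P_\lambda\|\le M^{2s}$ uniformly in $\lambda$. For the trace-class factor, the algebraic identity
$$K_\lambda=\bigl[(A-\lambda B+M)^{-s}(A+M)^s\bigr]\,K_0\,\bigl[(A+M)^s(A-\lambda B+M)^{-s}\bigr]$$
together with hypothesis \eqref{eq:relbounddelta} (which bounds both bracketed factors in operator norm) gives $\varlimsup_{\lambda\to 0}\|K_\lambda\|_1\lesssim\|K_0\|_1$. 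To upgrade this uniform bound to the genuine semicontinuity, I would exploit that the form boundedness of $B$ entails norm resolvent convergence $(A-\lambda B+M)^{-1}\to(A+M)^{-1}$, hence strong convergence of $P_\lambda$ to $\one_{(-\infty,0]}(A)$ for $\lambda\searrow 0$ and to $\one_{(-\infty,0)}(A)$ for $\lambda\nearrow 0$; the one-sidedness here is dictated by the sign of the perturbation $-\lambda B$, which either sweeps the kernel of $A$ into $(-\infty,0)$ or out of it. Combining the uniform trace-norm bound on $K_\lambda$ with the strong convergence of the bounded factor through a dominated-convergence argument in the trace ideal yields the two semicontinuity inequalities and closes the proof. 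The delicate point, where the full strength of \eqref{eq:relbounddelta} is essential, is to rule out that an infinite amount of spectrum of $A-\lambda B$ accumulates at zero and contributes spuriously to $\Tr(BP_\lambda)$ in the limit.
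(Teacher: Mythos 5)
Your skeleton is right up to and including the reduction $S(\lambda)-S(0)\le\lambda\,\Tr(BP_\lambda)$, the factorization $\Tr(BP_\lambda)=\Tr(K_\lambda(A-\lambda B+M)^{2s}P_\lambda)$, the uniform bound $\|(A-\lambda B+M)^{2s}P_\lambda\|\le M^{2s}$, and the uniform control $\|K_\lambda\|_1\lesssim\|K_0\|_1$ via the conjugation identity and \eqref{eq:relbounddelta}. The gap is in the passage to the limit, and you have in effect named it without closing it. Two things go wrong. First, the claimed strong convergence $P_\lambda\to\one_{(-\infty,0]}(A)$ as $\lambda\searrow 0$ is \emph{not} a consequence of norm resolvent convergence: NRC gives strong convergence of $\one_{(-\infty,c)}(A-\lambda B)$ only when $c$ is not an eigenvalue of $A$, and the whole content of Theorem~\ref{diff} is precisely the case $\ker A\neq\{0\}$. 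Worse, the claim is literally false if $\ker A\cap\ker B\neq\{0\}$: such a vector stays in $\ker(A-\lambda B)$ and is never swept into $(-\infty,0)$, so $P_\lambda$ cannot converge strongly to $\one_{(-\infty,0]}(A)$ (the theorem survives only because $B$ annihilates the difference, but that is exactly the kind of cancellation your convergence claim cannot see). One must instead argue directly about $\Tr(BP_\lambda)$, not about $P_\lambda$.

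Second, even granting a suitable convergence of the bounded factor, the step ``dominated convergence in the trace ideal'' is not a proof. To conclude $\Tr(K_\lambda X_\lambda)\to\Tr(K_0X_0)$ one needs $K_\lambda\to K_0$ in \emph{trace norm}, which via $K_\lambda=R_\lambda^*K_0R_\lambda$ with $R_\lambda=(A+M)^s(A-\lambda B+M)^{-s}$ would follow from $R_\lambda\to\one$ \emph{strongly} together with the uniform bound; but hypothesis \eqref{eq:relbounddelta} provides only $\limsup_\lambda\|R_\lambda\|<\infty$, and strong convergence of $R_\lambda$ to the identity is a nontrivial additional fact that you do not establish (the natural attempt, testing on the dense range of $(A+M)^{-s}$, is circular). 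The ``delicate point'' you flag in your final sentence --- controlling the contribution of spectrum of $A-\lambda B$ accumulating at zero --- is precisely what remains to be proved, and the paper's argument for Theorem~\ref{diffgen0} consists in carrying out that estimate (separating the spectral window near zero from the rest and exploiting the trace-class hypothesis \eqref{eq:traceclassdelta0} there) rather than invoking strong convergence of spectral projections. As written, your proposal identifies the correct obstacle but does not overcome it.
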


Thus, to conclude the proof of \eqref{eq:fmss1}, we are left with showing
the assumptions of Theorem \ref{diffgen0}. Since the test functions
$U$ in the above formulation of Theorem \ref{fmss} are bounded by a
multiple of the Coulomb potential,
the condition \eqref{eq:relbounddelta} can be verified easily using
Theorem~\ref{fms}.
We now verify \eqref{eq:traceclassdelta0} with $A=C_{\ell,\gamma}^H$
and $B=U$. Letting $\|\cdot\|_2$ denote the Hilbert--Schmidt norm,
we have, for any $1/2<s\leq1$,
\begin{align*}
  & \|U^{1/2}(C_{\ell,\gamma}^H+M)^{-s}\|_2^2
  \lesssim \|U^{1/2}(C_{\ell}+M)^{-s}\|_2^2\\
  & \quad = \int_0^\infty \dr\ U(r)\int_0^\infty \dk\ \frac{kr J_{\ell+1/2}(kr)^2}{(\sqrt{k^2+1}-1+M)^{2s}}
    \lesssim \|U\|_{\ck_{s}^{(0)}},
\end{align*}
where we used Theorem \ref{fms} and the Fourier--Bessel transform.
Here, 
\begin{align*}
  \|W\|_{\ck_{s}^{(0)}} := \sup_{R\geq1/2} \left[\int_0^R \left(\frac{r}{R}\right)^{2s-1}|W(r)|\,\dr + \int_R^\infty |W(r)|\,\dr\right].
\end{align*}
Clearly, functions $U\in\D$, the test function space
\eqref{eq:testfunctionsexample}, satisfy
$\|U\|_{\ck_{s}^{(0)}}<\infty$.
This concludes the sketch of the arguments in the proof of \eqref{eq:fmss1}.


\medskip
To prove the convergence of the total density $c^{-3}\rho_C(x/c)$,
we also need to interchange the $\ell$-summation with the
limits $Z\to\infty$ and $\lambda\to0$. This is done using the
dominated convergence theorem. To apply it,
it suffices to prove that there is an $\epsilon>0$ such that
\begin{align}
  \label{eq:scottdensityaux2}
  \tr(C_{\ell}-V-\lambda U(r))_- - \tr(C_{\ell}-V)_-
  \lesssim \lambda(\ell+1/2)^{-2-\epsilon},
\end{align}
whenever $0\leq V\leq\gamma/r$.
In our application, $V$ is closely related to the Thomas--Fermi potential.
For the proof of \eqref{eq:scottdensityaux2}, see
\cite[Section~5]{Franketal2020P}.

\subsection{Molecules without magnetic fields}
\label{ss:relmolecule}

We immediately present the energetic result for Chandrasekhar
molecules on the Scott scale, which was proved by
Solovej et al.~\cite{Solovejetal2008}.

\begin{theorem}
  \label{scottchandrasekharmolecule}
  Let $q\in\N$, $\uz=(z_1,...,z_K)\in(0,1)^K$ with
  $\sum_{\kappa=1}^K z_\kappa=1$ and $\ur=(r_1,...,r_K)\in\R^{3K}$ with
  $\min_{k\neq\ell}|r_k-r_\ell|>r_0$ for some $r_0>0$. Define
  $\uZ=(Z_1,...,Z_K)=|\uZ|z$ and $R=|\uZ|^{-1/3}r$ for $|\uZ|>0$.
  Let $E^{\rm TF}(\uz,\ur)$ be the Thomas--Fermi energy of the
  unconstrained problem \eqref{eq:tfunconstrained} and
  $E_{c,C}(N,\uZ,\uR)$ be the ground state energy with nuclear
  configuration $\uZ$ and $\uR$.
  Then the following statements hold:
  \begin{enumerate}
  \item The function
    \begin{align}
      \left[0,\frac2\pi\right]\ni\gamma\mapsto \cs_C(\gamma):=
      \lim_{\kappa\to0}\left(\iint \left[\frac{p^2}{2}-\frac{1}{|v|}+\kappa\right]_-\,\frac{\rd p\,\rd v}{(2\pi)^3} - \tr[H_C(\gamma) + \kappa]_-\right)
    \end{align}
    with
    \begin{align}
      H_C(\gamma) :=
      \begin{cases}
        \sqrt{-\gamma^{-2}\Delta+\gamma^{-4}}-\gamma^{-2}-1/|x| & \quad \text{if}\ \gamma\in(0,2/\pi],\\
        -\frac12\Delta-1/|x| & \quad \text{if}\ \gamma=0
      \end{cases}
    \end{align}
    is continuous, monotone decreasing,
    and satisfies $\cs_C(0)=1/4$.

  \item As $|\uZ|=\sum_{\kappa=1}^K Z_\kappa\to\infty$ and $c\to\infty$
    with $\max_\kappa\{Z_\kappa/c\}\leq2/\pi$, one has
    \begin{align}
      E_{c,C}(|\uZ|,\uZ,\uR) = E^{\rm TF}(\uz,\ur)|\uZ|^{7/3} + q\sum_{\kappa=1}^K Z_\kappa^2 \cdot \cs_C\left(\frac{Z_\kappa}{c}\right) + \co(|\uZ|^{2-1/30}).
    \end{align}
    The error term $\co(|\uZ|^{2-1/30})$ means that
    $|\co(|\uZ|^{2-1/30})|\lesssim |\uZ|^{2-1/30}$, where the implicit
    constant depends only on $r_0$ and $K$.
  \end{enumerate}
\end{theorem}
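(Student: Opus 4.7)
The proof splits naturally into analyzing the one-atom functional $\cs_C(\gamma)$ (part (1)) and then establishing the two-term asymptotics (part (2)) via a multiscale reduction to the single-atom case, following the strategy of Solovej and Spitzer \cite{SolovejSpitzer2003} combined with the relativistic one-body analysis of \cite{Franketal2008,Solovejetal2008}.

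For part (1), I would first observe that the pointwise inequality $\sqrt{\gamma^{-2}p^2+\gamma^{-4}}-\gamma^{-2}\leq p^2/2$, which is sharp as $\gamma\searrow 0$, implies $H_C(\gamma)\leq H_C(0)$ as quadratic forms, so by the Birman--Schwinger / min-max principle each Chandrasekhar eigenvalue lies below the corresponding Schr\"odinger one. Since the semiclassical phase-space integral in the definition of $\cs_C$ is $\gamma$-independent, this yields $\cs_C(\gamma)\leq \cs_C(0)=\frac14$ and, by the same monotonicity of $\gamma\mapsto H_C(\gamma)$, that $\cs_C$ is non-increasing. The value $\cs_C(0)=\frac14$ is the classical Scott constant computed by Scott and March as reviewed in Subsection~\ref{sss:scottliebarguments}. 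Continuity on $(0,2/\pi]$ follows from dominated convergence once one has a $\gamma$-uniform summability bound on the negative eigenvalues of $H_C(\gamma)$, which can be extracted from the Hardy--Lieb--Thirring inequality \eqref{eq:hlt} together with the hydrogenic bounds $|\lambda_n^C|\lesssim n^{-2}$; the limit $\gamma\searrow 0$ requires verifying that the semiclassical cancellation survives, which is a by-product of the single-atom bounds established in \cite{Franketal2008}.

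For the upper bound in part (2), the plan is to build a trial density matrix that on the Thomas--Fermi scale $|\uZ|^{-1/3}$ reproduces the molecular TF density $\rho^{\rm TF}(\uz,\ur,\cdot)$ and near each nucleus $R_\kappa$, on the scale $c^{-1}\sim Z_\kappa^{-1}$, is modified to be the spectral projection onto the negative spectrum of a local Chandrasekhar-type operator. Concretely, fix a partition of unity $\sum_\kappa \chi_\kappa^2+\chi_{\rm out}^2=1$ with $\chi_\kappa$ supported in a ball of radius $\sim r_0|\uZ|^{-1/3}/2$ around $R_\kappa$; in each inner region use a density matrix of the form $\one_{(-\infty,0)}(C_{c,Z_\kappa}^H+\mu^{\rm TF})$ (cut off by $\chi_\kappa$), and in the outer region use a coherent-state construction built from $\rho^{\rm TF}$ as in Lieb \cite[Sect.~5]{Lieb1981}. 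The atomic TF matching and the Lieb--Oxford correction then give a Slater-determinant whose energy is $E^{\rm TF}(\uz,\ur)|\uZ|^{7/3}$ plus the sum of one-atom contributions $\sum_\kappa Z_\kappa^2 \cs_C(Z_\kappa/c)$ plus controllable localization errors.

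For the lower bound, which is the main obstacle, I would follow the Solovej--Spitzer coherent-states scheme adapted to the Chandrasekhar setting. First reduce the many-body energy to a one-body problem via the Lieb--Oxford inequality \cite{LiebOxford1981} applied to the Coulomb term and via replacement of the nuclear potential by the TF potential $\Phi^{\rm TF}$, paying the price of a Thomas--Fermi variational remainder. Then perform a smooth localization onto balls of radius $\sim|\uZ|^{-1/3}$ around each $R_\kappa$ plus an outer region; here the non-locality of $\sqrt{-c^2\Delta+c^4}-c^2$ forces one to use an IMS-type commutator estimate adapted to $|p|$, for which one can invoke pointwise bounds on the kernel of $\sqrt{-\Delta+m^2}$ together with the Hardy--Lieb--Thirring bound \eqref{eq:hlt} to control the resulting error by $|\uZ|^{7/3-\delta}$ for some $\delta>0$ --- this is the technically hardest step and the source of the exponent $1/30$. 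Inside each inner region, rescale $x\mapsto R_\kappa+x/c$ to reduce to the atomic operator $H_C(Z_\kappa/c)$ on a ball of size $\sim c|\uZ|^{-1/3}\to\infty$; combining the atomic Scott asymptotics of \cite{Franketal2008,Solovejetal2008} (Theorem~\ref{scottatom}) applied in this rescaled variable with the definition of $\cs_C$ yields the atomic contribution $q Z_\kappa^2\cs_C(Z_\kappa/c)$. The outer region is handled by a coherent-states semiclassical estimate as in \cite{Lieb1981,Sorensen2005}, using that there the potential is smooth on scale $|\uZ|^{-1/3}$ and the relativistic kinetic energy behaves non-relativistically on momenta $\lesssim|\uZ|^{1/3}$. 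Summing the four types of remainders and optimizing the localization lengths against powers of $|\uZ|$ then yields the announced error $\co(|\uZ|^{2-1/30})$.
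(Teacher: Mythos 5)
Your proposal correctly identifies the Solovej--Spitzer multiscale strategy that the cited reference \cite{Solovejetal2008} employs; note that the present paper states Theorem~\ref{scottchandrasekharmolecule} by citation only and contains no proof, so the comparison is against that reference. Your argument for part~(1) is sound: the elementary bound $\sqrt{ap^2+a^2}-a\le p^2/2$ together with $\partial_a\bigl(\sqrt{ap^2+a^2}-a\bigr)\ge 0$ (so that $H_C(\gamma)$ is form-decreasing in $\gamma$) does give that $\cs_C$ is non-increasing, with $\cs_C(0)=1/4$ the classical Scott constant of Subsection~\ref{sss:scottliebarguments}.

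The inner-region step of your lower bound in part~(2) has a genuine gap. After the Lieb--Oxford and mean-field reductions, the effective one-body operator near a nucleus $R_\kappa$ is $\sqrt{-c^2\Delta+c^4}-c^2-\Phi^{\mathrm{TF}}+\mu$ with the \emph{molecular} Thomas--Fermi potential $\Phi^{\mathrm{TF}}$, which near $R_\kappa$ is $Z_\kappa/|x-R_\kappa|$ plus a nonconstant, $|\uZ|$-dependent screening correction, and only a \emph{localized} eigenvalue sum enters. You cannot simply rescale and invoke Theorem~\ref{scottatom}: that theorem concerns the interacting $Z$-electron atom with a pure Coulomb nuclear potential, not a localized one-body trace for a screened, truncated potential; in fact $\tr(C_{c,Z_\kappa}^H)_-$ is infinite without the chemical-potential regularization, so there is no sensible atomic eigenvalue sum to invoke as a black box. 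What is actually required -- and is the technical core of \cite{Solovejetal2008,SolovejSpitzer2003} -- is a new, \emph{uniform} two-term (Weyl plus Scott) semiclassical trace asymptotics for a whole family of potentials that are smooth away from a Coulomb singularity, proved directly via coherent states and, on the innermost scale, via the Hardy--Lieb--Thirring inequality \eqref{eq:hlt}. This semiclassical input does not follow from the atomic Scott theorem by rescaling; it must be established separately and is where almost all of the labor in \cite{Solovejetal2008} lies. A secondary unaddressed point is the claimed continuity of $\cs_C$ at the endpoint $\gamma=2/\pi$, where the Coulomb potential ceases to be a strict form-bounded perturbation of the kinetic energy and a naive dominated-convergence argument is insufficient.
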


\begin{remark}
  The convergence of the one-particle ground state density
  on the Thomas--Fermi scale in Coulomb norm can be proved using the
  argument in the proof of Theorem \ref{gsdensityrelativistictf} 
  together with Theorem \ref{scottchandrasekharmolecule}.
  The convergence of the density on the Scott scale has not been
  worked out so far.
\end{remark}

\subsection{Molecules with self-generated magnetic fields}
\label{ss:relmagnetic}

Suppose that the kinetic energy of the electrons is described by
the Chandrasekhar in presence of a magnetic vector potential, i.e.,
\begin{align}
  \ct^{(c)}(\uA) := \sqrt{c^{2}T(\uA) + c^{4}} - c^{2},
\end{align}
where $T(\uA)$ is either the magnetic Schr\"odinger or Pauli operator
as in \eqref{eq:magnetickinetic}. We consider $\uA$-fields in
\begin{align}
  \tilde\gA := \{\uA\in L^6(\R^3:\R^3):\,{\rm div}(\uA)=0,\,|\nabla\otimes \uA|\in L^2(\R^3)\}.
\end{align}

Let
\begin{align}
  C_{N,V,\uA} = \sum_{\nu=1}^N\left(\ct_\nu^{(c)}(\uA) - V(x_\nu)\right) + \sum_{1\leq\nu<\mu\leq N}\frac{1}{|x_\nu-x_\mu|} + U \quad \text{in}\ \bigwedge_{\nu=1}^N L^2(\R^3:\C^2)
\end{align}
be a Hamilton operator for a relativistic molecule with given
vector potential $\uA$, and $V$ and $U$ as in \eqref{eq:defv}
and \eqref{eq:defu}, respectively.
(Technically, $C_{N,V,\uA}$ is defined as the Friedrichs extension of the
corresponding quadratic form with form domain $\cs(\R^3:\C^2)$,
whenever $\max_\kappa Z_\kappa/c\leq\gamma_C$.)

For admissible vector potentials $\uA\in\tilde\gA$, given nuclear
positions $\uR$, and $N=|\uZ|$ the ground state energy is
\begin{align}
  E_{c,C,{\rm mag}}(\uZ,\uR,\uA) := \inf\spec(C_{N,V,\uA}),
\end{align}
and the total energy including the magnetic field energy is
\begin{align}
  E_{c,C,{\rm mag}}(\uZ,\uR) := \inf_{\uA\in\tilde\gA}\left(E_{c,C,{\rm mag}}(\uZ,\uR,\uA) + \frac{c^2}{8\pi^2}\int_{\R^3}|\nabla\times \uA|^2\right).
\end{align}

The following energetic result for the Scott scale was proved by
Erd\H os et al.~\cite{Erdosetal2012}. 
\begin{theorem}
  \label{scottchandrasekharmoleculemagnetic}
  Let the notations and assumptions be as in Theorem
  \ref{scottchandrasekharmolecule}. In particular, fix $K$,
  $\uz\in(0,1)^K$, and $\ur\in\R^{3K}$. Assume furthermore that there
  is $\gamma_0<2/\pi$ such that $\max_{\kappa}Z_\kappa/c<\gamma_0$.
  Then
  \begin{align}
    E_{c,C,{\rm mag}}(\uZ,\uR) = E^{\rm TF}(\uz,\ur)|\uZ|^{7/3} + \sum_{\kappa=1}^K Z_\kappa^2\left(\frac12-s_C(\gamma)\right) + o(|\uZ|^2)
  \end{align}
  in the limits $|\uZ|,c\to\infty$.
\end{theorem}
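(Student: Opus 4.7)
The proof proceeds by matching asymptotic upper and lower bounds. The upper bound is immediate: choose $\uA \equiv 0$ in the variational problem defining $E_{c,C,{\rm mag}}(\uZ,\uR)$. The field energy then vanishes, $C_{N,V,0}$ reduces to the non-magnetic Chandrasekhar molecular operator, and Theorem \ref{scottchandrasekharmolecule} yields
\begin{align*}
E_{c,C,{\rm mag}}(\uZ,\uR) &\leq E_{c,C}(|\uZ|,\uZ,\uR) \\
&\leq E^{\rm TF}(\uz,\ur)|\uZ|^{7/3} + \sum_{\kappa=1}^K Z_\kappa^2\left(\tfrac12 - s_C(Z_\kappa/c)\right) + \mathcal{O}(|\uZ|^{2-1/30}).
\end{align*}
All the effort lies in the matching lower bound.

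For the lower bound my plan is to combine the multiscale coherent-state strategy employed by Solovej, S\o rensen, and Spitzer \cite{Solovejetal2008} in the non-magnetic relativistic case with the self-generated-field techniques developed by Erd\H{o}s, Fournais, and Solovej in the non-relativistic analogue \cite{Erdosetal2012Sc}. First, use a Lieb--Oxford-type correlation inequality to reduce the $N$-body operator $C_{N,V,\uA}$ to a one-body operator with mean-field potential (built from the molecular TF density) modulo an $\mathcal{O}(|\uZ|^{5/3})$ error. Next, introduce an IMS partition of unity $1 = \theta_\infty^2 + \sum_{\kappa=1}^K \theta_\kappa^2$, with $\theta_\kappa$ supported in a ball of radius of order $|\uZ|^{-1/3}$ around $R_\kappa$ and $\theta_\infty$ away from the nuclei; the localization error for the Chandrasekhar kinetic energy is subleading relative to $|\uZ|^2$. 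On the support of $\theta_\infty$ the effective potential is bounded and a magnetic relativistic semiclassical analysis --- controlled for the Chandrasekhar kinetic energy via a diamagnetic Lieb--Thirring bound uniform in $\uA$ --- recovers the Thomas--Fermi energy on that region. Near each $R_\kappa$, rescale $x \mapsto R_\kappa + x/c$ to pass to the single-atom Scott scale and recognize the atomic Scott constant $\tfrac12 - s_C(Z_\kappa/c)$ in the limit, exactly as in Theorem \ref{scottatom}.

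The new ingredient, relative to the non-magnetic analysis, is an a priori bound showing that the self-generated field cannot improve the Scott constant. Heuristically, any $\uA$ capable of altering the atomic Scott correction would require $|\uA|\sim c$, hence $|\uB|\sim c^2$, concentrated on the Scott region of volume $\sim c^{-3}$; this would produce a field energy of order $c^2 \cdot c^4 \cdot c^{-3} = c^3 \gg c^2 \sim Z^2$, so that $\uA$ must be small on the Scott scale at the infimum. Rigorously, one would establish a magnetic Hardy--Lieb--Thirring inequality of the schematic form
\begin{align*}
\tr \left(\sqrt{c^2 T(\uA)+c^4}-c^2 - V\right)_-
\leq \tr \left(\sqrt{-c^2\Delta+c^4}-c^2 - V\right)_- + \delta\,\tfrac{c^2}{8\pi^2}\int|\nabla\times\uA|^2\,\dx + R_\delta(V)
\end{align*}
for small $\delta>0$ with controlled remainder $R_\delta(V)=o(|\uZ|^2)$, which allows one to absorb the entire $\uA$-dependence into the field energy already being minimized over.

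The main obstacle will be the Pauli case. There the negative Zeeman term $\usigma\cdot\uB$ sits \emph{inside} the square root and can in principle lower the bottom of the spectrum; in the non-relativistic Scott correction of Theorem \ref{scottnonrelmagn} this is precisely what produces the non-trivial $\kappa$-dependence of $S(\kappa)$. One must exploit that the Chandrasekhar kinetic energy grows only linearly in $|p|$ at high momenta --- the regime where magnetic effects are most dangerous --- so that, in contrast to the non-relativistic setting, the magnetic stability available uniformly for $\gamma\leq\gamma_0<2/\pi$ yields a trace bound strong enough to push the Zeeman contribution into the $o(|\uZ|^2)$ error. Securing this uniformity in $\gamma$ up to $\gamma_0<2/\pi$, and patching the local Scott asymptotics around distinct nuclei without a Zeeman-induced interference term of order $|\uZ|^2$, is where the technical heart of the proof will lie.
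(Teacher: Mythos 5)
The paper you are summarizing does not prove Theorem~\ref{scottchandrasekharmoleculemagnetic}; it is a survey that attributes the result to Erd\H{o}s, Fournais, and Solovej \cite{Erdosetal2012} and supplements the statement only with a heuristic remark. Your upper bound via $\uA=0$ together with Theorem~\ref{scottchandrasekharmolecule} is correct, and your lower-bound strategy --- grafting the self-generated-field techniques of \cite{Erdosetal2012Sc} onto the multiscale analysis of \cite{Solovejetal2008} --- is exactly the route one would expect \cite{Erdosetal2012} to take (it has essentially the same author list as both ingredients). More importantly, your scaling heuristic, that a field affecting the Scott constant would require $|\uB|\sim c^2$ on a region of volume $\sim c^{-3}$ and hence cost field energy $\sim c^3\gg c^2\sim |\uZ|^2$, is precisely the mechanism the paper's remark emphasizes: because $|\uZ|/c$ is kept constant, the prefactor $c^2/(8\pi^2)$ of the field energy is itself already of order $|\uZ|^2$, so only vanishingly small fields are affordable on the Scott scale and the Scott coefficient is unaltered (in contrast to the non-relativistic situation where $|\uZ|/c^2$ is bounded and the $S(\kappa)$ dependence is generically non-trivial).

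Two smaller points. Your discussion of the Pauli case slightly misattributes the relevant stability mechanism. The reason the Zeeman term inside the square root is harmless is not the linear growth of $\sqrt{p^2+1}$ at high momenta; it is that $T(\uA)=\tfrac12[\usigma\cdot(-i\nabla+\uA)]^2$ is a square and hence nonnegative for \emph{every} $\uA$, so $\sqrt{c^2T(\uA)+c^4}-c^2\geq 0$ automatically. This gives a $\uA$-uniform lower bound on the kinetic energy without any diamagnetic input, which is exactly the ``magnetic stability'' you need but is cleaner than what you sketch; the non-relativistic Pauli operator has no such one-sided bound, which is why $S(\kappa)$ depends on $\kappa$ in Theorem~\ref{scottnonrelmagn}. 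Finally, your proposed ``magnetic Hardy--Lieb--Thirring inequality'' is plausible as a schematic target but speculative as stated; you should regard it as a placeholder for the quantitative absorption of the field energy that the actual proof must carry out, not as an established lemma.
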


\begin{remarks}
  (1) Since $c^{-1}|\uZ|$ is bounded, the prefactor $c^2/(8\pi^2)$ of the
  magnetic energy in the relativistic case is of order $|\uZ|^2$ (at least
  if we assume additionally that $|\uZ|/c$ is kept constant when $|\uZ|,c\to\infty$).
  This is much larger than in the non-relativistic case
  (Theorem~\ref{scottnonrelmagn}), where $c^{-2}|\uZ|$ was bounded.
  Thus, the self-generated magnetic field has to
  be much smaller in the relativistic case, which explains why it
  does not alter the Scott coefficient (contrary to the
  non-relativistic situation). In fact, the prefactor $8\pi$ in
  front of the field energy is irrelevant and can be replaced by
  any other fixed constant in the relativistic situation.
  
  (2) The convergence of the density on the Scott scale has not been
  worked out so far.
\end{remarks}

\section{Open questions}
\label{s:questions}

We collect some questions that -- at least from our perspective -- are
interesting both from physical and mathematical points of view. 

\begin{enumerate}
\item
  For $N\in\N$ and $\uZ\in(0,\infty)^K$, we set
  \begin{equation}
    \label{eq:gs}
    \mathfrak{E}_S(N,\uZ):=\inf \{E_S(N,\uZ,\uR): \uR\in \br^{3K}\}.
  \end{equation}
  This is the ground state energy of a non-relativistic molecule in static
  approximation, provided there is a state $\psi$ in the form domain of
  $H_{N,V}$ and nuclear positions $R\in\R^{3K}$ such that
  $\langle \psi,H_{N,V}\psi\rangle=\mathfrak{E}_S(N,\uZ)$.
  
  Question: Can one prove the Scott conjecture for $\mathfrak{E}_S(N,\uZ)$,
  i.e., when one minimizes over the nuclear positions?
  We expect additivity of the energy up to $o(|\uZ|^{\frac53})$, i.e.,
  \begin{equation}
    \label{eq:add}
    \mathfrak{E}_S(N,\uZ)= \sum_{\kappa=1}^K\left(E^\mathrm{TF}(Z_\kappa) +  \frac q4Z_\kappa^2 -C_\mathrm{DS}Z^\frac53_\kappa\right)+o(|\uZ|^{\frac53}).
  \end{equation}
  Even more, one might ask whether the strong Scott conjecture would
  hold around each of the nuclei if \eqref{eq:add} is true.
  A step in this direction was taken by
  Iantchenko et al.~\cite[Theorem~3]{Iantchenkoetal1996} under the
  additional hypothesis that the minimal nuclear distance is bounded
  from below, e.g., by $\const |\uZ|^{-\frac14}$.

  The analogue of $\mathfrak{E}_S$ can be defined for other many-particle
  models discussed in this review, e.g., those in Section \ref{s:relativistic}.
  As far as we know, the Scott conjecture is also open for these problems.
  Again, we expect additivity of the energy up to order $o(|\uZ|^{5/3})$.
  
  One could also consider a variant of this problem, where the kinetic
  energy of the nuclei is taken into account.
  
\item It is folklore in quantum chemistry that chemical accuracy is
  achieved without taking a self-generated magnetic field into account.
  Therefore, we ask whether in physical models at least the
  Scott conjecture does not depend on the self-generated magnetic field.

\item The Scott conjectures for the energy and the density are open for
  no-pair operators where a self-consistent mean field
  is taken into account. In this case
  the Hilbert space of admissible one-particle states (in the $x\mapsto x/c$
  rescaled picture) is $\Lambda_\chi(L^2(\R^3:\C^4))$ with
  $\Lambda_\chi:=\one_{(0,\infty)}(D_\gamma^H+\chi)$
  with a (not necessary local) mean field $\chi$.
  The resulting no-pair operator is sometimes called Fuzzy operator.
  Possible choices for $\chi$ are the Thomas--Fermi potential (the
  right sides of \eqref{eq:tfeq}-\eqref{eq:tfeq2}), or the Hartree--Fock
  potential
  $$
  \sum_i\left(|\phi_i|^2\ast\frac{1}{|\cdot|}(x) - \frac{\phi_i(x)\overline{\phi_i(\cdot)}}{|x-\cdot|}\right)
  $$
  generated by a set of appropriately chosen orbitals
  $\{\phi_i\}_{i}$. The latter choice is especially popular in
  quantum chemistry, see, e.g.,
  \cite{Johnson1993,Johnson1998,Sapirstein1998,Johnsonetal2004}.
  It turns out, though, that numerically computed values of the ground
  state energy for Coulomb systems with $Z\gtrsim90$ in the Fuzzy picture
  are quite close to those in the Furry picture; see also
  \cite{Mittleman1981,ReiherWolf2009,Saue2011,HandrekSiedentop2015}.
  
  Nevertheless, from a mathematical point of view it is interesting to
  investigate the precise value of the Scott correction for the Fuzzy
  operator.
  The following tasks seem natural.
  \begin{enumerate}[(a)]
  \item Show that for any (reasonable?) choice of the mean field $\chi$,
    the leading order of the ground state energy is still the
    Thomas--Fermi energy. 

  \item Show the Scott conjecture for the Fuzzy model defined in the spirit
    of Mittleman (minimization of the electronic degrees followed by a
    maximization of the splitting).
    
  \item Show that, to within the order of accuracy of Scott's correction,
    the maximization is attained in the Furry picture.
    (Recall that the $Z^2$-correction in the Furry picture is exclusively
    generated by the effective one-particle problem involving the
    hydrogenic operator
    $D_\gamma^H$.
    Thus, our intuition is supported by the variational principle for operators
    with spectral gaps \cite{GriesemerSiedentop1999,MorozovMueller2015},
    which leads to the largest eigenvalues of
    $\Lambda_\chi D_\gamma^H\Lambda_\chi$, when one chooses $\Lambda_\chi$
    to be the Furry projection. This is also the underlying spirit of
    Schwinger's derivation \cite{Schwinger1980} of the relativistic
    Scott correction.)
  \end{enumerate}

\item Can one show first and second order asymptotics for the Lieb--Loss
  model \cite{LiebLoss1999}? See Bach and Hach \cite{BachHach2022} for the
  first order in the non-relativistic setting.

\item Let us formulate some questions regarding the hydrogenic
  densities $\rho_{\#}^H$ with $\#\in\{C,B,F\}$.
  For $\#=F$ one might be able to exploit the explicitly known eigenfunctions
  of the Coulomb--Dirac operator
  (cf.~\cite{Gordon1928,Darwin1928,Pidduck1929,Bethe1933,Thaller1992}) as in
  Theorem \ref{heilmannlieb} by Heilmann and Lieb.
  \begin{enumerate}
  \item Inspired by Theorems \ref{existencerhoh} and \ref{existencerhohdirac},
    we ask whether the densities $\rho_{\#}^H$ satisfy a power law at the origin,
    i.e., whether there are model-dependent constants
    $a_{\gamma,\#},b_{\gamma,\#}>0$ such that
    \begin{align}
      \lim_{|x|\to0}\rho_{\#}^H(x)|x|^{2a_{\gamma,\#}} = b_{\gamma,\#}.
    \end{align}
    In view of Theorems \ref{existencerhoh} and \ref{existencerhohdirac} and
    the fact $|\psi_{n=0,\kappa=\pm1,m}^D(x)| \sim |x|^{-\Sigma_\gamma}$ for
    $m=-j_\kappa,...,j_\kappa$ and $|x|\ll1$
    (cf.~\cite[p.~316]{Bethe1933} or \cite[p.~427]{Thaller2005}),
    it seems natural to believe that
    $a_{\sigma,C}=\sigma_\gamma$ (cf.~\eqref{eq:defsigmagamma}) and
    $a_{\sigma,F}=\Sigma_\gamma$ (cf.~\eqref{eq:defsigmanu}).
    
  \item Electrons far away from the nucleus are expected to behave
    non-re\-la\-ti\-vis\-ti\-cal\-ly. Can one show
    \begin{align}
      \lim_{|x|\to\infty}|x|^{3/2}\rho_{\#}^H(x)=\lim_{|x|\to0}|x|^{3/2}\rho_{z=1}^{\rm TF}(x)  
    \end{align}    
    as in Theorem \ref{heilmannlieb}?
    (Here $\rho_{z=1}^{\rm TF}$ is the hydrogenic TF density with $q=2$.)
    In case $\#=F$ one might be able to derive an asymptotic expansion
    as in \eqref{eq:heilmannlieb1} using the explicitly known eigenfunctions.
    
  \item The non-relativistic density $\rho_S^H(x)$ decreases monotonically
    in $|x|>0$. Is this also true for the relativistic hydrogenic densities?
\end{enumerate}

\item As we have seen in Theorem~\ref{tfwz2theorem} and the ensuing
  discussion, Weizs\"acker's parameter $A$ in the TFW functional
  \eqref{eq:deftfw} can be tuned to achieve \emph{either} energy
  agreement, \emph{or} agreement of the TFW density near the
  origin. This leads to the following (vague) question of whether one
  can construct a modification of Weizs\"acker's gradient term that
  gives \emph{simultaneously} the energy and the density at the origin
  correctly.

\end{enumerate}

\section*{Acknowledgments}

Partial support through U.S. National Science Foundation grant DMS-1954995 (R.L.F.) and by the Deutsche Forschungsgemeinschaft through Germany's Excellence Strategy, grant EXC-2111-390814868 (R.L.F.\& H.S) is acknowledged.


\def\cprime{$'$}

\end{document}